\documentclass{article}

\usepackage{arxiv}
\usepackage[utf8]{inputenc} 
\usepackage[T1]{fontenc}    
\usepackage[colorlinks=true, citecolor=blue]{hyperref}       
\usepackage{url}            
\usepackage{booktabs}       
\usepackage{amsfonts}       
\usepackage{nicefrac}       
\usepackage{microtype}      
\usepackage{graphicx}
\usepackage{natbib}
\usepackage{doi}
\usepackage{amsmath,amsthm,amssymb,bm}
\usepackage{bm}
\usepackage{xcolor}
\usepackage{algpseudocode}
\usepackage{algorithm}
\usepackage{caption}
\usepackage{subcaption}
\usepackage{dsfont}
\usepackage{makecell}
\newcommand{\R}{\mathbb{R}}

\newcommand{\N}{\mathbb{N}}

\newcommand{\E}{\mathbb{E}}
\newcommand{\NN}{\mathcal{N}}
\newcommand{\MN}{\mathcal{MN}}
\newcommand{\msp}{\mathcal{MSP}}
\let\sp\relax
\newcommand{\sp}{\mathcal{SP}}
\newcommand{\mgp}{\mathcal{MGP}}

\newcommand{\T}{\mathcal{T}}
\newcommand{\norm}[1]{\left\lVert#1\right\rVert}
\newcommand{\abs}[1]{\left|#1\right|}
\newcommand{\xhat}[1]{\hat{\bm{x}}^{#1}}
\newcommand{\xhatj}[2]{\hat{x}_{#2}^{#1}}
\newcommand{\Xhat}[1]{\hat{\bm{X}}^{#1}}
\newcommand{\Xhatj}[2]{\hat{X}_{#2}^{#1}}
\newcommand{\innerproduct}[2]{\langle #1, #2 \rangle}
\newcommand{\notprop}{\propto\kern-1\@ptsize pt \diagup}

\DeclareMathOperator\md{MD}
\DeclareMathOperator\mmd{MMD}

\DeclareMathOperator\diag{diag}
\DeclareMathOperator\tr{tr}
\DeclareMathOperator\vect{vec}
\DeclareMathOperator\cov{Cov}

\DeclareMathOperator\row{row}
\DeclareMathOperator\col{col}
\renewcommand{\ker}{\mathrm{ker}}

\DeclareMathOperator\pds{PDS}
\DeclareMathOperator\M{M}

\DeclareMathOperator*{\argmin}{arg\,min}
\DeclareMathOperator*{\mb}{m}

\DeclareMathOperator\fmmd{fMMD}
\DeclareMathOperator\fmd{fMD}
\DeclareMathOperator\K{\bm{\mathrm{K}}} 
\DeclareMathOperator\C{\bm{\mathcal{K}}} 
\DeclareMathOperator\ueig{\xi} 
\DeclareMathOperator\meig{\bm{\psi}} 
\DeclareMathOperator\mbasis{\bm{\phi}} 
\DeclareMathOperator\unibasis{{\phi}} 

\newtheorem{definition}{Definition}[subsection]

\newtheorem{remark}{Remark}[subsection]

\theoremstyle{plain}
\newtheorem{theorem}{Theorem}[subsection]
\newtheorem{proposition}{Proposition}[subsection]
\newtheorem{lemma}{Lemma}[subsection]
\newtheorem{corollary}{Corollary}[subsection]

\title{Explainable Outlier Detection for Multivariate Functional Data}

\author{
	\href{https://orcid.org/0000-0002-3430-8308}{\includegraphics[scale=0.06]{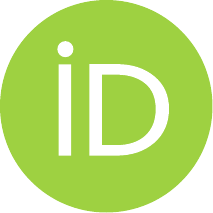}
		\hspace{1mm}Marcus Mayrhofer}\\
	TU Wien\\
	\And
	\href{https://orcid.org/0000-0003-0329-0595}{\includegraphics[scale=0.06]{orcid.pdf}
		\hspace{1mm}Una Radojičić} \\
	TU Wien\\
	\And
	\href{https://orcid.org/0000-0002-5752-429X}{\includegraphics[scale=0.06]{orcid.pdf}
		\hspace{1mm}Horst Lewitschnig} \\
	Infineon Technologies Austria AG\\
	\And
	\href{https://orcid.org/0000-0002-8014-4682}{\includegraphics[scale=0.06]{orcid.pdf}
		\hspace{1mm}Peter Filzmoser} \\
	TU Wien     
}

\hypersetup{
	pdftitle={Explainable Outlier Detection for Multivariate Functional Data},
	pdfsubject={stat.ME},
	pdfauthor={Marcus~Mayrhofer, Una~Radojičić, Horst~Lewitschnig, Peter~Filzmoser},
	pdfkeywords={Matrix-variate distributions, Covariance with Kronecker structure, Explainable artificial intelligence, Image data, Minimum covariance determinant, Shapley values},
}

\usepackage{mathtools}

\usepackage{setspace}
\usepackage{xurl}           
\def\spacingset#1{\renewcommand{\baselinestretch}{#1}\small\normalsize}

\begin{document}
	\maketitle
	
	\begin{abstract}
		This work addresses the challenges of robust covariance estimation and interpretable outlier detection for multivariate functional data with separable covariance structure. We develop a method that simultaneously improves robustness and interpretability in this context by establishing a connection between stochastic processes with separable covariance structures and the corresponding matrix-variate distribution of their basis representations. Leveraging this connection, we employ the recently developed matrix-variate counterpart of the Minimum Covariance Determinant estimator (MMCD) in conjunction with a truncated multivariate functional Mahalanobis semi-distance to robustly estimate mean and covariance for multivariate functional data. For interpretable outlier detection, we generalize multivariate outlier explanations based on Shapley values to decompose overall multivariate functional outlyingness into time-coordinate-specific contributions. Importantly, we reduce the otherwise exponential computational complexity (relative to the number of components) to linear complexity, while retaining the key properties of the Shapley value. This integrated framework combines robust Mahalanobis distances, MMCD estimators, and Shapley value-based outlyingness decomposition to provide a robust and interpretable approach for analyzing multivariate functional data with separable covariance structures. The effectiveness of this approach is demonstrated through both theoretical analysis and practical applications, including simulations and real-world examples.
	\end{abstract}

	\keywords{Robust statistics \and Shapley value \and Multivariate analysis \and Separable covariance}
	
	
\section{Introduction}\label{section:introduction
}

Functional Data Analysis (FDA) encompasses statistical models and methods to analyze data that are naturally represented as functions. With the advancement of modern data collection tools, multivariate functional observations are increasingly common since data are now often recorded repeatedly across multiple time points. These functional observations can be seen as finite-dimensional realizations of continuous stochastic processes, providing a framework for modeling and analyzing such data effectively \citep{CUEVAS2014, Wang2016}.

Unlike traditional approaches that treat data as vectors or matrices, models with a functional structure inherently account for key characteristics of the underlying random process generating the observations, such as smoothness \citep{ramsay_silverman_2005, ferraty2006nonparametric}. In this context, the estimation of mean and covariance plays a central role in understanding the underlying structure and variability. However, rather than direct covariance estimation, Functional Principal Component Analysis (FPCA) is often the dominant focus in the literature. \cite{shang2014survey} provides a comprehensive review of methods for univariate FPCA, \cite{chiou2014multivariate} provide extensions to multivariate functional data, and \citet{happgreven2018} establish and discuss the connection between univariate and multivariate FPCA.

Multivariate functional data routinely appear in climate monitoring, medical data analysis, signal processing, and financial modeling. In these settings, robust methods are very important; they identify the dominant patterns while flagging anomalies, which is what makes the resulting conclusions trustworthy. In particular, the presence of outliers can severely distort estimates of mean and covariance, as well as the performance of non-robust FPCA. Thus, robust approaches and tools for outlier detection are needed. Several methods for robust FPCA have been developed in the univariate functional setting; see, e.g., \cite{boente2015s,boente2021robust} for an overview. While there are methods for distance-based outlier detection in the univariate functional setting, see, e.g., \cite{galeano2015mahalanobis, ghiglietti2017statistical, berrendero2020mahalanobis, oguamalam2024minimum},  non-parametric, depth-based approaches are more commonly employed in the multivariate case. \cite{hubert2015multivariate} provide an overview of outlier detection in multivariate functional data. More recent developments in this area include methods such as the Functional Isolation Forest \citep{Staerman2019} and the model-based contaminated-mixture model approach~\citep{AmovinAssagba2022}.

Given that much of the focus in multivariate FDA is on FPCA and depth-based outlier detection, there is a clear need for robust mean and covariance estimation methods. Our contribution is a comprehensive framework for robust covariance estimation and explainable, distance-based outlier detection in smooth multivariate functional data, with a particular focus on processes exhibiting a separable covariance structure. Robustness is achieved through the use of Matrix Minimum Covariance Determinant (MMCD) estimators \citep{mayrhofer2024robust} for reliable covariance estimation, combined with a generalization of the trimmed functional Mahalanobis distance of \cite{galeano2015mahalanobis} to the multivariate functional setting for outlier identification. Beyond detection, a central contribution of our work lies in the explainability of outliers: by leveraging Shapley values \citep{shapley1953,mayrhofer2022}, we provide a principled decomposition of the trimmed multivariate functional Mahalanobis distance into additive, time- and component-specific contributions. This enables not only the identification of outlying curves but also a detailed understanding of which components and at what times drive their anomalous behavior.%

The structure of the paper is as follows: Section~\ref{section:Preliminaries} outlines the theoretical framework and the already existing notion of univariate functional Mahalanobis distance. In Section~\ref{section:fMMD}, we rigorously define the trimmed multivariate functional Mahalanobis distance, previously applied in \cite{martino2019k}, and derive several of its theoretical properties under separable covariance structures.
Section~\ref{section:parameter_estimation} focuses on processes expressed on a finite basis, and contains the first main theoretical contribution of this work, showing how the separability of the multivariate covariance operator of a random function translates onto the distribution of the random coefficient matrix of its smoothed counterpart. This leads to the connection between the trimmed multivariate functional Mahalanobis distance and the matrix-variate Mahalanobis distance. This connection is key to constructing an efficient algorithm, and a detailed outline of the method is provided. %
Section~\ref{section:shapley} contains the second major contribution: a Shapley value-based framework for explainable outlier detection in functional data. The derived framework formalizes how outliers are explained and enables the identification of the specific components and time points that contribute most to an observation’s anomalous behavior. Computational details and implementation steps are also provided.
Section~\ref{section:simulations} evaluates the performance of our method for outlier detection and interpretation via Shapley values, as well as covariance estimation through extensive simulations and comparisons with state-of-the-art approaches. The simulation study includes cases consistent with the model’s assumptions and settings where these are violated, achieved by introducing non-separable covariance structures. %
Section~\ref{section:examples} illustrates the practical use of the proposed robust framework and Shapley value–based outlier explanations through real-world applications, including separability testing and comparisons with methods that do not make this assumption.
Finally, Section~\ref{section:discussion} summarizes the main findings and concludes. %
Proofs of all the results, as well as additional simulation results, are given in the Appendix.

\section{Preliminaries}\label{section:Preliminaries}
\paragraph{Multivariate Stochastic Processes:}
Let $(\Omega, \mathcal{A}, P)$ be a probability space and $\T \subset \R$ a compact interval, commonly thought of as time, then 
$$\bm{X} = \{\bm{X}(t,\omega), t \in \T\} = \{(X_1(t,\omega),\dots,X_p(t,\omega))', t \in \T\} : \T \times \Omega \to \R^p$$
is a time-continuous vector-valued stochastic process. Hence, $\bm{X}$ is a collection of random variables defined on a common probability space index by the continuous set $\T$; for every fixed $t \in \T$ the process defines a random variable $\bm{X}(t)$, and for every fixed $\omega \in \Omega$ a sample path or trajectory $\bm{X}(.,\omega)$, i.e., a function of $t \in \T$. 

In functional data analysis (FDA) it is commonly assumed that those realizations are elements of a Hilbert space such as $\mathcal{H}:= L_p^2(\T) = L^2(\T) \times \cdots \times L^2(\T)$, the space of $p$-dimensional square-integrable functions \citep{jacques2014model, Wang2016}. The inner product and norm of %
$\bm{x} = (x_1,\dots,x_p)'$ and $\bm{y} = (y_1,\dots,y_p)'$ in $\mathcal{H}$ are given by
\begin{align} \label{eq:inner_product_L2p}
    \langle \bm{x}, \bm{y} \rangle_\mathcal{H} = \sum_{j = 1}^p \innerproduct{x_j}{y_j} = \sum_{j = 1}^p \int_{\T} x_j(t) y_j(t) \mathrm{d}t \quad \text{and} \quad \| \bm{x} \|_{\mathcal{H}} = \langle \bm{x}, \bm{x} \rangle_\mathcal{H}^{1/2},
\end{align} 
respectively. The Euclidean scalar product of $\bm{u}, \bm{v} \in \R^p$ is denoted as $\innerproduct{\bm{u}}{\bm{v}}_2$ and the Euclidean norm as $\norm{\bm{u}}_2$. Subscripts will be omitted when context permits.
A stochastic process is called an $L^2$ process if and only if it has finite second moments; $\E[\norm{\bm{X}(t)}^2] < \infty$ for all $t \in \T$. 
In the following, we consider $L^2$-continuous processes, i.e., $L^2$ processes for which  $\displaystyle\lim_{h \to 0} \E[\norm{\bm{X}(t+h) - \bm{X}(t)}^2]=0$
for every $t \in \T$, see, e.g., \cite{ash2014topics} for more details. 

In this setting, each component $X_j$ of $\bm{X}$ is an $L^2$-continuous stochastic process for all $j = 1,\dots,p$ with continuous mean and covariance function given by
\begin{align} \label{eq:mean_cov_function}
    \bm{\mu}(t) = \E[\bm{X}(t)] = 
        (\E[X_1(t)],\dots,\E[X_p(t)])'\in\mathbb{R}^p,\quad 
    \K(s,t) =  [\kappa_{ij}(s,t)]\in\mathbb{R}^{p\times p},
\end{align}
respectively. Here, $\kappa_{ij}$, $i,j=1\dots,p$, are (cross) covariance functions (kernels)
given by
\begin{align*} %
    \kappa_{ij}(s,t) = \cov(X_{i}(s),X_{j}(t)) = \E\left[(X_{i}(s) - \mu_{i}(s))(X_{j}(t) - \mu_{j}(t))\right].
\end{align*}
 The covariance operator  $\C: \mathcal{H} \rightarrow \mathcal{H}$ of $\bm{X}$ associated with kernel $\K(s,t)$ is defined as 
$$\C\bm{x}(s) = \int_{\T} \K(s,t)\bm{x}(t) \mathrm{d}t, \quad \bm{x} \in \mathcal{H}.$$
Since $\bm{X}$ is $L^2$-continuous, the covariance operator $\C$ is a Hilbert-Schmidt operator, and the multivariate Mercer's theorem \citep{withers1974mercer, daw2022overview} implies that there exist countable sequences of continuous orthonormal eigenfunctions $\{\meig_k\}_{k \geq 1}$ and non-negative decreasing eigenvalues $\{\pi_k\}_{k \geq 1}$ with $\sum_{k = 1}^\infty \pi_k < \infty$ such that 
\begin{align*}%
    \C\meig_k = \pi_k\meig_k \quad \text{and} \quad \K(s,t) &= \sum_{k = 1}^\infty \pi_k \meig_k(s) \meig_k'(t).
\end{align*}
The multivariate Karhunen-Loève representation theorem then implies that there exists a unique sequence of uncorrelated random variables $\{\beta_k\}_{k \geq 1}$ such that
\begin{align*}%
    \bm{X}(t) = \bm{\mu}(t) + \sum_{k = 1}^{\infty} \beta_k \meig_k \quad \text{with} \quad \beta_k =\langle \bm{X} - \bm{\mu},\meig_k \rangle = \int_{\T} \meig_k'(t)(\bm{X}(t) - \bm{\mu}(t)) \mathrm{d}t,
\end{align*}
where $\beta_k \sim \NN(0,\pi_k)$ if $\bm{X}$ is a multivariate Gaussian process \citep{daw2022overview}. 

In continuation, an $L^2$-continuous multivariate stochastic process $\bm{X}$ with mean function $\bm{\mu}$ and covariance function $\K$ is denoted as $\bm{X} \sim \msp(\bm{\mu}, \K)$.

\paragraph{Notion of Mahalanobis Distance:}

Before we discuss the functional setting, let us review the concept of the Mahalanobis distance for random vectors. For a $p$-variate random vector $\bm{x}$ from a population with mean $\bm{\mu} \in \R^p$ and covariance matrix $\bm{\Sigma} \in \pds(p)$, the squared Mahalanobis distance of a random vector $\bm{x}$ (from mean $\bm{\mu}$, with respect to covariance $\bm{\Sigma}$) is given by 
\begin{align}
    \label{eq:md}
    \md^2(\bm{x},\bm{\mu},\bm{\Sigma}) = \md^2(\bm{x}) = (\bm{x} - \bm{\mu})' \bm{\Sigma}^{-1} (\bm{x} - \bm{\mu}).
\end{align}
Here, $ \pds(p)$ denotes the set of all $(p\times p)$ positive definite symmetric matrices. Let $\bm{V}\bm{D}\bm{V}' = \bm{\Sigma}$ denote the spectral decomposition of $\bm{\Sigma}$, where $\bm{D} = \diag(\lambda_1,\dots,\lambda_p)$ is a diagonal matrix containing the ordered eigenvalues $\lambda_1 \geq \dots \geq \lambda_p$ of $\bm{\Sigma}$, and the matrix $\bm{V} \in \R^{p \times p}$ contains the corresponding eigenvectors. Based on the spectral decomposition, we can rewrite $\md^2(\bm{x})$ in terms of the principal components $\bm{z}=(z_1,\ldots ,z_p)' = \bm{V}'(\bm{x}-\bm{\mu})$ as {$ \md^2(\bm{x}) = (\bm{x} - \bm{\mu})' \bm{\Sigma}^{-1} (\bm{x} - \bm{\mu}) = (\bm{x} - \bm{\mu})' \bm{V}\bm{D}^{-1}\bm{V}' (\bm{x} - \bm{\mu}) = \sum_{j = 1}^p \lambda_j^{-1}z_j^2.$}

In the context of functional data, the covariance operator \(\C\) serves as the functional analog to the covariance matrix in multivariate statistics. Therefore, when defining the Mahalanobis distance for functional data in a manner analogous to the multivariate case, the inverse of the covariance operator plays a crucial role in the formulation. However, as a Hilbert-Schmidt operator, the covariance operator is, in general, not invertible, and a regularized covariance operator should be used instead. For univariate functional data, there have been several proposals on how to define a notion of Mahalanobis distance in infinite-dimensional $L^2$ space: \cite{galeano2015mahalanobis} introduced a method 
based on a spectral cutoff regularization, where the covariance operator is truncated to a finite number of components, making it invertible. Specifically, let $X \in L^2(\T)$ be a univariate stochastic process with mean $\mu$ and covariance $\kappa$, denoted as $X \sim \sp(\mu, \kappa)$, then its squared truncated functional Mahalanobis distance with truncation level $m \in \N$ is given by 
\begin{equation}\label{eq:fmd_uni}
    \fmd^2(X,\mu;\kappa,m) = \fmd^2(X;m) = \sum_{i = 1}^m\frac{1}{\lambda_i}\innerproduct{X-\mu}{\ueig_i}^2,
\end{equation}
where $(\lambda_i,\ueig_i)$, $i = 1,\dots,p$, $\lambda_1 \geq\dots\geq \lambda_m > 0$, denote the first $m$ eigenpairs of the covariance operator $\mathcal{K}$ with kernel $\kappa$.
This approach is computationally efficient and well-suited for smoothed functions represented by a finite basis. \cite{ghiglietti2017statistical} proposed an alternative method that introduces regularization through an additional parameter, offering greater flexibility and addressing convergence issues encountered in \cite{galeano2015mahalanobis}. \cite{berrendero2020mahalanobis} further extended the Mahalanobis distance definition by incorporating smoothing through reproducing kernel Hilbert spaces (RKHS), embedding regularization directly into the distance computation.

\section{Multivariate Functional Mahalanobis Distance}\label{section:fMMD}
In this section, we introduce the notion of multivariate truncated Mahalanobis semi-distance, which extends the univariate functional version defined by \cite{galeano2015mahalanobis}. This particular choice of univariate functional Mahalanobis distance was made for computational simplicity and for the favorable properties when applied to functions represented by a finite basis. While \cite{martino2019k} discussed the concept empirically, this paper offers, to the best of our knowledge, the first formal definition and theoretical study of the Mahalanobis distance in the multivariate functional framework.

\begin{definition}\label{def:fMMD(x,y)}
    Let $\bm{X} \sim \msp(\bm{\mu},\K)$ and $\bm{Y} \sim \msp(\bm{\mu},\K)$.  For $M\in\mathbb{N}$ such that $\pi_1 \geq\dots \geq \pi_M>0$, the squared truncated functional multivariate Mahalanobis semi-distance (fMMD) between
    $\bm{X}$ and $\bm{Y}$ (w.r.t. $\K$) is given by
\begin{equation*}%
    \fmmd^2(\bm{X},\bm{Y}; \K,M)=\sum_{k=1}^M\frac{1}{\pi_k} \innerproduct{\bm{X}-\bm{Y}}{\meig_k}^2,
\end{equation*}
where $(\pi_k,\meig_k)$ denotes the $k$th eigenpair of the covariance operator $\C$ with kernel $\K$, $k = 1,\dots,M$, and $M \in \N$ determines the spectral cutoff.
\end{definition}
Using Definition \ref{def:fMMD(x,y)}, we define the squared truncated functional multivariate Mahalanobis semi-distance of $\bm{X} \sim \msp(\bm{\mu},\K)$ (w.r.t. $\bm{\mu}$ and $\K$) as
\begin{equation}\label{eq:fMMD}
\fmmd^2(\bm{X};M):=\fmmd^2(\bm{X},\bm{\mu};\K,M)=\sum_{k=1}^M\frac{1}{\pi_k} \innerproduct{\bm{X}-\bm{\mu}}{\meig_k}^2.
\end{equation}

As shown in \cite{galeano2015mahalanobis} for univariate functions and discussed in \cite{martino2019k} for the multivariate setting, $\fmmd$ in Definition~\ref{def:fMMD(x,y)} is a semi-distance, since it lacks the identifiability condition due to truncation. I.e., if the projections of $\bm{X}$ and $\bm{Y}$ coincide on $\mathrm{span}(\meig_1,\dots,\meig_M)$, for a fixed $M \in \mathbb{N}$, then $\fmmd(\bm{X},\bm{Y})=0$, even if $\bm{X} \neq \bm{Y}$. For the sake of conciseness, we will simply write Mahalanobis distance.

\subsection{\texorpdfstring{$L^2$}{L2}-Multivariate Stochastic Processes}

The following lemma shows that $\fmmd$ given in Equation~\eqref{eq:fMMD} is affine invariant.%

\begin{lemma}\label{lemma:affine_invariance}
    Let $\bm{X} \sim \msp(\bm{\mu}_{\bm{X}},\K_{\bm{X}})$ be %
    such that $\fmmd^2(\bm{X},\bm{\mu}_{\bm{X}};\K_{\bm{X}},M)$ is well defined for $M\in\mathbb{N}$. Then, for any regular matrix $\bm{A} \in \R^{p\times p}$, fixed vector-variate function $\bm{\nu}:\T\to\mathbb{R}^p$ and $\bm{Y} = \bm{A}\bm{X} + \bm{\nu}$, the following holds,
    \begin{itemize}
        \item[i)] $\bm{Y} \sim \msp(\bm{\mu}_{\bm{Y}},\K_{\bm{Y}})$, with $\bm{\mu}_{\bm{Y}} = \bm{A}\bm{\mu}_{\bm{X}}+\bm{\nu}, \K_{\bm{Y}} = \bm{A}\K_{\bm{X}}\bm{A}'$,
        \item[ii)] $\fmmd^2(\bm{Y},\bm{\mu}_{\bm{Y}};\K_{\bm{Y}},M)=\fmmd^2(\bm{X},\bm{\mu}_{\bm{X}};\K_{\bm{X}},M).$
    \end{itemize}
\end{lemma}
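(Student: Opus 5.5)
Part i) is a direct moment computation. Let $\mathcal{A}$ denote the pointwise multiplication operator $(\mathcal{A}\bm{x})(t)=\bm{A}\bm{x}(t)$ on $\mathcal{H}$. It is bounded, has bounded inverse $\bm{x}\mapsto\bm{A}^{-1}\bm{x}$, and its adjoint is multiplication by $\bm{A}'$.

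\begin{itemize}
\item By linearity of expectation, $\bm{\mu}_{\bm{Y}}(t)=\bm{A}\bm{\mu}_{\bm{X}}(t)+\bm{\nu}(t)$.
\item Since $\bm{Y}(t)-\bm{\mu}_{\bm{Y}}(t)=\bm{A}(\bm{X}(t)-\bm{\mu}_{\bm{X}}(t))$, we get $\K_{\bm{Y}}(s,t)=\E[\bm{A}(\bm{X}(s)-\bm{\mu}_{\bm{X}}(s))(\bm{X}(t)-\bm{\mu}_{\bm{X}}(t))'\bm{A}']=\bm{A}\K_{\bm{X}}(s,t)\bm{A}'$.
\item $L^2$-continuity transfers from the bound $\E\norm{\bm{Y}(t+h)-\bm{Y}(t)}^2\le\norm{\bm{A}}^2\,\E\norm{\bm{X}(t+h)-\bm{X}(t)}^2$. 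This step assumes $\bm{\nu}$ is continuous, which I take as implicit in writing $\bm{Y}\sim\msp$.
\item At the operator level, $\C_{\bm{Y}}=\mathcal{A}\C_{\bm{X}}\mathcal{A}^*$.
\end{itemize}

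For ii), I would rewrite $\fmmd^2$ in whitened form. Let $P_M^{\bm{X}}$ be the orthogonal projection onto $\mathrm{span}(\meig_1^{\bm{X}},\dots,\meig_M^{\bm{X}})$ and let $\C_{\bm{X},M}^{-1/2}:=\sum_{k\le M}\pi_k^{-1/2}\meig_k\otimes\meig_k$. Then
$$\fmmd^2(\bm{X};M)=\norm{\C_{\bm{X},M}^{-1/2}(\bm{X}-\bm{\mu}_{\bm{X}})}^2,$$
and the same holds for $\bm{Y}$. The plan is to connect the two whitenings through the operator $U:=\C_{\bm{Y}}^{-1/2}\mathcal{A}\C_{\bm{X}}^{1/2}$, defined on the range closures. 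Because $\C_{\bm{Y}}=\mathcal{A}\C_{\bm{X}}\mathcal{A}^*$, a polar-decomposition argument shows $UU^*=\mathrm{Id}$ on $\overline{\mathrm{ran}}\,\C_{\bm{Y}}$, so $U$ is a partial isometry. Equivalently, the standardized Karhunen--Lo\`eve scores $\beta_k^{\bm{Y}}/\sqrt{\pi_k^{\bm{Y}}}$ are obtained from $\beta_k^{\bm{X}}/\sqrt{\pi_k^{\bm{X}}}$ by an orthogonal transformation of $\ell^2$. Applying this to $\bm{Y}-\bm{\mu}_{\bm{Y}}=\mathcal{A}(\bm{X}-\bm{\mu}_{\bm{X}})$ should give
$$\C_{\bm{Y}}^{-1/2}(\bm{Y}-\bm{\mu}_{\bm{Y}})=U\,\C_{\bm{X}}^{-1/2}(\bm{X}-\bm{\mu}_{\bm{X}})$$
in coordinates.

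The main obstacle is the truncation. The identity above preserves the full, untruncated quadratic form, but for ii) at level $M$ one needs $U$ to map the first $M$ standardized $\bm{X}$-scores onto the first $M$ standardized $\bm{Y}$-scores. That is, $\mathcal{A}$ must carry $\mathrm{span}(\meig_1^{\bm{X}},\dots,\meig_M^{\bm{X}})$ onto the top-$M$ eigenspace of $\C_{\bm{Y}}$.

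\begin{itemize}
\item I would first check this directly: if $\C_{\bm{X}}\meig=\pi\meig$, then $\C_{\bm{Y}}(\mathcal{A}^{*})^{-1}\meig=\pi\mathcal{A}\meig$. So the eigen-relation closes exactly when $\mathcal{A}^*\mathcal{A}$ commutes with $\C_{\bm{X}}$. This covers orthogonal or scaled-orthogonal $\bm{A}$, and any $\bm{A}$ under the separable structure studied in the next subsections if $\bm{A}'\bm{A}$ commutes with the row covariance.
\item For arbitrary regular $\bm{A}$, I would try to show that the top-$M$ eigenspace of $\C_{\bm{Y}}$ equals $(\mathcal{A}^*)^{-1}$ applied to the $\bm{X}$ eigenspace.
\end{itemize}

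I expect this second point to be where the argument for general $\bm{A}$ and general $M$ may break down. If it does, the honest conclusion is that the full quadratic form (the case $M$ equal to the rank of $\C_{\bm{X}}$) is invariant in general, while truncated invariance holds under the commutation condition.
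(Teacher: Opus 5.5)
Your proposal does not prove ii), but the obstacle you isolate is real and cannot be removed. As stated, ii) is false for a general regular $\bm{A}$ when $M$ is below the rank of $\C_{\bm{X}}$. The paper's proof passes over exactly this point when it calls $(\C_{\bm{Y}}^{(M)})^{-1}\bm{f}=(\bm{A}')^{-1}(\C_{\bm{X}}^{(M)})^{-1}(\bm{A}^{-1}\bm{f})$ ``straightforward to verify''. That identity amounts to $\C_{\bm{Y}}^{(M)}=\mathcal{A}\C_{\bm{X}}^{(M)}\mathcal{A}^*$, i.e.\ to $\mathcal{A}$ mapping $\mathrm{span}(\meig_1^{\bm{X}},\dots,\meig_M^{\bm{X}})$ onto the top-$M$ eigenspace of $\C_{\bm{Y}}$, which is precisely what you could not show.

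Here is a counterexample. Let $\K_{\bm{X}}(s,t)=\diag(2,1)\,\kappa(s,t)$, where the operator with kernel $\kappa$ has eigenpairs $(\lambda_i,\xi_i)$ with $\lambda_1>\lambda_2$, and take $\bm{A}=\diag(1,2)$, $\bm{\nu}=\bm{0}$, $M=1$. The top eigenpair of $\C_{\bm{X}}$ is $(2\lambda_1,\xi_1\bm{e}_1)$, while $\K_{\bm{Y}}=\diag(2,4)\,\kappa$ has top eigenpair $(4\lambda_1,\xi_1\bm{e}_2)$. Hence
\[
\fmmd^2(\bm{X};1)=\frac{\innerproduct{X_1-\mu_1}{\xi_1}^2}{2\lambda_1},\qquad
\fmmd^2(\bm{Y};1)=\frac{\innerproduct{X_2-\mu_2}{\xi_1}^2}{\lambda_1}.
\]
For Gaussian $\bm{X}$ these are two independent $\chi^2(1)$ variables, so ii) fails. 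Restricting to multiples of $p$ does not help. With $\bm{A}=\diag(1,3)$ and $\lambda_2=0.4\lambda_1$, the two leading eigenfunctions are $\xi_1\bm{e}_1,\xi_1\bm{e}_2$ for $\bm{X}$ but $\xi_1\bm{e}_2,\xi_2\bm{e}_2$ for $\bm{Y}$, so $M=p=2$ fails too.

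Your fallback needs one correction: commutation alone is not enough. The example above satisfies your condition, since $\bm{A}'\bm{A}=\diag(1,4)$ commutes with $\diag(2,1)$ and $\bm{X}$ is separable. What commutation gives you is a common eigenbasis. If $\mathcal{A}^*\mathcal{A}\meig_k=c_k\meig_k$, then $\mathcal{A}\meig_k/\sqrt{c_k}$ is a unit eigenfunction of $\C_{\bm{Y}}$ with eigenvalue $c_k\pi_k$, and each summand $\pi_k^{-1}\innerproduct{\bm{X}-\bm{\mu}_{\bm{X}}}{\meig_k}^2$ is preserved. However, the rescaling $\pi_k\mapsto c_k\pi_k$ can reorder the spectrum. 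You additionally need $c_1\pi_1,\dots,c_M\pi_M$ to remain the $M$ largest eigenvalues of $\C_{\bm{Y}}$; this is automatic when $\bm{A}'\bm{A}=c\bm{I}_p$. The same reordering affects the paper's use of the lemma in Corollary~\ref{cor:independent processes}(i), where $\bm{A}=(\bm{\Sigma}^{\row})^{-1/2}$ sends $\lambda_i\lambda_j^{\row}$ to $\lambda_i$.

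Your untruncated claim is correct and has a short direct proof. Suppose $\C_{\bm{X}}$ has finite rank $r$ and $M=r$. Then $\bm{X}-\bm{\mu}_{\bm{X}}\in\mathrm{ran}\,\C_{\bm{X}}$ almost surely, and $\bm{g}=(\mathcal{A}^*)^{-1}\C_{\bm{X}}^{+}(\bm{X}-\bm{\mu}_{\bm{X}})$ (Moore--Penrose inverse) solves $\C_{\bm{Y}}\bm{g}=\bm{Y}-\bm{\mu}_{\bm{Y}}$. Hence $\fmmd^2(\bm{Y};r)=\innerproduct{\bm{g}}{\bm{Y}-\bm{\mu}_{\bm{Y}}}=\fmmd^2(\bm{X};r)$. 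This full-rank case is all that Theorem~\ref{theorem:fmmd_basis}(ii) needs, since there $M=mp$ is the rank of the process. Your part i) matches the paper's computation.
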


Lemma \ref{lemma:independent processes} gives another desirable property of $\fmmd$ (Definition \ref{def:fMMD(x,y)}); in the case where the components of $\bm{X}$ are uncorrelated, its squared Mahalanobis distance reduces to the weighted sum of univariate functional Mahalanobis distances~\eqref{eq:fmd_uni},
where the amount of truncation depends on the magnitude of the eigenvalues of the individual processes. %

\begin{lemma}\label{lemma:independent processes}
   Let $\bm{X} \sim \msp(\bm{\mu},\K)$ be a multivariate process with uncorrelated components $X_j\in L^2(\T)$, $j=1,\dots,p$. Let further $\pi_1\geq\dots\geq\pi_M>\pi_{M+1}$ be the largest $M$ eigenvalues of the covariance operator $\C=\mathrm{diag}(\mathcal{K}_1,\dots,\mathcal{K}_p)$ associated with the covariance function $\K = \diag(\kappa_1,\dots,\kappa_p)$. 
   Then, for $M=m_1+\cdots +m_p$,
   $$
   \fmmd^2(\bm{X},\bm{\mu};\K,M)   = \sum_{j=1}^p\fmd^2(X_j,\mu_j;\kappa_j,m_j),
   $$
   with $(\lambda_i^{(j)},\ueig_i^{(j)}), i = 1,\dots,m_j$ denoting the $m_j$ largest eigenpairs of the covariance operator $\mathcal{K}_j$ with kernel $\kappa_j$, and $m_j=|\{\lambda_1^{(j)},\dots\lambda_M^{(j)}\}\cap \{\pi_1,\dots\pi_M\}|$ is the number of eigenvalues of the covariance operator $\mathcal{K}_j$ that belongs to  $\{\pi_1,\dots,\pi_M\}$, $i=1,\dots,p$, where we count also the multiplicities in $\{\lambda_1^{(j)},\dots\lambda_M^{(j)}\}$.
\end{lemma}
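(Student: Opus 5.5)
The plan is to determine the spectral decomposition of the block-diagonal operator $\C=\diag(\mathcal{K}_1,\dots,\mathcal{K}_p)$ explicitly in terms of the univariate spectra, and then to regroup the sum in \eqref{eq:fMMD} by component.

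\textbf{Step 1: eigenpairs of $\C$.} Since the components are uncorrelated, $\kappa_{ij}\equiv 0$ for $i\neq j$. Hence, for $\bm{x}=(x_1,\dots,x_p)'\in\mathcal{H}$,
\[
(\C\bm{x})_j=\mathcal{K}_j x_j .
\]
Let $\bm{e}_j$ denote the $j$th unit vector. If $(\lambda_i^{(j)},\ueig_i^{(j)})$ is an eigenpair of $\mathcal{K}_j$, then $\ueig_i^{(j)}\bm{e}_j$ is an eigenfunction of $\C$ with eigenvalue $\lambda_i^{(j)}$, and it has unit norm in $\mathcal{H}$ by \eqref{eq:inner_product_L2p}. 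Functions supported in different coordinates are orthogonal, and for fixed $j$ the $\ueig_i^{(j)}$ are orthonormal. By Mercer's theorem, each family $\{\ueig_i^{(j)}\}_i$ together with $\ker\mathcal{K}_j$ spans $L^2(\T)$. It follows that $\{\ueig_i^{(j)}\bm{e}_j\}_{i,j}$ is a complete orthonormal eigensystem for $\C$ on the orthogonal complement of $\ker\C$. Consequently the spectrum of $\C$ is the multiset union of the spectra of the $\mathcal{K}_j$, with multiplicities added.

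\textbf{Step 2: identifying the top $M$.} Order this union decreasingly. The $M$ largest elements, $\pi_1\ge\dots\ge\pi_M$, consist of exactly $m_j$ eigenvalues coming from $\mathcal{K}_j$, and these must be the $m_j$ largest ones, $\lambda_1^{(j)},\dots,\lambda_{m_j}^{(j)}$. This is precisely the counting in the definition of $m_j$, and it gives $\sum_j m_j=M$.

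The assumption $\pi_M>\pi_{M+1}$ is where I expect the only real subtlety. Without it, a tie at the cutoff would leave the choice of eigenfunctions ambiguous, and the fMMD would not be well defined. With the strict gap, the eigenspace $\mathrm{span}(\meig_1,\dots,\meig_M)$ is uniquely determined. Within any repeated eigenvalue among $\pi_1,\dots,\pi_M$, the eigenfunctions may be chosen arbitrarily inside the eigenspace. However, the sum $\sum_k \pi_k^{-1}\innerproduct{\bm{X}-\bm{\mu}}{\meig_k}^2$ depends only on the orthogonal projections onto each eigenspace, so it is basis-invariant. This lets me take $\meig_k$ to be the block functions $\ueig_i^{(j)}\bm{e}_j$ constructed in Step 1. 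The same invariance argument also handles ties inside a single $\mathcal{K}_j$ at level $m_j$: any tie at that level would have to occur among the first $M$ values, since otherwise it would contradict the gap.

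\textbf{Step 3: regrouping.} Substitute the block eigenfunctions into \eqref{eq:fMMD}. Each term becomes
\[
\innerproduct{\bm{X}-\bm{\mu}}{\ueig_i^{(j)}\bm{e}_j}_{\mathcal{H}}=\innerproduct{X_j-\mu_j}{\ueig_i^{(j)}}.
\]
Grouping the $M$ terms by $j$ then gives
\[
\sum_{j=1}^p\sum_{i=1}^{m_j}\frac{1}{\lambda_i^{(j)}}\innerproduct{X_j-\mu_j}{\ueig_i^{(j)}}^2,
\]
which is $\sum_j\fmd^2(X_j,\mu_j;\kappa_j,m_j)$ by \eqref{eq:fmd_uni}. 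All eigenvalues appearing here are positive because $\pi_M>0$.
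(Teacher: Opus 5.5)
Your proposal is correct and follows essentially the same route as the paper's proof. Both lift the univariate eigenpairs to the block eigenfunctions $\ueig_i^{(j)}\bm{e}_j$ of $\C$, use the gap $\pi_M>\pi_{M+1}$ to identify the top $M$ eigenpairs uniquely, and regroup the sum by component; your completeness argument and your basis-invariance remark for repeated eigenvalues make explicit points that the paper only asserts.
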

Lemma~\ref{lemma:independent processes} implicitly implies that the (uncorrelated) components in the multivariate process should be transformed to similar scales. We also note that the additive property of $\fmmd$ discussed in Lemma~\ref{lemma:independent processes}
 extends to processes with uncorrelated blocks of components. However, we refrain from formally presenting this more general result to keep the paper accessible and avoid unnecessary technical complexity.

A key challenge in computing \(\fmmd\) lies in accurately estimating the eigenfunctions and the covariance operator. The common approach is to vectorize the $p$-variate process \(\bm{X} = (X_1, \dots, X_p)': \T \to \mathbb{R}^p\) by concatenating the individual processes and then applying the univariate method \citep{ramsay_silverman_2005}. %
However, this approach disregards the structure in $\bm{X}$ and can substantially inflate the dimensionality, making computation increasingly demanding as $p$ grows. Interestingly, even when the separability assumption is violated, the variance–bias trade-off often results in the constrained estimator outperforming the unconstrained one for small to moderate sample sizes; a phenomenon we observed in the simulation study of Appendix~\ref{subsection:sample_vs_mmle}.
Hence, in the following, we focus on the family of multivariate processes with separable covariance structure; for an overview, see, e.g., \cite{chen2021space}.

\subsection{Separable Covariance Processes}

A multivariate stochastic process $\bm{X}\sim\mathcal{MSP}(\bm{\mu},\K)$ has separable covariance structure if for every $s,t\in\T$, the covariance $\K$ can be decomposed into
\begin{equation}\label{eq:sep_process}
    \K(s,t)=\bm{\Sigma}^{\row}\,\kappa(s,t),
\end{equation}
for $\bm{\Sigma}^{\row}\in\pds(p)$ representing the cross-covariance structure between the \(p\) components, and a positive definite kernel $\kappa$ capturing the common  temporal covariance structure. 
We write $\bm{X}\sim\mathcal{MSP}(\bm{\mu},\bm{\Sigma}^{\row},\kappa)$. The separability property significantly simplifies the covariance estimation since it allows for the within-function and the between-component second-order dependence to be studied (and interpreted) separately; see, e.g., \cite{chen2021space,chen2023multivariate,genton2007separable,cressie1999classes,rodriguez1974design}.

\begin{remark}
    It should be noted that $\bm{\Sigma}^{\row}$ and $\kappa(s,t)$ in decomposition~\eqref{eq:sep_process} %
    are only identifiable up to a multiplicative constant;  for any $c>0$, 
    $\K(s,t)=(c\bm{\Sigma}^{\row})\,(c^{-1}\kappa(s,t))$. This ambiguity has little practical relevance. However, for reproducibility, %
    we fix the scale of $\kappa$ using the strategy presented in \cite{mayrhofer2024robust}. %
\end{remark}

The equivalent of Lemma \ref{lemma:independent processes} for the processes with separable covariance structure is given in Corollary \ref{cor:independent processes}.

\begin{corollary}\label{cor:independent processes}
   Let $\bm{X} \sim \msp(\bm{\mu},\bm{\Sigma}^{\row},\kappa)$ be a multivariate process with separable covariance  and components $X_j\in L^2(\T)$, $j=1,\dots,p$. Then, for $M$ being a multiple of $p$, i.e., $M=mp>0$, with $m \in \mathbb{N}$, 
   the following holds:
   \begin{itemize}
       \item[(i)] $\displaystyle\fmmd^2(\bm{X},\bm{\mu};\bm{\Sigma}^{\row},\kappa,M)   = \sum_{j=1}^p\fmd^2(\bm{e}_j'(\bm{\Sigma}^{\row})^{-1/2} \bm{X},\bm{e}_j'(\bm{\Sigma}^{\row})^{-1/2}\bm{\mu};\kappa,m)$. 
       \item[(ii)] If $\bm{X}$ has uncorrelated components, i.e., $\bm{\Sigma}^{\row}  = \diag(\sigma_{1}^2,\dots,\sigma_{p}^2)$ with $\sigma_1,\dots,\sigma_p>0$, then $$
       \fmmd^2(\bm{X},\bm{\mu};\bm{\Sigma}^{\row},\kappa,M)   = 
       \sum_{j=1}^p \frac{1}{\sigma_j^2}\fmd^2(X_j,\mu_j;\kappa,m).
       $$
   \end{itemize}
   Here, $(\lambda_i,\ueig_i), i = 1,\dots,m$, correspond to the first $m$ eigenpairs of the covariance operator $\mathcal{K}$ with kernel $\kappa$, and $\bm{e}_j$ is the $j$th vector of the canonical basis of $\mathbb{R}^p$.
\end{corollary}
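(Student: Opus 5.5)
The plan is to reduce (i) to Lemma~\ref{lemma:affine_invariance} and Lemma~\ref{lemma:independent processes}, and then obtain (ii) by direct substitution.

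\textbf{Step 1: whitening.} Set $\bm{Y} = (\bm{\Sigma}^{\row})^{-1/2}\bm{X}$, which is allowed since $(\bm{\Sigma}^{\row})^{-1/2}$ is regular. By Lemma~\ref{lemma:affine_invariance}(i), $\bm{Y}\sim\msp(\bm{\mu}_{\bm{Y}},\K_{\bm{Y}})$ with $\bm{\mu}_{\bm{Y}}=(\bm{\Sigma}^{\row})^{-1/2}\bm{\mu}$ and
\[
\K_{\bm{Y}}(s,t)=(\bm{\Sigma}^{\row})^{-1/2}\bm{\Sigma}^{\row}(\bm{\Sigma}^{\row})^{-1/2}\kappa(s,t)=\bm{I}_p\,\kappa(s,t).
\]
By Lemma~\ref{lemma:affine_invariance}(ii), $\fmmd^2(\bm{X},\bm{\mu};\bm{\Sigma}^{\row},\kappa,M)=\fmmd^2(\bm{Y},\bm{\mu}_{\bm{Y}};\bm{I}_p\kappa,M)$. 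One caveat must be checked here: the lemma assumes $\fmmd$ is well defined at level $M$, which requires $\pi_M>0$ and a determinate eigenspace. The spectrum of $\bm{\Sigma}^{\row}\kappa$ is $\{\sigma_r\lambda_i\}$, where $\sigma_r$ are the eigenvalues of $\bm{\Sigma}^{\row}$. This spectrum is \emph{not} the same as that of $\bm{I}_p\kappa$. Consequently the top-$M$ eigenspaces of the two operators need not correspond under the map $\bm{A}=(\bm{\Sigma}^{\row})^{-1/2}$.

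\textbf{Step 2: the main obstacle, how the truncation transforms.} Invariance has to be argued at the level of the whole truncated quantity, not through the lemma's eigenpairs. I would write $\fmmd^2(\bm{X};M)$ as the quadratic form of $\bm{X}-\bm{\mu}$ with the pseudo-inverse of $\C$ restricted to its top-$M$ eigenspace. I expect the identity to hold only when the top-$M$ eigenspace of $\bm{\Sigma}^{\row}\kappa$ is $\mathbb{R}^p\otimes\mathrm{span}(\ueig_1,\dots,\ueig_m)$, that is, when every product $\sigma_r\lambda_i$ with $i\le m$ exceeds every product with $i>m$. I would therefore state and use this condition, which the corollary implicitly assumes by taking $M=mp$ with eigenfunctions $\ueig_1,\dots,\ueig_m$ of $\kappa$. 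Under it, a direct computation suffices. The eigenpairs of $\C$ within this space are $(\sigma_r\lambda_i,\bm{v}_r\ueig_i)$, so
\[
\fmmd^2=\sum_{i\le m}\sum_{r}\frac{\langle \bm{v}_r'(\bm{X}-\bm{\mu}),\ueig_i\rangle^2}{\sigma_r\lambda_i}
=\sum_{i\le m}\frac{1}{\lambda_i}\bigl\|(\bm{\Sigma}^{\row})^{-1/2}\langle \bm{X}-\bm{\mu},\ueig_i\rangle\bigr\|_2^2,
\]
where $\langle\bm{X}-\bm{\mu},\ueig_i\rangle\in\mathbb{R}^p$ is taken componentwise and the second equality uses $\sum_r\sigma_r^{-1}\bm{v}_r\bm{v}_r'=(\bm{\Sigma}^{\row})^{-1}$. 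This route sidesteps Lemma~\ref{lemma:affine_invariance} and is my preferred argument.

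\textbf{Step 3: expanding the norm.} Write $\|\bm{w}\|_2^2=\sum_j(\bm{e}_j'\bm{w})^2$ and swap the two finite sums. This gives
\[
\sum_{j=1}^p\sum_{i\le m}\lambda_i^{-1}\langle \bm{e}_j'(\bm{\Sigma}^{\row})^{-1/2}(\bm{X}-\bm{\mu}),\ueig_i\rangle^2,
\]
which is exactly $\sum_j\fmd^2(\bm{e}_j'(\bm{\Sigma}^{\row})^{-1/2}\bm{X},\bm{e}_j'(\bm{\Sigma}^{\row})^{-1/2}\bm{\mu};\kappa,m)$ by \eqref{eq:fmd_uni}. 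This proves (i). Equivalently, one can view it as Lemma~\ref{lemma:independent processes} applied to $\bm{Y}$, whose components are uncorrelated with identical kernel $\kappa$, so that $m_j=m$ for every $j$.

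\textbf{Step 4: part (ii).} When $\bm{\Sigma}^{\row}=\diag(\sigma_1^2,\dots,\sigma_p^2)$, we have $\bm{e}_j'(\bm{\Sigma}^{\row})^{-1/2}\bm{X}=X_j/\sigma_j$. Since $\fmd^2$ is quadratic in its first two arguments, $\fmd^2(X_j/\sigma_j,\mu_j/\sigma_j;\kappa,m)=\sigma_j^{-2}\fmd^2(X_j,\mu_j;\kappa,m)$, and (ii) follows. In the diagonal case the eigenvectors $\bm{v}_r$ are canonical vectors, so the Step 2 computation can also be checked directly.
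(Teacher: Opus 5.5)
Your argument is correct, and it takes a genuinely different route from the paper.

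The paper's proof of (i) is a one-liner. It whitens with $\bm{A}=(\bm{\Sigma}^{\row})^{-1/2}$ and uses Lemma~\ref{lemma:affine_invariance} to move $\fmmd^2$ at level $M$ from $\bm{X}$ to $\bm{Y}=\bm{A}\bm{X}$. It then applies Lemma~\ref{lemma:independent processes} to $\bm{Y}$, whose uncorrelated components share the kernel $\kappa$, so $m_j=m$ for all $j$; (ii) is read off from (i). You instead compute directly from the separable eigenpairs $(\sigma_r\lambda_i,\bm{v}_r\xi_i)$ of Appendix~\ref{appendix:sub:FPCA}. You collapse the sum over $r$ via $\sum_r\sigma_r^{-1}\bm{v}_r\bm{v}_r'=(\bm{\Sigma}^{\row})^{-1}$ and expand the norm coordinatewise.

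The caveat you raise is real. For a general regular $\bm{A}$, the truncated-inverse identity in the proof of Lemma~\ref{lemma:affine_invariance} holds only if truncation commutes with $\bm{A}$. Here that is exactly your gap condition $\min_r\sigma_r\,\lambda_m>\max_r\sigma_r\,\lambda_{m+1}$. For example, take $\bm{\Sigma}^{\row}=\diag(100,1)$, $\lambda_1=1$, $\lambda_2=1/2>\lambda_3$ and $m=1$. The two leading eigenpairs of $\C$ are $(100,\xi_1\bm{e}_1)$ and $(50,\xi_2\bm{e}_1)$, so $\fmmd^2(\bm{X};2)=\langle X_1-\mu_1,\xi_1\rangle^2/100+\langle X_1-\mu_1,\xi_2\rangle^2/50$. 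The right-hand sides of (i) and (ii), however, both equal $\langle X_1-\mu_1,\xi_1\rangle^2/100+\langle X_2-\mu_2,\xi_1\rangle^2$.

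So the paper's route is shorter and reuses its lemmas, but it leaves this hypothesis implicit. It is consistent with the indexation $k=k(i,j)$, $i\le m$, of Appendix~\ref{appendix:sub:FPCA}. Your route is self-contained and makes the needed assumption explicit.

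You might add that the condition holds automatically in the cases the paper actually uses: $\bm{\Sigma}^{\row}\propto\bm{I}_p$, and rank-$m$ processes with $\lambda_{m+1}=0$, as in the proof of Theorem~\ref{theorem:fmmd_basis}(ii). You should also rename the eigenvalues of $\bm{\Sigma}^{\row}$ (e.g.\ $\lambda^{\row}_r$) so they do not clash with the $\sigma_j^2$ of part (ii).
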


The separability of the covariance structure transfers further to a certain separability of the eigendecomposition of $\C$, thus simplifying the calculation of eigenpairs $(\pi_i,\meig_i)$, $i\geq 1$, to a separate univariate functional eigendecomposition of the covariance associated with $\kappa$, and a multivariate eigendecomposition of $\bm{\Sigma}^{\row}$. {As argued in \cite{aston2017tests}, even if separability is misspecified, separable eigenfunctions still provide a basis to represent the data. Although this may not guarantee optimal efficiency in such cases, they can still achieve near-optimal performance under suitable assumptions \citep{chen2017modelling}.} For more information, see Appendix~\ref{appendix:sub:FPCA}.

One of the most prominent members of the separable-covariance class of processes is a multivariate Gaussian process: $\bm{X}$ is a multivariate Gaussian process if every finite collection of realizations has a matrix-variate normal distribution. 
We provide a brief overview of matrix normal distribution in Appendix~\ref{appendix:sub:matrix_normal}; for more details on matrix-variate distributions, see, e.g., \cite{gupta1999, gupta2012elliptically}.
Multivariate Gaussian processes are fully characterized by their first and second moments, and in continuation, we write $\bm{X}\sim \mathcal{MGP}(\bm{\mu},\bm{\Sigma}^{\row},\kappa)$. For a formal definition, see Appendix~\ref{appendix:sub:multivariate_gaussian_process}, and for an overview of properties of multivariate Gaussian processes, see, e.g., \cite{chen2017modelling,chen2023multivariate}. 
Lemma \ref{lemma:fmmd_distribution} gives the distribution of $\fmmd$ under the assumption of Gaussianity. 

\begin{lemma}\label{lemma:fmmd_distribution}
    Let $\bm{X} \sim \mgp(\bm{\mu},\bm{\Sigma}^{\row},\kappa)$ be a multivariate Gaussian process. For $M>0$,
    $$ \fmmd^2(\bm{X},\bm{\mu};\bm{\Sigma}^{\row},\kappa,M)\sim \chi^2(M),
    $$
     where $\chi^2(M)$ is the chi-square distribution with $M$ degrees of freedom.
\end{lemma}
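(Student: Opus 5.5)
The plan is to use the Karhunen--Lo\`eve expansion from Section~\ref{section:Preliminaries}. By definition,
\[
\fmmd^2(\bm{X};M)=\sum_{k=1}^M \pi_k^{-1}\beta_k^2,
\qquad
\beta_k=\innerproduct{\bm{X}-\bm{\mu}}{\meig_k}.
\]
It therefore suffices to show two things. First, $(\beta_1,\dots,\beta_M)'$ is jointly Gaussian with mean zero. Second, its covariance matrix is $\diag(\pi_1,\dots,\pi_M)$. Then the standardized scores $\pi_k^{-1/2}\beta_k$ are i.i.d.\ $\NN(0,1)$, because uncorrelated jointly Gaussian variables are independent. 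The sum of their squares is then $\chi^2(M)$ by definition. The assumption $\pi_M>0$, implicit in the well-definedness of $\fmmd$, guarantees that the standardization is possible.

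The first step is joint Gaussianity of the scores. Each $\beta_k=\int_\T \meig_k'(t)(\bm{X}(t)-\bm{\mu}(t))\,\mathrm{d}t$ is a continuous linear functional of a Gaussian process, so any linear combination $\sum_k a_k\beta_k$ is again such a functional. I would argue this via Riemann sums. By $L^2$-continuity of $\bm{X}$ and continuity of the $\meig_k$, the integral is the $L^2(\Omega)$-limit of finite sums $\sum_i \meig_k'(t_i)(\bm{X}(t_i)-\bm{\mu}(t_i))\Delta t_i$. Each such sum is a linear combination of entries of a finite collection $(\bm{X}(t_1),\dots,\bm{X}(t_n))$. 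That collection is matrix normal by the definition of $\mgp$, so each Riemann sum is univariate Gaussian. $L^2$-limits of centered Gaussian variables are Gaussian, with the limiting variance, and by Cram\'er--Wold the vector $(\beta_1,\dots,\beta_M)'$ is multivariate normal. Its mean is zero by Fubini, since $\E[\bm{X}(t)-\bm{\mu}(t)]=0$.

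The second step is the covariance computation, which again uses Fubini and the eigen-equation. We have
\[
\cov(\beta_k,\beta_l)=\int\!\!\int \meig_k'(s)\K(s,t)\meig_l(t)\,\mathrm{d}s\,\mathrm{d}t=\innerproduct{\meig_k}{\C\meig_l}=\pi_l\delta_{kl},
\]
by orthonormality. The separable structure $\K=\bm{\Sigma}^{\row}\kappa$ plays no special role here beyond identifying the eigenpairs. Those eigenpairs are products of eigenvectors of $\bm{\Sigma}^{\row}$ and eigenfunctions of $\kappa$, as in Appendix~\ref{appendix:sub:FPCA}. Alternatively, one could use Corollary~\ref{cor:independent processes}(i) for $M=mp$: after whitening by $(\bm{\Sigma}^{\row})^{-1/2}$, the components become i.i.d.\ univariate Gaussian processes with kernel $\kappa$. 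The general-$M$ argument above avoids that restriction, so I would present it directly.

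The main obstacle is the rigorous justification of Gaussianity of the integral functionals, namely the interchange of limits and the passage from finite-dimensional matrix normality to the scores. The covariance calculation is routine once Fubini is justified. Fubini holds because $\E\int\norm{\bm{X}(t)-\bm{\mu}(t)}^2\,\mathrm{d}t=\int\tr\K(t,t)\,\mathrm{d}t<\infty$ on the compact interval $\T$.
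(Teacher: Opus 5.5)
Your proposal is correct and follows the same route as the paper. The paper's proof also writes $\fmmd^2$ as $\sum_{k\le M}\beta_k^2/\pi_k$ with Karhunen--Lo\`eve scores $\beta_k\sim\NN(0,\pi_k)$ that are uncorrelated and therefore independent, and concludes $\chi^2(M)$. The only difference is that you verify the joint Gaussianity (via mean-square Riemann sums and Cram\'er--Wold) and the covariance $\diag(\pi_1,\dots,\pi_M)$ explicitly, whereas the paper cites the Karhunen--Lo\`eve literature for these facts.
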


\section{Robust Parameter Estimation for Separable Processes}
\label{section:parameter_estimation}
In practice, we do not observe continuous functions, but rather a discrete set of functional values. As discussed in \cite{BASNA2022}, a fundamental step in FDA is often to transform these discretely recorded data into a functional form, allowing each observed function to be evaluated at any point within its continuous domain \(t \in \T\). Typically, the functional object is approximated by
linear combinations of a finite number of basis functions, where this representation is exact only for functions of finite rank.

\subsection{Finite Basis Representation}

Let $\bm{X}_1,\dots,\bm{X}_n$, where for $i = 1,\dots,n.$,  $\bm{X}_i = [X_{i,k}(t_l)]_{1\leq k\leq p,\,1\leq l\leq q}\in\mathbb{R}^{p\times q}$, 
be an i.i.d. sample of the multivariate random processes $\mathcal{MSP}(\bm{\mu},\bm{\Sigma}^{\row},\kappa)$ with separable covariance $\K=\bm{\Sigma}^{\row}\kappa$, observed at time points $t_1,\dots,t_q\in\T$, for $q\in\N$.

Given $m\in\mathbb{N}$, let $\mbasis{(t)}=(\unibasis_1(t),\dots \unibasis_m(t))'$, $t\in\T$ be a fixed basis that spans an $m$-dimensional subspace of $L^2(\T)$. We transform each discretely observed $\bm{X}_i$, $i=1,\dots,n$,
to a \textit{functional form} by representing its components as a linear combination of basis functions in $\mbasis$:
\begin{equation*}%
    X_{i,j}^{(m)}(t)=\sum_{k=1}^m{a}_{j,k}^{(i)}{}\unibasis_k(t)=\bm{a}_{j}^{(i)}{}'\mbasis(t),\quad j=1,\dots,p,\,i=1,\dots,n,
\end{equation*}
where superscript $(m)$ emphasizes that ${X}_{i,j}^{(m)}$, $j=1,\dots,p,\,i=1,\dots,n$, is at most a \textit{rank-$m$} process, meaning that its covariance operator has at most  $m$ non-zero eigenvalues. 

Collecting all coefficients 
$\bm{a}_j^{(i)}=(a_{j,1}^{(i)},\ldots ,a_{j,m}^{(i)})'$, $j=1,\dots,p$, corresponding to the $i$th observation, in a matrix $\bm{A}_i=(\bm{a}_1^{(i)},\hdots ,\bm{a}_p^{(i)})\in\mathbb{R}^{m\times p}$, we can represent each observation as
\begin{equation}\label{eq:Xi in basis}
    \bm{X}_{i}^{(m)}(t)=\bm{A}_i'\mbasis(t),\quad i=1,\dots,n.
\end{equation}
For simplicity of the notation, we drop the superscript $(m)$ in the following and write $\displaystyle  \bm{X}_{i}(t)=\bm{A}_i'\mbasis(t)$, $i=1,\dots,n.$

The coefficients $\bm{a}_j^{(i)}$, $j=1,\dots,p,\,i=1,\dots,n$, are usually determined through least squares estimation, where the goal is to minimize the difference between the observed data and their approximation while ensuring smoothness; see \cite{ramsay_silverman_2005} for more details. Common choices of basis functions include splines, wavelets, and Fourier bases, among others; see, e.g., \cite{ramsay_silverman_2005} for an overview. The choice of the basis $\mbasis$ is beyond the scope of this paper, and for simplicity we use B-splines; see, e.g.,~\cite{Eilers1996}. For more details on a connection between a finite-basis representation \eqref{eq:Xi in basis} and noise smoothing in additive noise models, see Appendix~\ref{appendix:sub:smoothing}. %
The following theorem shows how the separability of the covariance $\K$ translates onto the distribution of a random matrix $\bm{A}$, and gives a direct connection between $\fmmd^2$ of $\bm{X}$ and the squared matrix Mahalanobis distance of $\bm{A}$, defined as  
\begin{align}
    \label{eq:mmd}
    \mmd^2(\bm{A}, \bm{\M}, \bm{\Sigma}^{\col}, \bm{\Sigma}^{\row}) 
    &= \mmd^2(\bm{A})
    = \tr((\bm{\Sigma}^{\row})^{-1} (\bm{A} - \bm{\M})' (\bm{\Sigma}^{\col})^{-1} (\bm{A} - \bm{\M}))
\end{align}
for an $m \times p$ random matrix $\bm{A}$, with mean matrix $\bm{\M} \in \R^{m \times p}$, and covariances $\bm{\Sigma}^{\col} \in \pds(m)$ and $\bm{\Sigma}^{\row} \in \pds(p)$, respectively \citep{mayrhofer2024robust}.%

\begin{theorem}\label{theorem:fmmd_basis}
    Let $\boldsymbol{X}(t)=\bm{A}'\mbasis(t)$ be a rank $m$ separable covariance process with mean $\bm{\mu}$ and covariance $\bm{K}=\boldsymbol{\Sigma}^{\row}\kappa$, with a regular matrix
    $\bm{A}=(\bm{a}_1,\hdots ,\bm{a}_p)\in\mathbb{R}^{m\times p}$, and a vector of basis functions
    $\mbasis=(\unibasis_1,\dots,\unibasis_m)'$.
     Then the following holds:
    \begin{itemize}
        \item[(i)] $\bm{A}$ has a matrix-variate distribution with mean $\bm{M}_{\bm{A}}$  and covariance $\mathrm{Cov}(\vect(\bm{A}))=\boldsymbol{\Sigma}^{\row}\otimes\boldsymbol{\Sigma}^{\col}$, for $\boldsymbol{\Sigma}^{\col}\in\pds(m)$, such that for every $s,t\in \T$, 
        $$
        \bm{M}_{\bm{A}}'\mbasis(t)=\bm{\mu}(t)\quad \text{and}\quad\mbasis'(s)\bm{\Sigma}^{\col}\mbasis(t)=\kappa(s,t).
        $$
        \item[(ii)] $\fmmd^2{(\boldsymbol{X};mp)}=\mathrm{tr}\left((\boldsymbol{\Sigma}^{\row})^{-1} (\bm{A}-\bm{M}_{\bm{A}})' (\boldsymbol{\Sigma}^{\col})^{-1}(\bm{A}-\bm{M}_{\bm{A}})\right)=\mmd^2{(\bm{A})}$.
        \item[(iii)] If additionally  $\bm{X}\sim \mathcal{MGP}(\bm{\mu},\bm{\Sigma}^{\row},\kappa)$ is a multivariate Gaussian process, then $\bm{A}\sim\mathcal{MN}(\bm{M}_{\bm{A}},\allowbreak\bm{\Sigma}^{\col},\allowbreak\bm{\Sigma}^{\row})$ follows a matrix normal distribution with mean $\bm{M}_{\bm{A}}$ and positive definite covariances $\bm{\Sigma}^{\row}$ and $\bm{\Sigma}^{\col}$ as in (i). %
    \end{itemize} 
\end{theorem}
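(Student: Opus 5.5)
We will use the Gram matrix $\bm{G}=\int_\T \mbasis(t)\mbasis'(t)\,\mathrm{d}t\in\pds(m)$. It is positive definite because $\unibasis_1,\dots,\unibasis_m$ are linearly independent.

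\textbf{Part (i).} Define the mean $\bm{M}_{\bm{A}}=\E[\bm{A}]$. Linearity of expectation then gives $\bm{M}_{\bm{A}}'\mbasis(t)=\bm{\mu}(t)$ immediately. Next, let $\bm{C}=\cov(\vect(\bm{A}))\in\R^{mp\times mp}$, written in $m\times m$ blocks $\bm{C}_{ij}=\cov(\bm{a}_i,\bm{a}_j)$. For every $s,t$ we have $\kappa_{ij}(s,t)=\mbasis'(s)\bm{C}_{ij}\mbasis(t)$, and separability says this equals $\Sigma^{\row}_{ij}\kappa(s,t)$.

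To recover the blocks, multiply by $\mbasis(s)$ on the left and $\mbasis'(t)$ on the right and integrate over $\T\times\T$. This gives $\bm{G}\bm{C}_{ij}\bm{G}=\Sigma^{\row}_{ij}\bm{K}_\kappa$ with $\bm{K}_\kappa=\iint\mbasis(s)\kappa(s,t)\mbasis'(t)\,\mathrm{d}s\,\mathrm{d}t$. Setting $\bm{\Sigma}^{\col}=\bm{G}^{-1}\bm{K}_\kappa\bm{G}^{-1}$, we get $\bm{C}_{ij}=\Sigma^{\row}_{ij}\bm{\Sigma}^{\col}$, i.e.\ $\bm{C}=\bm{\Sigma}^{\row}\otimes\bm{\Sigma}^{\col}$. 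We also obtain $\mbasis'(s)\bm{\Sigma}^{\col}\mbasis(t)=\kappa(s,t)$, by taking any diagonal $i$ with $\Sigma^{\row}_{ii}>0$ and dividing.

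It remains to show $\bm{\Sigma}^{\col}\in\pds(m)$. It is positive semidefinite as a covariance factor. For definiteness we use the rank-$m$ assumption: the process has $mp$ nonzero eigenvalues, which is exactly what makes $\fmmd^2(\cdot;mp)$ well defined. Via part (ii)'s eigen-analysis, this forces $\bm{C}$, and hence $\bm{\Sigma}^{\col}$, to be nonsingular. This is the step requiring the most care, since one must interpret ``rank $m$'' and regularity of $\bm{A}$ so that $\bm{\Sigma}^{\col}$ is invertible.

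\textbf{Part (ii).} We compute the eigenpairs of $\C$. Write $\bm{\Sigma}^{\row}=\bm{V}\bm{D}\bm{V}'$, and solve the univariate eigenproblem for $\kappa$ within $\mathrm{span}(\mbasis)$. Taking $\ueig(t)=\bm{c}'\mbasis(t)$, the equation $\int\kappa(s,t)\ueig(t)\,\mathrm{d}t=\lambda\ueig(s)$ becomes $\bm{\Sigma}^{\col}\bm{G}\bm{c}=\lambda\bm{c}$. Substituting $\bm{u}=\bm{G}^{1/2}\bm{c}$ turns this into the symmetric problem $\bm{G}^{1/2}\bm{\Sigma}^{\col}\bm{G}^{1/2}\bm{u}=\lambda\bm{u}$, with $\bm{u}$ orthonormal. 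This yields $m$ positive eigenpairs $(\lambda_l,\ueig_l)$.

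The eigenpairs of $\C$ are then $\pi=d_j\lambda_l$ and $\meig=\bm{v}_j\ueig_l$; this is the separable eigendecomposition referred to in Appendix~\ref{appendix:sub:FPCA}. There are exactly $mp$ of them, and they span everything. Hence $M=mp$ uses all nonzero eigenvalues, and ordering issues disappear.

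Now plug into the definition of $\fmmd^2$. With $\bm{B}=\bm{A}-\bm{M}_{\bm{A}}$, the score is $\innerproduct{\bm{X}-\bm{\mu}}{\bm{v}_j\ueig_l}=\bm{c}_l'\bm{G}\bm{B}\bm{v}_j=\bm{u}_l'\bm{G}^{1/2}\bm{B}\bm{v}_j$. Summing, we obtain
\[
\fmmd^2=\tr\big(\bm{D}^{-1}\bm{V}'\bm{B}'\bm{G}^{1/2}\bm{U}\bm{\Lambda}^{-1}\bm{U}'\bm{G}^{1/2}\bm{B}\bm{V}\big).
\]
Finally, use $\bm{U}\bm{\Lambda}^{-1}\bm{U}'=(\bm{G}^{1/2}\bm{\Sigma}^{\col}\bm{G}^{1/2})^{-1}$, so the middle factor $\bm{G}^{1/2}(\cdot)^{-1}\bm{G}^{1/2}$ collapses to $(\bm{\Sigma}^{\col})^{-1}$. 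Together with $\bm{V}\bm{D}^{-1}\bm{V}'=(\bm{\Sigma}^{\row})^{-1}$ and cyclicity of the trace, this gives $\mmd^2(\bm{A})$ as in \eqref{eq:mmd}.

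\textbf{Part (iii).} Since $\bm{A}$ is recovered linearly from $\bm{X}$ via $\bm{A}'=\int\bm{X}(t)\mbasis'(t)\,\mathrm{d}t\,\bm{G}^{-1}$, it is Gaussian whenever $\bm{X}$ is a Gaussian process. The recovery integral can be approximated by Riemann sums of jointly Gaussian evaluations, so Gaussianity passes to the limit. Combining this with the mean and Kronecker covariance from (i) identifies the law of $\bm{A}$ as $\MN(\bm{M}_{\bm{A}},\bm{\Sigma}^{\col},\bm{\Sigma}^{\row})$.
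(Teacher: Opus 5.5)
Your proposal is correct, and it reaches the result more directly than the paper does.

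\textbf{Overall structure.} The paper proves (i) and (ii) first for $\bm{\Sigma}^{\row}=\bm{I}_p$. It then transfers them to general $\bm{\Sigma}^{\row}$ through $(\bm{\Sigma}^{\row})^{-1/2}\bm{X}$: in (i) by Kronecker algebra relating $\vect(\bm{A})$ to $\vect(\bm{A}(\bm{\Sigma}^{\row})^{-1/2})$, and in (ii) via Corollary~\ref{cor:independent processes} followed by affine invariance of $\fmmd$ (Lemma~\ref{lemma:affine_invariance}) and of $\mmd$. You keep $\bm{\Sigma}^{\row}$ general throughout.

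\textbf{Part (i).} You recover every block $\cov(\bm{a}_i,\bm{a}_j)$ from the two-argument identity $\kappa_{ij}(s,t)=\mbasis'(s)\bm{C}_{ij}\mbasis(t)$ by integrating against the Gram matrix. This also gives the explicit formula $\bm{\Sigma}^{\col}=\bm{G}^{-1}\bm{K}_\kappa\bm{G}^{-1}$. The paper instead infers that the off-diagonal blocks vanish from the equal-time identity at $s=t$. That identity only sees the symmetric part of $\cov(\bm{a}_i,\bm{a}_j)$, and for locally supported bases such as B-splines not even all of it. Using all pairs $(s,t)$, as you do, pins down each block, so your version is more complete here.

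\textbf{Positive definiteness of $\bm{\Sigma}^{\col}$.} The paper picks $t_1,\dots,t_m$ with $\bm{\Phi}=(\mbasis(t_1),\dots,\mbasis(t_m))$ invertible and appeals to positive definiteness of $\kappa$ at these points. Since $[\kappa(t_i,t_j)]_{i,j}=\bm{\Phi}'\bm{\Sigma}^{\col}\bm{\Phi}$, this is the same nondegeneracy you extract from the rank-$m$ hypothesis. Your step can be stated without the forward reference to (ii) that you were unsure about. The eigen-equation $\bm{\Sigma}^{\col}\bm{G}\bm{c}=\lambda\bm{c}$ holds for merely semidefinite $\bm{\Sigma}^{\col}$. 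So the nonzero eigenvalues of the operator with kernel $\kappa$ are exactly the positive eigenvalues of $\bm{G}^{1/2}\bm{\Sigma}^{\col}\bm{G}^{1/2}$. There are $m$ of them precisely when $\bm{\Sigma}^{\col}$ is nonsingular, hence positive definite.

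\textbf{Part (ii).} You write out the full separable spectrum $\pi=d_j\lambda_l$, $\meig=\bm{v}_j\ueig_l$ and sum directly. You also observe that $M=mp$ exhausts the nonzero spectrum, so ordering and ties play no role. Both routes turn on the same collapse $\bm{G}^{1/2}\bm{U}\bm{\Lambda}^{-1}\bm{U}'\bm{G}^{1/2}=(\bm{\Sigma}^{\col})^{-1}$, which is \eqref{eq:inv_eigen} with $\bm{W}=\bm{G}$. Your route is self-contained; the paper's reuses its earlier lemmas.

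\textbf{Part (iii).} The paper avoids your Riemann-sum limit by inverting the finite-dimensional relation $(\bm{X}(t_1),\dots,\bm{X}(t_m))=\bm{A}'\bm{\Phi}$. This gives joint Gaussianity of $\bm{A}$ exactly. Your limiting argument is valid but less economical.
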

The equivalent of Theorem~\ref{theorem:fmmd_basis} for univariate processes of finite rank is given in Corollary~\ref{corollary:fmd} in Appendix~\ref{appendix:fmmd_proofs}. Theorem \ref{theorem:fmmd_basis}  thus establishes a formal equivalence between the functional and matrix-variate Mahalanobis distances under separability, thereby linking functional robustness to matrix-variate robustness theory.

Details on covariance estimation for the random matrices can be found in \cite{Dutilleul1999,soloveychik2016gaussian}.

As the primary objective is robust covariance estimation, we employ robust estimators of the moments of the random matrix \(\bm{A}\), specifically using the MMCD estimators for mean and covariance, as introduced in \cite{mayrhofer2024robust}. The MMCD estimators $(\hat{\bm{\M}}_{\bm{A}},\hat{\bm{\Sigma}}^{\row}, \hat{\bm{\Sigma}}^{\col})$ are highly robust, efficient and consistent estimators for matrix-variate elliptical distributions. We briefly review the MMCD method in Appendix~\ref{section:MMCD}.

We emphasize that there are computational arguments for using smoothed data: First, parameter estimation and distance computation are performed on the coefficient matrix, which has a much lower dimension than the raw data matrix, resulting in a substantial reduction in computation time, see Appendix~\ref{appendix:simulations_time} for a detailed comparison. Second, the required sample size and breakdown point of the MMCD estimators depend on the ratio of the number of rows to columns. As this ratio approaches one, fewer samples are needed, and the breakdown point increases. 

In practice, both the type and the dimension of the basis determine the degree of smoothing. In our applications, we choose the number of basis functions so that the main functional structure of the data is retained without excessive smoothing. Standard automatic selection criteria, such as cross-validation aimed at minimizing reconstruction error, are not directly suited to the present setting, since an accurate reconstruction of anomalous curves is not necessarily desirable for outlier detection. Developing a robust data-driven procedure for selecting the basis dimension would therefore constitute a separate methodological problem and is beyond the scope of this work.

A detailed pseudocode for robust parameter estimation of separable covariance processes using MMCD estimators is given in Algorithm~\ref{alg:alg1} in Appendix~\ref{appendix:algorithm}. 
Since covariance estimation and FPCA are closely related, we outline how to compute the robust functional principal components in the separable covariance setting in Algorithm~\ref{alg:alg2} in Appendix~\ref{appendix:algorithm}.

\section{Explainable Outlier Detection}~\label{section:shapley}
The combination of the MMCD estimators \citep{mayrhofer2024robust} with the truncated multivariate functional Mahalanobis distances provides a reliable framework for outlier detection. To understand why an observation is outlying, we propose a method for additively decomposing the truncated (multivariate) functional Mahalanobis distance into time-coordinate-specific outlyingness contributions. As in \cite{mayrhofer2022}, we use Shapley values \citep{shapley1953} to obtain those decompositions; they were originally introduced in cooperative game theory \citep{peters2008} and gained popularity in the field of Explainable AI~\citep{lundberg2017}. 

Intuitively, the Shapley value decomposition attributes an observation’s overall outlyingness to specific components and domain subintervals. In the functional context, this enables localized interpretability, showing when and which component contributes most to the outlyingness.

As a starting point, we briefly review Shapley values in the multivariate setting. For ease of reading, a summary of the notation used in this section is provided in Appendix~\ref{appendix:shapley_proofs}. For a $p$-variate observation $\bm{x} = (x_1,\ldots ,x_p)'$ from a population with mean $\bm{\mu} = (\mu_1,\ldots ,\mu_p)'$, covariance matrix $\bm{\Sigma} \in \pds(p)$,  and $P=\{1,\ldots ,p\}$ the index set of the variables, the outlyingness contributions $\bm{\theta}(\bm{x},\bm{\mu},\bm{\Sigma}) = \bm{\theta}(\bm{x}) = (\theta_1(\bm{x}),\ldots, \theta_p(\bm{x}))'$
assign each variable its average marginal contribution to the squared Mahalanobis distance~\eqref{eq:md}, i.e.,
\begin{align}
    \theta_k(\bm{x},\bm{\mu},\bm{\Sigma}) = \sum_{S \subseteq P\setminus\{k\}} \frac{\abs{S}!(p-\abs{S}-1)!}{p!} \Delta_k \md^2(\xhat{S})
    = (x_k-\mu_k) \sum_{j =1}^p (x_j-\mu_j) \omega_{jk},
    \label{eq:shapley_md}
\end{align}
simply written as $\theta_k(\bm{x}) = \theta_k(\bm{x},\bm{\mu},\bm{\Sigma})$, with marginal outlyingness contributions
\begin{equation}
    \Delta_k \md^2(\xhat{S}) := \md^2(\xhat{S\cup\{k\}})-\md^2(\xhat{S}) \quad \text{and} \quad
    \xhatj{S}{j}:= \begin{cases}
        x_j & \text{if } j \in S\\
        \mu_j & \text{if } j \notin S
    \end{cases}
    \label{eq:define_xj}
\end{equation}
as the components of $\xhat{S}$. Here, $\omega_{jk}$ is the element $(j,k)$ of $\bm{\Omega} = \bm{\Sigma}^{-1}$. The efficiency property of the Shapley value implies that $\sum_{k = 1}^p \theta_k(\bm{x}) = \md^2{(\bm{x})}$. 
For more details on outlier explanation based on Shapley values and its properties for multivariate and matrix-variate data, see Appendix~\ref{appendix:shapley_proofs}.

\subsection{Univariate Functional Data}%
Consider two univariate stochastic processes $X, Y \in L^2(\T)$ with $\T = \T_1 \cup \cdots \cup \T_d$, $\T_a \cap \T_b = \emptyset$ for all $a \neq b$ $\in \{1, \ldots ,d\}$, then the inner product $\innerproduct{X}{Y} = \sum_{a = 1}^d \innerproduct{X}{Y}_{\T_a}$ with $\innerproduct{X}{Y}_{\T_a} = \int_{\T_a} X(t)Y(t) \mathrm{d}t$.
To generalize Equation~\eqref{eq:define_xj} to the functional setting, consider $X \sim \sp(\mu, \kappa)$, with marginal outlyingness contribution on the subinterval $\T_a$ to $\fmd^2(X,\mu;\kappa,m)$ defined as
\begin{equation}
    \label{eq:delta_univ_time}
    \Delta_{\T_a} \fmd^2(\hat{X}^{R},\mu;\kappa,m) := \fmd^2(\hat{X}^{R\cup\{a\}},\mu;\kappa,m)-\fmd^2(\hat{X}^{R},\mu;\kappa,m) 
\end{equation} 
with $R \subseteq D \setminus \{a\}, D = \{1,\dots,d\},$ and 
\begin{equation*}
    \hat{X}^{R}(t):= 
    \begin{cases}
        X(t) & \text{if } t \in \bigcup_{b \in R} \T_b\\
        \mu(t) & \text{if } t \notin \bigcup_{b \in R} \T_b 
    \end{cases}.
\end{equation*}
\begin{proposition}\label{proposition:shapley_fmd}
    For $X \sim \sp(\mu, \kappa)$ and $\Delta_{\T_a} \fmd^2(\hat{X}^{R},\mu;\kappa,m)$ as in Equation~\eqref{eq:delta_univ_time},
    the time-specific outlyingness contribution within the subinterval $\T_a$ based on the Shapley value is given by
    \begin{align}
        \theta_{\T_a}(X,\mu;\kappa,m) :=& \sum_{R \subseteq D\setminus\{a\}} \frac{\abs{R}!(d-\abs{R}-1)!}{(d)!} \Delta_{\T_a} \fmd^2(\hat{X}^{R},\mu;\kappa,m) \nonumber\\
        &= \sum_{i = 1}^m \frac{1}{\lambda_i}\innerproduct{X - \mu}{\ueig_i}_{\T_a} \innerproduct{X - \mu}{\ueig_i}, \label{eq:shapley_time}
    \end{align}
    where $(\lambda_i,\ueig_i), i = 1,\dots,m,$ denote the eigenpairs of the covariance %
    $\mathcal{K}$ with kernel $\kappa$.
\end{proposition}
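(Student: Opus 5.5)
The plan is to show that $\fmd^2$, viewed as a function of which subintervals are ``switched on'', is a quadratic form in the indicator vector of the coalition. The Shapley value of such a quadratic game has a known closed form, and matching that form to the right-hand side of \eqref{eq:shapley_time} finishes the argument.

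\textbf{Reduction to a quadratic form.} Write $Y = X-\mu$ and set $c_{i,b} := \innerproduct{Y}{\ueig_i}_{\T_b}$ for $i=1,\dots,m$ and $b\in D$. By definition, $\hat{X}^R-\mu = Y\,\mathds{1}_{\bigcup_{b\in R}\T_b}$. Since the $\T_b$ are disjoint, the inner product splits as
\[
\innerproduct{\hat{X}^R-\mu}{\ueig_i} = \sum_{b\in R} c_{i,b}.
\]
Substituting into \eqref{eq:fmd_uni} gives
\[
v(R):=\fmd^2(\hat{X}^R,\mu;\kappa,m) = \sum_{i=1}^m \frac{1}{\lambda_i}\Big(\sum_{b\in R} c_{i,b}\Big)^2 = \sum_{b,b'\in R} w_{bb'}, \qquad w_{bb'} := \sum_{i=1}^m \frac{c_{i,b}c_{i,b'}}{\lambda_i}.
\]
The matrix $\bm{W}=[w_{bb'}]$ is symmetric. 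The game $v$ therefore has exactly the structure of the multivariate case \eqref{eq:shapley_md}, with $(x_j-\mu_j)\,\omega_{jk}\,(x_k-\mu_k)$ replaced by $w_{jk}$.

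\textbf{Computing the Shapley value.} The marginal contribution is
\[
\Delta_{\T_a} v(R) = v(R\cup\{a\})-v(R) = w_{aa} + 2\sum_{b\in R} w_{ab}.
\]
Averaging with the Shapley weights uses two facts. First, the weights $\frac{|R|!(d-|R|-1)!}{d!}$ sum to one over $R\subseteq D\setminus\{a\}$. Second, for each fixed $b\neq a$, the total weight of the sets $R$ containing $b$ equals $1/2$, by the symmetry between orderings in which $b$ precedes $a$ and those in which it follows. Together these give
\[
\theta_{\T_a} = w_{aa} + \sum_{b\neq a} w_{ab} = \sum_{b\in D} w_{ab}.
\]

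\textbf{Resumming.} Finally,
\[
\sum_{b} w_{ab} = \sum_{i=1}^m \frac{1}{\lambda_i}\, c_{i,a}\sum_b c_{i,b} = \sum_{i=1}^m \frac{1}{\lambda_i}\innerproduct{X-\mu}{\ueig_i}_{\T_a}\innerproduct{X-\mu}{\ueig_i},
\]
which is \eqref{eq:shapley_time}.

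The only step needing care is the weight identity $\sum_{R\ni b}\frac{|R|!(d-|R|-1)!}{d!}=\tfrac12$. I would prove it via the random-permutation interpretation of Shapley weights: $R$ is the set of players preceding $a$ in a uniformly random ordering, and $b$ precedes $a$ with probability $1/2$. Alternatively, the identity can be cited from the multivariate result \eqref{eq:shapley_md}, applied with the fully general symmetric matrix $\bm{W}$.
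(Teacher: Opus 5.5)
Your proof is correct and follows essentially the same route as the paper. Both arguments expand the marginal contribution as $w_{aa}+2\sum_{b\in R}w_{ab}$ (the paper writes this directly in terms of the restricted inner products $\innerproduct{X}{\ueig_i}_{\T_b}$), use that the Shapley weights sum to one, and show that each $b\neq a$ lies in $R$ with total weight $\tfrac12$. The only difference is how that last fact is verified: the paper counts explicitly, $\sum_{r=1}^{d-1}\frac{r!(d-r-1)!}{d!}\binom{d-2}{r-1}=\sum_{r=1}^{d-1}\frac{r}{d(d-1)}=\tfrac12$, whereas you use the random-ordering interpretation, which is equally valid.
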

The following lemma outlines how to efficiently compute ~\eqref{eq:shapley_time} for smooth functions represented in a basis. Let $\mbasis(t) = (\unibasis_1(t),\dots,\unibasis_m(t))', t \in \T$ with $m \in \N$ be a family of basis functions in $L^2(\T)$. The rank $m \times m$ matrix of inner products of $\mbasis$ is denoted as $\bm{W}=\int_{\T}\mbasis(t)\mbasis'(t)\mathrm{d}t$ and $\bm{W}_{\T_a} = \int_{\T_a}\mbasis(t)\mbasis'(t)\mathrm{d}t$ represents the matrix of inner products restricted to $\T_a \subseteq \T$. 
\begin{lemma}\label{lemma:shapley_using_coefficents_univariate}
    Let $X \sim \sp(\mu, \kappa)$ be a rank $m \in \N$ stochastic process as in Corollary~\ref{corollary:fmd}, with $X(t) = \bm{a}'\mbasis(t)$, $\mu(t) = \bm{m}_{\bm{a}}'\mbasis(t)$, and $\kappa(s,t) = \mbasis'(s)\bm{\Sigma}\mbasis(t)$, for $s,t \in \T$, then
    \begin{align*}
        \theta_{\T_a}(X,\mu;\kappa,m) 
        &= (\bm{a}-\bm{m}_{\bm{a}})'\bm{W}_{\T_a} \bm{W}^{-1}\bm{\Sigma}^{-1} (\bm{a}-\bm{m}_{\bm{a}}).
    \end{align*} 
\end{lemma}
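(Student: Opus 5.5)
The approach is to start from the closed form in Proposition~\ref{proposition:shapley_fmd},
\[
\theta_{\T_a}(X,\mu;\kappa,m)=\sum_{i=1}^m \lambda_i^{-1}\innerproduct{X-\mu}{\ueig_i}_{\T_a}\innerproduct{X-\mu}{\ueig_i},
\]
and write every ingredient in coefficient form. Then $X-\mu=(\bm{a}-\bm{m}_{\bm{a}})'\mbasis$. As in Corollary~\ref{corollary:fmd}, the univariate analogue of Theorem~\ref{theorem:fmmd_basis}, the eigenfunctions of $\mathcal{K}$ lie in $\mathrm{span}(\mbasis)$; write $\ueig_i=\bm{c}_i'\mbasis$. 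The inner products then become
\[
\innerproduct{X-\mu}{\ueig_i}_{\T_a}=(\bm{a}-\bm{m}_{\bm{a}})'\bm{W}_{\T_a}\bm{c}_i, \qquad \innerproduct{X-\mu}{\ueig_i}=\bm{c}_i'\bm{W}(\bm{a}-\bm{m}_{\bm{a}}).
\]
Hence $\theta_{\T_a}=(\bm{a}-\bm{m}_{\bm{a}})'\bm{W}_{\T_a}\bigl(\sum_i\lambda_i^{-1}\bm{c}_i\bm{c}_i'\bigr)\bm{W}(\bm{a}-\bm{m}_{\bm{a}})$. It remains to show that $\bm{C}\bm{\Lambda}^{-1}\bm{C}'\bm{W}=\bm{W}^{-1}\bm{\Sigma}^{-1}$, where $\bm{C}=(\bm{c}_1,\dots,\bm{c}_m)$.

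To get this, I first translate the eigenproblem into coefficients. We have $\mathcal{K}\ueig_i(s)=\int\mbasis'(s)\bm{\Sigma}\mbasis(t)\mbasis'(t)\bm{c}_i\,dt=\mbasis'(s)\bm{\Sigma}\bm{W}\bm{c}_i$. Since $\mbasis$ is linearly independent ($\bm{W}$ has rank $m$), $\mathcal{K}\ueig_i=\lambda_i\ueig_i$ is equivalent to $\bm{\Sigma}\bm{W}\bm{c}_i=\lambda_i\bm{c}_i$. Orthonormality of the $\ueig_i$ reads $\bm{C}'\bm{W}\bm{C}=\bm{I}_m$. Because the process has rank $m$, all $m$ eigenvalues are positive, so $\bm{C}$ is square and invertible. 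Collecting the eigen-equations gives $\bm{\Sigma}\bm{W}\bm{C}=\bm{C}\bm{\Lambda}$, i.e.\ $\bm{\Sigma}=\bm{C}\bm{\Lambda}\bm{C}^{-1}\bm{W}^{-1}$. From $\bm{C}'\bm{W}\bm{C}=\bm{I}$ we get $\bm{C}^{-1}=\bm{C}'\bm{W}$, and hence $\bm{C}^{-1}\bm{W}^{-1}=\bm{C}'$. Therefore $\bm{\Sigma}=\bm{C}\bm{\Lambda}\bm{C}'$, and $\bm{\Sigma}^{-1}=\bm{C}'^{-1}\bm{\Lambda}^{-1}\bm{C}^{-1}=\bm{W}\bm{C}\bm{\Lambda}^{-1}\bm{C}'\bm{W}$, using $\bm{C}'^{-1}=\bm{W}\bm{C}$. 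Multiplying on the left by $\bm{W}^{-1}$ yields $\bm{W}^{-1}\bm{\Sigma}^{-1}=\bm{C}\bm{\Lambda}^{-1}\bm{C}'\bm{W}$, which is exactly the required identity.

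Substituting this into the quadratic form gives the claim. The main obstacle is this inversion step: the basis need not be orthonormal, so one must keep the $\bm{W}$ factors in the right places. I would double-check it with the sanity case $\T_a=\T$. There $\bm{W}_{\T_a}=\bm{W}$, and the expression reduces to $(\bm{a}-\bm{m}_{\bm{a}})'\bm{\Sigma}^{-1}(\bm{a}-\bm{m}_{\bm{a}})$, matching the univariate version of Theorem~\ref{theorem:fmmd_basis}(ii), i.e.\ efficiency.
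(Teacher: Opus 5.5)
Your proof is correct and follows the paper's argument. Like the paper, you start from Proposition~\ref{proposition:shapley_fmd}, write both inner products in coefficient form via $\ueig_i=\bm{c}_i'\mbasis$, and reduce everything to the identity $\bigl(\sum_{i}\lambda_i^{-1}\bm{c}_i\bm{c}_i'\bigr)\bm{W}=\bm{W}^{-1}\bm{\Sigma}^{-1}$. The only difference is that the paper imports this identity from Equation~\eqref{eq:inv_eigen} in the proof of Theorem~\ref{theorem:fmmd_basis}, where it is obtained by symmetrizing with $\bm{W}^{1/2}$, while you derive it directly from $\bm{\Sigma}\bm{W}\bm{C}=\bm{C}\bm{\Lambda}$ and $\bm{C}^{-1}=\bm{C}'\bm{W}$, which is equally valid.
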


\subsection{Multivariate Functional Data}%
Let us consider the $p$-variate stochastic process $\bm{X} \sim \msp(\bm{\mu},\K)$, with $\bm{X} = (X_1,\ldots ,X_p)'$, $P = \{1,\ldots ,p\}$ the index set of variables, and $(\pi_k,\meig_k), k = 1,\dots,M$, the eigenpairs of the covariance operator $\C$ with kernel $\K$.
To obtain the marginal outlyingness contribution of the $k$th coordinate function in the time interval $\T_a$ to $\fmmd^2$ based on the Shapley value, we modify Equation~\eqref{eq:define_xj} and define them as
\begin{equation}
    \Delta_{k,\T_c} \fmmd^2(\Xhat{S,R},\bm{\mu};\K,M) := \fmmd^2(\Xhat{S\cup\{k\}, R \cup \{c\}},\bm{\mu};\K,M)-\fmmd^2(\Xhat{S, R},\bm{\mu};\K,M) \label{eq:delta_time_multivariate}
\end{equation} 
with $\Xhat{S,R}=(\Xhatj{S,R}{1},\ldots ,\Xhatj{S,R}{p})'$,
\begin{equation}
\label{eq:define_fXj_a}
    \Xhatj{S,R}{j}(t):= \begin{cases}
        X_j(t) & \text{if } j \in S \land t \in \bigcup_{a \in R} \T_a\\
        \mu_j(t) & \text{if } j \notin S \lor t \notin \bigcup_{a \in R} \T_a 
    \end{cases}.
\end{equation}

\begin{proposition}~\label{proposition:shapley_cell}
    For $\bm{X} \sim \msp(\bm{\mu},\K)$ and $\Delta_{k,\T_c} \fmmd^2(\Xhat{S,R},\bm{\mu};\K,M)$ as in Equation~\eqref{eq:define_fXj_a},
    the outlyingness contributions of the $k$th coordinate in the time-interval $\T_a$ to $\fmmd^2(\bm{X},\bm{\mu};\K,M)$ based on the Shapley value are given by
    \begin{align}
        \Theta_{k,\T_{a}}(\bm{X},\bm{\mu};\K,M)
        = \sum_{i = 1}^M \frac{1}{\pi_i}\innerproduct{X_k - \mu_k}{\psi_{i,k}}_{\T_a} \innerproduct{\bm{X} - \bm{\mu}}{\meig_i}, \label{eq:shapley_time_and_coordinate_multivariate}
    \end{align}
    with $\psi_{i,k} = \meig_i'\bm{e}_k$ denoting the $k$th component of the $i$th eigenfunction $\meig_i$ of covariance operator $\C$ with kernel $\K$.
\end{proposition}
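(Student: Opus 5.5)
The plan is to recognize that $\fmmd^2$, viewed as a function of the "players" (coordinate $k$, interval $\T_a$), is a quadratic form in which the Shapley value can be computed in closed form, exactly as in the multivariate case~\eqref{eq:shapley_md}. Put $\bm{Z}=\bm{X}-\bm{\mu}$ and write $Z_{j,b}:=Z_j\mathds{1}_{\T_b}$ for $j\in P$, $b\in D$, so that $\bm{Z}=\sum_{j,b} Z_{j,b}\bm{e}_j$. The masked process in~\eqref{eq:define_fXj_a} satisfies
\begin{equation*}
\Xhat{S,R}-\bm{\mu}=\sum_{j\in S}\sum_{b\in R} Z_{j,b}\bm{e}_j .
\end{equation*}
Hence for each $i$ we have $\innerproduct{\Xhat{S,R}-\bm{\mu}}{\meig_i}=\sum_{j\in S,b\in R} z_{i,j,b}$, where $z_{i,j,b}:=\innerproduct{Z_j}{\psi_{i,k}}_{\T_b}$ with $k$ replaced by $j$, i.e.\ $z_{i,j,b}=\innerproduct{Z_j}{\psi_{i,j}}_{\T_b}$. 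Therefore
\begin{equation*}
\fmmd^2(\Xhat{S,R},\bm{\mu};\K,M)=\sum_{i=1}^M\frac{1}{\pi_i}\sum_{(j,b),(j',b')\in (S\times R)^2} z_{i,j,b}z_{i,j',b'} .
\end{equation*}
This is a quadratic form in indicator variables of the cells $(j,b)$, with symmetric weights $w_{(j,b),(j',b')}=\sum_i \pi_i^{-1}z_{i,j,b}z_{i,j',b'}$.

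The key step is to treat the $pd$ cells $(j,b)$ as players of a cooperative game with value $v(C)=\sum_{u,u'\in C}w_{u,u'}$ for a coalition $C$. For such pairwise games, the Shapley value of player $u$ is $\sum_{u'}w_{u,u'}$. Each pair $\{u,u'\}$ with $u\neq u'$ contributes $2w_{u,u'}$ only once both are present, and by symmetry this amount is split equally between $u$ and $u'$; the diagonal term $w_{u,u}$ goes entirely to $u$. This is the same argument that yields~\eqref{eq:shapley_md} and Proposition~\ref{proposition:shapley_fmd}, so I would invoke it directly or reprove it via linearity of the Shapley value over the unanimity-type games $v_{u,u'}(C)=\mathds{1}\{u,u'\in C\}$. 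Applying it to $u=(k,a)$ gives
\begin{equation*}
\Theta_{k,\T_a}=\sum_{i=1}^M\frac{1}{\pi_i}z_{i,k,a}\sum_{j,b}z_{i,j,b}=\sum_{i=1}^M\frac{1}{\pi_i}\innerproduct{X_k-\mu_k}{\psi_{i,k}}_{\T_a}\innerproduct{\bm{X}-\bm{\mu}}{\meig_i},
\end{equation*}
which is~\eqref{eq:shapley_time_and_coordinate_multivariate}.

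The main obstacle is matching the paper's coalition structure to this game. The marginal contribution~\eqref{eq:delta_time_multivariate} adds coordinate $k$ to $S$ and interval $c$ to $R$ simultaneously, so the masked set is always a product $S\times R$ rather than an arbitrary coalition of cells. I would therefore interpret the Shapley value as being over the $pd$ cells, with $\Xhat{}$ for a general coalition $C$ defined cellwise; this is the natural reading, and product coalitions are a special case. Under that interpretation the computation above is exact. Finally, I would check efficiency, $\sum_{k,a}\Theta_{k,\T_a}=\fmmd^2$, which follows because $\sum_{k,a}z_{i,k,a}=\innerproduct{\bm{Z}}{\meig_i}$ is consistent with~\eqref{eq:fMMD}.
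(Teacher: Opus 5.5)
Your proposal is correct and is essentially the paper's argument: the paper concatenates the $p$ coordinate functions into one univariate process on $\tilde{\T}$, so that the $pd$ coordinate--interval cells become $pd$ subintervals, and then applies Proposition~\ref{proposition:shapley_fmd}. This is exactly your cellwise quadratic game, with your unanimity-game decomposition standing in for the paper's explicit binomial averaging of marginal contributions. Your remark that coalitions must be taken cellwise rather than as products $S\times R$ is also how the paper's proof implicitly proceeds, since its player set is $\tilde D=\{1,\dots,pd\}$.
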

The proof is given in Appendix~\ref{appendix:shapley_proofs}; it relies on concatenating the coordinate functions and Proposition~\ref{proposition:shapley_fmd}. We can modify Equations~\eqref{eq:delta_time_multivariate} and \eqref{eq:define_fXj_a} to obtain time-specific outlyingness contributions by replacing all coordinate functions $X_j(t)$ by their mean $\mu_j(t)$ for a given interval $t \in \T_a$ for all $j = 1,\dots,p$ coordinates instead of only one. This yields the time-specific outlyingness contributions
\begin{align*}
    \theta_{\T_{a}}(\bm{X},\bm{\mu};\K,M) &= \sum_{i = 1}^M \frac{1}{\pi_i}\innerproduct{\bm{X} - \bm{\mu}}{\meig_i}_{\T_a} \innerproduct{\bm{X} - \bm{\mu}}{\meig_i}\\
    &= \sum_{k = 1}^p \Theta_{k,\T_{a}}(\bm{X},\bm{\mu};\K,M). %
\end{align*}
Similarly we can obtain coordinate-specific outlyingness contributions by replacing $X_k(t)$ by its mean $\mu_k(t)$ for the entire interval $\T$ in Equation~\eqref{eq:delta_time_multivariate} to obtain
\begin{align*}
    \theta_k(\bm{X},\bm{\mu};\K,M) &= \sum_{i=1}^M\frac{1}{\pi_i} \innerproduct{X_{k} - \mu_k}{\psi_{i,k}}  \innerproduct{\bm{X} - \bm{\mu}}{\meig_{i}} = \sum_{a = 1}^d \Theta_{k,\T_{a}}(\bm{X},\bm{\mu};\K,M), 
\end{align*}
see Appendix~\ref{appendix:shapley_proofs} for more details. The following corollary outlines how to compute the coordinate-specific outlyingness contributions for a multivariate stochastic process with a separable covariance structure.

\begin{corollary}~\label{corollary:shapley_separable_cell}
    For a multivariate stochastic process $\bm{X}\sim\msp(\bm{0},\bm{\Sigma}^{\row},\kappa)$ with separable covariance operator $\C = \bm{\Sigma}^{\row}\mathcal{K}$ and covariance kernel $\K(s,t) = \bm{\Sigma}^{\row} \kappa(s,t)$, Equation~\eqref{eq:shapley_time_and_coordinate_multivariate} becomes
    \begin{align*}
        \Theta_{k,\T_{a}}(\bm{X},\bm{\mu};\bm{\Sigma}^{\row}\kappa,M) 
        = \sum_{i = 1}^m\sum_{j = 1}^p \frac{1}{\lambda_i^{\ker}\lambda_j^{\row}}\left(\innerproduct{X_k - \mu_k}{\ueig_i}_{\T_a}{v}^{\row}_{j,k} \sum_{l = 1}^p\innerproduct{X_l - \mu_l}{\ueig_i}_{\T}{v}^{\row}_{j,l}\right).
    \end{align*}
    Here $(\lambda^{\ker}_i,\ueig_i),\, i = 1,\dots,m$, denote the $m$ largest eigenpairs of $\mathcal{K}$, $(\lambda^{\row}_j,\bm{v}^{\row}_j),\, j = 1,\dots,p$, the eigenpairs of $\bm{\Sigma}^{\row}$, and ${v}^{\row}_{j,k} = \bm{e}_k'\bm{v}^{\row}_j$.
\end{corollary}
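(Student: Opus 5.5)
The plan is to substitute the separable eigenstructure of $\C=\bm{\Sigma}^{\row}\mathcal{K}$ into Proposition~\ref{proposition:shapley_cell} and rewrite each factor. The separable eigendecomposition from Appendix~\ref{appendix:sub:FPCA} is the first ingredient. If $(\lambda^{\row}_j,\bm{v}^{\row}_j)$ are the eigenpairs of $\bm{\Sigma}^{\row}$ and $(\lambda^{\ker}_i,\ueig_i)$ those of $\mathcal{K}$, then the functions $\meig_{ij}(t)=\bm{v}^{\row}_j\ueig_i(t)$ satisfy
\[
\C\meig_{ij}(s)=\int_{\T}\bm{\Sigma}^{\row}\kappa(s,t)\bm{v}^{\row}_j\ueig_i(t)\,\mathrm{d}t=\lambda^{\row}_j\lambda^{\ker}_i\meig_{ij}(s).
\]
Orthonormality in $\mathcal{H}$ follows from $\innerproduct{\meig_{ij}}{\meig_{i'j'}}_{\mathcal{H}}=(\bm{v}^{\row}_j{}'\bm{v}^{\row}_{j'})\innerproduct{\ueig_i}{\ueig_{i'}}$. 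Completeness holds because $\{\bm{v}^{\row}_j\ueig_i\}$ spans $\mathcal{H}$ whenever $\{\ueig_i\}$ spans the range of $\mathcal{K}$. So the eigenpairs of $\C$ are exactly the products $(\lambda^{\ker}_i\lambda^{\row}_j,\meig_{ij})$.

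Next, take $M=mp$, as in Corollary~\ref{cor:independent processes}. I would let the index set of the $M$ retained eigenpairs be $\{1,\dots,m\}\times\{1,\dots,p\}$. This is the main obstacle: this set need not coincide with the $M$ largest products $\lambda^{\ker}_i\lambda^{\row}_j$. I would handle it the same way the paper does in Corollary~\ref{cor:independent processes}, by interpreting the truncation as keeping the first $m$ kernel eigenfunctions across all $p$ row directions. I would state this convention explicitly; it is also exactly what Theorem~\ref{theorem:fmmd_basis} uses for rank-$m$ processes, where all of them are retained.

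Then I would compute the two inner products in \eqref{eq:shapley_time_and_coordinate_multivariate} for the eigenfunction $\meig_{ij}$. Its $k$th component is $\psi_{ij,k}=v^{\row}_{j,k}\ueig_i$, so
\[
\innerproduct{X_k-\mu_k}{\psi_{ij,k}}_{\T_a}=v^{\row}_{j,k}\innerproduct{X_k-\mu_k}{\ueig_i}_{\T_a}.
\]
By \eqref{eq:inner_product_L2p},
\[
\innerproduct{\bm{X}-\bm{\mu}}{\meig_{ij}}=\sum_{l=1}^p v^{\row}_{j,l}\innerproduct{X_l-\mu_l}{\ueig_i}_{\T}.
\]
Finally, I would replace the single sum over $i=1,\dots,M$ in Proposition~\ref{proposition:shapley_cell} with the double sum over $(i,j)$, using weights $1/(\lambda^{\ker}_i\lambda^{\row}_j)$. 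Plugging in the two inner products gives the claimed formula directly. The hypothesis $\bm{\mu}=\bm{0}$ in the statement plays no role: the computation is identical for a general mean, because only $\bm{X}-\bm{\mu}$ appears.
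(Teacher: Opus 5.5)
Your proposal is correct and follows the paper's own proof. Like you, the paper plugs the separable eigenpairs $\pi=\lambda^{\ker}_i\lambda^{\row}_j$, $\meig=\ueig_i\bm{v}^{\row}_j$ from Appendix~\ref{appendix:sub:FPCA} into Proposition~\ref{proposition:shapley_cell}, pulls $v^{\row}_{j,k}$ out of the restricted inner product, and expands the $\mathcal{H}$ inner product componentwise. The truncation convention you state explicitly (keeping the full index set $\{1,\dots,m\}\times\{1,\dots,p\}$ with $M=mp$) is the indexation $k=k(i,j)$ the paper uses without comment, and your only slip is minor: when $\mathcal{K}$ has a nontrivial kernel (e.g.\ in the rank-$m$ case), $\{\bm{v}^{\row}_j\ueig_i\}$ spans $\mathbb{R}^p\otimes\overline{\mathrm{ran}\,\mathcal{K}}$, the closure of the range of $\C$, not all of $\mathcal{H}$, which still suffices to show these products exhaust the nonzero spectrum.
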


We can efficiently compute the outlyingness contributions in the time interval $\T_{a}$ for all $p$ variables simultaneously using matrix operations. Let $\tilde{\bm{A}}^{\T} \in \R^{p \times m}$ with entries $(\alpha_{ji}^{\T}) = \innerproduct{X_j - \mu_j}{\ueig_i}_{\T}$ and $\tilde{\bm{A}}^{\T_a} \in \R^{p \times m}$ with entries $(\alpha_{ji}^{\T_a}) = \innerproduct{X_j - \mu_j}{\ueig_i}_{\T_a}$ denote the matrices of inner products of the coordinate functions $X_j, j = 1,\dots,p,$ with the functional principal components $\ueig_i, i = 1,\dots,m,$ and $\bm{D}^{\ker} = \diag(\lambda^{\ker}_1,\dots,\lambda^{\ker}_m)$ the diagonal matrix of the corresponding ordered eigenvalues $\lambda^{\ker}_1 \geq \cdots \geq \lambda^{\ker}_m$ of the kernel function $\kappa$. Then the vector $\bm{\Theta}_{\T_a}(\bm{X},\bm{\mu};\K,M)$ with entries $\Theta_{k,\T_{a}}(\bm{X},\bm{\mu};\bm{\Sigma}^{\row}\kappa,M)$, for $k = 1,\dots,p$, can be computed as
\begin{align} \label{eq:shapley_cell_computation}
    \bm{\Theta}_{\T_a}(\bm{X},\bm{\mu};\K,M) = \diag((\bm{\Sigma}^{\row})^{-1}\tilde{\bm{A}}^{\T_a} (\bm{D}^{\ker})^{-1} (\tilde{\bm{A}}^{\T})'),
\end{align}
for each interval subinterval $\T_a, a = 1,\dots,d$, of $\T$. For smooth multivariate functional data represented in a finite basis $\mbasis = (\unibasis_1,\dots,\unibasis_m)'$, the outlyingness scores can be computed using the coefficients. 
\begin{lemma}\label{lemma:shapley_coef_fmmd_cell}
     Let $\bm{X} \sim \msp(\bm{\mu}, \bm{\Sigma}^{\row}, \kappa)$ be a rank $M \in \N$ multivariate stochastic process as in Theorem~\ref{theorem:fmmd_basis}, with $\bm{X}(t)=\bm{A}'\mbasis(t)$, $\bm{\mu}(t) = \bm{M}_{\bm{A}}'\mbasis(t)$, and $\kappa(s,t) = \mbasis'(s)\bm{\Sigma}^{\col}\mbasis(t)$, for $s,t \in \T$. Then the following holds,
    \begin{align}
    \label{eq:singlecomp}
        \Theta_{\T_a,k}(\bm{X},\bm{\mu};\K,M) 
        &=  \sum_{j = 1}^p \frac{1}{\lambda_j^{\row}}{v}^{\row}_{j,k}(\bm{a}_k-\bm{m}_{\bm{A},k})'\bm{W}_{\T_a}\bm{W}^{-1}(\bm{\Sigma}^{\col})^{-1}(\bm{A}-\bm{M}_{\bm{A}})'\bm{v}^{\row}_{j},
    \end{align}
    with $(\lambda^{\row}_j,\bm{v}^{\row}_j),\, j = 1,\dots,p$, the eigenpairs of $\bm{\Sigma}^{\row}$, and ${v}^{\row}_{j,k} = \bm{e}_k'\bm{v}^{\row}_j$.
\end{lemma}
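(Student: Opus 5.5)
The plan is to start from Corollary~\ref{corollary:shapley_separable_cell} and rewrite each inner product $\innerproduct{X_k-\mu_k}{\ueig_i}_{\T_a}$ and $\innerproduct{X_l-\mu_l}{\ueig_i}_{\T}$ in coefficient form. We then assemble the sums over $i$ into the matrix expression $\bm{W}_{\T_a}\bm{W}^{-1}(\bm{\Sigma}^{\col})^{-1}$, following the univariate argument of Lemma~\ref{lemma:shapley_using_coefficents_univariate}.

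\textbf{Step 1: eigenfunctions in the basis.} By Theorem~\ref{theorem:fmmd_basis}(i), $\kappa(s,t)=\mbasis'(s)\bm{\Sigma}^{\col}\mbasis(t)$. Every eigenfunction of $\mathcal{K}$ with nonzero eigenvalue lies in the span of $\mbasis$, so we write $\ueig_i=\bm{b}_i'\mbasis$. The eigen-equation $\int\kappa(s,t)\ueig_i(t)\,\mathrm{d}t=\lambda_i^{\ker}\ueig_i(s)$ then becomes
\begin{equation*}
\bm{\Sigma}^{\col}\bm{W}\bm{b}_i=\lambda^{\ker}_i\bm{b}_i,
\end{equation*}
and orthonormality becomes $\bm{b}_i'\bm{W}\bm{b}_j=\delta_{ij}$. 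Collect the coefficient vectors as $\bm{B}=(\bm{b}_1,\dots,\bm{b}_m)$ and set $\bm{D}^{\ker}=\diag(\lambda^{\ker}_1,\dots,\lambda^{\ker}_m)$. Then
\begin{equation*}
\bm{B}'\bm{W}\bm{B}=\bm{I}, \qquad \bm{\Sigma}^{\col}\bm{W}\bm{B}=\bm{B}\bm{D}^{\ker}.
\end{equation*}
From these two identities we derive
\begin{equation*}
\bm{B}(\bm{D}^{\ker})^{-1}\bm{B}'=\bm{W}^{-1}(\bm{\Sigma}^{\col})^{-1}.
\end{equation*}
Indeed, $\bm{W}^{1/2}\bm{B}$ is orthogonal and diagonalises $\bm{W}^{1/2}\bm{\Sigma}^{\col}\bm{W}^{1/2}$. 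Hence $\bm{B}\bm{D}^{\ker}\bm{B}'=\bm{\Sigma}^{\col}$, and inverting $\bm{\Sigma}^{\col}=\bm{B}\bm{D}^{\ker}\bm{B}'$ using $\bm{B}^{-1}=\bm{B}'\bm{W}$ gives the claim. This is the same identity used for Lemma~\ref{lemma:shapley_using_coefficents_univariate}, so we may simply cite its proof.

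\textbf{Step 2: inner products as bilinear forms.} With $X_k-\mu_k=(\bm{a}_k-\bm{m}_{\bm{A},k})'\mbasis$ we get
\begin{align*}
\innerproduct{X_k-\mu_k}{\ueig_i}_{\T_a}&=(\bm{a}_k-\bm{m}_{\bm{A},k})'\bm{W}_{\T_a}\bm{b}_i,\\
\innerproduct{X_l-\mu_l}{\ueig_i}_{\T}&=\bm{b}_i'\bm{W}(\bm{a}_l-\bm{m}_{\bm{A},l}).
\end{align*}
Summing over $l$ with the weights $v^{\row}_{j,l}$ gives
\begin{equation*}
\sum_l\innerproduct{X_l-\mu_l}{\ueig_i}v^{\row}_{j,l}=\bm{b}_i'\bm{W}(\bm{A}-\bm{M}_{\bm{A}})\bm{v}^{\row}_j.
\end{equation*}

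\textbf{Step 3: assemble.} Substitute Step 2 into Corollary~\ref{corollary:shapley_separable_cell}. Since the process has rank $M=mp$, all $m$ kernel eigenpairs are used. Summing over $i$ yields
\begin{equation*}
(\bm{a}_k-\bm{m}_{\bm{A},k})'\bm{W}_{\T_a}\,\bm{B}(\bm{D}^{\ker})^{-1}\bm{B}'\,\bm{W}(\bm{A}-\bm{M}_{\bm{A}})\bm{v}^{\row}_j.
\end{equation*}
By Step 1, $\bm{B}(\bm{D}^{\ker})^{-1}\bm{B}'\bm{W}=\bm{W}^{-1}(\bm{\Sigma}^{\col})^{-1}\bm{W}$. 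This expression contains a leftover factor $\bm{W}$ compared with the claimed $\bm{W}^{-1}(\bm{\Sigma}^{\col})^{-1}$, so the bookkeeping needs care.

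\textbf{Main obstacle.} The hard part is reconciling the $\bm{W}$ factors: it is unclear whether the stated formula absorbs the trailing $\bm{W}$ through a convention such as an orthonormal basis, or whether the transpose $(\bm{A}-\bm{M}_{\bm{A}})'$ in \eqref{eq:singlecomp} signals a different arrangement. I would first check the derivation against Lemma~\ref{lemma:shapley_using_coefficents_univariate}, whose proof presumably resolves exactly this step. I would then reproduce that simplification verbatim, treating $(\bm{A}-\bm{M}_{\bm{A}})'\bm{v}^{\row}_j$ as the coefficient vector of the $j$th rotated coordinate. Finally, multiplying by $v^{\row}_{j,k}/\lambda^{\row}_j$ and summing over $j$ gives \eqref{eq:singlecomp}.
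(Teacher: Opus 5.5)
You follow the paper's own route: start from Corollary~\ref{corollary:shapley_separable_cell}, write the inner products as bilinear forms in the coefficients, and collapse the sum over $i$. Your Step~2 is correct. There is a genuine gap, however: the identity claimed in Step~1 is false, and it is the only source of the ``leftover'' $\bm{W}$ that you then leave unresolved.

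From $\bm{B}^{-1}=\bm{B}'\bm{W}$ you get $(\bm{B}')^{-1}=\bm{W}\bm{B}$. Inverting $\bm{\Sigma}^{\col}=\bm{B}\bm{D}^{\ker}\bm{B}'$ therefore gives
\[
(\bm{\Sigma}^{\col})^{-1}=\bm{W}\bm{B}(\bm{D}^{\ker})^{-1}\bm{B}'\bm{W},
\qquad\text{i.e.}\qquad
\bm{B}(\bm{D}^{\ker})^{-1}\bm{B}'=\bm{W}^{-1}(\bm{\Sigma}^{\col})^{-1}\bm{W}^{-1},
\]
which is exactly Equation~\eqref{eq:inv_eigen}. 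Your version, $\bm{B}(\bm{D}^{\ker})^{-1}\bm{B}'=\bm{W}^{-1}(\bm{\Sigma}^{\col})^{-1}$, drops the right-hand $\bm{W}^{-1}$. It cannot hold: the left side is symmetric, while $\bm{W}^{-1}(\bm{\Sigma}^{\col})^{-1}$ is not unless $\bm{W}$ and $\bm{\Sigma}^{\col}$ commute. The identity used in Lemma~\ref{lemma:shapley_using_coefficents_univariate} is not the one you wrote; it includes the trailing $\bm{W}$.

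With the correct identity, $\bm{B}(\bm{D}^{\ker})^{-1}\bm{B}'\bm{W}=\bm{W}^{-1}(\bm{\Sigma}^{\col})^{-1}$ holds exactly; this is the underbraced step in the paper's proof. Your Step~3 expression then becomes $(\bm{a}_k-\bm{m}_{\bm{A},k})'\bm{W}_{\T_a}\bm{W}^{-1}(\bm{\Sigma}^{\col})^{-1}(\bm{A}-\bm{M}_{\bm{A}})\bm{v}^{\row}_j$. Multiplying by $v^{\row}_{j,k}/\lambda^{\row}_j$ and summing over $j$ finishes the proof. No orthonormal-basis convention is involved.

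Deferring to ``reproduce the paper's simplification verbatim'' is not an argument. As written, your steps produce a formula with an extra $\bm{W}$, and that formula fails a simple check. Taking $\T_a=\T$ (equivalently, summing over $a$, since $\sum_a\bm{W}_{\T_a}=\bm{W}$) must recover the coordinate-wise formula of Lemma~\ref{lemma:shapley_using_coefficents_multivariate}, and hence $\mmd^2(\bm{A})$ after summing over $k$. Your version does not.

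The transpose is a separate, harmless issue. With $\bm{A}\in\R^{m\times p}$ as in Theorem~\ref{theorem:fmmd_basis}, $(\bm{A}-\bm{M}_{\bm{A}})'\bm{v}^{\row}_j$ is not even dimensionally defined. The intended vector is $(\bm{A}-\bm{M}_{\bm{A}})\bm{v}^{\row}_j=\sum_l(\bm{a}_l-\bm{m}_{\bm{A},l})v^{\row}_{j,l}$. This is exactly what your Step~2 produces, and what appears in the paper's proof one line before the end. It is a notational slip in the statement, not a different arrangement of the factors, and it has nothing to do with the $\bm{W}$ discrepancy.
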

Using matrix operations, the variable contributions in (\ref{eq:singlecomp}) can be obtained as the elements of the vector
\begin{align*}
    \bm{\Theta}_{\T_a}(\bm{X},\bm{\mu};\K,M) = \diag((\bm{\Sigma}^{\row})^{-1}(\bm{A}-\bm{M}_{\bm{A}}) \bm{W}_{\T_a}\bm{W}^{-1}(\bm{\Sigma}^{\col})^{-1}(\bm{A}-\bm{M}_{\bm{A}})').
\end{align*}
Here, $\diag(.)$ of a square matrix denotes the vector of diagonal entries.

\section{Simulations}\label{section:simulations}
Outlier detection in multivariate functional data is challenging due to the presence of diverse anomaly types, such as shifts, shape deviations, spikes, and changes in dependence structure. A simulation study was conducted to compare the proposed distance-based approach with existing distance- and depth-based methods under various contamination settings. The analysis was performed in \texttt{R}~\citep{R_language}.%
The \emph{distance-based} approach uses Mahalanobis distance, which is either based on the trimmed functional Mahalanobis distance~\eqref{eq:fMMD} applied to B-spline coefficient matrices, or the matrix Mahalanobis distance~\eqref{eq:mmd} applied to the raw data. The parameters for the Mahalanobis distance are estimated using classical matrix maximum likelihood estimation (MMLE, \citep{Dutilleul1999}) or the robust MMCD %
approach \citep{mayrhofer2024robust}. Both approaches are implemented in the \texttt{R} package \texttt{robustmatrix} \citep{robustmatrix}. %
    
On the other hand, \emph{depth-based} methods determine outliers by measuring the centrality of a function within a data cloud and define outliers as observations with low depth values, which indicates that they lie far from the central bulk of the data, see, e.g., \cite{zuo2000general} for more details. 
There are various depth-based outlyingness measures for multivariate functional data, such as Stahel-Donoho outlyingness/projection depth (fSDO; \cite{stahel1981breakdown, donoho1982breakdown, zuo2003projection}), skewness-adjusted outlyingness/skewness-adjusted projection depth (fAO; \cite{hubert2008outlier, hubert2015multivariate}), or directional outlyingness/directional projection depth (fDO; \cite{rousseeuw2018measure}), all implemented in the \texttt{R} package \texttt{mrfDepth}~\citep{mrfDepth}. 
Additionally, we consider the Magnitude-Shape Plot (MS; \cite{dai2018multivariate}) from the \texttt{R} package \texttt{fdaoutlier}~\citep{fdaoutlier}, which is based on directional outlyingness as defined by \cite{dai2019directional}. These methods are applied to the raw and smoothed data as well as to the coefficient matrices. 
Finally, we consider the multivariate functional isolation forest (MFIF) method by \cite{Staerman2019}, which extends isolation forest to functional data by projecting functions onto a finite-dimensional dictionary and applying isolation-based splitting in the resulting coefficient space. The scalar product, required to define the random splits, is set to equal weights between the $L_p^2$ inner product \eqref{eq:inner_product_L2p} and the $L_p^2$ inner product of the derivatives, balancing detection of shape and location outliers. We also have to choose from the predefined dictionaries and use Brownian motion (MFIF brown), Gaussian wavelets (MFIF gauss), and Self (MFIF self). The method is available at \url{https://github.com/GuillaumeStaermanML/FIF} and was imported into \texttt{R} via \texttt{reticulate} \citep{reticulate}.

\subsection{Setup}
The random functions are drawn from a multivariate Gaussian process with a separable covariance structure by generating finite-dimensional realizations at $q = 100$ time points. 

The data are smoothed using cubic B-spline basis functions without a roughness penalty. To assess the sensitivity of the proposed procedure to the amount of smoothing, we consider $m\in\{10,20,30\}$ basis functions, in addition to the analysis based on the raw data. These values cover representations ranging from relatively strong smoothing to considerably more flexible approximations of the observed curves. The results obtained with 20 and 30 basis functions are generally very similar to each other and to those based on the raw data, whereas in some settings 10 basis functions oversmooth the curves and fail to capture relevant functional features, resulting in poorer performance. Thus, the simulations indicate that the proposed procedure is relatively insensitive to the precise number of basis functions, provided that the data are not excessively smoothed. Since the aim of the simulation study is to compare the relative performance of the methods rather than to optimize the smoothing step, we keep the same basis dimensions across the respective simulation settings.
The rank truncation is set equal to the number of basis functions, which is sufficient given the moderate basis dimensions considered; for substantially larger bases, additional truncation would effectively correspond to a PCA-type post-smoothing step and is therefore left for future work. We consider sample sizes of $n \in \{300,1000\}$ observations with  $p \in \{3,50\}$ coordinate functions. 
The two dimensions are chosen to cover distinct regimes: $p = 3$ represents the low-dimensional case in which the vectorized covariance can still be estimated reliably if the sample size is large enough, whereas $p = 50$ is the regime the proposed structured estimator targets, where vectorization inflates the dimension to $pq = 5000$ and becomes the binding constraint. We therefore report $p = 50$ in the main text and the corresponding $p = 3$ results in Appendix~\ref{appendix:simulations_gaussian}.
For the covariance structure between the coordinate functions, we adopt the matrices proposed by \citet{agostinelli2015robust}, denoted by $\bm{\Sigma}^{\row}$, which have random entries and typically yield low to moderate correlations.
For the covariance function $\kappa$ we consider both Ornstein-Uhlenbeck $\kappa_{\text{OU}}$ and Matérn-type $\kappa_{\text{Matérn}}$ covariance structures, defined as 
\begin{align*}
    \kappa_{\text{OU}}(s,t) = \sigma_1^2 \exp\left(\frac{-\abs{s-t}}{\sigma_2}\right)
    \quad \text{and} \quad
    \kappa_{\text{Matérn}}(s,t) = \sigma^2 \frac{2^{1-\nu}}{\Gamma(\nu)}(\tau\abs{s-t})^{\nu}K_{\nu}(\tau\abs{s-t}),
\end{align*}
respectively. For the clean data, we use parameters $\sigma_1 = \sqrt{0.3}$ and $\sigma_2 = 0.3$ for the Ornstein-Uhlenbeck covariance function, and $\sigma = 1$, $\tau = 5$, and $\nu = 0.5$ for the Matérn-type covariance function. Except for the isolated outliers, the mean function of every coordinate is given by $\mu_j(t) = 30t(1-t)^{1.5}$, for $j = 1,\dots,p$. Similar choices for the mean and covariance function were considered by \cite{arribas2014shape}, \cite{dai2018multivariate}, and \cite{oguamalam2024minimum} for outlier detection in univariate functional data. Three contamination scenarios are examined, with $\varepsilon_{\mathrm{cord}} \in \{0.1, 0.5, 1\}$ denoting the proportion of contaminated coordinate functions, to evaluate the impact of contamination level on robustness and detection accuracy. Outliers are added to the datasets by randomly replacing a fraction $\varepsilon \in \{0.05,0.1,0.2,0.3\}$ of the clean observations for $p=3$, and, due to the higher computational cost, $\varepsilon \in \{0.1,0.3\}$ for $p = 50$. We consider four types of outliers: shift, shape, covariance-induced, and isolated, which are defined as follows: %

\emph{Shift outliers} are created in $\lfloor{\varepsilon_{\mathrm{cord}} \cdot p\rfloor}$ randomly chosen coordinates by introducing perturbations along the first eigenfunction $\ueig_1$, capturing the largest mode of variation, while \emph{shape outliers} are created by perturbing along the tenth eigenfunction $\ueig_{10}$, affecting local features without a global shift of the functions. The coordinate functions are given by 
$X_j^{\text{shift}}(t) = X_j + \lambda_{\mathrm{shift}} \ueig_1$ 
and  
$X_j^{\text{shape}}(t) = X_j + \lambda_{\mathrm{shape}} \ueig_{10}$, 
for $j = 1, \dots, p$. For $p = 3$, the perturbation magnitudes are set to $\lambda_{\mathrm{shift}} \in \{15, 22.5, 30\}$ and $\lambda_{\mathrm{shape}} \in \{5, 8, 11\}$, whereas for $p = 50$, they are $\lambda_{\mathrm{shift}} \in \{8, 10, 12\}$ and $\lambda_{\mathrm{shape}} \in \{1.5, 2, 2.5\}$. These magnitudes are scaled according to the dimensionality $p$ so that smaller values yield subtle, harder-to-detect outliers, while larger values introduce more pronounced deviations. As the outlier magnitude increases, the resulting contamination exerts a stronger influence on the sample mean and covariance structure.

To introduce \emph{covariance-induced outliers}, we modify the covariance structure of $\lfloor{\varepsilon_{\mathrm{cord}} \cdot p\rfloor}$ randomly chosen coordinates using the Matérn-type covariance function. By altering the smoothness parameter $\nu \in \{0.1,0.2,0.5\}$ and range parameter $\tau \in \{7,10,15\}$, we generate functions with unusually high variability or erratic behavior compared to the regular observations.

The final setting considers \emph{isolated outliers}, which deviate from the regular observations in $\lfloor{\varepsilon_{\mathrm{cord}} \cdot p\rfloor}$ randomly chosen coordinates at randomly selected time points per coordinate, while the remaining function remains unchanged. The mean of the clean data is $\mu_j = 4t$ while the outliers have random mean functions $\mu_j = 4t + \lambda(-1)^{u}\left(1.8 - \frac{1}{\sqrt{0.02\pi}}\exp\left(\frac{-(t-\alpha)^2}{0.02}\right)\right)$, $\lambda \in \{0.2,0.5,0.8\}$, for $j = 1,\dots,p$. Here $u \sim Bernoulli(0.5)$ is a Bernoulli random variable, and $\alpha \sim U(0.25,0.75)$ follows a uniform distribution.
In contrast to the first three settings, the mean function is random, and hence, the outliers neither follow the same distribution nor form a cluster. 
Figure~\ref{fig:outlier_examples} visualizes the first coordinate function for each of the four considered outlier types.
\begin{figure}[!h]
    \centering
    \includegraphics[width=1\linewidth]{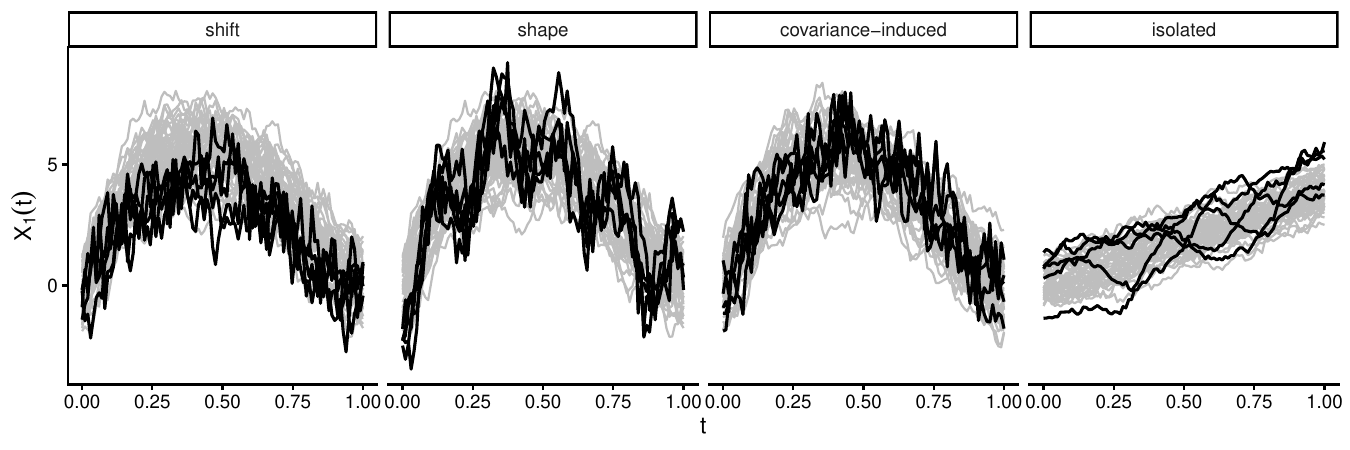}
    \caption{The first component %
    of the multivariate process $\bm{X}(t) = (X_1(t),X_2(t),X_3(t))$, for the 4 considered outlier types. %
    Each plot shows 45 clean sample curves (gray) and 5 outliers (black). For shift outliers $\lambda_{\mathrm{shift}} = 15$ and shape outliers $\lambda_{\mathrm{shape}} = 5$, and $\kappa = \kappa_{\text{OU}}$. For the covariance-induced outliers, $\nu = 0.2$ and $\tau = 10$, and for the isolated outliers, $\lambda = 0.5$.}
    \label{fig:outlier_examples}
\end{figure}

To provide a more comprehensive evaluation, we extend the simulation design beyond the separable and Gaussian setting. The non-separable processes are obtained by adding $q_{ns} \in \{1,\dots,4\}$ independent separable processes with Matérn covariance function. Thus, the resulting non-separable covariance is a sum of $q_{ns}$ separable components. Due to the already large number of simulation settings, here we focus only on shift outliers and generate outliers in each of $p=10$ coordinates by shifting the $q_{ns}$th component in the direction of its $q_{ns}$th eigenfunction. To include a non-Gaussian setting, we generated data from t-distributions with varying degrees of freedom. 

Each simulation setting is replicated $100$ times. Table \ref{tab1} summarizes all the parameter combinations considered in the study.

\begingroup
\setlength{\abovecaptionskip}{0pt}
\setlength{\belowcaptionskip}{0pt}
\spacingset{1}
\begin{table}[!h]
\caption{Simulation setup. All settings are replicated 100 times, except MFIF
for $p=50$, which uses 10 replications.}
\centering
\small
\begin{tabular}[t]{llll}
\toprule
 & Separable (Gaussian) & Non-separable & Heavy-tailed ($t$) \\
\midrule
$n$ & 300, 1000 & 1500, 5000 & 1000 \\
$p$ & 3, 50 & 10 & 10 \\
$\varepsilon$ & 0.05, 0.1, 0.2, 0.3 ($p=3$) & 0.05, 0.1, 0.2, 0.3 & 0.1, 0.3 \\
              & 0.1, 0.3 ($p=50$) & & \\
$\varepsilon_{\mathrm{coord}}$ & 0.1, 0.5, 1 & 1 & 0.1, 0.5, 1 \\
$m$ & $q$ (raw), 10, 20, 30 & $q$ (raw), 10, 20, 30 & $q$ (raw), 10, 30 \\
Covariance function & Matérn, Ornstein--Uhlenbeck & Sum of $q_{ns}$ Matérn & Matérn \\
$q_{ns}$ & -- & 1, 2, 3, 4 & -- \\
Degrees of freedom & -- & -- & 3, 5, 8, 12, 15 \\
Outlier magnitude & small, medium, large & small, large & small, large \\
Outlier type & shift, shape, isolated, & shift & shift \\
             & covariance-induced & & \\
\bottomrule
\end{tabular}\label{tab1}
\end{table}
\endgroup

\subsection{Results}
The outlier detection performance of the methods is compared based on precision, recall, and their harmonic mean, i.e., the F-score \citep{TahaHanbury2015}. To provide a threshold-independent performance measure, we include the AUC (Area Under the ROC Curve) score, which measures how well a model separates inliers from outliers across all possible thresholds, ranging from 0.5 (random) to 1 (perfect). For all methods compared in the simulation study, we apply the default thresholding strategies used in their respective \texttt{R} implementations to classify observations as outliers. Specifically, for the depth-based methods, we compute the local outlyingness and classify a curve as outlying if it is flagged as such at any observed time point. For the MS-plot approach, we use the default threshold based on the $F$-distribution, provided by the \texttt{msplot} function from the \texttt{fdaoutlier} package. The MFIF method does not yield an outlyingness indicator, but only an anomaly score, and is only evaluated based on AUC. It is also the most computationally demanding method across all settings and is therefore evaluated on 10 rather than 100 replications for $p = 50$, which excludes it from the rank-based comparison below. A detailed analysis of computation times is given in Table~\ref{tab:time} and
Figure~\ref{fig:time} in Appendix~\ref{appendix:simulations}. 
For both the MMLE and the proposed MMCD-based Mahalanobis distance, we employ a $\chi_{0.99}^2$-cutoff on the Mahalanobis distance values, following standard robust multivariate practice. 

To evaluate mean and covariance estimation, we consider
\begin{align*}
    \frac{1}{p\abs{\T}}\int \norm{\bm{\mu}(t) - \hat{\bm{\mu}}(t))}_2^2 dt 
    \quad\text{and}\quad
    \frac{1}{(p\abs{\T})^2}\iint \norm{\bm{\Sigma}^{\row} k(s,t) - \hat{\bm{\Sigma}}^{\row} \hat{k}(s,t))}_F^2 ds dt,
\end{align*}
respectively. Here $\abs{\T} = \max(\T) - \min(\T)$ denotes the length of the interval $\T$. For the distance-based methods, these errors can be computed based on the estimated parameters. Since the depth-based approaches are non-parametric, we first apply the outlier detection methods to identify and remove the outliers before robustly computing the maximum likelihood estimates on the cleaned subset. Additionally, relative errors are computed by dividing the estimation errors of each method by the benchmark estimation error attained by the ML estimates computed on the clean data.

Among the depth-based methods, the average results across most settings are only slightly influenced by smoothing. Therefore, only the results based on the raw data are reported. For a comparison of the depth-based methods, see Appendix~\ref{appendix:simulations_depth}.
For the distance-based methods, smoothing has a more apparent influence, and results for both raw and smoothed data are reported. 

To get an overview of the overall performance for the separable simulation settings from Table \ref{tab1}, we compare the methods by ranking them according to the given performance measures.  %
All ranks are ordered such that the method with the lowest rank is best. The dots show the average ranks, and the intervals are based on non-parametric multiple comparisons using the Friedman and the post-hoc Nemenyi tests; see \cite{hollander2013nonparametric} for details. %
The results for $p=50$ are given in Figure~\ref{fig:test_p50} and a similar plot for $p = 3$ is given in Figure \ref{fig:test_p3} of Appendix \ref{appendix:simulations_gaussian}.
For shift outliers, the robust distances computed on the smoothed data perform best across all scores. For the shape outliers, the robust distances computed on the smoothed data work best for outlier detection, but the raw robust covariance estimates are superior to the smooth ones. The classical raw covariance estimator performs well because outliers affect the 10th eigenfunction. Since the MS plot hardly detects outliers here, its covariance performance is similar to the classical estimator. 
For isolated and covariance-induced outliers, distance-based methods work significantly better than depth-based approaches. Outlier detection based on non-robust distances is very precise but lacks in recall, which is also reflected in higher covariance estimation errors compared to their robust counterparts. For isolated outliers, the number of basis functions has a strong influence on outlier detection, with better performance for a higher number of basis functions.

\begin{figure}[!h]
    \centering
    \includegraphics[width=1\linewidth]{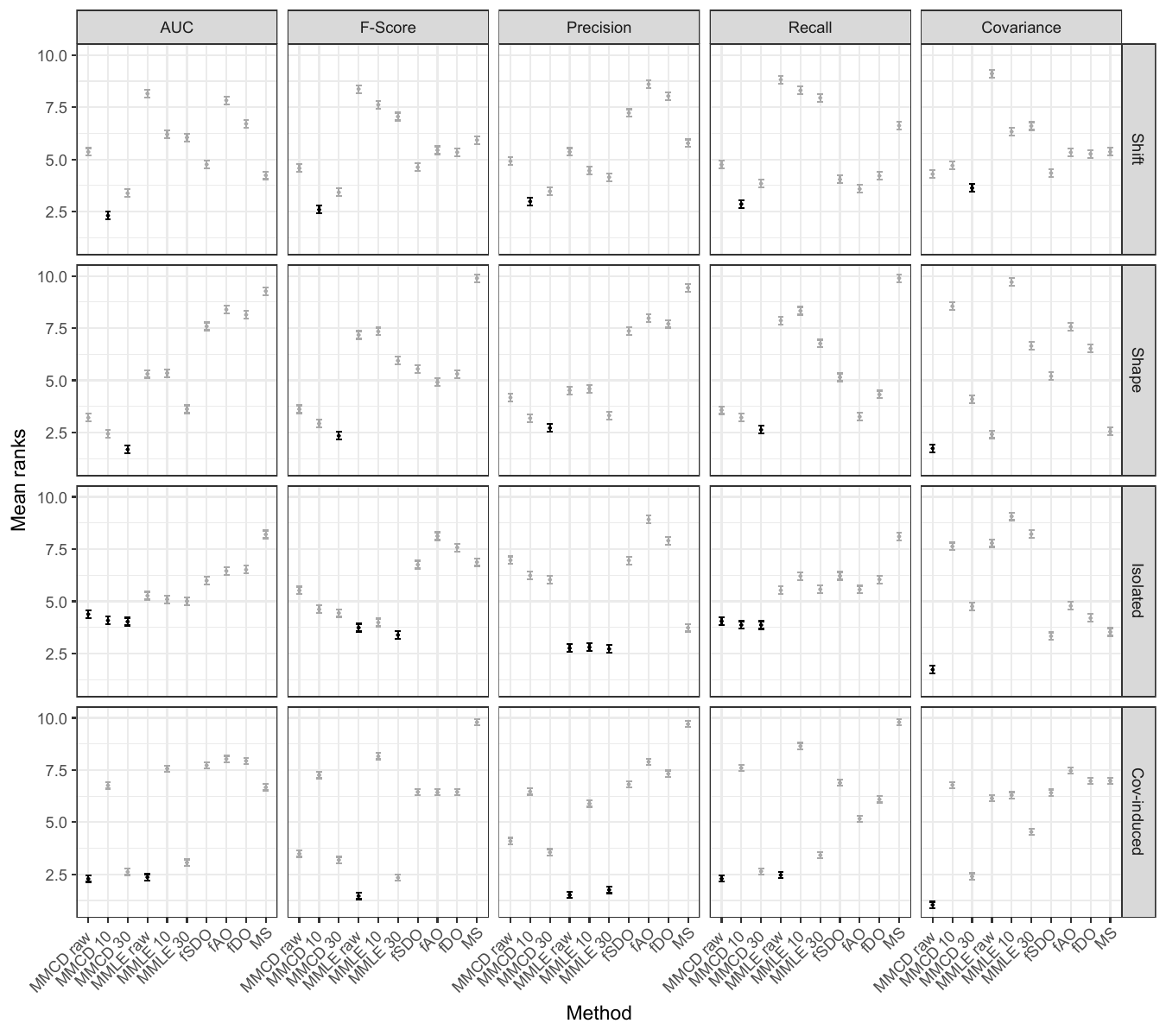}
    \caption{Rank-based comparison of methods across all separable simulation settings in Table \ref{tab1} for $p=50$. Intervals are based on the Friedman and Nemenyi tests; methods not significantly different (99\% level) from the best are shown in black, others in gray. Horizontal facets correspond to performance metrics, vertical facets to outlier types. MFIF is omitted, because the rank test requires the same number of replications for every method.}
    \label{fig:test_p50}
\end{figure}

\begin{figure}[p]
    \centering
    \includegraphics[width=1\linewidth]{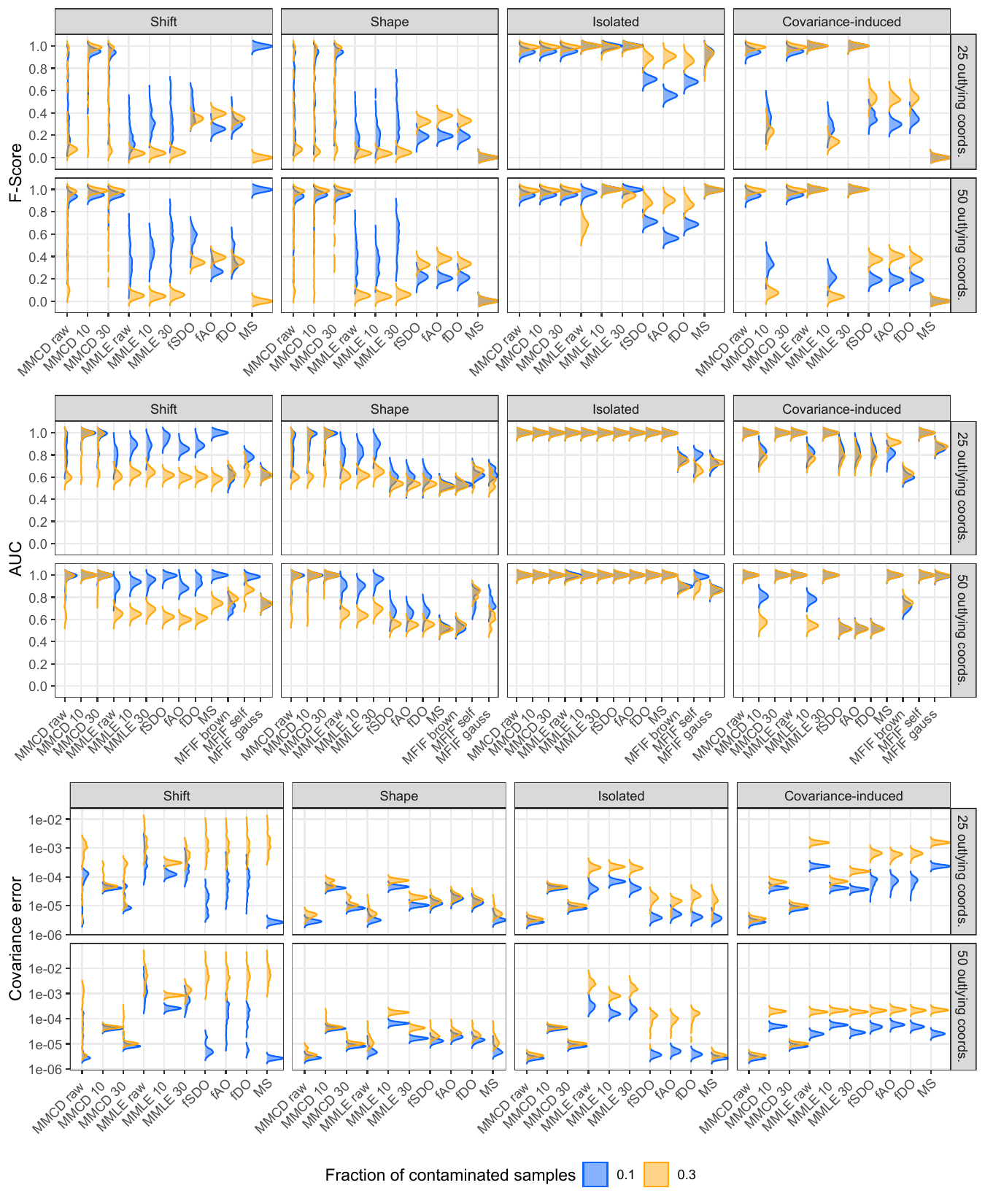}
    \caption{Density plots of F-score, AUC, and log covariance estimation error for a representative setting with $n = 1000$ and $p = 50$ under the Matérn covariance and medium outlier magnitude, shown across outlier types (shift, shape, isolated, covariance-induced), contamination levels (25 and 50 coordinates), and outlier proportions ($\varepsilon = 0.1, 0.3$). MFIF appears only in the AUC panel, as it provides no outlyingness cutoff, and its densities are based on 10 instead of 100 replications.%
    }
    \label{fig:gaussian_main}
\end{figure}

The rank-based comparison provides an overview across all settings, but does not show details for specific settings.
Figure~\ref{fig:gaussian_main} shows density plots of the F-score, AUC, and covariance estimation error for one representative setting, the parameter choices are described in the figure caption. Similar plots across the remaining settings are given in Appendix \ref{appendix:simulations_gaussian}. While this plot confirms the findings of the rank-based comparison, it also highlights settings where performance varies across parameter settings and simulation runs. For example, the MS plot performs well for shift outlier detection in settings with low contamination, but it breaks down for higher contamination. Moreover, in most settings, we see a clear connection between F-score and AUC across all methods. However, the depth-based methods for isolated outliers are an exception to that. While the depths yield a correct ordering of the data, as is evident by the high AUC, the default cutoff of the methods fails to flag those samples as outliers. This issue is even more pronounced for the MS plot for covariance-induced outliers. 
The MFIF, shown for the AUC only, depends strongly on the choice of dictionary and matches the robust distances for covariance-induced outliers with the Self dictionary, but falls clearly below the competing methods for isolated outliers. 
Although the robust distances generally perform better than the competing methods for detecting shift and shape outliers, we observe several simulation runs with relatively low F-scores. This tends to occur in scenarios where only a few coordinates are contaminated. Outliers can only be identified when the contaminated samples have a sufficiently strong impact on the mean and covariance to make them distinguishable through the Mahalanobis distance. Because the same shift is applied to all coordinates and the covariance function is computed as an average over them, smaller contaminations are often masked. At the same time, these are precisely the cases in which outliers have little effect on covariance estimation, raising the question of whether such modified observations should truly be considered outliers. This pattern is further reflected in the covariance estimation errors: contaminated settings in which the robust methods show larger errors tend to correspond to smaller errors for the classical estimators.

Figure~\ref{fig:AUC_non_separable_p10} presents density plots of AUC values for outlier detection in a representative non-separable setting ($p = 10$, $n = 1500$). The results are displayed across different numbers of non-separable components and varying magnitudes of shift contamination, and are further grouped by the fraction of contaminated observations.\\ %
Interestingly, the outlier detection performance of the proposed approach appears largely unaffected by the introduction of non-separability, particularly when the data are smoothed. 
This suggests that the proposed distance-based procedure remains effective even when the separability assumption is violated. Additional results supporting this conclusion are provided in Appendix~\ref{subsection:sample_vs_mmle}.
Moreover, results in Figures \ref{fig:gaussian_main}-\ref{fig:AUC_non_separable_p10}, as well as complementary results in Appendix~\ref{appendix:simulations}, indicate that the proposed approach is not particularly affected by the number of B-splines used when smoothing the data, as long as the data are not oversmoothed.

\begin{figure}[!h]
    \centering
    \includegraphics[width=1\linewidth]{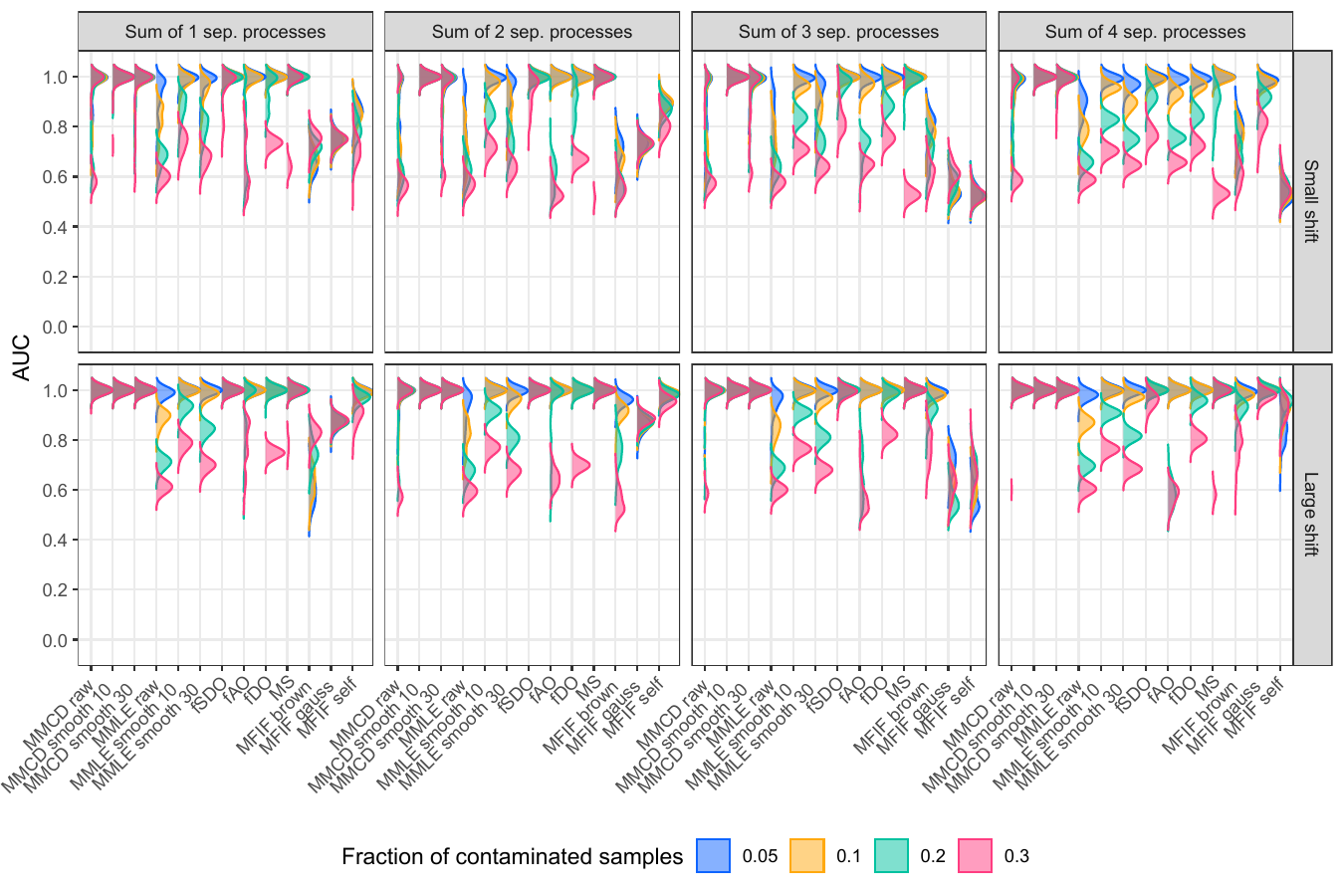}
    \caption{AUC values for detecting small (upper row) and large (lower row) shift outliers in non-separable processes obtained as sums of $q_{\text{ns}} = 1, 2, 3, 4$ (columns) independent separable components with Matérn covariance functions and $p=10$ components.}
    \label{fig:AUC_non_separable_p10}
\end{figure}

The results in Appendix~\ref{supplement:non_gaussian} show that under heavy-tailed distributions, all methods exhibit reduced precision, as observations from the distribution tails are more difficult to distinguish from true outliers. While robust Mahalanobis-based methods continue to achieve high AUC and recall, they tend to overflag tail observations and are, in that manner, somewhat less precise than depth-based alternatives. Nevertheless, the overall qualitative conclusions are similar to the Gaussian setting, while the performance improves when contamination is confined to a subset of coordinates.

\subsection{Evaluating Shapley values} \label{subsection:shapley_sim}

To assess the ability of the Shapley values to localize outlyingness, we evaluate their performance separately with respect to the affected coordinates and the affected regions of the functional domain. For coordinate-specific localization, we compute the AUC of the component-specific Shapley values and compare it with the AUC obtained from the pointwise absolute deviation from the uncontaminated population mean curve and from the local outlyingness scores of the MS-plot. For time-specific localization, the functional domain is partitioned into varying numbers of intervals, and AUC is used to assess how accurately the methods identify the intervals containing the contamination. We compare the time-specific Shapley values with component-wise absolute deviations and local outlyingness measures of the depth-based methods, which are provided at the observed time points. As a benchmark for the Shapley values obtained using the robust MMCD estimator with (d=30) basis functions, we also report results based on the true population mean and covariance structure.\\
Given the focus on local behavior, we consider isolated outliers generated by adding sine-wave perturbations of varying length and frequency to a random subset of $(p_0 \in {1,\dots,9})$ coordinate functions. Results for a representative setting with $n=1000$ observations, $p=10$ coordinates, and medium outlier magnitude are presented in Figure~\ref{fig:shapley_main}. The left column of the figure shows coordinate-specific localization, whereas the right column shows time-specific localization for partitions of the functional domain into $5$, $10$, and $100$ intervals. The rows correspond to the two covariance structures considered. The method labeled \textit{Shapley} refers to the Shapley values computed using the true population mean and covariance structure, whereas \textit{Shapley MMCD 30} refers to the Shapley values based on the robust MMCD estimates obtained using 30 basis functions.

\begin{figure}[!h]
    \centering
    \includegraphics[width=1\linewidth]{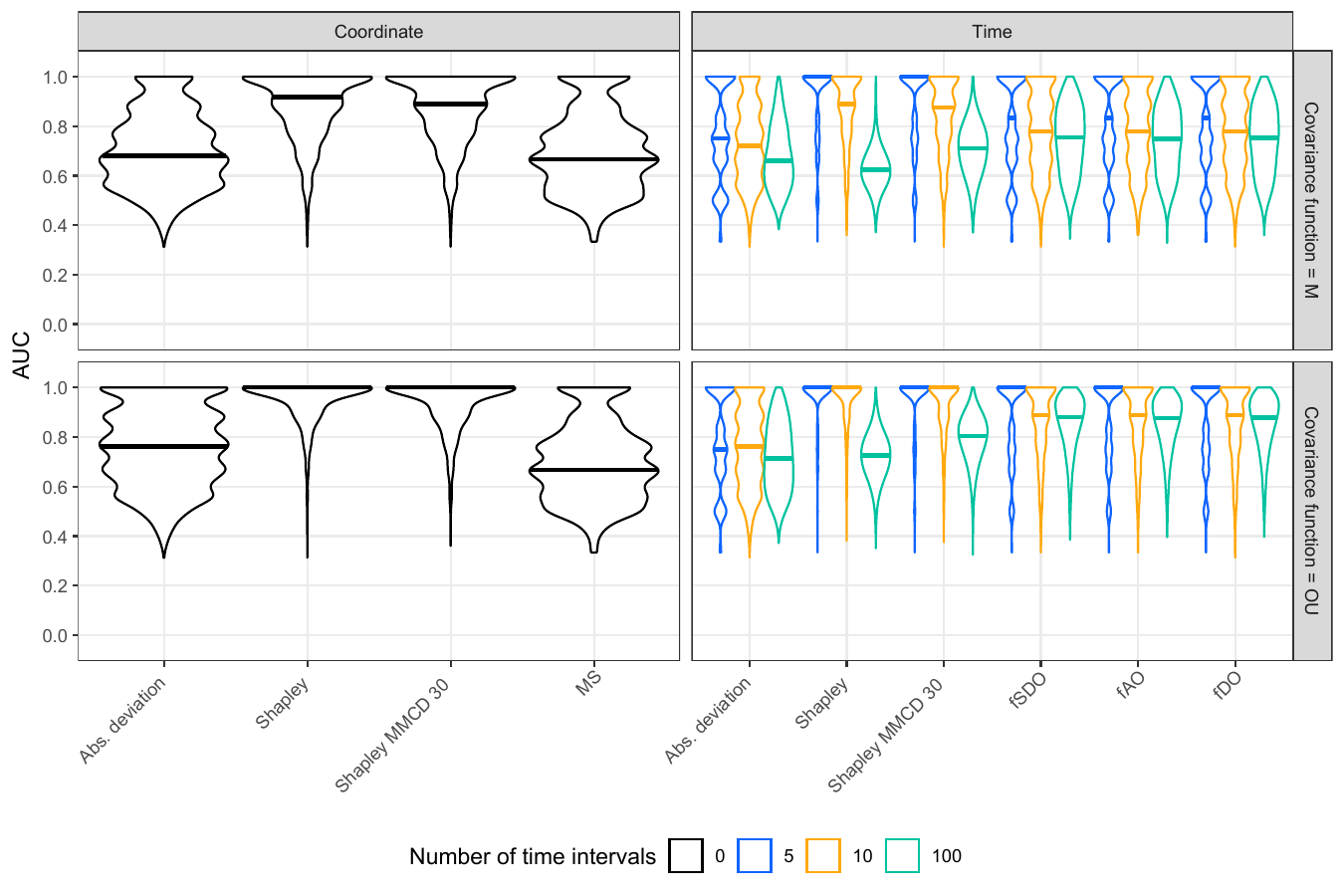}
    \caption{
    Mirrored density plots of AUC values for localization of isolated outliers. The left column evaluates coordinate-specific localization, while the right column evaluates time-specific localization using partitions of the functional domain into $5$, $10$, or $100$ intervals. Rows correspond to the Matérn (M) and Ornstein–Uhlenbeck (OU) covariance structures. The value $0$ in the legend denotes coordinate-specific evaluation without partitioning of the time domain. Higher AUC indicates better localization. Results are shown for $p=10$ coordinates and medium outlier magnitude. 
}
    \label{fig:shapley_main}
\end{figure}

The left column of Figure~\ref{fig:shapley_main} shows that the Shapley value-based approach provides the most accurate coordinate-specific localization of outlyingness, outperforming both the MS-plot and the naive coordinate-wise absolute deviation. Importantly, the Shapley values based on the robust MMCD covariance estimate perform comparably to those computed using the true population covariance, indicating that estimation of the covariance structure introduces only a small loss in localization accuracy.\\
The right column of Figure~\ref{fig:shapley_main} illustrates the effect of the temporal resolution used for localizing outlyingness. At the finest resolution, corresponding to 100 intervals, none of the methods achieves particularly high AUC values, and the Shapley-based approach performs similarly to, or slightly worse than, the depth-based methods, depending on the covariance structure. As the number of intervals decreases, however, the Shapley values identify the contaminated regions more accurately and generally outperform the competing methods.\\
This behavior is consistent with the temporal dependence inherent in functional data. When the domain is partitioned into many narrow intervals, a local perturbation is not necessarily confined to a single interval, since neighboring regions can exhibit correlated deviations due to the smoothness and covariance structure of the underlying process. Because the Shapley decomposition explicitly accounts for this dependence, outlyingness may be distributed across the contaminated interval and its neighbors, reducing the AUC when exact interval-wise localization is used as the target. With coarser partitions, these correlated deviations are more likely to be contained within the same interval, allowing the Shapley values to localize the affected region more reliably. Thus, Figure~\ref{fig:shapley_main} highlights a trade-off between temporal resolution and localization accuracy, while also demonstrating the advantage of the proposed approach at practically relevant levels of temporal aggregation.

\section{Real data applications}\label{section:examples}
We would like to emphasize that, although separability may not hold exactly in the real data examples, we use the separable estimator because the sample sizes ($N=75$ and $N=115$) are relatively small to moderate. In this setting, the reduction in variability can outweigh the bias caused by possible model misspecification. This is also supported by our simulation results, where the separable estimator often performs better than the unconstrained estimator in the non-separable model, for similar sample sizes.
\paragraph{El~Niño–Southern Oscillation:}
The \textit{El~Niño–Southern Oscillation} (ENSO) is a recurring climate phenomenon characterized by periodic changes in sea surface temperature (SST) and atmospheric pressure in the equatorial Pacific Ocean \citep{trenberth1997definition}. ENSO strongly influences global temperature and precipitation patterns and is typically classified into three phases: \textit{El~Niño} (warm), \textit{La~Niña} (cold), and the intermediate \textit{Neutral} state.\\
We analyze monthly SST data from the U.S.~Climate Prediction Center (CPC), based on the \textit{Extended Reconstructed Sea Surface Temperature, Version~5} (ERSSTv5) dataset \citep{huang2017extended}. SST is measured in four standard Niño regions (Niño~1+2, Niño~3, Niño~3.4, and Niño~4), shown in Figure~\ref{fig:map} in Appendix~\ref{appendix:el_nino}. The CPC defines the onset of an El~Niño (La~Niña) episode when the three-month running mean SST anomaly in the Niño~3.4 region exceeds $\pm 0.5^{\circ}$C relative to a 30-year baseline, known as the \textit{Oceanic Niño Index};  see \citep{trenberth1997definition} for more details. Following common practice, months are grouped into annual periods spanning June to May.\\
The dataset covers 75 periods from 1950–1951 to 2024–2025, each consisting of 12 monthly SST measurements per region. Overall, the dataset includes 585 neutral months, 223 La~Niña months, and 92 El~Niño months, corresponding to 19 neutral, 36 La~Niña, and 20 El~Niño periods.

As noted by \citet{hanley2003quantitative}, the various Niño indices capture different aspects of ENSO variability. The Niño~4 index tends to show relatively weak warming during canonical El~Niño events but pronounced and persistent cooling during La~Niña phases. In contrast, the Niño~1+2 region exhibits the earliest and strongest warming during El~Niño and sharp, short-lived cooling during La~Niña. Niño~3 and Niño~3.4 are most representative of canonical El~Niño episodes. These regional differences indicate that reliance on a single index may obscure important spatial or phase-dependent patterns of ENSO variability. Therefore, we compare a univariate analysis based on the Niño~3.4 region with a multivariate approach that jointly incorporates sea surface temperature anomalies from all four Niño regions.

As a first step, the raw SST data are smoothed using six B-spline basis functions without a roughness penalty. Since each functional observation represents one annual cycle from June to May, this choice is made  to capture the four seasonal periods, along with two additional basis functions that accommodate transitions across annual boundaries. An overview of the resulting smooth curves is provided in Figure~\ref{fig:plt_overview_smooth2} in Appendix~\ref{appendix:el_nino}.

To assess the model assumptions, the separability of the covariance structure is tested using the procedure of \citet{aston2017tests} and is rejected at the 5\% significance level (p-value$~<0.001$). Nevertheless, results from the non-separable simulation study in Section \ref{section:simulations} indicate that, even when separability is violated, the robust Mahalanobis distance computed under a separable covariance approximation remains an informative measure of outlyingness. Because the Gaussianity assumption is used solely to define the theoretical cutoff for outliers, we further examine whether the empirical Mahalanobis distances follow a $\chi^2$ distribution. This hypothesis is rejected by the Kolmogorov-Smirnov test, likely due to the presence of outliers and the test’s sensitivity to departures from normality. However, the Q--Q plot comparing the empirical quantiles of the estimated robust squared Mahalanobis distances with the theoretical $\chi^2$ quantiles shows that the potential inliers align closely with the reference distribution, while deviations occur primarily in the upper tail; see Figure~\ref{fig:plt_qq_all}. This pattern supports the use of the $\chi^2$ reference as a practical diagnostic baseline for identifying outlying observations.

We then compare the robust functional Mahalanobis distance ($\fmd$) based on the SST data from the \textit{Niño~3.4} region and its multivariate extension ($\fmmd$) using SST data from all four regions.

\begin{figure}[!h]
    \begin{minipage}[t]{0.33\textwidth}
        \vspace*{0mm}
        \centering
        \includegraphics[width=1\linewidth]{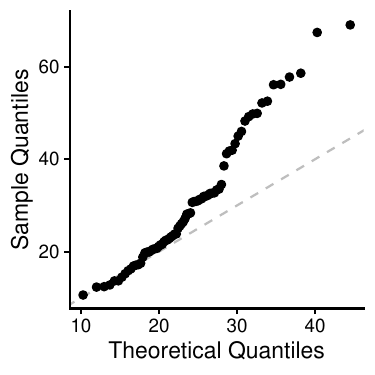}
        \caption{Q--Q plot of robust squared Mahalanobis distances against $\chi^2$ quantiles for all $75$ periods.}
        \label{fig:plt_qq_all}
    \end{minipage}
    \hspace{0.02\linewidth}
    \begin{minipage}[t]{0.64\textwidth}
        \vspace*{0mm}
        \includegraphics[width=1\linewidth]{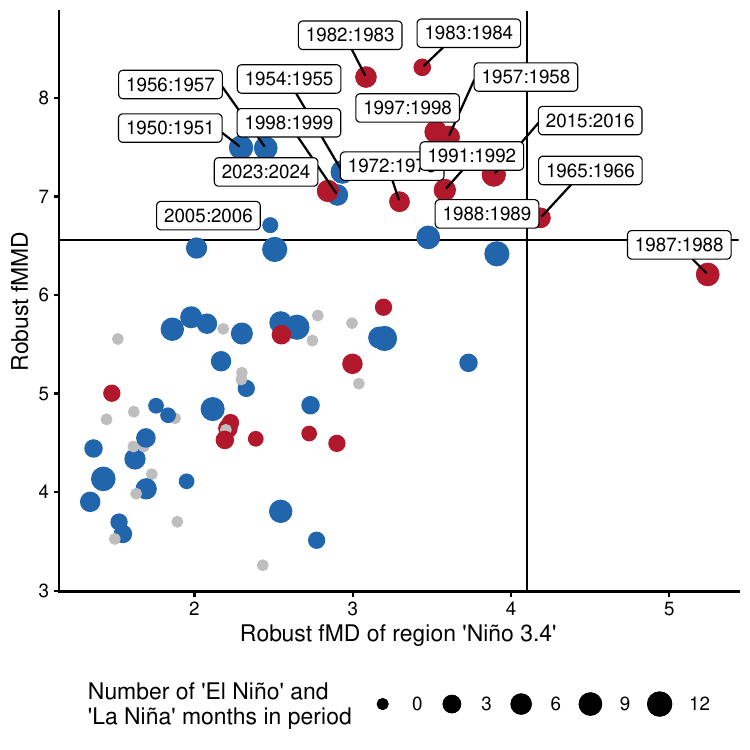}
        \caption{Distance-distance comparison of the robust $\fmd$ based on the \textit{Niño~3.4} region and the $\fmmd$ based on all four Niño regions. Vertical and horizontal lines indicate the $0.99$ quantile cutoffs of the corresponding $\sqrt{\chi^2}$ distributions with 6 and 24 degrees of freedom, respectively. Points represent 75 annual periods from 1950:1951 to 2024:2025, sized by the number of El~Niño or La~Niña months and colored by phase (red: El~Niño; blue: La~Niña; gray: neutral).}
        \label{fig:plt_uni_vs_multi}
    \end{minipage}
\end{figure}

Figure~\ref{fig:plt_uni_vs_multi} shows that the multivariate treatment of the SST data identifies more unusual observations, many of which correspond to strong El~Niño periods that are not flagged as outlying in the univariate analysis. This suggests that incorporating SST information from all four Niño regions provides a more sensitive framework for detecting extreme ENSO episodes.
To compare the outlier detection performance with competing methods in a way that is not influenced by the choice of cutoff, we compute the Spearman rank correlation between the estimated outlyingness measures and a numerical proxy for ENSO activity, defined as the highest temperature anomaly per period based on the Niño~3.4 index. Table~\ref{tab:correlations_enso} summarizes the results and shows that the proposed robust $\fmmd$ shows the strongest association ($\rho = 0.65$), exceeding that of all competing methods.

\begin{table}[!ht]
\centering
\caption{Spearman rank correlations between estimated outlyingness measures and the ENSO activity proxy based on the Niño~3.4 index.}
\label{tab:correlations_enso}
    \begin{tabular}{lccccccccc}
    \toprule
     & \thead{Robust\\$\fmmd$} & \thead{Robust\\$\fmd$} & \thead{fSDO} & \thead{fAO} & \thead{fDO} & \thead{MFIF\\self} & \thead{MFIF\\brown} & \thead{MS} & \thead{MFIF\\gauss} \\
    \midrule
    $\rho$ & \textbf{0.65} & 0.56 & 0.54 & 0.55  & 0.55  & 0.49  & 0.34  & 0.07  & 0.04 \\
    \bottomrule
    \end{tabular}
\end{table}

We extend the analysis beyond identifying outlying periods by using the Shapley decomposition to highlight the temporal and regional contributions responsible for deviations from the global SST trend. The corresponding Shapley values for El~Niño and La~Niña years are displayed in Figure~\ref{fig:shapley_lines_outliers}, where the color intensity indicates the magnitude of outlyingness, and the color hue reflects whether the deviation is above (red) or below (blue) the global mean. Figure~\ref{fig:sup_nino} in the Appendix provides Shapley values for representative El~Niño (2015:2016) and La~Niña (1956:1957) years. Figure~\ref{fig:shapley_lines_outliers} reveals that, for El~Niño episodes, outlyingness is primarily driven by elevated temperatures in the Niño~3.4 and Niño~3 regions, whereas for La~Niña episodes, it is mainly associated with cold anomalies in the Niño~4 and Niño~3 regions.
 For example, for 2015:2016, the largest contributions arise from the Niño~3.4 region in autumn (Aug–Dec) and winter (Jan–Apr), with additional contributions from Niño~3 during late summer and spring, consistent with the characteristic warming pattern of strong El~Niño events.

\begin{figure}[!h]
    \centering
    \includegraphics[width = 1\linewidth]{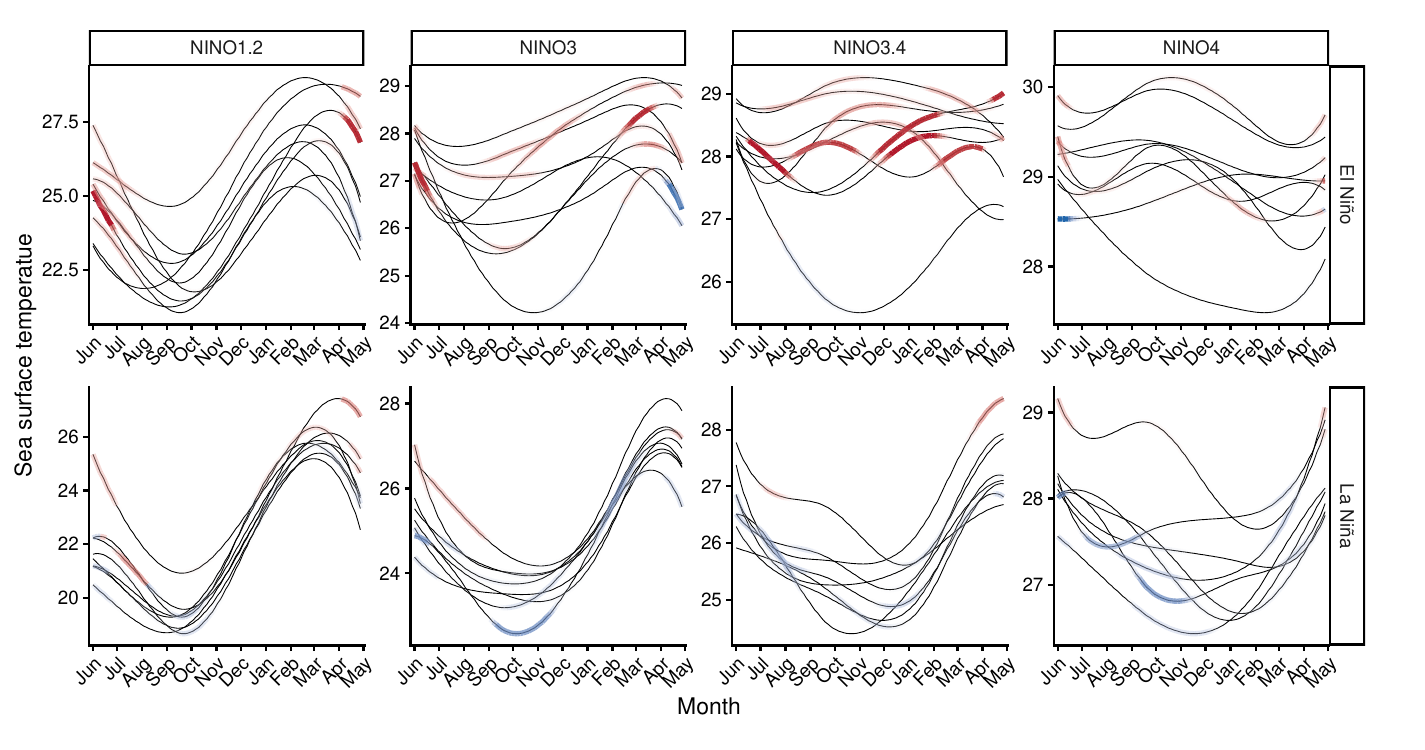}
    \caption{Smoothed SST curves across the four Niño regions (columns) during El~Niño and La~Niña years (rows). The color hue indicates whether the deviation is above (red) or below (blue) the global mean, and the color intensity reflects the magnitude of outlyingness.
    }
    \label{fig:shapley_lines_outliers}
\end{figure}

\paragraph{Resistance spot welding:}
To further demonstrate the effectiveness of the proposed outlier detection framework, we apply it to resistance spot welding (RSW) data from the automotive manufacturing industry. This dataset comprises $n = 115$ dynamic resistance curves (DRCs), which track electrical resistance during welding. Abnormal conditions, such as excessive current or low electrode pressure, can cause expulsion, which occurred in $70$ cases, corresponding to more than half of the samples. However, expulsion is an isolated effect in this dataset that does not affect all coordinates; we want to test whether our approach gives meaningful results in a cellwise high-dimensional setting. Since manual labeling is costly, we consider the functional outlier detection procedures from the simulation study to automatically detect expulsions. For more details on the dataset and the welding process, we refer to \cite{centofanti2025cellwise} and the references therein. 

Rather than treating the data in this original format as a sequence of 5 welding points at 750 time points, as in \cite{centofanti2025cellwise}, we instead regard it as multivariate functions with five coordinate functions, each representing a welding point recorded at 150 time points; see Figure~\ref{fig:welding_curves_and_shapley}. Applying the same testing procedures as in the ENSO example, we reject both the separability and Gaussianity assumptions. Further details are provided in Appendix~\ref{appendix:welding}. Nonetheless, we proceed with the analysis described below.

In Figure~\ref{fig:welding_roc}, we compare the methods based on the ROC curves and their AUC values. The robust distances achieve the highest AUC (0.95), matching the cellwise robust high-dimensional approach proposed by \cite{centofanti2025cellwise}, while being considerably faster to compute. The multivariate functional isolation forest \citep{Staerman2019} achieves a comparable AUC of 0.94 with a Gaussian dictionary (MFIF gauss), but drops to 0.61 and 0.50 with the Brownian (MFIF brown) and Self (MFIF self) dictionaries. Because the dictionary must be chosen in advance without label information, this sensitivity poses a practical difficulty in an unsupervised setting. MMLE, the non-robust counterpart of MMCD, achieves an AUC of 0.89, indicating that robust estimation contributes noticeably to detection performance. The remaining methods reach AUC values between 0.50 and 0.70, suggesting that the contamination pattern in this dataset is not well captured by their underlying outlier notions.
For the distance-based methods, we smooth the data using $m=30$ B-spline basis functions. Visual inspection showed that the resulting curves were reasonably smooth, with no apparent oversmoothing. Varying $m$ between $10$ and $50$ confirms that the specific choice of $m$ is not critical once the smoothing error stabilizes at $m \geq 20$, with an AUC value between $0.94$ and $0.95$ for $m = 20, 25, \dots, 50$. 
This choice is also consistent with the simulation results, which show stable performance for such basis representations.

\begin{figure}[!h]
    \centering
    \includegraphics[width = 0.7\linewidth]{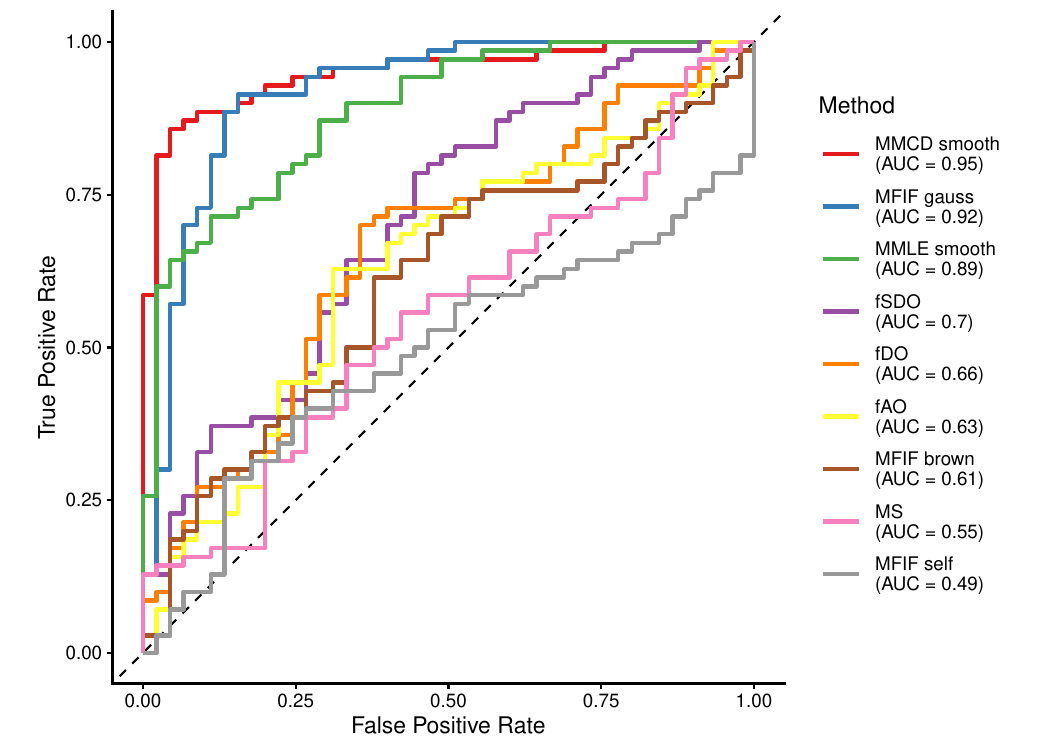}
    \caption{ROC curves and their AUC values to evaluate outlier detection performance in the resistance spot welding data.}
    \label{fig:welding_roc}
\end{figure}

As demonstrated in the ENSO example, we can apply Shapley values to identify which coordinate functions are outlying in which time intervals. Experts can use this information to increase efficiency in manually confirming the labels of new data, as well as to explain the outlyingness of existing samples. This is illustrated in Figure~\ref{fig:welding_curves_and_shapley}, where we computed time-coordinate-specific Shapley values for 10 time intervals for two samples with large robust Mahalanobis distances.

\begin{figure}[!h]
    \centering
    \includegraphics[width = 1\linewidth]{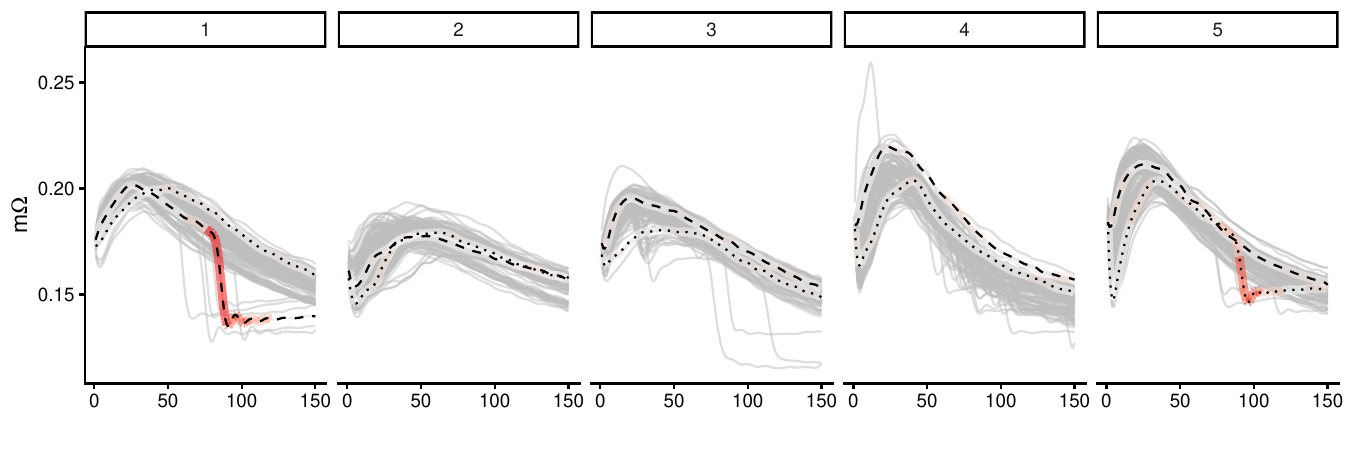}
    \caption{Smoothed welding DRCs with two of the outliers highlighted by a dashed/dotted line. The color intensity of the band around the two curves is based on time-coordinate-specific Shapley values computed at 10 equally sized time intervals.}
    \label{fig:welding_curves_and_shapley}
\end{figure}

\section{Discussion and Conclusions}\label{section:discussion}
We introduced a multivariate functional outlier detection approach based on the Mahalanobis distance, a method widely used for traditional multivariate data. While most existing approaches rely on depth-based measures \citep{hubert2015multivariate}, our framework provides a distance-based alternative grounded in classical multivariate analysis. Defining a robust Mahalanobis distance for multivariate functional data is nontrivial, particularly due to the need for robust estimation of the associated covariance structure.
We proposed a functional Mahalanobis (semi-)distance and established some of its key properties. It is affine invariant, and in the case of uncorrelated components, the proposed distance reduces to the sum of the univariate functional Mahalanobis distances of \cite{galeano2015mahalanobis}. Under the additional assumption of separability of the covariance operator, we showed how a multivariate random function relates in the distribution to the coefficient matrix obtained from a basis representation of the smoothed data. This connection enables a direct link between the trimmed multivariate functional Mahalanobis distance and the matrix-variate Mahalanobis distance, for which robust estimators have been developed in \cite{mayrhofer2024robust}.

The proposed robust distances can be computed either from raw data or from a smoothed version of the data. The simulations based on multivariate separable Gaussian processes demonstrate that both versions outperform established benchmark methods across various functional outlier types in the considered settings and validate the stability of the smoothed distances with respect to the number of basis functions. 

The simulation study under a misspecified model indicates additional regimes in which the proposed approach is very useful. Under non-separable covariance structures, the separable estimator could be interpreted as a structured approximation rather than as a correctly specified model. In the settings considered, this approximation often yields more accurate covariance estimates and more stable Mahalanobis diagnostics than the unregularized sample covariance on vectorized data, especially when the sample size is moderate relative to the dimension. This behavior reflects a bias--variance trade-off: the separability constraint introduces bias but can substantially reduce estimation variance. More flexible or regularized non-separable estimators may behave differently. Heavy-tailed simulations show reduced precision across all methods, as tail observations become harder to distinguish from outliers. Robust Mahalanobis-based methods maintain strong AUC and recall, while depth-based approaches are more precise. Overall, the conclusions remain similar to the Gaussian setting, though performance depends on both contamination structure and tail behavior.

A further contribution is outlier explainability using Shapley values, which allow for an additive decomposition of the univariate and multivariate squared functional Mahalanobis distance. The outlyingness contributions can be evaluated for the individual components of the multivariate functions, for non-overlapping time domains, or simultaneously as time-coordinate-specific contributions. 
The Shapley simulation supports the use of the proposed decomposition as an interpretability tool. In settings with localized contamination, the Shapley values reliably identify the affected coordinates and, when the time domain is partitioned into moderately sized intervals, also localize the relevant time regions more accurately than competing local outlyingness measures. For very fine partitions, performance decreases because temporally correlated deviations spread the outlyingness contribution to neighboring intervals, which is consistent with the dependence structure of functional data rather than a failure of the decomposition.
The real-data analyses further illustrate the interpretability of the Shapley decomposition and indicate that the decomposition often aligns with the expert knowledge, providing a clear explanation of multivariate functional outlyingness in practice.

While the separability assumption simplifies the modeling and estimation of covariance structures for multivariate functional data, it may not fully capture the complexity of real-world dependencies. Further future research could focus on developing methods that relax this assumption, allowing for non-separable covariance structures. %
Recent studies, such as \cite{aston2017tests,masak2023random} have highlighted the limitations of separable models and proposed tests for separability, but there remains a need for practical estimation methods under non-separability. New approaches, such as the expansion of the covariance into a series of separable terms \citep{masak2023separable}, have recently been proposed to deal with this problem. In particular, future research in this direction could involve sequential robust estimation of the separable components in the covariance expansion, thus providing a robust and computationally efficient framework when working with more general multivariate functional data. 

\subsection*{Acknowledgements}
We thank the anonymous reviewers for their careful reading and constructive comments, which have considerably improved this paper.
\subsection*{Funding}
This work was supported by the AI4CSM project, which has received funding from the ECSEL Joint Undertaking (JU) under grant agreement No 101007326; by the Austrian IKT der Zukunft programme via the Austrian Research Promotion Agency (FFG) and the Austrian Federal Ministry for Climate Action, Environment, Energy, Mobility, Innovation and Technology (BMK) under project No 884070; and by the Austrian Science Fund (FWF), project number I 5799-N.
\subsection*{Data and code availability}
The R code that reproduces all figures in this paper, along with the simulation results, is available at \url{https://github.com/MarcusMayrhofer/explainable-functional-outlier-detection}.
The El Niño data (U.S. Climate Prediction Center) are included. The welding data (\url{https://wis.kuleuven.be/statdatascience/robust/software}) and the fertility rates (\url{https://www.humanfertility.org}) are not redistributed; the repository documents how to obtain each.
\subsection*{Declaration of AI use}
AI coding assistants (Anthropic, Claude Opus 5) were used for proofreading, language editing, and coding assistance. The manuscript was drafted by the authors, and all methodological development, theoretical results, and interpretation of the findings are the work of the authors.
\subsection*{Disclosure statement}
The authors report there are no competing interests to declare.

	\bibliographystyle{apalike}
	\bibliography{references}

\clearpage
\appendix

\section{Further Preliminaries} 
\label{appendix:further_preliminaries}

\subsection{Matrix Normal Distribution} \label{appendix:sub:matrix_normal}
A random matrix $\bm{X} \in \R^{p \times q}$ follows a matrix normal distribution, denoted as $\bm{X} \sim \MN(\bm{\M}, \bm{\Sigma}^{\row}, \bm{\Sigma}^{\col})$, with mean $\bm{\M} \in \R^{p \times q}$, row covariance $\bm{\Sigma}^{\row} \in \pds(p)$, and column covariance $\bm{\Sigma}^{\col} \in \pds(q)$, 
if and only if its vectorized form $\vect(\bm{X}) \in \R^{pq}$ has a multivariate normal distribution $\NN(\vect(\bm{\M}),\bm{\Sigma}^{\col} \otimes \bm{\Sigma}^{\row})$ \citep{gupta1999}. Here, the class of all positive definite symmetric $a \times a$ matrices is denoted by $\pds(a)$. The vectorization operator $\vect(\cdot)$ stacks the columns of a matrix on top of each other, and $\otimes$ represents the Kronecker product. 
The probability density function (pdf) of a matrix normal random variable $\bm{X}$ is given by 
\begin{align}
    f(\bm{X}|\bm{\M}, \bm{\Sigma}^{\row}, \bm{\Sigma}^{\col}) = \frac{\exp(-\frac{1}{2}\tr((\bm{\Sigma}^{\col})^{-1} (\bm{X}-\bm{\M})' (\bm{\Sigma}^{\row})^{-1}(\bm{X}-\bm{\M})))}{(2\pi)^{\nicefrac{pq}{2}} \det(\bm{\Sigma}^{\col})^{\nicefrac{p}{2}} \det(\bm{\Sigma}^{\row})^{\nicefrac{q}{2}}}. \label{eq:matrix_normal_density}
\end{align}
Importantly, $\bm{\Sigma}^{\row}$ and $\bm{\Sigma}^{\col}$ are only identified up to a multiplicative constant $\kappa \neq 0$. Specifically, replacing $\bm{\Sigma}^{\row}$ by $\kappa\bm{\Sigma}^{\row}$ and $\bm{\Sigma}^{\col}$ by $\nicefrac{1}{\kappa}\bm{\Sigma}^{\col}$ leaves the pdf~\eqref{eq:matrix_normal_density} unchanged. To resolve the non-identifiability, one can fix a diagonal entry, the determinant, or norm of either matrix~\citep{ros2016existence, soloveychik2016gaussian}.

\subsection{Matrix Minimum Covariance Determinant Estimator (MMCD)}\label{section:MMCD}

For a random sample $\bm{X}_1,\dots,\bm{X}_n\in\mathbb{R}^{m\times p}$ from a matrix-elliptical semi-parametric distribution~\citep{gupta2012elliptically}, with the parametric part 
parameterized by the mean $\bm{\M}$ and covariance matrices $\bm{\Sigma}^{\row}$ and  $\bm{\Sigma}^{\col}$, \cite{mayrhofer2024robust} proposed robust mean and covariance estimators. The robust MMCD estimators 
$(\hat{\bm{\M}},\hat{\bm{\Sigma}}^{\row}, \hat{\bm{\Sigma}}^{\col})$ solve
\begin{align}
\label{eq:MMCD}        
\argmin_{\substack{\hat{\bm{\M}}_{H},\hat{\bm{\Sigma}}^{\row}_{H}, \hat{\bm{\Sigma}}^{\col}_{H}\\ H \subset \{1,\dots,n\}, \abs{H} = h}} p\ln(\det(\hat{\bm{\Sigma}}^{\col}_{H})) + q\ln(\det(\hat{\bm{\Sigma}}^{\row}_{H})),
\end{align}
where 
\begin{align}
    \hat{\bm{\M}}_{H} &= \frac{1}{h}\sum_{i \in H} \bm{X}_i,\nonumber\\
    \hat{\bm{\Sigma}}^{\row}_{H} &= \frac{1}{qh} \sum_{i \in H} (\bm{X}_i - \hat{\bm{\M}}_{H})(\hat{\bm{\Sigma}}^{\col}_{H})^{-1} (\bm{X}_i - \hat{\bm{\M}}_{H})',\label{eq:MMCD_row} \\  \hat{\bm{\Sigma}}^{\col}_{H} &= \frac{1}{ph} \sum_{i \in H} (\bm{X}_i - \hat{\bm{\M}}_{H})'(\hat{\bm{\Sigma}}^{\row}_{H})^{-1} (\bm{X}_i - \hat{\bm{\M}}_{H})\label{eq:MMCD_col}   .
\end{align}
Here,  $h = \alpha n, \alpha \in [0.5,1]$ represents the size of the \textit{clean} subset of the sample used for moment estimation. For $h=0.5n$, the estimators achieve a maximal breakdown point of $\nicefrac{n}{2} - \lfloor \nicefrac{p}{m} + \nicefrac{m}{p} \rfloor-1$. With the proper scaling, the method yields consistent estimators in this context, while the finite-sample efficiency can be further improved with an additional reweighting step. 

As there are no closed-form solutions for the robust MMCD-estimators, \cite{mayrhofer2024robust} proposed a nested iterative estimation procedure based on a \textit{concentration step} algorithm \citep{Rousseeuw1999} for solving~\eqref{eq:MMCD}, 
and an iterative \textit{flip-flop} algorithm \citep{Dutilleul1999} for computing the maximum likelihood estimates~\eqref{eq:MMCD_row}-\eqref{eq:MMCD_col}. Starting from any positive definite initialization, the proposed procedure is shown to converge almost surely to the positive definite covariance estimates, provided $h \geq \lfloor \nicefrac{p}{m} + \nicefrac{m}{p} \rfloor + 2$. The convergence also holds if the ellipticity assumption is violated. For technical and implementation details, see \cite{mayrhofer2024robust}. 

\subsection{Multivariate Gaussian Process}\label{appendix:sub:multivariate_gaussian_process}
A stochastic process $\bm{X}$ is a multivariate Gaussian process if every finite collection of realizations has a joint normal distribution. 
A finite collection of realizations from a stochastic process $\bm{X}$ at time points $\bm{t} = (t_1,\dots,t_q), t_1 < t_2 < \dots < t_q, t_k \in \T, k = 1,\dots,q,$ is denoted by 
$\bm{X}_{\bm{t}} = (\bm{X}(t_1),\dots,\bm{X}(t_q))' \in \R^{p \times q},$
yielding a matrix-variate sample. 
For the mean function, we have 
$\bm{\M}_{\bm{t}} = (\bm{\mu}(t_1),\dots,\bm{\mu}(t_q))' \in \R^{p \times q}$,
and the covariance function yields a block-partitioned matrix $\K_{\bm{t}} \in \R^{pq \times pq}$ as in Equation~\eqref{eq:mean_cov_function}, with %
entries $\kappa_{ij}(t_k,t_l)$, for $i,j = 1,\dots,p$ and $k,l = 1,\dots,q$.

The joint normality of \(\bm{X}_{\bm{t}}\) is described using a matrix-variate approach, which directly models the matrix-valued realizations as in \cite{chen2017modelling, chen2023multivariate}. In this case, \(\bm{X}_{\bm{t}} \sim \MN(\bm{M}_{\bm{t}}, \bm{\Sigma}^{\row}, \bm{\Sigma}_{\bm{t}}^{\col})\), where \(\bm{\Sigma}^{\row}\) represents the row-wise covariance matrix, capturing the dependencies between individual coordinate functions, and \(\bm{\Sigma}_{\bm{t}}^{\col} = (\kappa(t_k, t_l))_{k, l = 1}^q \in \mathbb{R}^{q \times q}\) is the column-wise covariance matrix, accounting for the temporal correlations between different time points. This formulation leverages a single kernel function \(\kappa(s, t)\) to model time dependencies, allowing the covariance structure to be factorized as \(\K(s, t) = \bm{\Sigma}^{\row} \kappa(s, t)\), significantly reducing complexity compared to using \(p(p+1)\) separate kernels, as mentioned in Equation~\eqref{eq:mean_cov_function}.

\begin{remark}
    Alternatively, the joint normality can also be expressed by vectorizing the matrix-valued realizations, as in \cite{alvarez2012kernels}. In this case, the columns of \(\bm{X}_{\bm{t}}\) are stacked into a vector \(\bm{x}_{\bm{t}} = \vect(\bm{X}_{\bm{t}})\), and the process is modeled as \(\bm{x}_{\bm{t}} \sim \NN(\bm{m}_{\bm{t}}, \K_{\bm{t}})\), where \(\bm{m}_{\bm{t}} = \vect(\bm{M}_{\bm{t}})\). 
\end{remark}

\subsection{Finite-basis Representation and Additive Noise Model}\label{appendix:sub:smoothing}

One of the key principles of FDA is to work with \textit{smooth} functions. This means that adjacent values are linked together to some degree and are unlikely to be too different from each other. If the functions were not smooth, there would be no significant advantage to treating them as functional data instead of just multivariate \citep{ramsay_silverman_2005}.

In practice, functional data are observed at discrete time points, and the raw observed data may contain noise or fluctuations that obscure the underlying patterns or trends present in the true functional form. We can formalize this using an additive noise model, see, e.g., \cite{ramsay_silverman_2005} and \cite{zhu2016bayesian},
\begin{align*}
    \bm{Y}(t) = \begin{pmatrix}
        Y_1(t)\\
        \vdots\\
        Y_p(t)
    \end{pmatrix} = \begin{pmatrix}
        X_1(t)\\
        \vdots\\
        X_p(t)
    \end{pmatrix} + \begin{pmatrix}
        \varepsilon_1(t)\\
        \vdots\\
        \varepsilon_p(t)
    \end{pmatrix} = 
    \bm{X}(t) + \bm{\varepsilon}(t),
\end{align*}
where $\bm{Y}$ is the observed process, and $\bm{X}$ is an underlying \textit{signal} process we are interested in, and $\bm{\varepsilon}$ are additive errors that are independent of $\bm{X}$. 

We employ smoothing techniques to reduce the noise and reveal the underlying structure or behavior of the process. The most common approach is to represent the observed process by basis functions. 
Let $\{\unibasis_k\}_{k \geq 1}$ denote a family of orthonormal basis functions of $L^2(\T)$, then each component $X_j$ of $\bm{X} \in \mathcal{H}$ can be expressed in terms of this basis as 
\begin{align*}
    X_j = \sum_{k = 1}^\infty {a}_{jk} \unibasis_k,\,j=1,\dots,p.
\end{align*}
Because the basis is fixed, the randomness of the stochastic process $\bm{X}$ is captured by the coefficients $a_{jk}$, $j = 1,\dots,p$, $k\geq 1$. 
Using only a sufficiently large number $m$ of basis functions, we can approximate $X_j$ arbitrarily well and rewrite the coordinates of the observed process as
\begin{align*}
     Y_j(t) = \sum_{k = 1}^m a_{jk} \unibasis_k(t) + \tilde{\varepsilon}_j(t),
\end{align*}
where the error term now consists of the approximation error and the measurement errors, i.e., 
$$\displaystyle\tilde{\varepsilon}_j(t) = \varepsilon_j(t) +  \sum_{k = m + 1}^\infty a_{jk} \unibasis_k(t).$$
Let $\mbasis=(\unibasis_1,\dots,\unibasis_m)'$ denote the vector consisting of the first $m$ basis functions of $\{\unibasis_k\}_{k \geq 1}$ and $\bm{a}_j = (a_{j1},\dots,a_{jm})' \in \R^{m}$ the vector of coefficients, we can write 
\begin{align*}
    Y_j(t) &= \bm{a}_j '\mbasis(t) + \tilde{\varepsilon}_j(t), \quad j=1\dots,p.
\end{align*}
By collecting the coefficients in a matrix $\bm{A} = (\bm{a}_1, \dots,\bm{a}_p)' \in \R^{p \times m}$ we can write the multivariate process as
\begin{align*}
    \bm{Y}(t) &= \bm{A}\mbasis(t) + \tilde{\bm{\varepsilon}}(t).
\end{align*} 
The coefficients $a_{jk}$, $j = 1,\dots,p$, $k = 1,\dots,m,$ are usually determined based on a least squares approach, and often a roughness penalty is involved; see \cite{ramsay_silverman_2005} for more details. 

\subsection{FPCA for Multivariate Stochastic Process with Separable Covariance} \label{appendix:sub:FPCA}

Let $\mathcal{K}$ denote the covariance operator corresponding to the covariance kernel $\kappa(s,t)$ of the multivariate stochastic process $\bm{X}\sim\msp(\bm{\mu},\bm{\Sigma}^{\row},\kappa)$. Then we have that 
\begin{equation*}
    \mathcal{K} \ueig_i(s) = \int_{\T}\kappa(s,t)\ueig_i(t)\mathrm{d}t=\lambda^{\ker}_i\ueig_i(s),\quad i=1,\dots,m,
\end{equation*}
and 
\begin{equation*}
    \bm{\Sigma}^{\row} \bm{v}^{\row}_j = \lambda^{\row}_j \bm{v}^{\row}_j ,\quad j=1,\dots,p.
\end{equation*}
For the multivariate covariance operator $\C$ with corresponding kernel $\K(s,t)$ we have
\begin{equation} \label{eq:eigen_eq}
    \C \meig_k(s) = \int_{\T} \K(s,t) \meig_k(t) \mathrm{d}t = \pi_k\meig_k(s) ,\quad k=1,\dots,M,
\end{equation}
where $M = pm$. In the separable setting, $\C = \bm{\Sigma}^{\row}\mathcal{K}$ with kernel $\K(s,t) = \bm{\Sigma}^{\row} \kappa(s,t)$. 
Consider the eigendecomposition $\bm{\Sigma}^{\row} = \bm{V}^{\row}\bm{D}^{\row}(\bm{V}^{\row})'$, where $\bm{V}^{\row} = ((\bm{v}^{\row}_1)',\dots,(\bm{v}^{\row}_p)')$ is the matrix of eigenvectors and $\bm{D}^{\row} = \diag(\lambda^{\row}_1,\dots,\lambda^{\row}_p)$ the diagonal matrix of ordered eigenvalues $\lambda^{\row}_1 \geq \cdots \geq \lambda^{\row}_p$, where we assume the uniqueness of the eigenvalues for simplicity; a proper generalization of the results holds also in the case of non-simple eigenvalues. Using the indexation $k=k(i,j)=1,\dots,M=pm$,  
\begin{align}\label{eq:MFPCA_decomposition}
\pi_k=\lambda_i^{\ker}\lambda_j^{\row},\quad \meig_k(t)=\ueig_i(t) \bm{v}^{\row}_j,\quad t\in \T,\quad i=1,\dots,m,\quad j=1\dots,p.
\end{align}
To see that the relations in \eqref{eq:MFPCA_decomposition} indeed hold, observe first that orthogonality of $\bm{V}^{\row}$ and orthonormality of $\ueig_i$, $i=1,\dots,m$  give the orthonormality of the corresponding products. Furthermore, 
\begin{align} \label{eq:eigen_eq_mpg}
\begin{split}
    \lambda^{\ker}_i \lambda^{\row}_j (\ueig_i(s) \bm{v}^{\row}_j)&=\lambda_j^{\row}\int_{\T}  \kappa(s,t) {\ueig_i(t) \bm{v}^{\row}_j}\,\mathrm{d}t \\
   &= \int_{\T}  \boldsymbol{\Sigma}^{\row}\kappa(s,t) {\ueig_i(t) \bm{v}^{\row}_j}\,\mathrm{d}t = \int_{\T}  \K(s,t) ({\ueig_i(t) \bm{v}^{\row}_j})\,\mathrm{d}t. 
\end{split}   
\end{align}
The uniqueness of the eigendecomposition \eqref{eq:eigen_eq} yields the desired claim. 
For a more general connection between multivariate FPCA of $\bm{X}\sim\mathcal{MSP}(\bm{\mu},\K)$ and univariate FPCA of its components $X_1,\dots,X_p$, see \cite{happgreven2018}.

\section{Proofs and Remarks on Multivariate Functional Mahalanobis Distance}~\label{appendix:fmmd_proofs}

\begin{proof}[Proof of Lemma \ref{lemma:affine_invariance}]
The affine equivariance of the mean and the covariance follows directly from the affine equivariance of their multivariate counterparts: \begin{align*}
        \bm{\mu}_{\bm{Y}} = \E[\bm{Y}(t)] = \E[\bm{A}\bm{X}(t)+\bm{\nu}] = \bm{A}\E[\bm{X}(t)]+\bm{\nu} = \bm{A}\bm{\mu}_{\bm{X}}+\bm{\nu},
    \end{align*}  
    \begin{align*}
        \K_{\bm{Y}}(s,t) = \cov(\bm{Y}(s),\bm{Y}(t)) 
        &= \E[(\bm{Y}(s)-\E[\bm{Y}(s)])(\bm{Y}(s)-\E[\bm{Y}(t)])']\\
        &= \E[(\bm{A}\bm{X}(s)-\E[\bm{A}\bm{X}(s)])(\bm{A}\bm{X}(s)-\E[\bm{A}\bm{X}(t)])']\\
        &= \E[(\bm{A}\bm{X}(s)-\bm{A}\E[\bm{X}(s)])(\bm{A}\bm{X}(s)-\bm{A}\E[\bm{X}(t)])']\\
        &= \bm{A}\E[(\bm{X}(s)-\E[\bm{X}(s)])(\bm{X}(s)-\E[\bm{X}(t)])']\bm{A}'\\
        &= \bm{A} \K_{\bm{X}}(s,t)\bm{A}'.
    \end{align*}

For simplicity of the notation assume $\bm{\mu}_{\bm{X}},\bm{\nu}=\bm{0}$, and denote $\C_{\bm{X}}$ and $\C_{\bm{Y}}$ to be the (matrices of) covariance operators associated with $\K_{\bm{X}}$ and $\K_{\bm{Y}}$, respectively. Then, for any $\bm{f}\in \mathcal{H}$, $u\in \T$,
\begin{align*}
    \C_{\bm{Y}}\bm{f}(u) 
    &= \int_{\T} \K_{\bm{Y}}(u,v)\bm{f}(v)\mathrm{d}v 
    = \bm{A}\int_{\T} \K_{\bm{X}}(u,v)(\bm{A}'\bm{f}(v))\mathrm{d}v = \bm{A}\C_{\bm{X}}(\bm{A}'\bm{f})(u). 
\end{align*}

Denoting $\C_{\bm{X}}^{(M)}$ and $\C_{\bm{Y}}^{(M)}$ to be the truncation of $\C_{\bm{X}}$ and $\C_{\bm{Y}}$ onto the corresponding first $M$ components, i.e., for $\bm{f}\in\mathcal{H}$, 
$$
\C_{\bm{X}}^{(M)}\bm{f}(u)=\sum_{i=1}^M\pi_{\bm{X},i}\langle\bm{f},\meig_{\bm{X},i}\rangle\meig_{\bm{X},i}(u),\quad 
\C_{\bm{Y}}^{(M)}\bm{f}(u)=\sum_{i=1}^M\pi_{\bm{Y},i}\langle\bm{f},\meig_{\bm{Y},i}\rangle\meig_{\bm{Y},i}(u),
$$
it is straightforward to verify that 
$$
(\C_{\bm{Y}}^{(M)})^{-1}\bm{f}(u)=(\bm{A}')^{-1}(\C_{\bm{X}}^{(M)})^{-1}(\bm{A}^{-1}\bm{f})(u),
$$
provided $M$ is such that $\C_{\bm{X}}^{(M)}$ is invertible. Additionally, we can write 
$$
\fmmd^2(\bm{X};k)=\langle (\C_{\bm{X}}^{(M)})^{-1}\bm{X},\bm{X}\rangle,\quad \fmmd^2(\bm{Y};k)=\langle (\C_{\bm{Y}}^{(M)})^{-1}\bm{Y},\bm{Y}\rangle.
$$
Finally, 
\begin{align*}
    \fmmd^2(\bm{Y},\bm{\mu}_{\bm{Y}};\K_{\bm{Y}},M)
    &=\langle (\C_{\bm{Y}}^{(M)})^{-1}\bm{Y},\bm{Y}\rangle\\
    &=\langle (\bm{A}')^{-1}(\C_{\bm{X}}^{(M)})^{-1}(\bm{A}^{-1}\bm{Y}),\bm{Y}\rangle
    =\langle (\bm{A}')^{-1}(\C_{\bm{X}}^{(M)})^{-1}\bm{X},\bm{A}\bm{X}\rangle\\
    &=\int_{\T}((\bm{A}')^{-1}(\C_{\bm{X}}^{(M)})^{-1}\bm{X}(u))'\bm{A}\bm{X}(u)\mathrm{d}u\\
    &=\int_{\T}((\C_{\bm{X}}^{(M)})^{-1}\bm{X})'\bm{X}(u)\mathrm{d}u\\
    &=\langle (\C_{\bm{X}}^{(M)})^{-1}\bm{X},\bm{X}\rangle
    =\fmmd^2(\bm{X},\bm{\mu}_{\bm{X}};\K_{\bm{X}},M).
\end{align*}

\end{proof}

\begin{proof}[Proof of Lemma \ref{lemma:independent processes}]
    Let $(\lambda_i^{(j)},\ueig_i^{j})$, be the $i$th eigenpair of the covariance $\mathcal{K}_j$ of the $j$-component of $\bm{X}$, $i\geq 1$, $j=1,\dots,p$. Construct now the following set of multivariate functions: $\ueig_i^{(j)}\bm{e}_j$, $i\geq 1$, $j=1,\dots,p$, where $\bm{e}_j$ is the $j$th vector of the canonical basis of $\mathbb{R}^p$. It is then straightforward to verify that this set is indeed orthonormal. Additionally, as components in $\bm{X}$ are uncorrelated, $\C=\mathrm{diag}(\mathcal{K}_1,\dots,\mathcal{K}_p)$. Simple algebra gives further $\C\ueig_i^{(j)}\bm{e}_j=\lambda_i^{(j)}\ueig_i^{(j)}\bm{e}_j$, for any $i\geq 1$ and $j=1\dots,p$. Thus, functions in $\{\ueig_i^{(j)}\bm{e}_j: i\geq 1,\,j=1,\dots,p\}$, are the eigenfunctions of $\C$, while $\lambda_i^{(j)},\,i\geq 1,\,j=1,\dots,p$ are the corresponding eigenvalues. In other words, the spectrum of $\C$ corresponds to the union of the spectra of individual covariance operators $\mathcal{K}_j$, $j=1,\dots,p$. Then, for $m_1,\dots,m_p$ as described in the statement of the result, the $M$ largest eigenpairs of $\C$ are $(\lambda_i^{(j)},\ueig_i^{(j)}\bm{e}_j)$, $j=1,\dots,p$, $i=1,\dots,m_j$. Observe that since $\pi_{M+1}<\pi_{M}$, these eigenvalues are chosen from the spectrum of $\C$ in a unique way. The following now holds:
    \begin{align*}
        \fmmd^2(\bm{X},\bm{\mu};\K,M) 
        &=  \sum_{j=1}^p\sum_{j=1}^{m_j} \frac{1}{\lambda_i^{(j)}}  \innerproduct{\bm{X} - \bm{\mu}}{\ueig_i^{(j)}\bm{e}_j}^2=\sum_{j=1}^p\left(\sum_{j=1}^{m_j} \frac{1}{\lambda_i^{(j)}}  \innerproduct{{X}_j - {\mu}_j}{\ueig_i^{(j)}}^2\right)\\
        &= \sum_{j=1}^p\fmd^2(X_j,\mu_j;\kappa_j,m_j).
    \end{align*}        
\end{proof}

\begin{proof}[Proof of Corollary \ref{cor:independent processes}]
    The proof of the statement (i) follows from the following claims: Lemma \ref{lemma:independent processes}, the fact that the components of the separable covariance processes, which have uncorrelated components, share, up to scale, common covariance kernel, and Lemma \ref{lemma:affine_invariance} by taking $\bm{A}=\bm{\Sigma}^{-1/2}$ and observing that the process transformed that way has uncorrelated components. Claim (ii) follows directly from (i).
\end{proof}

\begin{remark}[Choice of $M$ in Corollary~\ref{cor:independent processes}]\label{rem:M=mp}
    It is important to note that requiring \( M \) to be a multiple of \( p \) is not arbitrary, but rather a natural choice given the structure of the problem. To illustrate, if $\bm{\Sigma}^{\row}=\bm{I}_p$, the eigenvalues of the covariance operator \( \C \) appear with multiplicity \( p \), reflecting the inherent symmetries in the data. When projecting onto an \( M \)-dimensional space, \( M \) is typically chosen to capture a desired amount of explained variance, or based on a threshold related to the significance of the eigenfunctions. Given that each eigenvalue corresponds to \( p \) linearly independent components, it is reasonable to select all \( p \) components associated with any given eigenvalue when deciding on the projection dimension. This ensures that the projection retains the intrinsic structure of the data, avoiding arbitrary truncation of the eigenspaces and preserving the full contribution of the variance associated with each eigenvalue.  
\end{remark}

\begin{proof}[Proof of Lemma \ref{lemma:fmmd_distribution}]
Let $\bm{X}$ be the multivariate Gaussian process with covariance $\K=\kappa\bm{\Sigma}^{\row}$ and mean $\bm{\mu}$, where for the simplicity of the notation we assume that $\bm{\mu}=\bm{0}$. Let further $\pi_1\geq\pi_2\geq\cdots\geq \pi_M>0$, and $\meig_1,\dots,\meig_M$ the leading $M$ eigenvalues and eigenfunctions of covariance operator $\C$ associated with covariance function $\K$, respectively, i.e., $\C\meig_i=\pi_i\meig_i$, for $i=1,\dots,M$. Then $\beta_i=\langle\bm{X},\meig_i\rangle\sim\mathcal{N}(0,\pi_i)$, $i=1,\dots,M$ are uncorrelated, i.e., independent random variables; see, e.g., \cite{wang2008karhunen} for more details. Denoting  $\eta_i=\beta_i/\sqrt{\pi_i}\sim\mathcal{N}(0,1)$ to be i.i.d. random variables from standard normal distribution, we can write 
$$
\fmmd^2(\bm{X},\bm{\mu};\bm{\Sigma}^{\row},\kappa,M)=\sum_{i=1}^M\eta_i^2\sim\chi^2(M), 
$$
as a sum of $M$ {squared} independent standard normal random variables.
\end{proof}

\begin{proof}[Proof of Theorem \ref{theorem:fmmd_basis}]
\textit{(i)} %
    For $t\in \T$ let $\bm{X}(t)=\bm{A}'\mbasis(t)$, $\mbasis=(\unibasis_1,\dots,\unibasis_m)'$, for $\bm{A}=(\bm{a}_1\hdots\bm{a}_p)$. First, observe that for every $t\in\T$,  $\bm{\mu}(t)=\mathbb{E}(\bm{X}(t))=\mathbb{E}(\bm{A}'\mbasis(t))=\mathbb{E}(\bm{A}')\mbasis(t)=\mathbb{E}(\bm{A})'\mbasis(t)=\bm{M}_{\bm{A}}'\mbasis(t)$. For the simplicity of the notation, since all the quantities involved are centered, we further take $\bm{M}_{\bm{A}}=\bm{0}$. Proceed now first by assuming $\bm{\Sigma}^{\row}=\bm{I}_p$. Then, for $t\in \T$ 
   \begin{align}\label{eq:a_is}
       \kappa(t,t)\delta_{i,j}&=\mathbb{E}(X_i(t)X_j(t))=\bm{e}_i'\mathbb{E}(\bm{X}(t)\bm{X}'(t))\bm{e}_j\\%
       &=\bm{e}_i'\mathbb{E}\left(\bm{A}'\mbasis(t)(\bm{A}'\mbasis(t))'\right )\bm{e}_j=\mbasis'(t)\mathbb{E}\left(\bm{a}_i\bm{a}_j'\right )\mbasis(t),\quad i,j=1,\dots,p,
   \end{align}
where Kronecker delta $\delta_{i,j}=1$ if $i=j$ and $0$ otherwise. Since \eqref{eq:a_is} holds for every $t\in \T$, 
$$
\mathbb{E}\left(\bm{a}_i\bm{a}_j'\right )=0 \text{ for  }i\neq j,\quad \text{and}\quad \mathbb{E}\left(\bm{a}_i\bm{a}_i'\right )=\mathbb{E}\left(\bm{a}_j\bm{a}_j'\right ),  \text{ for  }i=1,\dots,p.
$$
Denoting $\bm{\Sigma}^{\col}:=\mathrm{Cov}(\bm{a}_1)$
$$
\mathrm{Cov}(\vect(\bm{A}))=\mathrm{Cov}\begin{pmatrix}
    \bm{a}_1\\
    \vdots\\
    \bm{a}_p
\end{pmatrix}=\begin{pmatrix}
    \bm{\Sigma}^{\col}&\bm{0}&\cdots&\bm{0}\\
    \bm{0}&\bm{\Sigma}^{\col}&\cdots&\bm{0}\\
    \vdots&&\ddots&\vdots\\
    \bm{0}&\cdots&&\bm{\Sigma}^{\col}
\end{pmatrix}=\bm{\Sigma}^{\col}\otimes\bm{I}_p,
$$
where for every $s,t\in \T$ the matrix $\bm{\Sigma}^{\col}$ satisfies 
\begin{equation}\label{eq:Sigma_col1}
\kappa(s,t)=\mbasis'(s)\bm{\Sigma}^{\col}\mbasis(t).    
\end{equation}

To see that $\bm{\Sigma}^{\col}$ is indeed positive, observe the following: As $\displaystyle\bm{W}:=\int_{\T}\mbasis(t)\mbasis'(t)\mathrm{d}t$ is a positive definite matrix, then there exists $m_0\in \mathbb{N}$ and $t_1,\dots,t_{m_0}$, such that the Riemann sum $\nicefrac{1}{m_0}\sum_{i=1}^{m_0} \mbasis(t_i)\mbasis'(t_i)
$ approximating the integral in $\bm{W}$ is a positive definite matrix. Therefore, there exist $m$ linearly independent vectors in $\{\mbasis(t_i): i=1,\dots,m_0\}$. Without loss of generality, assume that these are $\mbasis(t_1),\dots,\mbasis(t_m)$. Take now $\bm{y}\in\mathbb{R}^m$, $\bm{y}\neq 0$. Due to the independence of $\mbasis(t_1),\dots,\mbasis(t_m)$,  $\bm{y}$ can represented as $\bm{y}=\sum_{i = 1}^m c_i \mbasis(t_i)$, for some $(c_1,\dots,c_m)\neq \bm{0}$. Then $$\bm{y}' \bm{\Sigma}^{\col} \bm{y} = \sum_{i,j = 1}^m c_i c_j \mbasis'(t_i) \bm{\Sigma}^{\col} \mbasis(t_j) = \sum_{i,j = 1}^m c_i c_j \kappa(t_i,t_j) > 0,$$ 
where the last inequality holds due to the positive definiteness of $\kappa$.

Let now $\bm{X}=\bm{A}'\mbasis\sim\mathcal{MSP}(\bm{0},\bm{\Sigma}^{\row},\kappa)$. Lemma \ref{lemma:affine_invariance} then implies that   $\bm{Y}:=\bm{\Sigma}^{\row}{}^{-1/2}\bm{X}\sim \mathcal{MSP}(\bm{0},\bm{I}_p,\kappa)$. Additionally, for $\bm{A}_{\bm{Y}}=\bm{A}\bm{\Sigma}^{\row}{}^{-1/2}$ is $\bm{Y}=\bm{A}_{\bm{Y}}'\mbasis$. The first part of the proof now shows that 
$$
\mathrm{Cov}(\vect(\bm{A}_{\bm{Y}}))=\bm{I}_p\otimes \bm{\Sigma}^{\col},
$$
where $\bm{\Sigma}^{\col}$ depends only on the kernel $\kappa$ as given in \eqref{eq:Sigma_col1}. Note further that 
$$
\vect(\bm{A})=\vect(\bm{I}_m\bm{A}_{\bm{Y}}(\bm{\Sigma}^{\row})^{1/2})=\left((\bm{\Sigma}^{\row})^{1/2}\otimes\bm{I}_m\right)\vect(\bm{A}_{\bm{Y}}),
$$
finally giving
$$
\mathrm{Cov}(\vect(\bm{A}))=\left((\bm{\Sigma}^{\row})^{1/2}\otimes\bm{I}_m\right)\left(\bm{I}_p\otimes \bm{\Sigma}^{\col}\right)\left((\bm{\Sigma}^{\row})^{1/2}\otimes\bm{I}_m\right)=\bm{\Sigma}^{\row}\otimes \bm{\Sigma}^{\col}.
$$

\medskip\noindent \textit{(ii)} To prove the statement (ii), we begin by assuming $\bm{\Sigma}^{\row}=\bm{I}_p$. Additionally, without loss of generality and for the simplicity of the notation, take $\bm{\mu}=\bm{0}$. Let further $(\lambda_1,\ueig_1),\dots,(\lambda_m,\ueig_m)$, $\lambda_1\geq\cdots\lambda_m>0$  be the eigenpairs of $\kappa$, i.e. for every $s\in\T$
\begin{equation}\label{eq:Lemma ii_1}
\int_{\T}\kappa(s,t)\ueig_i(t)\mathrm{d}t=\lambda_i\ueig_i(s).
\end{equation}
Expressing the eigenfunctions $\ueig_i$, $i=1,\dots,m$ in $\mbasis$ basis gives  
\begin{equation}\label{eq:Lemma ii_2}
\ueig_i=\bm{b}_i'\mbasis,\quad i=1,\dots,m.
\end{equation}
For any $k\in\{1,\dots,p\}$, $i=1,\dots,m$, and $s\in\T$  \eqref{eq:Lemma ii_1} and \eqref{eq:Lemma ii_2} imply the following relations are equivalent:
\begin{align}\label{eq:Lemma ii_3}
    \int_{\T} \kappa(s,t)\ueig_i(t)\mathrm{d}t=\lambda_i\ueig_i(s)
    \iff &\int_{\T} \mathbb{E}(X_k(s)X_k(t))\ueig_i(t)\mathrm{d}t=\lambda_i\ueig_i(s),\nonumber\\
    \iff &\mbasis'(s)\mathbb{E}(\bm{a}_k\bm{a}_k')\left(\int_{\T} \mbasis(t)\mbasis'(t)\mathrm{d}t\right)\bm{b}_i=\lambda_i\mbasis'(s)\bm{b}_i,\nonumber \\
    \iff &\mbasis'(s)\mathbb{E}(\bm{a}_k\bm{a}_k')\bm{W}\bm{b}_i=\lambda_i\mbasis'(s)\bm{b}_i,
\end{align}
where $\displaystyle\bm{W}:=\int_{\T}\mbasis(t)\mbasis'(t)\mathrm{d}t$ is a positive definite matrix; see proof of part (i) for more details. Since \eqref{eq:Lemma ii_3} holds for every $s\in\T$, and since $\mathbb{E}(\textbf{a}_k\textbf{a}_k')=\boldsymbol{\Sigma}^{\col}$; see proof of (i) for details, we obtain
\begin{equation}\label{eq:Lemma ii_4}
\bm{\Sigma}^{\col}\bm{W}\bm{b}_i=\lambda_i\bm{b}_i \iff  (\bm{\Sigma}^{\col})^{-1}
        = \bm{W}^{1/2}\bm{\Sigma}^{\col}\bm{W}^{1/2}\bm{u}_i
        =  \lambda_i\bm{u}_i, 
\end{equation}
where $\bm{u}_i=\bm{W}^{1/2}\bm{b}_i$, $i=1,\dots,m$. Equation~\eqref{eq:Lemma ii_4} also implies that $(\lambda_i,\bm{u}_i)$, $i=1,\dots,m$ are eigenpairs of symmetric matrix $\bm{W}^{1/2}\bm{\Sigma}^{\col}\bm{W}^{1/2}$. Finally, \eqref{eq:Lemma ii_4} further implies
\begin{align*}
\bm{W}^{1/2}\bm{\Sigma}^{\col}\bm{W}^{1/2} = \sum_{i = 1}^m  \lambda_i \bm{u}_i \bm{u}_i' =  \bm{W}^{1/2} \left(\sum_{i = 1}^m  \lambda_i\bm{b}_i \bm{b}_i'\right) \bm{W}^{1/2}
    \iff  \bm{\Sigma}^{\col} = \sum_{i = 1}^m  \lambda_i\bm{b}_i \bm{b}_i'.
\end{align*}
Moreover,
\begin{align}
\bm{W}^{-1/2}(\bm{\Sigma}^{\col})^{-1}\bm{W}^{-1/2}&=(\bm{W}^{1/2}\bm{\Sigma}^{\col}\bm{W}^{1/2})^{-1} 
        = \sum_{i = 1}^m  \frac{1}{\lambda_i} \bm{u}_i \bm{u}_i'
        =  \bm{W}^{1/2} \left(\sum_{i = 1}^m  \frac{1}{\lambda_i}\bm{b}_i \bm{b}_i'\right) \bm{W}^{1/2}\nonumber\\
           & \iff  (\bm{\Sigma}^{\col})^{-1}
        =  \bm{W} \left(\sum_{i = 1}^m  \frac{1}{\lambda_i} \bm{b}_i \bm{b}_i'\right) \bm{W}.\label{eq:inv_eigen}
\end{align}
Corollary \ref{cor:independent processes}, together with relation  $\langle X_j,\ueig_i\rangle=\langle\bm{a}_j'\mbasis,\bm{b}_i'\mbasis\rangle=\bm{a}_j'\bm{W}\bm{b}_i$  now implies 
\begin{align*}
\fmmd^2(\bm{X};mp)&=\sum_{i=1}^m\sum_{j=1}^p\lambda_i^{-1}\langle X_j,\ueig_i\rangle^2
=\sum_{i=1}^m\sum_{j=1}^p\lambda_i^{-1}(\bm{a}_j'\bm{W}\bm{b}_i)^2\\
&=\sum_{j=1}^p\bm{a}_j'\bm{W}\sum_{i=1}^m\left(\lambda_i^{-1}\bm{b}_i\bm{b}_i'\right)\bm{W}\bm{a}_j=\sum_{j=1}^p\bm{a}_j'(\bm{\Sigma}^{\col})^{-1}\bm{a}_j\\
&=\mathrm{tr}(\bm{A}'(\bm{\Sigma}^{\col})^{-1}\bm{A})=\mmd^2(\bm{A}).
\end{align*}
Affine invariance of $\fmmd$; Lemma \ref{lemma:affine_invariance} and $\mmd$ \cite[Lemma 3.0.1]{mayrhofer2024robust} complete the proof of (ii).

\medskip\noindent \textit{(iii)} %
As in (i), for $t\in\T$,  $\bm{\mu}(t)=\mathbb{E}(\bm{X}(t))=\mathbb{E}(\bm{A}'\mbasis(t))=\mathbb{E}(\bm{A}')\mbasis(t)=\mathbb{E}(\bm{A})'\mbasis(t)=\bm{M}_{\bm{A}}'\mbasis(t)$. For the simplicity of the notation, we again take $\bm{M}_{\bm{A}}=\bm{0}$.  

Proceed first by assuming that $\bm{\Sigma}^{\row}=\bm{I}_p$. Take further $t_1,\dots,t_m$, such that $\mbasis_{t_1,\dots,t_m}:=(\mbasis(t_1)\dots\mbasis(t_m))$ is a full column rank matrix; for the proof of the existence see proof of (i). Gaussianity of $\bm{X}$ then implies that
$$
(\bm{X}(t_1),\dots,\bm{X}(t_{\mb}))=\bm{A}'(\mbasis(t_1),\dots,\mbasis(t_{\mb}))\sim \mathcal{MN}(\bm{0},\bm{I}_p,\bm{\Sigma}_{t_1,\dots,t_{\mb}}^{\col}),
$$
where $\bm{\Sigma}_{t_1,\dots,t_{\mb}}^{\col}=[\kappa(t_i,t_j)]_{i,j}$ is a positive definite matrix, due to the positive definiteness of kernel $\kappa$. Therefore, 
$$
\vect(\bm{I}_p\bm{A}'\mbasis_{t_1,\dots,t_m})=\left(\mbasis_{t_1,\dots,t_m}\otimes \bm{I}_p\right)\vect(\bm{A}')\sim \mathcal{N}_{mp}(\bm{0},\bm{\Sigma}_{t_1,\dots,t_{\mb}}^{\col}\otimes\bm{I}_p),
$$
i.e.,
$$
\vect(\mbasis_{t_1,\dots,t_m}\bm{A}\bm{I}_p)=\left( \bm{I}_p\otimes\mbasis_{t_1,\dots,t_m}'\right)\vect(\bm{A})\sim \mathcal{N}_{mp}(\bm{0},\bm{I}_p\otimes\bm{\Sigma}_{t_1,\dots,t_{\mb}}^{\col}).
$$
Regularity of $\mbasis_{t_1,\dots,t_m}$ and the fact that inversion of Kronecker product of two matrices, as well as the product of two Kronecker products, retains the Kronecker structure completes the first part of the proof, where the particular form of the covariance $\bm{\Sigma}^{\col}$ is given by (i). A short note to the reader: Given the general result in (i), it was enough to show that $\bm{A}$ has normally distributed entries. 

Finally, for $\bm{X}\sim \msp(\bm{0},\bm{\Sigma}^{\row},\kappa)$, let $\bm{Y}:=(\bm{\Sigma}^{\row})^{-1/2}\bm{X}=\left(\bm{A}(\bm{\Sigma}^{\row})^{-1/2}\right)'\mbasis
\sim\msp(\bm{0},\bm{I}_p,\kappa)$. The first part of the proof shows that 
$$
\bm{A}(\bm{\Sigma}^{\row})^{-1/2}\sim\MN(\bm{0},\bm{\Sigma}^{\col},\bm{I}_p).
$$
Matrix affine equivariance of matrix normal distribution~ \citep{gupta1999}  finally gives that 
$$
\bm{A}\sim\MN(\bm{0},\bm{\Sigma}^{\col},\bm{\Sigma}^{\row}),
$$
thus completing the proof.
\end{proof}

\begin{corollary}\label{corollary:fmd}
    Let $X(t) = \bm{a}'\mbasis(t)$ be a rank $m \in \N$ stochastic process with mean $\mu$ and covariance $\kappa$, with coefficients $\bm{a} \in \R^m$ and basis $\mbasis=(\unibasis_1,\dots,\unibasis_m)'$. Then the following holds:
    \begin{itemize}
        \item[(i)] $\bm{a}$ has a multivariate distribution with mean $\bm{m}_{\bm{a}}$ and covariance $\cov(\bm{a}) = \bm{\Sigma} \in \pds(m)$ such that
        $\bm{m}_{\bm{a}}'\mbasis(t) = \bm{\mu}(t) \quad \text{and} \quad \mbasis'(s)\bm{\Sigma}\mbasis(t) = \kappa(s,t)$ for all $s,t  \in \T$.
        \item[(ii)] $\fmd^2(X;m) = (\bm{a} - \bm{m}_{\bm{a}})'\bm{\Sigma}^{-1}(\bm{a} - \bm{m}_{\bm{a}}) = \md^2(\bm{a})$.
        \item[(iii)] If $X$ is a Gaussian process, then $\bm{a} \sim \NN(\bm{m}_{\bm{a}}, \bm{\Sigma})$ has a multivariate normal distribution with mean $\bm{m}_{\bm{a}}$ and covariance matrix $\bm{\Sigma}$.
    \end{itemize}
\end{corollary}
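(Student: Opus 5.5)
The plan is to treat Corollary~\ref{corollary:fmd} as the special case $p=1$ of Theorem~\ref{theorem:fmmd_basis}, with $\bm{\Sigma}^{\row}=\sigma^2>0$ a scalar that can be absorbed into $\kappa$. In that case $\bm{A}=\bm{a}\in\R^{m\times 1}$, $\vect(\bm{A})=\bm{a}$ and $\bm{M}_{\bm{A}}=\bm{m}_{\bm{a}}$. Since the kernel is defined only up to scale, we fix $\bm{\Sigma}^{\row}=1$. Then $\cov(\vect(\bm{A}))=1\otimes\bm{\Sigma}^{\col}=\bm{\Sigma}^{\col}=:\bm{\Sigma}$.

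For (i), linearity of expectation gives $\E[X(t)]=\E[\bm{a}]'\mbasis(t)$, so $\bm{m}_{\bm{a}}'\mbasis(t)=\mu(t)$. Writing $X-\mu=(\bm{a}-\bm{m}_{\bm{a}})'\mbasis$, we get $\kappa(s,t)=\mbasis'(s)\cov(\bm{a})\mbasis(t)$. Positive definiteness of $\bm{\Sigma}$ follows as in the proof of Theorem~\ref{theorem:fmmd_basis}(i). First choose $t_1,\dots,t_m$ such that $\mbasis(t_1),\dots,\mbasis(t_m)$ are linearly independent, which is possible because $\bm{W}$ is positive definite. Then expand any $\bm{y}\neq\bm{0}$ in these vectors and use positive definiteness of $\kappa$ to get $\bm{y}'\bm{\Sigma}\bm{y}>0$.

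For (ii), repeat the computation in part (ii) of the theorem without the sum over $j$. Write the eigenfunctions as $\ueig_i=\bm{b}_i'\mbasis$. The eigen-equation $\int\kappa(s,t)\ueig_i(t)\,\mathrm{d}t=\lambda_i\ueig_i(s)$ becomes $\bm{\Sigma}\bm{W}\bm{b}_i=\lambda_i\bm{b}_i$, because the entries of $\mbasis$ are linearly independent functions. Setting $\bm{u}_i=\bm{W}^{1/2}\bm{b}_i$ turns this into an orthonormal eigenbasis of the symmetric matrix $\bm{W}^{1/2}\bm{\Sigma}\bm{W}^{1/2}$, using $\langle\ueig_i,\ueig_j\rangle=\bm{b}_i'\bm{W}\bm{b}_j=\delta_{ij}$. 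This yields $\bm{\Sigma}^{-1}=\bm{W}\big(\sum_i\lambda_i^{-1}\bm{b}_i\bm{b}_i'\big)\bm{W}$. Since $\langle X-\mu,\ueig_i\rangle=(\bm{a}-\bm{m}_{\bm{a}})'\bm{W}\bm{b}_i$, summing over $i=1,\dots,m$ gives $\fmd^2(X;m)=(\bm{a}-\bm{m}_{\bm{a}})'\bm{\Sigma}^{-1}(\bm{a}-\bm{m}_{\bm{a}})$. The rank-$m$ assumption and the positive definiteness of $\bm{\Sigma}$ guarantee exactly $m$ positive eigenvalues, so the truncation level $m$ is admissible.

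For (iii), evaluate $X$ at the $m$ points chosen above. The vector $(X(t_1),\dots,X(t_m))'=\mbasis_{t_1,\dots,t_m}'\bm{a}$ is Gaussian, and $\mbasis_{t_1,\dots,t_m}$ is invertible, so $\bm{a}$ is an invertible linear image of a Gaussian vector and is therefore Gaussian. Its mean and covariance are those identified in (i).

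The main obstacle is the spectral step in (ii). It needs care to justify that the eigen-equation in function space is equivalent to the matrix equation, via linear independence of the $\unibasis_k$, and that the $\bm{u}_i$ are orthonormal, so that the inverse formula for $\bm{\Sigma}^{-1}$ holds. Everything else specializes the already-proved theorem directly.
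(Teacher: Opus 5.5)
Your proposal is correct and takes the same route as the paper, which justifies the corollary only by noting that it follows directly from Theorem~\ref{theorem:fmmd_basis}, i.e.\ it is the $p=1$ case of that theorem. Your write-up is exactly that specialization, with scalar $\bm{\Sigma}^{\row}$ normalized to $1$. It re-runs the theorem's steps: identifying the mean and covariance, showing $\bm{\Sigma}$ is positive definite, deriving $\bm{\Sigma}^{-1}=\bm{W}\bigl(\sum_{i}\lambda_i^{-1}\bm{b}_i\bm{b}_i'\bigr)\bm{W}$, and obtaining Gaussianity by evaluating at $m$ points with linearly independent $\mbasis(t_i)$.
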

The corollary follows directly from Theorem~\ref{theorem:fmmd_basis}.

\section{Algorithms}\label{appendix:algorithm}
\begin{algorithm}[!h]
\onehalfspacing
    \caption{Robust estimation of mean and covariance function}
    \label{alg:alg1}
    \begin{algorithmic}[1] 
        \Require{$\bm{\mathfrak{X}} = (\bm{X}_1,\dots,\bm{X}_n),\, \bm{X}_i \in \R^{p \times q},\, i = 1,\dots,n,\, \mbasis = (\unibasis_1,\dots,\unibasis_m)',\, \T$}
            \State \textbf{Create functional data object} 
            \Statex \quad {Estimate coefficient matrices $\bm{\mathfrak{A}} = (\bm{A}_1,\dots,\bm{A}_n)$ by smoothing $\bm{\mathfrak{X}}$;
            \Statex \quad Obtain finite basis representation $\bm{X}_i(t) = \bm{A}_i'\mbasis(t),\, i = 1,\dots,n,\, t \in \T$};
            \State \textbf{MMCD} \citep[Algorithm 2]{mayrhofer2024robust}
            \Statex \quad Run MMCD procedure on $\bm{\mathfrak{A}}$ and get $(\hat{\bm{\M}}_{\bm{A},H^\ast},\hat{\bm{\Sigma}}^{\row}_{H^\ast}, \hat{\bm{\Sigma}}^{\col}_{H^\ast},\mmd(\bm{\mathfrak{A}}))$;
            \State \textbf{Obtain functional data objects for mean and covariance}
            \Statex \quad $\hat{\bm{\mu}}(t) = \hat{\bm{\M}}_{\bm{A},H^\ast}\mbasis(t)$;
            \Statex \quad $\hat{\bm{\Sigma}}^{\row} = \hat{\bm{\Sigma}}^{\col}_{H^\ast}$;
            \Statex \quad $\hat{\kappa}(s,t) = \mbasis'(s)\hat{\bm{\Sigma}}^{\row}_{H^\ast}\mbasis(t)$;
            \Statex \quad $\fmmd(\bm{X}_i) = \fmmd(\bm{X}_i,\hat{\bm{\mu}};\hat{\bm{\Sigma}}^{\row},\hat{\kappa},mp) = \mmd(\bm{A}_i,\hat{\bm{\M}}_{\bm{A},H^\ast};\hat{\bm{\Sigma}}^{\row}_{H^\ast}, \hat{\bm{\Sigma}}^{\col}_{H^\ast})$
            \Ensure{$\hat{\bm{\mu}}, \hat{\bm{\Sigma}}^{\row}, \hat{\kappa}, \left(\fmmd(\bm{X}_1),\dots,\fmmd(\bm{X}_n)\right)$}
    \end{algorithmic}
\end{algorithm}

Algorithm~\ref{alg:alg1} yields robust estimators. Similarly, non-robust counterparts can be obtained by replacing MMCD by the iterative matrix maximum likelihood estimation (MMLE) procedure of \cite{Dutilleul1999} in \texttt{step 2}.

Algorithm~\ref{alg:alg1} can be easily adapted for the analysis of raw data: \texttt{step 1} in the algorithm is omitted, and $p \times q$ matrices of raw data observations are supplied to the MMCD in \texttt{step 2}. The pointwise estimates of mean and covariance evaluated at observed time points $t_1,\dots,t_q$ are the output of MMCD \texttt{step 2}. Usually, post-smoothing is applied to those estimates to extend them to the functional setting, see \cite{ramsay_silverman_2005}. However, this is beyond the scope of this paper. The rationale for using the algorithm on raw data lies in the fact that, for certain classes of separable covariance processes (e.g., Gaussian and Student’s-t processes, see \cite{chen2021space, chen2023multivariate}), all finite-dimensional projections belong to the same family of matrix-variate distributions with separable covariance structure. Moreover, in those cases, parameters estimated on finite-dimensional projections correspond to pointwise evaluations of the process parameters, as described above.

\begin{algorithm}[!h]
    \caption{Robust FPCA for separable processes}
    \label{alg:alg2}
    \begin{algorithmic}[1] 
    \onehalfspacing
        \Require{$\bm{\mathfrak{X}} = (\bm{X}_1,\dots,\bm{X}_n),\, \bm{X}_i \in \R^{p \times q},\, i = 1,\dots,n,\, \mbasis = (\unibasis_1,\dots,\unibasis_m)',\, \T$}
            \State \textbf{Create functional data object} 
            \Statex \quad {Estimate coefficient matrices $\bm{\mathfrak{A}} = (\bm{A}_1,\dots,\bm{A}_n)$ by smoothing $\bm{\mathfrak{X}}$;
            \Statex \quad Obtain finite basis representation $\bm{X}_i(t) = \bm{A}_i'\mbasis(t),\, i = 1,\dots,n,\, t \in \T$};
            \State \textbf{MMCD} (Algorithm 2 in \cite{mayrhofer2024robust})
            \Statex \quad Run MMCD procedure on $\bm{\mathfrak{A}}$ and get $(\hat{\bm{\M}}_{\bm{A},H^\ast},\hat{\bm{\Sigma}}^{\row}_{H^\ast}, \hat{\bm{\Sigma}}^{\col}_{H^\ast},\mmd(\bm{\mathfrak{A}}))$;
            \State \textbf{Compute coefficient representations of FPCs} 
            \Statex \quad Compute matrix of inner products of basis functions $\bm{W}=\int_{\T}\mbasis(t)\mbasis'(t)\mathrm{d}t$;
            \Statex \quad Eigendecomposition of $\bm{W}^{1/2}\hat{\bm{\Sigma}}^{\col}_{H^\ast}\bm{W}^{1/2} = \bm{U}\bm{\Lambda}\bm{U}'$;
            \Statex \quad \quad Matrix of eigenvalues $\bm{\Lambda} = \diag(\lambda_1^{\ker},\dots,\lambda_m^{\ker})$;
            \Statex \quad \quad Matrix of eigenvectors $\bm{U} = (\bm{u}_1,\dots,\bm{u}_m)$;
            \Statex \quad Compute coefficients $\bm{b}_j = \bm{W}^{-1/2}\bm{u}_j,\,j = 1,\dots,m,$ of FPCs;
            \State \textbf{Obtain univariate FPCs}
            \Statex \quad Univariate eigenpairs $(\ueig_j,\lambda_j^{\ker})$ with $\ueig_j(t) = \bm{b}_j'\mbasis(t),\,t \in \T,\,j = 1,\dots,m$;
            \State \textbf{Obtain multivariate FPCs}
            \Statex \quad Eigendecomposition of $\hat{\bm{\Sigma}}^{\row} = \hat{\bm{\Sigma}}^{\col}_{H^\ast} = \bm{V}\bm{\Gamma}\bm{V}'$;
            \Statex \quad \quad Matrix of eigenvalues $\bm{\Gamma} = \diag(\lambda_1^{\row},\dots,\lambda_p^{\row})$;
            \Statex \quad \quad Matrix of eigenvectors $\bm{V} = (\bm{v}_1^{\row},\dots,\bm{v}_1^{\row})$;
            \Statex \quad Define indexation $k=k(j,l)=1,\dots,M=pm$ for $j = 1,\dots,m,\,l = 1,\dots,p$;
            \Statex \quad Multivariate eigenpairs $(\pi_k,\meig_k) = (\lambda_j^{\ker}\lambda_k^{\row}, \bm{v}_l^{\row}\ueig_j)$;
        \Ensure{$(\pi_1,\dots,\pi_M),(\meig_1,\dots,\meig_M)$}
    \end{algorithmic}
\end{algorithm}

\section{Outlier Explanations} \label{appendix:shapley_proofs}

Before we provide the proofs for the results presented in Section~\ref{section:shapley}, we summarize the notation used in this section in Table~\ref{tab:notation_shapley}, and provide further details for outlier explanations for multivariate and matrix-variate data. 

\begin{table}[p]
\centering
\caption{Summary of notation used in Section~\ref{section:shapley}.}
\label{tab:notation_shapley}

\small\begin{subtable}[H]{1\linewidth}
\centering
\caption{Univariate functional data.}
\begin{tabular}{lp{10cm}}
\toprule
\textbf{Symbol} & \textbf{Description} \\
\midrule
$X(t)\in L^2(\mathcal{T})$ & Univariate stochastic process \\
$\mu(t)$ & Mean function of $X$ \\
$\kappa(s,t)$ & Covariance kernel of $X$ \\
$\mathcal{K}$ & Covariance operator of $X$ \\
$\psi_i$, $\pi_i$ & Eigenfunctions and eigenvalues of $\mathcal{K}$ \\
$\mathcal{T} = \bigcup_{a=1}^d \mathcal{T}_a$ & Functional domain partitioned into $d$ disjoint subintervals \\
$\langle X, Y\rangle_{\mathcal{T}_a} = \int_{\mathcal{T}_a} X(t)Y(t)\,dt$ & Inner product on subinterval $\mathcal{T}_a$ \\
$D=\{1,\dots,d\}$ & Index set of $d$ domain subintervals\\
$R\subseteq D$ & Index set of chosen subintervals.\\
$\hat{X}^{R}(t)$ & Process equal to $X(t)$ on $\bigcup_{b \in R} \mathcal{T}_b$ and to the mean $\mu(t)$ elsewhere.\\
$\Delta_{\T_a} \fmd^2(\hat{X}^{R},\mu;\kappa,m)$ & Marginal outlyingness contribution in the time interval $\T_a$ to $\fmd^2$ for time intervals indexed $R$\\
$\theta_{\T_a}(X,\mu;\kappa,m)$ & Time-specific Shapley outlyingness contribution on $\T_a$ \\
\bottomrule
\end{tabular}
\end{subtable}
\vspace{0.5em}

\small\begin{subtable}[t]{\linewidth}
\centering
\caption{Multivariate functional data.}
\begin{tabular}{lp{10cm}}
\toprule
\textbf{Symbol} & \textbf{Description} \\
\midrule
$\bm X(t) = (X_1(t), \ldots, X_p(t))'\in L_p^2(\T)$ & Multivariate separable stochastic process with $p$ components \\
$\bm\mu(t) = (\mu_1(t), \ldots, \mu_p(t))'$ & Mean function of $\bm X$ \\
$\bm \Sigma_{\text{row}} \in \mathrm{PDS}(p)$ & Cross-covariance matrix between $p$ components \\
$\kappa(s,t)$ & Temporal covariance kernel \\
$\bm K(s,t) = \Sigma_{\text{row}}\,\kappa(s,t)$ & Separable covariance kernel of $\bm X$ \\
 $P = \{1, \ldots, p\}$ & Index set of the variables\\
$S \subseteq P$ & Chosen subset of variables\\
 $\mathcal{T} = \bigcup_{a=1}^d \mathcal{T}_a$ & Functional domain partitioned into $d$ disjoint subintervals \\
$\langle X, Y\rangle_{\mathcal{T}_a} = \int_{\mathcal{T}_a} X(t)Y(t)\,dt$ & Inner product on subinterval $\mathcal{T}_a$ \\
$D=\{1,\dots,d\}$ & Index set of $d$ domain subintervals\\
$R\subseteq D$ & Index set of chosen subintervals.\\
$ \Delta_{k,\T_c} \fmmd^2(\Xhat{S,R},\bm{\mu};\K,M)$ & Marginal outlyingness contribution of the $k$th coordinate function in the time interval $\T_a$ to $\fmmd^2$ for time intervals in $R$ and components in $S$\\
$\Xhat{S,R}=(\Xhatj{S,R}{1},\ldots ,\Xhatj{S,R}{p})'$ & $\Xhatj{S,R}{k}(t)$ is equal to $X_k(t)$ on $\bigcup_{b \in R} \mathcal{T}_b$ if $k\in S$ and to $\mu_k(t)$ otherwise\\
$\Theta_{k,\T_{a}}(\bm{X},\bm{\mu};\K,M)$ & Outlyingness contribution of the $k$th coordinate in the interval $\T_a$\\
$\theta_k(\bm X,\bm \mu;\bm K,M)$ & Coordinate-specific Shapley outlyingness contribution for component $k$ \\
$\theta_{T_a}(\bm X,\bm \mu;\bm K,M)$ & Time-specific Shapley outlyingness contribution in interval $T_a$ \\
$\bm\phi(t) = (\phi_1(t), \ldots, \phi_M(t))'$ & Vector of basis functions used for smoothing \\
$\bm W = \int_T \phi(t)\phi'(t)\,dt$ & Matrix of inner products of basis functions over $\T$ \\
$\bm W_{\T_a} = \int_{T_a} \phi(t)\phi'(t)\,dt$ & Inner-product matrix restricted to interval $\T_a$ \\
$\bm v^{\text{row}}_j$, $\lambda^{\text{row}}_j$ & Eigenvectors and eigenvalues of $\bm \Sigma_{\text{row}}$ \\
$\xi_i$, $\lambda^{\text{row}}_i$ & Eigenfunctions and eigenvalues of kernel $\kappa(s,t)$ \\
\bottomrule
\end{tabular}
\end{subtable}
\end{table}

The outlyingness decomposition $\bm{\theta}(\bm{x})$ \eqref{eq:shapley_md} is based on the Shapley value, it is the only decomposition of the squared Mahalanobis distance based on Equation~\eqref{eq:define_xj} that fulfills the following properties:
\begin{itemize}
\item{\emph{Efficiency:}}
The contributions $\theta_j(\bm{x})$, for $j = 1,\ldots ,p$, sum up to the squared Mahalanobis distance of $\bm{x}$, hence $\sum_{j = 1}^p \theta_j(\bm{x}) = \md^2(\bm{x})$.
\item{\emph{Symmetry:}}
If $\md^2(\xhat{S \cup \{j\}}) = \md^2(\xhat{S \cup \{k\}})$ holds for all subsets $S \subseteq P\setminus \{j,k\}$
for two coordinates $j$ and $k$, then $\theta_j(\bm{x}) = \theta_k(\bm{x})$.
\item{\emph{Monotonicity:}}
Let $\bm{\mu}, \tilde{\bm{\mu}} \in \R^p$ be two vectors and $\bm{\Sigma}, \tilde{\bm{\Sigma}} \in \pds(p)$ be two matrices. If 
\begin{align*}
        \md_{\bm{\mu},\bm{\Sigma}}^2(\xhat{S \cup \{j\}}) - \md_{\bm{\mu},\bm{\Sigma}}^2(\xhat{S}) \geq \md_{\tilde{\bm{\mu}},\tilde{\bm{\Sigma}}}^2(\xhat{S \cup \{j\}}) - \md_{\tilde{\bm{\mu}},\tilde{\bm{\Sigma}}}^2(\xhat{S})
\end{align*}
holds for all subsets $S \subseteq P$, then $\theta_j(\bm{x},\bm{\mu},\bm{\Sigma}) \geq \theta_j(\bm{x},\tilde{\bm{\mu}},\tilde{\bm{\Sigma}})$.
\end{itemize}
This means that the $k$th coordinate of the Shapley value $\theta_k(\bm{x})$ represents the average marginal contribution of the $k$th coordinate to the squared Mahalanobis distance. This is calculated by averaging over all marginal outlyingness contributions $\Delta_k \md^2(\xhat{S})$ across all possible subsets $S \subseteq P \setminus \{k\}$. The second equality in Equation~\eqref{eq:shapley_md} demonstrates that the exponential computational complexity of the Shapley value can be reduced to a linear complexity in this setting. For a proof, we refer to \cite{mayrhofer2022}. 
Equation~\eqref{eq:shapley_md} provides further insight into the Shapley value by comparing it to the squared Mahalanobis distance $\md^2(\bm{x}) = \sum_{j,k = 1}^p (x_j-\mu_j)(x_k-\mu_k)\omega_{jk}$. While $\md^2(\bm{x})$ yields an outlyingness measure that aggregates the contributions $(x_j-\mu_j)(x_k-\mu_k)\omega_{jk}$ of all $p$ variables, the outlyingness scores $\theta_k(\bm{x})$ only consider the contributions associated with the $k$th coordinate. 
In \cite{mayrhofer2024robust}, the concept was extended to the 
matrix-variate setting. For a random matrix $\bm{X} \in \R^{p \times q}$ with mean $\bm{\M} \in \R^{p \times q}$, row covariance $\bm{\Sigma}^{\row} \in \pds(p)$ and column covariance $\bm{\Sigma}^{\col} \in \pds(q)$, cellwise, rowwise, and columnwise outlyingness contributions to the squared matrix Mahalanobis distance~\eqref{eq:mmd} are given by
\begin{align}
    \bm{\Theta}(\bm{X}) &= (\bm{X}-\bm{\M}) \circ (\bm{\Sigma}^{\row})^{-1} (\bm{X}-\bm{\M}) (\bm{\Sigma}^{\col})^{-1} \in \R^{p \times q}, \label{eq:shapley_cell}\\
    \bm{\theta}_{\row}(\bm{X}) &=  \diag((\bm{\Sigma}^{\row})^{-1} (\bm{X}-\bm{\M}) (\bm{\Sigma}^{\col})^{-1} (\bm{X}-\bm{\M})') \in \R^{p} \text{, }\\
    \bm{\theta}_{\col}(\bm{X}) &= \diag((\bm{X}-\bm{\M})'(\bm{\Sigma}^{\row})^{-1} (\bm{X}-\bm{\M}) (\bm{\Sigma}^{\col})^{-1}) \in \R^{q} ,
\end{align}
respectively, where $\circ$ is an element-wise product.
The cellwise Shapley values~\eqref{eq:shapley_cell} are based on the multivariate Shapley values~\eqref{eq:shapley_md} of vectorized observations. 
The row- and columnwise Shapley values can be obtained by adding the cellwise Shapley values for the respective row or column, or by adjusting the individual contributions for row-wise replacements.

\subsection{Univariate Functional Data}

\begin{proof}[Proof of Proposition~\ref{proposition:shapley_fmd}]
    The outlyingness contribution $\theta_{\T_a}(X;m) = \theta_{\T_a}(X,\mu;\kappa,m)$ of the $k$th coordinate to $\fmd^2(X,m)  = \fmd^2(X,\mu;\kappa,m)$ is given as the weighted average of the marginal outlyingness contributions $\Delta_{\T_a} \fmd^2(\hat{X}^{R};m) = \Delta_{\T_a} \fmd^2(\hat{X}^{R},\mu;\kappa,m)$. The marginal outlyingness contributions $\Delta_{\T_a} \fmd^2(\hat{X}^{R};m)$ can be simplified as follows:
    \begin{align*}
        \Delta_{\T_a} \fmd^2(\hat{X}^{R};m) &= \fmd^2(\hat{X}^{R\cup\{a\}};m)-\fmd^2(\hat{X}^{R};m)\\
        &= \sum_{i = 1}^m \frac{1}{\lambda_i} \left( \left(\sum_{b = 1}^d \innerproduct{\hat{X}^{R \cup \{a\}}}{\ueig_i}_{\T_b}\right)^2 - \left(\sum_{b = 1}^d \innerproduct{\hat{X}^{R}}{\ueig_i}_{\T_b}\right)^2 \right)\\
        &= \sum_{i = 1}^m \frac{1}{\lambda_i} \left( \sum_{b \in R \cup \{a\}} \sum_{c \in R \cup \{a\}} \innerproduct{X}{\ueig_i}_{\T_b} \innerproduct{X}{\ueig_i}_{\T_c}- \sum_{b \in R}  \sum_{c \in R} \innerproduct{X}{\ueig_i}_{\T_b} \innerproduct{X}{\ueig_i}_{\T_c}\right) \\
        &= \sum_{i = 1}^m \frac{1}{\lambda_i} \left( \sum_{b \in R} \sum_{c \in R} \innerproduct{X}{\ueig_i}_{\T_b} \innerproduct{X}{\ueig_i}_{\T_c} + 2\innerproduct{X}{\ueig_i}_{\T_a}\sum_{b \in R} \innerproduct{X}{\ueig_i}_{\T_b} \right. \\
        &\left.\qquad\qquad\quad+  \innerproduct{X}{\ueig_i}_{\T_a}^2 - \sum_{b \in R}  \sum_{c \in R} \innerproduct{X}{\ueig_i}_{\T_b} \innerproduct{X}{\ueig_i}_{\T_c}\right) \\
        &= \sum_{i = 1}^m \frac{1}{\lambda_i} \left(2\innerproduct{X}{\ueig_i}_{\T_a}\sum_{b \in R} \innerproduct{X}{\ueig_i}_{\T_b} +  \innerproduct{X}{\ueig_i}_{\T_a}^2\right).
    \end{align*} 

    Further, we write
    \begin{align*}
        w(\abs{R}) := \frac{r!(d-r-1)!}{d!},
    \end{align*}
    with $r = \abs{R}$ for which $\sum_{R \subseteq D\setminus\{a\}} w(\abs{R}) = 1$ holds. With this, the time-specific outlyingness contribution within the subinterval $\T_a$ based on the Shapley value simplifies as follows

    {\allowdisplaybreaks
    \begin{align*}
        \theta_{\T_a}(X,\mu;\kappa,m)
         &= \sum_{R \subseteq D\setminus\{a\}} \frac{\abs{R}!(d-\abs{R}-1)!}{(d)!} \Delta_{\T_a} \fmd^2(\hat{X}^{R};m)\\
         &= \sum_{R \subseteq D\setminus\{a\}} w(\abs{R}) \left(\sum_{i = 1}^m \frac{1}{\lambda_i} \left(2\innerproduct{X}{\ueig_i}_{\T_a}\sum_{b \in R} \innerproduct{X}{\ueig_i}_{\T_b} +  \innerproduct{X}{\ueig_i}_{\T_a}^2\right)\right)\\
         &= \sum_{i = 1}^m \frac{1}{\lambda_i} \innerproduct{X}{\ueig_i}_{\T_a}^2 + 2\sum_{i = 1}^m \frac{1}{\lambda_i}\innerproduct{X}{\ueig_i}_{\T_a}\sum_{R \subseteq D\setminus\{a\}} w(\abs{R})  \sum_{b \in R} \innerproduct{X}{\ueig_i}_{\T_b} \\
         &= \sum_{i = 1}^m \frac{1}{\lambda_i} \innerproduct{X}{\ueig_i}_{\T_a}^2 + 2\sum_{i = 1}^m \frac{1}{\lambda_i}\innerproduct{X}{\ueig_i}_{\T_a} \sum_{r = 1}^{d-1} w(r) \sum_{\substack{R \subseteq D\setminus\{a\}\\ \abs{R} = r}} \sum_{b \in R} \innerproduct{X}{\ueig_i}_{\T_b} \\
         &= \sum_{i = 1}^m \frac{1}{\lambda_i} \innerproduct{X}{\ueig_i}_{\T_a}^2 + 2\sum_{i = 1}^m \frac{1}{\lambda_i}\innerproduct{X}{\ueig_i}_{\T_a} \sum_{r = 1}^{d-1} w(r) \binom{d-2}{r-1} \sum_{b \in D\setminus \{a\}} \innerproduct{X}{\ueig_i}_{\T_b} \\
         &= \sum_{i = 1}^m \frac{1}{\lambda_i} \innerproduct{X}{\ueig_i}_{\T_a}^2 + 2\sum_{i = 1}^m \frac{1}{\lambda_i}\innerproduct{X}{\ueig_i}_{\T_a} \sum_{r = 1}^{d-1} \frac{r}{d(d-1)} \sum_{b \in D\setminus \{a\}} \innerproduct{X}{\ueig_i}_{\T_b} \\
         &= \sum_{i = 1}^m \frac{1}{\lambda_i} \innerproduct{X}{\ueig_i}_{\T_a}^2 + \sum_{i = 1}^m \frac{1}{\lambda_i}\innerproduct{X}{\ueig_i}_{\T_a} \sum_{b \in D\setminus \{a\}} \innerproduct{X}{\ueig_i}_{\T_b} \\
         &= \sum_{i = 1}^m \frac{1}{\lambda_i}\innerproduct{X}{\ueig_i}_{\T_a} \sum_{b \in D} \innerproduct{X}{\ueig_i}_{\T_b} \\
         &= \sum_{i = 1}^m \frac{1}{\lambda_i}\innerproduct{X}{\ueig_i}_{\T_a} \innerproduct{X}{\ueig_i}.
    \end{align*}}
\end{proof}

\begin{proof}[Proof of Lemma~\ref{lemma:shapley_using_coefficents_univariate}]
    We have that $X(t) = \bm{a}'\mbasis(t)$, $\mu(t) = \bm{m}_{\bm{a}}'\mbasis(t)$, and $\kappa(s,t) = \mbasis'(s)\bm{\Sigma}\mbasis(t)$, for $s,t \in \T$. Based on Proposition~\ref{proposition:shapley_fmd} we obtain
    \begin{align*}
        \phi_{\T_a}(X,\mu,\kappa;m) 
        &= \sum_{i = 1}^m \frac{1}{\lambda_i}\innerproduct{X - \mu}{\ueig_i}_{\T_a} \innerproduct{X - \mu}{\ueig_i}\\
        &= \sum_{i = 1}^m \frac{1}{\lambda_i}\innerproduct{(\bm{a} - \bm{m}_{\bm{a}})'\bm{\phi}}{\bm{\phi}'\bm{b}_i}_{\T_a} \innerproduct{(\bm{a} - \bm{m}_{\bm{a}})'\bm{\phi}}{\bm{\phi}'\bm{b}_i}\\
        &= \sum_{i = 1}^m \frac{1}{\lambda_i}(\bm{a}-\bm{m}_{\bm{a}})'\innerproduct{\mbasis}{\mbasis'}_{\T_a}\bm{b}_i (\bm{a}-\bm{m}_{\bm{a}})'\innerproduct{\mbasis}{\mbasis'}\bm{b}_i\\
        &= (\bm{a}-\bm{m}_{\bm{a}})'\bm{W}_{\T_a} \underbrace{\left(\sum_{i = 1}^m \frac{1}{\lambda_i} \bm{b}_i \bm{b}_i'\right)\bm{W}}_{= \bm{W}^{-1}(\bm{\Sigma}^{\col})^{-1} \text{, see Eq.~\eqref{eq:inv_eigen}}} (\bm{a}-\bm{m}_{\bm{a}}) \\
        &= (\bm{a}-\bm{m}_{\bm{a}})'\bm{W}_{\T_a} \bm{W}^{-1}\bm{\Sigma}^{-1} (\bm{a}-\bm{m}_{\bm{a}}),
    \end{align*}
    where $\bm{W}=\innerproduct{\mbasis}{\mbasis'}_{\T}:=\int_{\T}\mbasis(t)\mbasis'(t)\mathrm{d}t$ and $\bm{W}_{\T_a} = \innerproduct{\mbasis}{\mbasis'}_{\T_a} := \int_{\T_a}\mbasis(t)\mbasis'(t)\mathrm{d}t$.
\end{proof}

\subsection{Multivariate Functional Data}%
\begin{proof}[Proof of Proposition~\ref{proposition:shapley_cell}]
    Let $Y$ denote the univariate process concatenating each coordinate function $X_j$ for $j = 1,\dots,p$ of $\bm{X}$ which is then defined on the interval $\Tilde{\T} = \underbrace{\T \cup \cdots \cup \T}_{p\text{ times}}$. 
    With this concatenating approach, the mean function for $\Tilde{t} \in [\T \cdot (j-1),\T \cdot j] \subset \Tilde{\T}, j = 1,\dots,p,$ is given as
    \begin{align*}
        \mu(\Tilde{t}) = \mu_j(t),
    \end{align*}
    with $t \in \T$,
    and the covariance function for $\Tilde{s} \in [\T \cdot (j-1),\T \cdot j] \subset \Tilde{\T}, j = 1,\dots,p,$ and $\Tilde{t} \in [\T \cdot (k-1),\T \cdot k] \subset \Tilde{\T}, k = 1,\dots,p,$ is given as 
    \begin{align*}
        \kappa(\Tilde{s},\Tilde{t}) = \kappa_{jk}(s,t),
    \end{align*}
    with $s,t \in \T$. The eigendecomposition of $Y$ is then given as 
    \begin{align*}
        \mathcal{K}\ueig_i= \pi_i \ueig_i  \quad \text{and} \quad \kappa(\Tilde{s},\Tilde{t}) &= \sum_{i = 1}^\infty \pi_i \ueig_i(\Tilde{s}) \ueig_i'(\Tilde{t}),
    \end{align*}
    with $\ueig_i(\Tilde{t}) = \meig_i(t)\bm{e}_j = \psi_{i,j}(t)$ for $\Tilde{t} \in [\T \cdot (j-1),\T \cdot j] \subset \Tilde{\T}, j = 1,\dots,p$. 
    Let $\T_{a_{k}} \subseteq [\T \cdot (k-1),\T \cdot k] \subset \Tilde{\T}, a \in \{1,\dots,d\}, k \in \{1,\dots,p\}$, $\tilde{D} = \{1,\dots,pd\}$, and $\Tilde{\T} = \bigcup_{j = 1}^p \bigcup_{b = 1}^d \T_{b_j}$, where $\T_{b_j} = \T_{b}$ and $\T = \bigcup_{b = 1}^d \T_{b}$.
    Then we can define time-coordinate-specific outlyingness contributions based on Shapley values using Proposition~\ref{proposition:shapley_fmd}, which yields
    \begin{align*}
        \Theta_{k,\T_{a}}(\bm{X},\bm{\mu};\K,M) &=
        \theta_{\T_{a_{k}}}(Y,\mu;\kappa,M) \\
        &= \sum_{R \subseteq \Tilde{D}\setminus\{a_{k}\}} \frac{\abs{R}!(d-\abs{R}-1)!}{(d)!} \Delta_{\T_{a_{k}}} \fmd^2(\hat{Y}^{R},\mu;\kappa,M) \\
        &= \sum_{i = 1}^M \frac{1}{\pi_i}\innerproduct{Y - \mu}{\ueig_i}_{\T_{a_{k}}} \innerproduct{Y - \mu}{\ueig_i}_{\Tilde{\T}}
        = \sum_{i = 1}^M \frac{1}{\pi_i}\innerproduct{X_k - \mu_k}{\psi_{i,k}}_{\T_a} \innerproduct{\bm{X} - \bm{\mu}}{\meig_i}_{\T}
    \end{align*}
\end{proof}

\begin{proof}[Proof of Corollary~\ref{corollary:shapley_separable_cell}]
    From Equation~\eqref{eq:eigen_eq_mpg} we get that $\psi_{kl}(t) = \meig_k'(t)\bm{e}_l = \ueig_i(t) \bm{v}_j^{\row} \bm{e}_l = \ueig_i(t) {v}^{\row}_{j,l}$, for some $i,j,k$, hence
    \begin{align*}
        \Theta_{k,\T_{a}}(\bm{X},\bm{\mu};\bm{\Sigma}^{\row}\kappa,M) 
        &= \sum_{i = 1}^M \frac{1}{\pi_i}\innerproduct{X_k}{\psi_{i,k}}_{\T_a} \innerproduct{\bm{X}}{\meig_i}_{\T} \\
        &= \sum_{i = 1}^m\sum_{j = 1}^p \frac{1}{\lambda_i^{\ker}\lambda_j^{\row}}\left(\innerproduct{X_k}{\ueig_i {v}^{\row}_{j,k}}_{\T_a} \sum_{l = 1}^p\innerproduct{X_l}{\ueig_i {v}^{\row}_{j,l}}_{\T}\right).\\
        &= \sum_{i = 1}^m\sum_{j = 1}^p \frac{1}{\lambda_i^{\ker}\lambda_j^{\row}}\left(\innerproduct{X_k}{\ueig_i}_{\T_a}{v}^{\row}_{j,k} \sum_{l = 1}^p\innerproduct{X_l}{\ueig_i}_{\T}{v}^{\row}_{j,l}\right).\\
        &= \sum_{i = 1}^m\sum_{j = 1}^p \frac{1}{\lambda_i^{\ker}\lambda_j^{\row}}\left(\alpha_{ki}^{\T_a}{v}^{\row}_{j,k} \sum_{l = 1}^p\alpha_{li}^{\T}{v}^{\row}_{j,l}\right).
    \end{align*}
\end{proof}

\begin{proof}[Proof of Lemma~\ref{lemma:shapley_coef_fmmd_cell}]
    We have that $\bm{X}(t)=\bm{A}'\mbasis(t)$, $\bm{\mu}(t) = \bm{M}_{\bm{A}}'\mbasis(t)$, and $\kappa(s,t) = \mbasis'(s)\bm{\Sigma}^{\col}\mbasis(t)$, for $s,t \in \T$, see Theorem~\ref{theorem:fmmd_basis}. In combination with Corollary~\ref{corollary:shapley_separable_cell} we obtain 
    \begin{align*}
        \Theta_{\T_a,k}(\bm{X},\bm{\mu};\K,M) 
        &=  \sum_{i = 1}^m\sum_{j = 1}^p \frac{1}{\lambda_i^{\ker}\lambda_j^{\row}}\left(\innerproduct{X_k - \mu_k}{\ueig_i}_{\T_a}{v}^{\row}_{j,k} \sum_{l = 1}^p\innerproduct{X_l}{\ueig_i}_{\T}{v}^{\row}_{j,l}\right)\\
        &=  \sum_{i = 1}^m\sum_{j = 1}^p \frac{1}{\lambda_i^{\ker}\lambda_j^{\row}}\left(\innerproduct{(\bm{a}_k-\bm{m}_{\bm{A},k})'\mbasis}{\mbasis'\bm{b}_i}_{\T_a}{v}^{\row}_{j,k} \sum_{l = 1}^p\innerproduct{(\bm{a}_l-\bm{m}_{\bm{A},l})'\mbasis}{\mbasis'\bm{b}_i}_{\T}{v}^{\row}_{j,l}\right)\\
        &=  \sum_{i = 1}^m\sum_{j = 1}^p \frac{1}{\lambda_i^{\ker}\lambda_j^{\row}}\left((\bm{a}_k-\bm{m}_{\bm{A},k})'\innerproduct{\mbasis}{\mbasis'}_{\T_a}\bm{b}_i{v}^{\row}_{j,k} \sum_{l = 1}^p\bm{b}_i'\innerproduct{\mbasis}{\mbasis'}_{\T}(\bm{a}_l-\bm{m}_{\bm{A},l}){v}^{\row}_{j,l}\right)\\
        &=  \sum_{i = 1}^m\sum_{j = 1}^p \frac{1}{\lambda_i^{\ker}\lambda_j^{\row}}\left((\bm{a}_k-\bm{m}_{\bm{A},k})'\bm{W}_{\T_a}\bm{b}_i{v}^{\row}_{j,k} \sum_{l = 1}^p\bm{b}_i'\bm{W}(\bm{a}_l-\bm{m}_{\bm{A},l}){v}^{\row}_{j,l}\right)\\
        &=  \sum_{j = 1}^p \frac{1}{\lambda_j^{\row}}{v}^{\row}_{j,k}(\bm{a}_k-\bm{m}_{\bm{A},k})'\bm{W}_{\T_a}\underbrace{\left(\sum_{i = 1}^m\frac{1}{\lambda_i^{\ker}}\bm{b}_i \bm{b}_i'\right)\bm{W}}_{= \bm{W}^{-1}(\bm{\Sigma}^{\col})^{-1} \text{, see Eq.~\eqref{eq:inv_eigen}}}\sum_{l = 1}^p(\bm{a}_l-\bm{m}_{\bm{A},l}){v}^{\row}_{j,l}\\
        &=  \sum_{j = 1}^p \frac{1}{\lambda_j^{\row}}{v}^{\row}_{j,k}(\bm{a}_k-\bm{m}_{\bm{A},k})'\bm{W}_{\T_a}\bm{W}^{-1}(\bm{\Sigma}^{\col})^{-1}\sum_{l = 1}^p(\bm{a}_l-\bm{m}_{\bm{A},l}){v}^{\row}_{j,l}\\
        &=  \sum_{j = 1}^p \frac{1}{\lambda_j^{\row}}{v}^{\row}_{j,k}(\bm{a}_k-\bm{m}_{\bm{A},k})'\bm{W}_{\T_a}\bm{W}^{-1}(\bm{\Sigma}^{\col})^{-1}(\bm{A}-\bm{M}_{\bm{A}})'\bm{v}^{\row}_{j},
    \end{align*}
    where $\bm{W}=\innerproduct{\mbasis}{\mbasis'}_{\T}:=\int_{\T}\mbasis(t)\mbasis'(t)\mathrm{d}t$ and $\bm{W}_{\T_a} = \innerproduct{\mbasis}{\mbasis'}_{\T_a} := \int_{\T_a}\mbasis(t)\mbasis'(t)\mathrm{d}t$.
\end{proof}

\subsubsection{Time- and Coordinate-specific Outlyingness Contributions}
For the time-specific or coordinate-specific outlyingness contributions, we can either rely on the concatenation approach used to prove Proposition~\ref{proposition:shapley_cell}, or we can directly work with the multivariate process, which is illustrated in the following for the coordinate-specific contributions. Note that the same approach is also possible with time-coordinate-specific contributions, but it would introduce laborious notation, and hence, concatenation was used instead.

Let us consider the $p$-variate stochastic process $\bm{X} \sim \msp(\bm{\mu},\K)$, with $\bm{X} = (X_1,\ldots ,X_p)'$, $P = \{1,\ldots ,p\}$ the index set of variables, and $(\pi_k,\meig_k), k = 1,\dots,M$ the eigenpairs of the covariance operator $\C$ with kernel $\K$. We determine the contribution of the $k$th coordinate function $X_k$ to $\fmmd(\bm{X},\bm{\mu};\K,M)$ using Shapley values. Similar as in Equation~\eqref{eq:define_xj}, the marginal outlyingness contributions to $\fmmd^2$ are given by
\begin{equation*}
    \Delta_k \fmmd^2(\Xhat{S},\bm{\mu};\K,M) := \fmmd^2(\Xhat{S\cup\{k\}},\bm{\mu};\K,M)-\fmmd^2(\Xhat{S},\bm{\mu};\K,M) %
\end{equation*} 
with $\Xhat{S}=(\Xhatj{S}{1},\ldots ,\Xhatj{S}{p})'$, $S \subseteq P$, and
\begin{equation}
\label{eq:define_fXj}
\Xhatj{S}{j}:= \begin{cases}
        X_j & \text{if } j \in S\\
        \mu_j & \text{if } j \notin S
    \end{cases}.
\end{equation}

\begin{proposition} \label{proposition:shapley_fmmd}
For $\bm{X} \sim \msp(\bm{\mu},\K)$ and $\Delta_k \fmmd^2(\Xhat{S},\bm{\mu};\K,M)$ as in Equation~\eqref{eq:delta_univ_time},
the coordinate-specific outlyingness contribution to $\fmmd^2(\bm{X},\bm{\mu};\K,M)$ of the $k$th coordinate function based on the Shapley value is given by
    \begin{align}
        \theta_k(\bm{X},\bm{\mu};\K,M) 
         :=& \sum_{S \subseteq P\setminus\{k\}} \frac{\abs{S}!(p-\abs{S}-1)!}{p!} \Delta_k \fmmd^2(\Xhat{S},\bm{\mu};\K,M) \nonumber\\
         =& \sum_{i=1}^M\frac{1}{\pi_i} \innerproduct{X_{k} - \mu_k}{\psi_{i,k}}  \innerproduct{\bm{X} - \bm{\mu}}{\meig_{i}}, \label{eq:shapley_coordinate_multivariate}
    \end{align}
    with $\psi_{i,k} = \meig_i'\bm{e}_k$ denoting the $k$th component of the $i$th eigenfunction $\meig_i$ of covariance operator $\C$ with kernel $\K$. 
\end{proposition}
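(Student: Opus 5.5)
The argument mirrors the proof of Proposition~\ref{proposition:shapley_fmd}, with the $p$ coordinate functions playing the role of the $d$ subintervals. Write $\bm{Z}=\bm{X}-\bm{\mu}$ and $\beta_{i,j}:=\innerproduct{X_j-\mu_j}{\psi_{i,j}}$. Since $\meig_i=(\psi_{i,1},\dots,\psi_{i,p})'$ and the inner product on $\mathcal{H}$ is the sum of componentwise inner products,
\begin{equation*}
\innerproduct{\bm{X}-\bm{\mu}}{\meig_i}=\sum_{j=1}^p\beta_{i,j}.
\end{equation*}
The modified process $\Xhat{S}$ differs from $\bm{\mu}$ only in the coordinates $j\in S$, so
\begin{equation*}
\innerproduct{\Xhat{S}-\bm{\mu}}{\meig_i}=\sum_{j\in S}\beta_{i,j},
\qquad
\fmmd^2(\Xhat{S},\bm{\mu};\K,M)=\sum_{i=1}^M\frac{1}{\pi_i}\Bigl(\sum_{j\in S}\beta_{i,j}\Bigr)^2.
\end{equation*}

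First I compute the marginal contribution. Expanding the square for $S\cup\{k\}$ and subtracting the term for $S$, the double sums over $S\times S$ cancel, leaving
\begin{equation*}
\Delta_k\fmmd^2(\Xhat{S},\bm{\mu};\K,M)=\sum_{i=1}^M\frac{1}{\pi_i}\Bigl(\beta_{i,k}^2+2\beta_{i,k}\sum_{j\in S}\beta_{i,j}\Bigr).
\end{equation*}
This is exactly the form obtained in the univariate proof, with subintervals replaced by coordinates.

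Next I average with the Shapley weights $w(|S|)=|S|!(p-|S|-1)!/p!$. These satisfy $\sum_{S\subseteq P\setminus\{k\}}w(|S|)=1$, so the squared term contributes $\sum_i\pi_i^{-1}\beta_{i,k}^2$ unchanged. For the cross term I need the combinatorial identity
\begin{equation*}
\sum_{S\subseteq P\setminus\{k\}}w(|S|)\sum_{j\in S}\beta_{i,j}=\frac12\sum_{j\neq k}\beta_{i,j}.
\end{equation*}
To verify it, fix $j\neq k$ and count: the number of subsets of size $r$ containing $j$ is $\binom{p-2}{r-1}$. Then
\begin{equation*}
w(r)\binom{p-2}{r-1}=\frac{r}{p(p-1)},
\qquad
\sum_{r=1}^{p-1}\frac{r}{p(p-1)}=\frac12 .
\end{equation*}
This counting is the only step needing care, and it is the same computation as in the proof of Proposition~\ref{proposition:shapley_fmd} with $d$ replaced by $p$. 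Alternatively, one can note that a fixed $j$ precedes $k$ in a uniformly random permutation with probability $1/2$.

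Finally I combine the two parts:
\begin{equation*}
\theta_k=\sum_{i=1}^M\frac{1}{\pi_i}\Bigl(\beta_{i,k}^2+\beta_{i,k}\sum_{j\neq k}\beta_{i,j}\Bigr)
=\sum_{i=1}^M\frac{1}{\pi_i}\beta_{i,k}\sum_{j=1}^p\beta_{i,j}
=\sum_{i=1}^M\frac{1}{\pi_i}\innerproduct{X_k-\mu_k}{\psi_{i,k}}\innerproduct{\bm{X}-\bm{\mu}}{\meig_i},
\end{equation*}
which is the claimed formula. A shortcut would be to invoke Proposition~\ref{proposition:shapley_fmd} directly on the concatenated process from the proof of Proposition~\ref{proposition:shapley_cell}, taking $d=p$ blocks each equal to a full copy of $\T$. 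I prefer the direct computation above because it avoids re-justifying that the eigen-structure carries over under concatenation.
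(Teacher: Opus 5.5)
Your proposal is correct and follows essentially the same route as the paper's proof. The paper, after assuming $\bm{\mu}=\bm{0}$, expands the marginal contribution to $\sum_{i=1}^M \pi_i^{-1}\bigl(\beta_{i,k}^2 + 2\beta_{i,k}\sum_{j\in S}\beta_{i,j}\bigr)$. It then evaluates the Shapley average using $\sum_{S} w(|S|)=1$ and $w(r)\binom{p-2}{r-1} = r/(p(p-1))$, which sums to $1/2$. Your explicit subtraction of $\bm{\mu}$ instead of centering, and your random-permutation remark, are only cosmetic differences.
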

\begin{proof}
    The outlyingness contribution $\theta_k(\bm{X};M) = \theta_k(\bm{X},\bm{\mu};\K,M)$ of the $k$th coordinate to $\fmmd^2(\bm{X};M) = \fmmd^2(\bm{X},\bm{\mu};\K,M)$ is given as the weighted average of the marginal outlyingness contributions $\Delta_k \fmmd^2(\Xhat{S};M) = \Delta_k \fmmd^2(\Xhat{S},\bm{\mu};\K,M)$. Without loss of generality, we assume that the data are centered to simplify the notation, i.e., $\bm{\mu}(t) = \bm{0} = (0,\dots,0)'\in \R^p, t \in \T$, simply denoted as $\bm{\mu} = \bm{0}$. Then the marginal outlyingness contributions can be written as
    {\allowdisplaybreaks\begin{align*}
        \Delta_k \fmmd^2(\Xhat{S};M)
        &= \fmmd^2(\Xhat{S\cup\{k\}};M)-\fmmd^2(\Xhat{S};M)
        = \sum_{i=1}^M\frac{1}{\pi_i} \innerproduct{\Xhat{S\cup\{k\}}}{\meig_i}^2 -  \sum_{i=i}^M\frac{1}{\pi_i} \innerproduct{\Xhat{S}}{\meig_i}^2 \\
        &= \sum_{i=1}^M\frac{1}{\pi_i} \left(\left(\sum_{j = 1}^p \innerproduct{\Xhatj{S\cup\{k\}}{j}}{\psi_{i,j}}\right)^2 - \left(\sum_{j = 1}^p \innerproduct{\Xhatj{S}{j}}{\psi_{i,j}}\right)^2\right) \\
        &= \sum_{i=1}^M\frac{1}{\pi_i} \left(\sum_{j \in S \cup k} \sum_{l \in S \cup k} \innerproduct{X_{l}}{\psi_{i,l}} \innerproduct{X_j}{\psi_{i,j}} - \sum_{j \in S} \sum_{l \in S} \innerproduct{X_{l}}{\psi_{i,l}} \innerproduct{X_j}{\psi_{i,j}}\right) \\
        &= \sum_{i=1}^M\frac{1}{\pi_i} \left(\sum_{j \in S \cup k} \sum_{l \in S} \innerproduct{X_{l}}{\psi_{i,l}} \innerproduct{X_j}{\psi_{i,j}} + \innerproduct{X_{k}}{\psi_{i,k}}\sum_{j \in S \cup k}  \innerproduct{X_j}{\psi_{i,j}}  \right.\\ 
        &\left.\qquad\qquad\quad- \sum_{j \in S} \sum_{l \in S} \innerproduct{X_{l}}{\psi_{i,l}} \innerproduct{X_j}{\psi_{i,j}}\right)\\
        &= \sum_{i=1}^M\frac{1}{\pi_i} \left(\sum_{j \in S} \sum_{l \in S} \innerproduct{X_{l}}{\psi_{i,l}} \innerproduct{X_j}{\psi_{i,j}} + \innerproduct{X_{k}}{\psi_{i,k}} \sum_{j \in S \cup k}  \innerproduct{X_j}{\psi_{i,j}} \right. \\
        &\left.\qquad\qquad\quad+ \innerproduct{X_k}{\psi_{i,k}} \sum_{l \in S} \innerproduct{X_{l}}{\psi_{i,l}}  - \sum_{j \in S} \sum_{l \in S} \innerproduct{X_{l}}{\psi_{i,l}} \innerproduct{X_j}{\psi_{i,j}} \right)\\
        &= \sum_{i=1}^M\frac{1}{\pi_i} \left(\innerproduct{X_{k}}{\psi_{i,k}}^2 + 2\innerproduct{X_{k}}{\psi_{i,k}} \sum_{j \in S}  \innerproduct{X_j}{\psi_{i,j}}\right) \\
    \end{align*} }
    
    Further, we write
    \begin{align*}
        w(\abs{S}) := \frac{\abs{S}!(p-\abs{S}-1)!}{p!},
    \end{align*}
    for which $\sum_{S \subseteq P\setminus\{k\}} w(\abs{S}) = 1$ holds.
    Then the contribution of the $k$th coordinate to the squared truncated functional Mahalanobis distance $\fmmd^2(\bm{X};M)$ based on the Shapley value is given by 
    \begin{align*}
             \theta_k(\bm{X};M) 
             &= \sum_{S \subseteq P\setminus\{k\}} w(\abs{S}) \Delta_k \fmmd^2(\Xhat{S};M)  \\
             &= \sum_{S \subseteq P\setminus\{k\}} w(\abs{S}) \left(\sum_{i=1}^M\frac{1}{\pi_i} \left(\innerproduct{X_{k}}{\psi_{i,k}}^2 + 2\innerproduct{X_{k}}{\psi_{i,k}} \sum_{j \in S}  \innerproduct{X_j}{\psi_{i,j}}\right)\right)\\
             &= \sum_{i=1}^M\frac{1}{\pi_i} \innerproduct{X_{k}}{\psi_{i,k}}^2  + 2\sum_{S \subseteq P\setminus\{k\}} w(\abs{S}) \left(\sum_{i=1}^M\frac{1}{\pi_i} \innerproduct{X_{k}}{\psi_{i,k}} \sum_{j \in S}  \innerproduct{X_j}{\psi_{i,j}}\right)\\
             &= \sum_{i=1}^M\frac{1}{\pi_i} \innerproduct{X_{k}}{\psi_{i,k}}^2  + 2 \sum_{i=1}^M\frac{1}{\pi_i} \innerproduct{X_{k}}{\psi_{i,k}} \sum_{s = 1}^{p-1}  w(s) \sum_{\substack{S \subseteq P\setminus\{k\}\\ \abs{S} = s}} \sum_{j \in S}  \innerproduct{X_j}{\psi_{i,j}}\\
             &= \sum_{i=1}^M\frac{1}{\pi_i} \innerproduct{X_{k}}{\psi_{i,k}}^2  + 2 \sum_{i=1}^M\frac{1}{\pi_i} \innerproduct{X_{k}}{\psi_{i,k}} \sum_{s = 1}^{p-1}  w(s) \binom{p-2}{s-1} \sum_{j \in P\setminus\{k\}}  \innerproduct{X_j}{\psi_{i,j}}\\
             &= \sum_{i=1}^M\frac{1}{\pi_i} \innerproduct{X_{k}}{\psi_{i,k}}^2  + 2 \sum_{i=1}^M\frac{1}{\pi_i} \innerproduct{X_{k}}{\psi_{i,k}} \sum_{s = 1}^{p-1}  \frac{s }{p(p-1)} \sum_{j \in P\setminus\{k\}}  \innerproduct{X_j}{\psi_{i,j}}\\
             &= \sum_{i=1}^M\frac{1}{\pi_i} \innerproduct{X_{k}}{\psi_{i,k}}^2  +  \sum_{i=1}^M\frac{1}{\pi_i} \innerproduct{X_{k}}{\psi_{i,k}} \sum_{j \in P\setminus\{k\}}  \innerproduct{X_j}{\psi_{i,j}}\\
             &= \sum_{i=1}^M\frac{1}{\pi_i} \left(\innerproduct{X_{k}}{\psi_{i,k}}^2  +  \innerproduct{X_{k}}{\psi_{i,k}} \sum_{j \in P\setminus\{k\}}  \innerproduct{X_j}{\psi_{i,j}}\right)\\
             &= \sum_{i=1}^M\frac{1}{\pi_i} \left(\innerproduct{X_{k}}{\psi_{i,k}} \sum_{j = 1}^p  \innerproduct{X_j}{\psi_{i,j}}\right)
             = \sum_{i=1}^M\frac{1}{\pi_i} \left(\innerproduct{X_{k}}{\psi_{i,k}}  \innerproduct{\bm{X}}{\meig_{i}}\right)
    \end{align*}
\end{proof} 

\begin{corollary} \label{corollary:shapley_separable}
    Let $\bm{X}\sim\msp(\bm{0},\bm{\Sigma}^{\row},\kappa)$ with covariance operator $\C = \bm{\Sigma}^{\row}\mathcal{K}$ and covariance kernel $\K(s,t) = \bm{\Sigma}^{\row} \kappa(s,t)$, then 
    \begin{align*}
        \theta_k(\bm{X},\bm{\mu};\bm{\Sigma}^{\row}\kappa,M) 
        = \sum_{i=1}^m \sum_{j=1}^p \frac{1}{\lambda^{\ker}_i \lambda^{\row}_j} \left(\innerproduct{X_k - \mu_k}{\ueig_i} {v}^{\row}_{j,k} \sum_{l = 1}^p  \innerproduct{X_l - \mu_l}{\ueig_i} {v}^{\row}_{j,l}\right).
    \end{align*}
    Here, $(\lambda^{\ker}_i,\ueig_i),\, i = 1,\dots,m$, denote the $m$ largest eigenpairs of $\mathcal{K}$, $(\lambda^{\row}_j,\bm{v}^{\row}_j),\, j = 1,\dots,p$, the eigenpairs of $\bm{\Sigma}^{\row}$, and ${v}^{\row}_{j,k} = \bm{e}_k'\bm{v}^{\row}_j$.
\end{corollary}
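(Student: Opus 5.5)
The plan is to start from Proposition~\ref{proposition:shapley_fmmd}, which gives for a general $\bm{X}\sim\msp(\bm{\mu},\K)$
\[
\theta_k(\bm{X},\bm{\mu};\K,M)=\sum_{r=1}^M\frac{1}{\pi_r}\innerproduct{X_k-\mu_k}{\psi_{r,k}}\innerproduct{\bm{X}-\bm{\mu}}{\meig_r}.
\]
I would then substitute the separable eigenstructure from Appendix~\ref{appendix:sub:FPCA}. Equation~\eqref{eq:MFPCA_decomposition} lets each index $r=1,\dots,M=mp$ be written as $r=r(i,j)$ with $i=1,\dots,m$ and $j=1,\dots,p$. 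Under this indexing,
\[
\pi_r=\lambda_i^{\ker}\lambda_j^{\row},\qquad \meig_r=\ueig_i\bm{v}^{\row}_j .
\]
The single sum over $r$ therefore becomes the double sum over $i$ and $j$. The assumption that $M$ is a multiple of $p$ (Remark~\ref{rem:M=mp}) is what makes this reindexing legitimate: the truncated set is exactly $\{\ueig_i\bm{v}^{\row}_j: i\le m,\ j\le p\}$.

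Next I would compute the two inner products separately. The $k$th component of $\meig_r$ is $\psi_{r,k}=\meig_r'\bm{e}_k=\ueig_i\,v^{\row}_{j,k}$. By linearity of the $L^2(\T)$ inner product,
\[
\innerproduct{X_k-\mu_k}{\psi_{r,k}}=v^{\row}_{j,k}\innerproduct{X_k-\mu_k}{\ueig_i}.
\]
By definition~\eqref{eq:inner_product_L2p} of the inner product on $\mathcal{H}$,
\[
\innerproduct{\bm{X}-\bm{\mu}}{\meig_r}=\sum_{l=1}^p\innerproduct{X_l-\mu_l}{\ueig_i}\,v^{\row}_{j,l}.
\]
Inserting both expressions into the double sum yields exactly the claimed formula. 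The argument is the same computation as in the proof of Corollary~\ref{corollary:shapley_separable_cell}, with $\T_a$ replaced by $\T$. Equivalently, one could sum Corollary~\ref{corollary:shapley_separable_cell} over $a=1,\dots,d$, since $\sum_a\innerproduct{\cdot}{\cdot}_{\T_a}=\innerproduct{\cdot}{\cdot}_{\T}$.

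The main obstacle is justifying the reindexing, that is, that the top $M$ eigenpairs of $\C$ are precisely the products with $i\le m$. In general the ordering of $\lambda_i^{\ker}\lambda_j^{\row}$ interleaves. I would handle this as Corollary~\ref{cor:independent processes} does, reading the truncation level $M=mp$ as the set of products built from the $m$ leading kernel eigenpairs. The sum is invariant to the order of its terms, so only this set identification matters. With ties or non-simple eigenvalues, I would note that the sum of $\pi_r^{-1}\,\innerproduct{\cdot}{\psi_{r,k}}\innerproduct{\cdot}{\meig_r}$ over a full eigenspace is basis independent, because it is a bilinear form built from the spectral projector. The stated mean-zero assumption $\msp(\bm{0},\dots)$ with $\bm{\mu}$ appearing in the formula is harmless: with $\bm{\mu}=\bm{0}$ the centered differences reduce to $X_k$ and $X_l$.
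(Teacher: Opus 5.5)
Your proposal is correct and is essentially the paper's proof: start from Proposition~\ref{proposition:shapley_fmmd}, reindex the $M=mp$ eigenpairs as $\pi=\lambda_i^{\ker}\lambda_j^{\row}$ and $\meig=\ueig_i\bm{v}^{\row}_j$, then pull $v^{\row}_{j,k}$ and $v^{\row}_{j,l}$ out of the inner products by linearity. You also correctly note that the $mp$ leading eigenvalues of $\C$ need not be the products with $i\le m$; the paper's proof passes over this with ``for some $i,j,k$''. So $M$ being a multiple of $p$ does not by itself justify the reindexing, despite your earlier sentence. The identity relies on reading the truncation as this product set, which is automatic only in the rank-$m$ case where $\C$ has exactly $mp$ nonzero eigenvalues.
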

\begin{proof}
    From Equation~\eqref{eq:eigen_eq_mpg} we get that $\psi_{kl}(t) = \meig_k'(t)\bm{e}_l = \ueig_i(t) \bm{v}_j^{\row} \bm{e}_l = \ueig_i(t) {v}^{\row}_{j,l}$, for some $i,j,k$, hence
    \begin{align*}
        \theta_k(\bm{X},\bm{\mu};\K,M) 
        &= \sum_{i=1}^M\frac{1}{\pi_i} \left(\innerproduct{X_{k}}{\psi_{i,k}} \sum_{l = 1}^p  \innerproduct{X_l}{\psi_{i,l}}\right) \\
        &= \sum_{i=1}^m \sum_{j=1}^p \frac{1}{\lambda^{\ker}_i \lambda^{\row}_j}\left(\innerproduct{X_{k}}{\ueig_i {v}^{\row}_{j,k}} \sum_{l = 1}^p  \innerproduct{X_l}{\ueig_i {v}^{\row}_{j,l}}\right) \\
        &= \sum_{i=1}^m \sum_{j=1}^p \frac{1}{\lambda^{\ker}_i \lambda^{\row}_j} \left(\innerproduct{X_{k}}{\ueig_i } {v}^{\row}_{j,k} \sum_{l = 1}^p  \innerproduct{X_l}{\ueig_i} {v}^{\row}_{j,l}\right) \\
        &= \sum_{i=1}^m \sum_{j=1}^p \frac{1}{\lambda^{\ker}_i \lambda^{\row}_j} \left(\alpha_{ki} {v}^{\row}_{j,k} \sum_{l = 1}^p  \alpha_{li} {v}^{\row}_{j,l}\right) \\
    \end{align*}
\end{proof}

We can efficiently compute the outlyingness contributions for all $p$ variables using matrix operations using the same notation as in \eqref{eq:shapley_cell_computation}, i.e., the vector $\bm{\theta}(\bm{X},\bm{\mu};\K,M)$ with entries $\theta_k(\bm{X},\bm{\mu};\bm{\Sigma}^{\row}\kappa,M)$, for $k = 1,\dots,p$, can be computed as 
\begin{align*}
    \bm{\theta}(\bm{X},\bm{\mu};\K,M) = \diag(\tilde{\bm{A}}^{\T} (\bm{D}^{\ker})^{-1} (\tilde{\bm{A}}^{\T})' (\bm{\Sigma}^{\row})^{-1}).
\end{align*}

\begin{lemma}\label{lemma:shapley_using_coefficents_multivariate}
    Let $\bm{X} \sim \msp(\bm{\mu}, \bm{\Sigma}^{\row}, \kappa)$ be a rank $M \in \N$ multivariate stochastic process as in Theorem~\ref{theorem:fmmd_basis}, with $\bm{X}(t)=\bm{A}'\mbasis(t)$, $\bm{\mu}(t) = \bm{M}_{\bm{A}}'\mbasis(t)$, and $\kappa(s,t) = \mbasis'(s)\bm{\Sigma}^{\col}\mbasis(t)$, for $s,t \in \T$. Then it holds that
    \begin{align*}
        \theta_k(\bm{X},\bm{\mu};\K,M) 
        &=  \sum_{j=1}^p \frac{1}{\lambda^{\row}_j} {v}^{\row}_{j,k}  (\bm{a}_k-\bm{m}_{\bm{A},k})' (\bm{\Sigma}^{\col})^{-1} (\bm{A}-\bm{M}_{\bm{A}})'\bm{v}^{\row}_j,
    \end{align*}
    with $(\lambda^{\row}_j,\bm{v}^{\row}_j),\, j = 1,\dots,p$, the eigenpairs of $\bm{\Sigma}^{\row}$, and ${v}^{\row}_{j,k} = \bm{e}_k'\bm{v}^{\row}_j$.
\end{lemma}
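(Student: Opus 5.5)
This lemma is the whole-domain analogue of Lemma~\ref{lemma:shapley_coef_fmmd_cell}, and the proof follows the same route with $\T_a$ replaced by $\T$. Start from Corollary~\ref{corollary:shapley_separable}, which writes $\theta_k(\bm{X},\bm{\mu};\K,M)$ as a double sum over the kernel eigenpairs $(\lambda^{\ker}_i,\ueig_i)$, $i=1,\dots,m$, and the row eigenpairs $(\lambda^{\row}_j,\bm{v}^{\row}_j)$. The summands involve $\innerproduct{X_k-\mu_k}{\ueig_i}$ and $\sum_l \innerproduct{X_l-\mu_l}{\ueig_i}v^{\row}_{j,l}$. Since the process has rank $M=mp$, the first $M$ eigenpairs of $\C$ are exactly the products $\lambda^{\ker}_i\lambda^{\row}_j$ with eigenfunctions $\ueig_i\bm{v}^{\row}_j$, by \eqref{eq:MFPCA_decomposition}. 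Hence no truncation subtleties arise and every index $i=1,\dots,m$ is used.

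\textbf{Step 1: express everything in the basis.} Write $\ueig_i=\bm{b}_i'\mbasis$ as in the proof of Theorem~\ref{theorem:fmmd_basis}(ii). Also write $X_l-\mu_l=(\bm{a}_l-\bm{m}_{\bm{A},l})'\mbasis$. Then each inner product becomes $\innerproduct{X_l-\mu_l}{\ueig_i}=(\bm{a}_l-\bm{m}_{\bm{A},l})'\bm{W}\bm{b}_i$. Substituting, the $k$-th summand is
$$
\frac{1}{\lambda^{\ker}_i\lambda^{\row}_j}\, v^{\row}_{j,k}\,(\bm{a}_k-\bm{m}_{\bm{A},k})'\bm{W}\bm{b}_i\,\bm{b}_i'\bm{W}\sum_{l}(\bm{a}_l-\bm{m}_{\bm{A},l})v^{\row}_{j,l}.
$$

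\textbf{Step 2: collapse the sum over $i$.} Pull the $i$-sum inside to obtain $\bm{W}\bigl(\sum_i (\lambda^{\ker}_i)^{-1}\bm{b}_i\bm{b}_i'\bigr)\bm{W}$. By \eqref{eq:inv_eigen} this equals $(\bm{\Sigma}^{\col})^{-1}$. Here the ordinary inner product over $\T$ gives $\bm{W}$ on both sides, unlike the $\bm{W}_{\T_a}\bm{W}^{-1}$ factor that appears in Lemma~\ref{lemma:shapley_coef_fmmd_cell}.

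\textbf{Step 3: recognise the row sum.} Finally, $\sum_l(\bm{a}_l-\bm{m}_{\bm{A},l})v^{\row}_{j,l}=(\bm{A}-\bm{M}_{\bm{A}})\bm{v}^{\row}_j$, because the columns of $\bm{A}$ are the $\bm{a}_l$. This yields the stated formula. One transpose-convention check is needed: the statement writes $(\bm{A}-\bm{M}_{\bm{A}})'\bm{v}^{\row}_j$, so I will verify the dimensions ($\bm{A}$ is $m\times p$ and $\bm{v}^{\row}_j\in\R^p$). The correct product is $(\bm{A}-\bm{M}_{\bm{A}})\bm{v}^{\row}_j$, so the prime in the statement should be read as a notational slip, consistent with Lemma~\ref{lemma:shapley_coef_fmmd_cell}.

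\textbf{Expected obstacle.} The only delicate step is justifying \eqref{eq:inv_eigen}. It requires that the $\ueig_i$, $i=1,\dots,m$, are all the eigenfunctions of $\mathcal{K}$ with positive eigenvalues, and that they lie in $\mathrm{span}(\mbasis)$. Both hold because $\kappa(s,t)=\mbasis'(s)\bm{\Sigma}^{\col}\mbasis(t)$ with $\bm{\Sigma}^{\col}$ positive definite (Theorem~\ref{theorem:fmmd_basis}(i)), so $\mathcal{K}$ has exactly $m$ positive eigenvalues. Once this is in place, the rest is bookkeeping.

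As an alternative, one can obtain the result by summing Lemma~\ref{lemma:shapley_coef_fmmd_cell} over $a=1,\dots,d$. This works because $\sum_a\bm{W}_{\T_a}=\bm{W}$ and $\theta_k=\sum_a\Theta_{k,\T_a}$, which gives a one-line consistency check of the formula.
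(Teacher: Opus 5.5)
Your proposal is correct and follows the paper's own proof essentially step for step: start from Corollary~\ref{corollary:shapley_separable}, write $\innerproduct{X_l-\mu_l}{\ueig_i}=(\bm{a}_l-\bm{m}_{\bm{A},l})'\bm{W}\bm{b}_i$, collapse the $i$-sum to $(\bm{\Sigma}^{\col})^{-1}$ via \eqref{eq:inv_eigen}, and recognise the row sum. Your transpose remark is also right: with $\bm{A}\in\R^{m\times p}$ as in Theorem~\ref{theorem:fmmd_basis}, the last factor should be $(\bm{A}-\bm{M}_{\bm{A}})\bm{v}^{\row}_j$, and the paper's proof carries the same slip in its final line.
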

\begin{proof}[Proof of Lemma~\ref{lemma:shapley_using_coefficents_multivariate}]
    We have that $\bm{X}(t)=\bm{A}'\mbasis(t)$, $\bm{\mu}(t) = \bm{M}_{\bm{A}}'\mbasis(t)$, and $\kappa(s,t) = \mbasis'(s)\bm{\Sigma}^{\col}\mbasis(t)$, for $s,t \in \T$, see Theorem~\ref{theorem:fmmd_basis}. Combined with Corollary~\ref{corollary:shapley_separable} we obtain

    {\allowdisplaybreaks\begin{align*}
        \theta_k(\bm{X},\bm{\mu};\K,M) 
        &= \sum_{i=1}^m \sum_{j=1}^p \frac{1}{\lambda^{\ker}_i \lambda^{\row}_j} \left(\innerproduct{X_{k}}{\ueig_i } {v}^{\row}_{j,k} \sum_{l = 1}^p  \innerproduct{X_l}{\ueig_i} {v}^{\row}_{j,l}\right) \\
        &= \sum_{i=1}^m \sum_{j=1}^p \frac{1}{\lambda^{\ker}_i \lambda^{\row}_j} \left(\bm{a}_k'\innerproduct{\mbasis}{\mbasis'}\bm{b}_i {v}^{\row}_{j,k} \sum_{l = 1}^p  \bm{a}_l'\innerproduct{\mbasis}{\mbasis'}\bm{b}_i {v}^{\row}_{j,l}\right) \\
        &= \sum_{i=1}^m \sum_{j=1}^p \frac{1}{\lambda^{\ker}_i \lambda^{\row}_j} \left(\bm{a}_k'\bm{W}\bm{b}_i {v}^{\row}_{j,k} \sum_{l = 1}^p  \bm{a}_l'\bm{W}\bm{b}_i {v}^{\row}_{j,l}\right) \\
        &=  \sum_{j=1}^p \frac{1}{\lambda^{\row}_j} {v}^{\row}_{j,k} \sum_{l = 1}^p  \bm{a}_k' \underbrace{\bm{W}\left(\sum_{i=1}^m \frac{1}{\lambda^{\ker}_i} \bm{b}_i \bm{b}_i' \right)\bm{W}}_{= (\bm{\Sigma}^{\col})^{-1} \text{, see Eq.~\eqref{eq:inv_eigen}}} \bm{a}_l {v}^{\row}_{j,l} \\
        &=  \sum_{j=1}^p \frac{1}{\lambda^{\row}_j} {v}^{\row}_{j,k} \sum_{l = 1}^p  \bm{a}_k' (\bm{\Sigma}^{\col})^{-1} \bm{a}_l {v}^{\row}_{j,l} \\
        &=  \sum_{j=1}^p \frac{1}{\lambda^{\row}_j} {v}^{\row}_{j,k}  \bm{a}_k'(\bm{\Sigma}^{\col})^{-1}  \bm{A}'\bm{v}^{\row}_j,
    \end{align*}}
    where $\bm{W}=\innerproduct{\mbasis}{\mbasis'}:=\int_{\T}\mbasis(t)\mbasis'(t)\mathrm{d}t$. 
\end{proof}

Using matrix operations we obtain the vector $\bm{\theta}(\bm{X},\bm{\mu};\K,M)$ with entries $\theta_k(\bm{X},\bm{\mu};\K,M)$, for $k = 1,\dots,p$, by
\begin{align*}
    \bm{\theta}(\bm{X},\bm{\mu};\K,M) = \diag((\bm{\Sigma}^{\row})^{-1} (\bm{A} - \bm{M}_{\bm{A}}) (\bm{\Sigma}^{\col})^{-1}  (\bm{A} - \bm{M}_{\bm{A}})').
\end{align*}

\newpage

\section{Further Examples}\label{appendix:examples}

This section includes further details regarding the \textit{El Niño} example we presented in Section~\ref{section:examples}, and a second detailed real data application based on age-specific female fertility data from the \cite{HFD2024}.

\subsection{El Niño}\label{appendix:el_nino}

Figure~\ref{fig:map} shows the four Niño regions in the Pacific Ocean. 

\begin{figure}[!h]
    \centering
    \includegraphics[width = 1\linewidth]{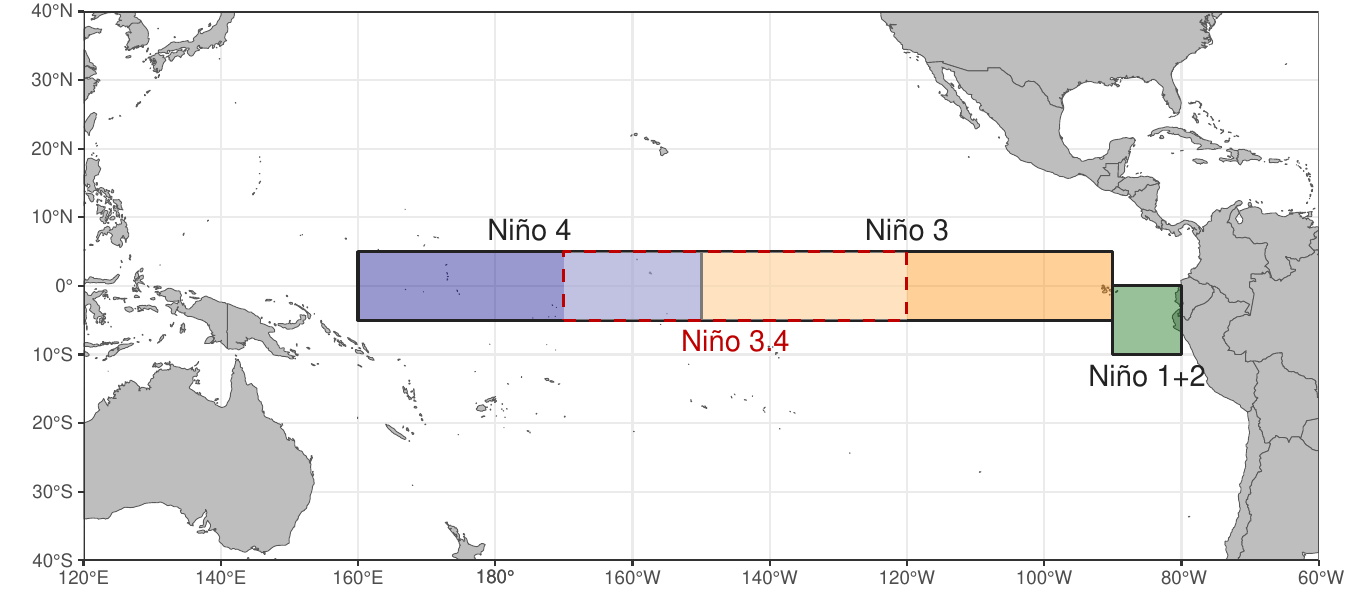}
    \caption{Map of the 4 regions in the equatorial Pacific Ocean where SST related to ENSO are measured: Niño 1+2 (0° to 10°S, 90°W to 80°W), Niño 3 (5°N to 5°S, 150°W to 90°W), Niño 4 (5°N to 5°S, 160°E to 150°W), Niño 3.4 (5°N to 5°S, 170°W to 120°W).}
    \label{fig:map}
\end{figure}

Figure~\ref{fig:plt_overview_smooth2} displays the smoothed SST measurements of all $n = 75$ periods. Regular observations are depicted in gray, while outliers are colored in either red or blue, indicating whether they correspond to El Niño or La Niña periods, respectively. The thicker black line is the smoothed robust mean function. To prevent label overlap, each period is labeled with the last two digits of its starting year (e.g., label 50 represents the period 1950:1951). Most outlying La Niña periods show the same trend as the regular observations but with a downward shift, while the detected El Niño periods often show both a different trend and an upward shift. Figure~\ref{fig:sup_nino} provides the corresponding Shapley decompositions for representative El~Niño (2015:2016) and La~Niña (1956:1957) periods. For the El~Niño event, the largest contributions arise from the Niño~3.4 region during autumn (Aug-Dec) and late winter/early spring (Jan-Apr), with additional contributions from Niño~3 in late spring (Apr-May), reflecting the characteristic warming in the central and eastern Pacific. In contrast, the La~Niña event is primarily driven by negative contributions in the Niño~3 region in late summer (Jul-Dec) and early spring (Feb-Apr), as well as in the Niño~4 region in summer and autumn (Jul-Dec), consistent with the corresponding cooling pattern.

\begin{figure}[!h]
    \centering
    \includegraphics[width = 1\linewidth]{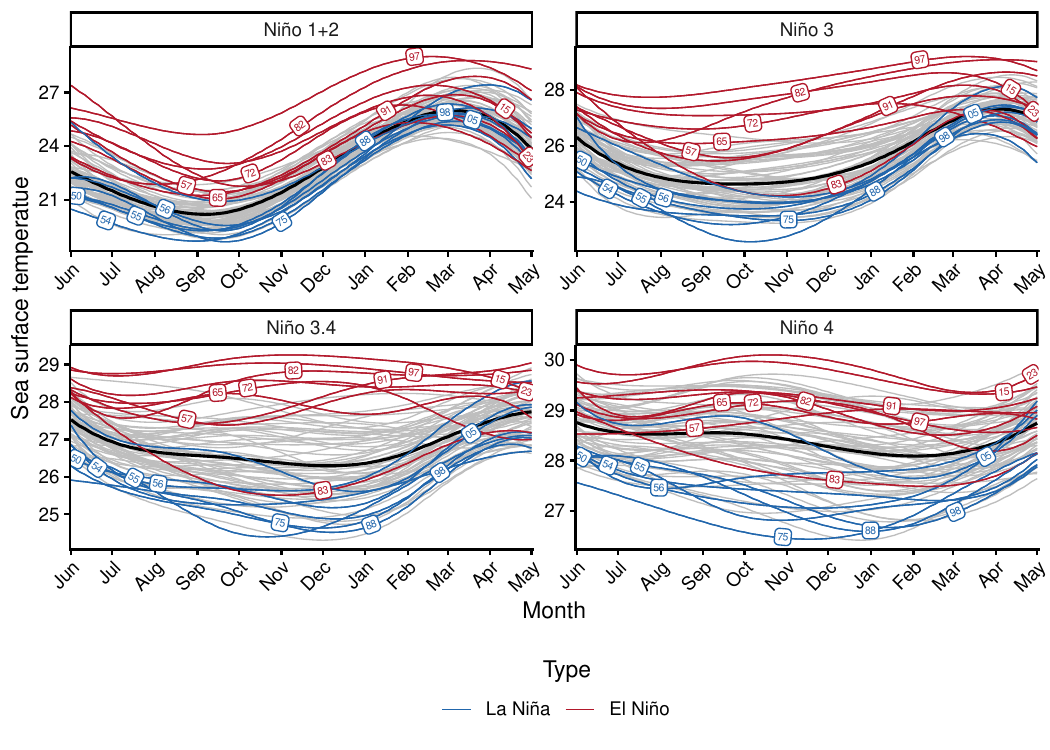}
    \caption{Smoothed SST measurements of all four Niño regions with outlying observations colored either red or blue, depending on whether the outliers are El Niño or La Niña periods.}
    \label{fig:plt_overview_smooth2}
\end{figure}

\begin{figure}[!h]
    \centering
    \includegraphics[width = 1\linewidth]{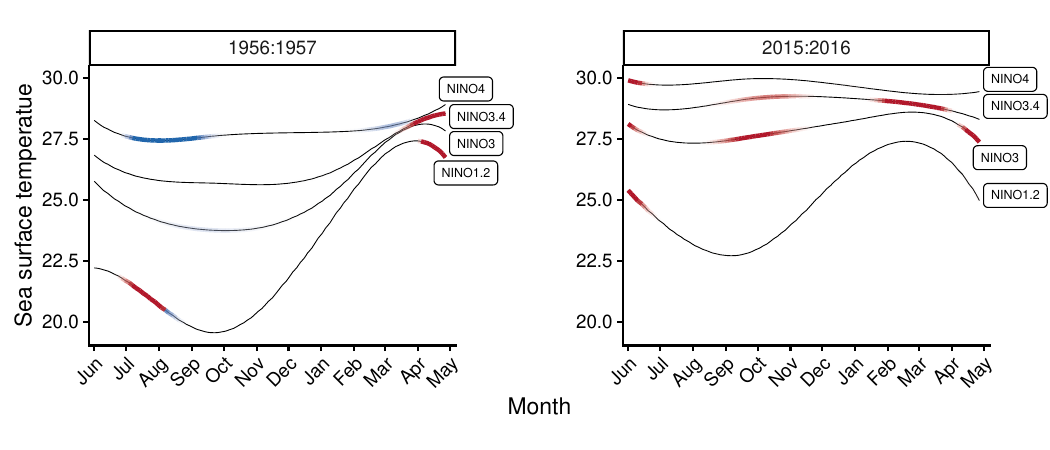}
    \caption{Smoothed SST curves for the years 1956:1957 (left) and 2015:2016 (right across the four Niño regions (columns) during El~Niño and La~Niña years (rows). The color hue indicates whether the deviation is above (red) or below (blue) the global mean, and the color intensity reflects the magnitude of outlyingness.}
    \label{fig:sup_nino}
\end{figure}

\newpage
\subsection{Welding example - additional results}\label{appendix:welding}
As indicated in Section~\ref{section:examples}, we tested both the separability and normality assumptions. Separability was assessed following the procedure proposed by \cite{aston2017tests}, whereas normality was examined by verifying whether the obtained Mahalanobis distances follow a chi-squared distribution—this being the only point in the analysis where the Gaussianity assumption is explicitly required. Both hypotheses were rejected at the 0.05 significance level.
For completeness, Figure \ref{fig:plt_qq_all_welding} presents the Q--Q plot of the robust multivariate functional Mahalanobis distances. Interestingly, the figure reveals a natural grouping of the data into three clusters. The cluster with the smallest Mahalanobis distances corresponds to the samples in the $h$-subset of the MMCD algorithm, noting that these also contain a fraction of outliers, as irregular curves constitute more than 50\% of the total observations. Towards the upper end of the first cluster, we also observe a steeper increase in Mahalanobis distances, as some outliers are forced into the $h$-subset. The second cluster represents the contaminated curves that we detect using the cutoff, which deviate slightly compared to the third cluster, characterized by the largest Mahalanobis distances, which are more evident by visual inspection.
Despite the violation of model assumptions, the ordering of the Mahalanobis distances remains meaningful and informative, as further evidenced by the high AUC achieved by the proposed method.
\begin{figure}[!h]
        \centering
        \includegraphics[width=0.7\linewidth]{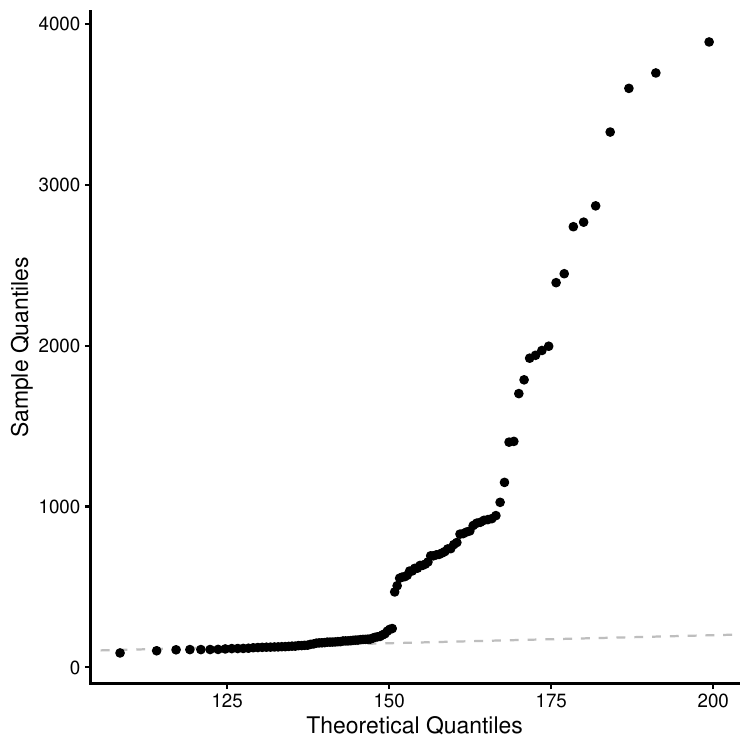}
        \caption{Q--Q plot of robust squared Mahalanobis distances against $\chi^2$ quantiles.}
        \label{fig:plt_qq_all_welding}
\end{figure}

\subsection{Fertility Rates}\label{appendix:fertility}
We analyze the annual age-specific female fertility curves for several countries/regions from the \cite{HFD2024}. Specifically, we consider the age-specific fertility rate (ASFR), defined as 
$$
\mathrm{ASFR}(s,t) = \frac{\text{number of live births to women aged $s$ in year $t$ }}{\text{population of women aged $s$ in year $t$}},
$$
for women aged $15$ to $45$. In the context of ASFRs, we refer to women towards the lower end of the spectrum as younger women, women aged around 30 years old as middle-aged women, and women at the higher end of the spectrum as older women. We selected the subset of $n = 22$ countries/regions with no missing values for the years between $1960$ and $2019$. %
To facilitate the interpretability of the results, we aggregate the annual ASFRs into five-year intervals (1960:1964, 1965:1969, \dots, 2015:2019), which results in observations that are naturally arranged in $12 \times 31$ matrices for each country. %
This matrix structure reflects the average ASFRs for each of the 12 five-year periods across the ages from 15 to 45.

Our goal is to treat these ASFRs as functional data with continuous age $s$ for each of the $12$ five-year intervals. The matrix-variate data are transformed into multivariate functional data through the following process: First, a log transformation is applied to the ASFR data. Next, the log-transformed data are smoothed into multivariate functional data using a cubic B-spline basis consisting of 10 basis functions. Finally, the smoothed coefficient matrices are exponentiated to return the data to their original scale.
This method ensures the positivity of the smoothed ASFRs, thereby maintaining the inherent characteristics of the fertility rates. 

The smoothed curves are shown in Figure~\ref{fig:p_smooth}. Here, every plot shows the fertility curves for one of the five-year intervals. Overall, fertility is clearly declining over the years, and women give birth at older ages as time progresses. Moreover, the curves are similar in the last period while there is more difference between the countries in the earlier years. We see that some countries/regions form a cluster with left-skewed fertility, i.e., women give birth at younger ages. On the other hand, Norway (NOR) stands out because of very high fertility in the first years and right-skewed fertility in the later years. 
\begin{figure}[!h]
    \centering
    \includegraphics[width=1\linewidth]{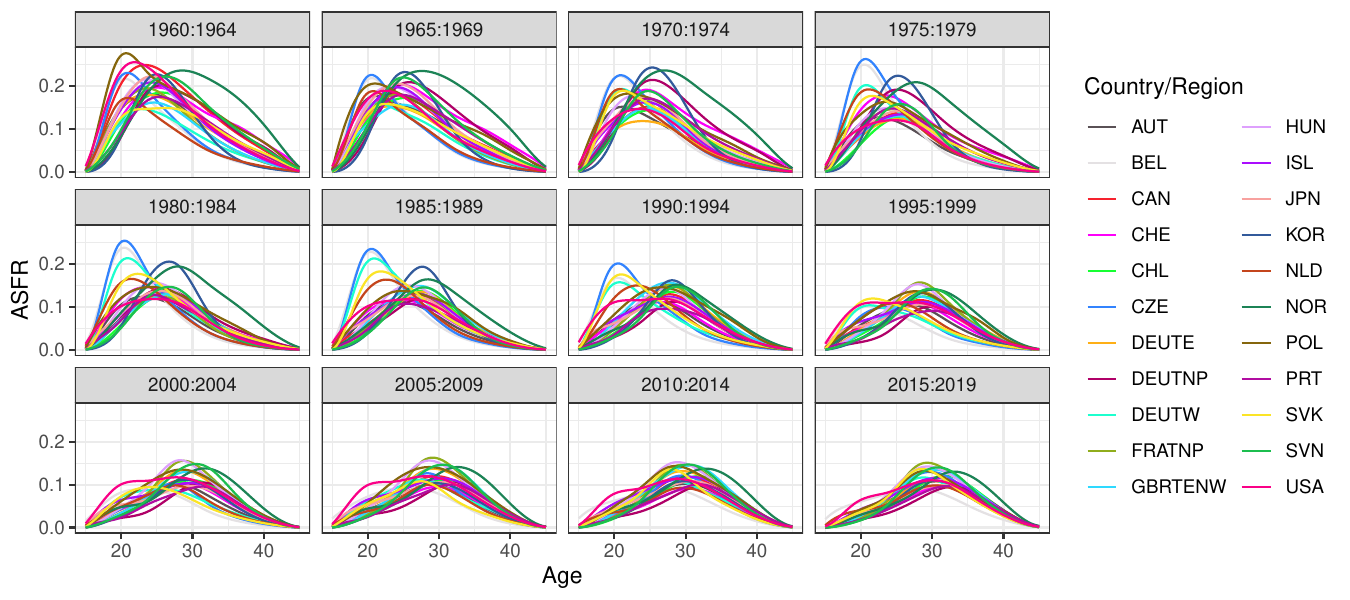}
    \caption{Smoothed age-specific fertility curves for all 22 countries/regions. %
    }
    \label{fig:p_smooth}
\end{figure}

As discussed in Section \ref{section:examples}, we examined both the separability and normality assumptions. The separability assumption was evaluated using the method proposed by \cite{aston2017tests}, while normality was assessed by testing whether the Mahalanobis distances follow a chi-squared distribution—the only step in our framework where the Gaussianity assumption is explicitly invoked. Both assumptions were rejected at the 0.05 significance level.\\
For completeness, Figure \ref{fig:plt_qq_all_feritlity} presents the Q–Q plot of the robust multivariate functional Mahalanobis distances. Similar to the observations from the Welding example (Figure~\ref{fig:plt_qq_all_welding}), the Q–Q plot suggests a heterogeneous structure in the data, indicating the presence of multiple underlying subgroups. In particular, Figure~\ref{fig:p_smooth} shows that in the early 1960s most countries exhibited peak fertility at younger ages, whereas in more recent decades there has been a general shift toward later-age peak fertility. The timing of this transition, however, differed substantially across countries, leading to a heterogeneous—and somewhat clustered—structure in the data based on this demographic pattern.
\begin{figure}[!h]
        \centering
        \includegraphics[width=0.7\linewidth]{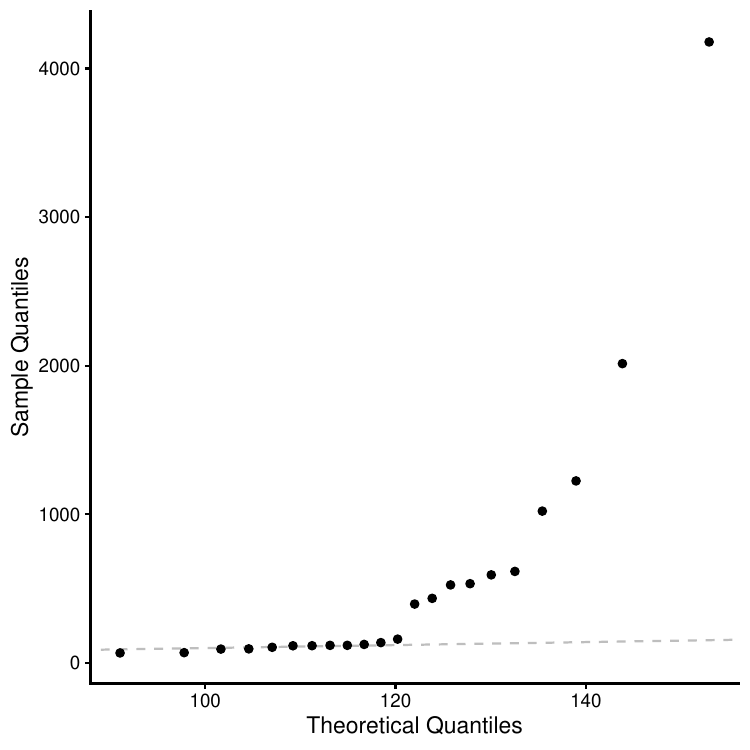}
        \caption{Q--Q plot of robust squared Mahalanobis distances against $\chi^2$ quantiles.}
        \label{fig:plt_qq_all_feritlity}
\end{figure}

We compare the classic and robust $\fmmd$ of the smoothed samples in the distance-distance plot shown in Figure~\ref{fig:p_dd_label}. The distances are based on parameter estimates from the MMLE and MMCD procedures, respectively, applied to the coefficient matrices. Outliers are detected using $\sqrt{\chi^2_{0.99}(10 \cdot 12)}$ as the cutoff, where $\chi^2_{0.99}(d)$ denotes the 0.99 quantile of a $\chi^2$-distribution with $d$ degrees of freedom. 
When comparing the classic and robust $\fmmd$, we observe that several countries, such as the United States (USA) and Germany (DEUTNP), are masked when using the classic distance.

The influence of outliers on the principal component functions is illustrated in Figure~\ref{fig:p_pc_MMLE_vs_MMCD}. The classic and robust mean functions are similar, but their eigenfunctions are quite different. The first classic eigenfunction captures increased fertility, especially for younger women, and explains 49\% of the variance, while the first robust eigenfunction, which mirrors the form of the robust mean, indicates overall high or low fertility levels and explains around 62\% of the variance. This difference arises because the robust eigenfunctions are estimated from a subset of the data that excludes countries with exceptionally high fertility rates among younger or older women. 
The second classic and robust eigenfunctions illustrate shifts in fertility towards younger women, moving away from middle-aged women. While the classic eigenfunction explains 38\% percent of the variance and also describes changes in ASFRs for older women aged 35 and above, the robust eigenfunction shows more isolated effects for women under 35 and explains 21\% percent of the variance. 
The first two robust eigenfunctions provide a clear and coherent interpretation, i.e., overall high or low fertility from the first eigenfunction and higher and lower fertility for younger or middle-aged women based on the second eigenfunction. In contrast, the classic eigenfunctions exhibit contrasting effects on the fertility of young women, complicating their interpretation. The third robust and classical eigenfunctions explain about 8\% and 12\% of the variance, respectively, and are quite similar, showing a concentrated increase in fertility around age 25, with lower fertility rates for both younger and older women.

\begin{figure}
     \centering
     \begin{subfigure}[t]{0.49\textwidth}
         \centering
         \includegraphics[width=\textwidth]{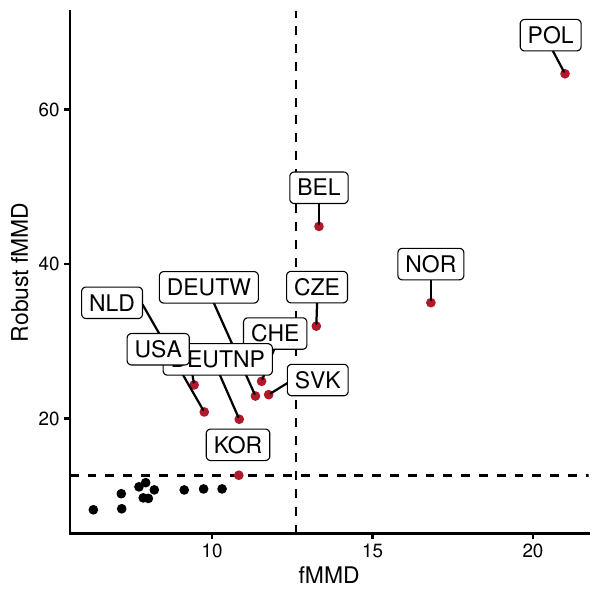}
         \caption{Comparison of classic and robust Mahalanobis distances of the ASFRs.}
         \label{fig:p_dd_label}
     \end{subfigure}
     \hfill
     \begin{subfigure}[t]{0.49\textwidth}
         \centering
         \includegraphics[width=\textwidth]{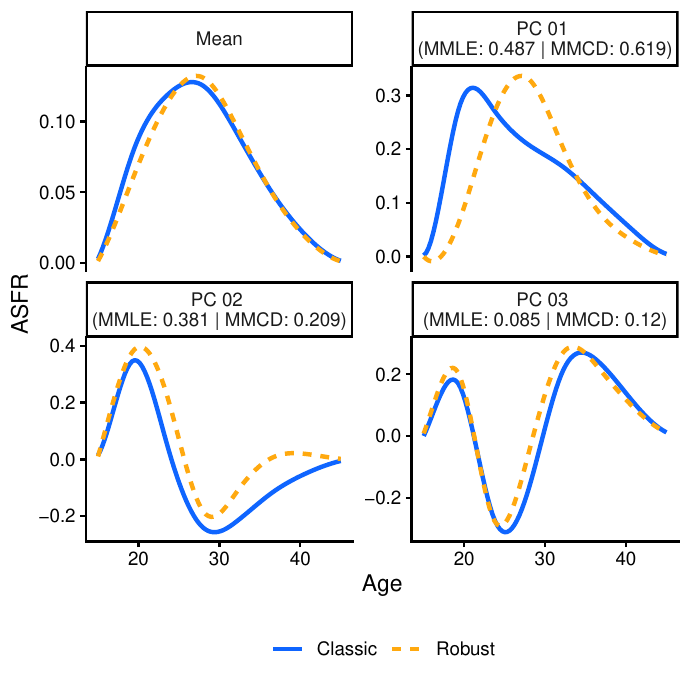}
         \caption{Comparison of classic (MMLE) and robust (MMCD) mean as well as the first three eigenfunctions. The explained variance of the eigenfunctions is given in parentheses.}
         \label{fig:p_pc_MMLE_vs_MMCD}
     \end{subfigure}
     \hfill
    \caption{Robust analysis of the smoothed ASFRs.}
    \label{fig:three graphs}
\end{figure}

To get a better understanding of the reasons for the outlyingness, we analyze the year-specific and/or age-specific outlyingness contributions of the outlying countries, visualized in Figures~\ref{fig:p_shapley_row_col} and \ref{fig:p_shapley_cell}. The outlyingness contributions are computed based on the robustly estimated mean and covariance function. Hence, below or above average refers to the robustly estimated location based on the MMCD estimators. 
Using the same color scheme as in Section~\ref{section:examples}, years and/or ages with high outlyingness contributions are colored in red if they are positive and blue if they are negative. To ensure comparability between observations with varying levels of outlyingness, color intensity is based on relative Shapley values, i.e., Shapley values divided by $\fmmd^2$. Moreover, age-specific outlyingness contributions in Figure~\ref{fig:p_shapley_cell} are aggregated for the same three-year periods as in Figure~\ref{fig:p_shapley_row_col}. In addition to the outlyingness contributions, the ASFRs are plotted in the background of Figure~\ref{fig:p_shapley_cell}.

Taking Belgium (BEL) as an example, the outlyingness scores reveal the following: The Age-specific outlying contributions highlight higher than expected fertility for women aged 15 to 24 years and lower than average fertility for women aged 33 to 36 years. The year-specific outlyingness shows the highest outlyingness contributions between 1990 and 2019, with the exception of 1995:1999. When we look at Belgium in more detail in Figure~\ref{fig:p_shapley_cell}, the period between 1970 and 1974 provides an interesting case even though the outlyingness contribution is not that high: The overall fertility was below average, and thus the corresponding entry in Figure~\ref{fig:p_shapley_row_col} is colored blue. However, the age group that contributed most to the outlyingness is women aged 15 to 18 years with above-average fertility. Overall, Belgium shows above-average fertility for young women and below-average fertility for older women. The outlyingness contributions in the earlier years are mostly driven by high fertility for young women, while in more recent years, low fertility for women aged 25 to 35 contributes substantially.

\begin{figure}
    \centering
    \includegraphics[width=1\linewidth]{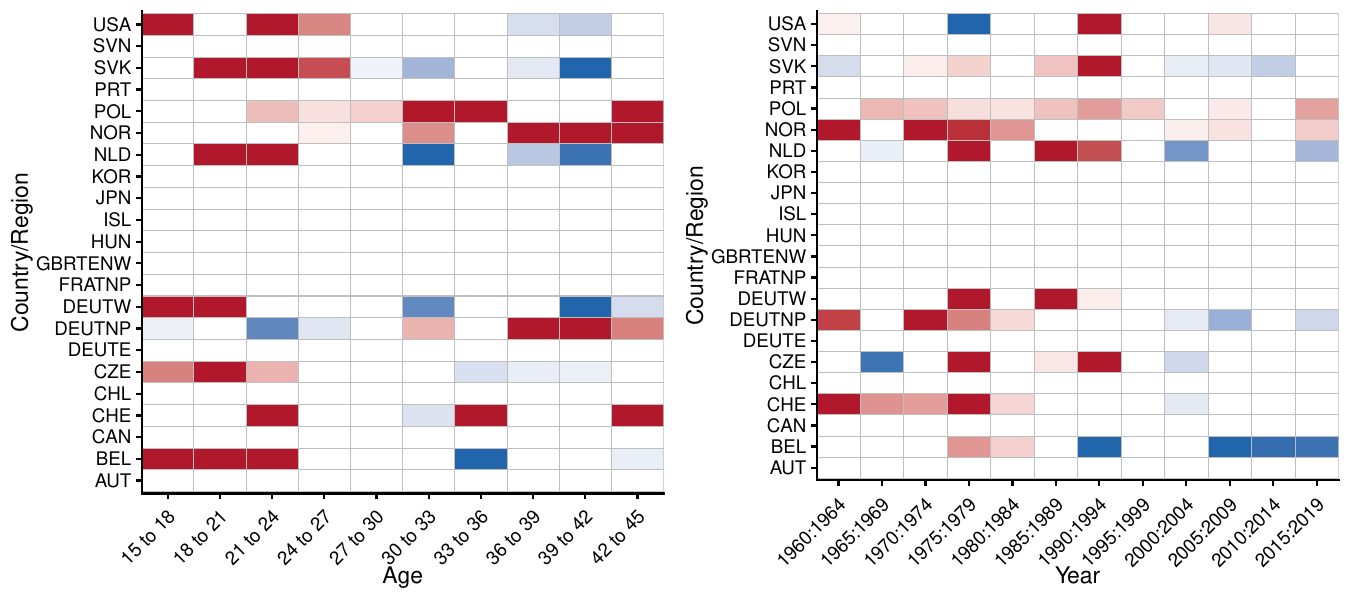}
    \caption{Age-specific (left) or year-specific (right) outlyingness contributions based on Shapley values for the smoothed ASFRs.}
    \label{fig:p_shapley_row_col}
\end{figure}

\begin{figure}
    \centering
    \includegraphics[width=1\linewidth]{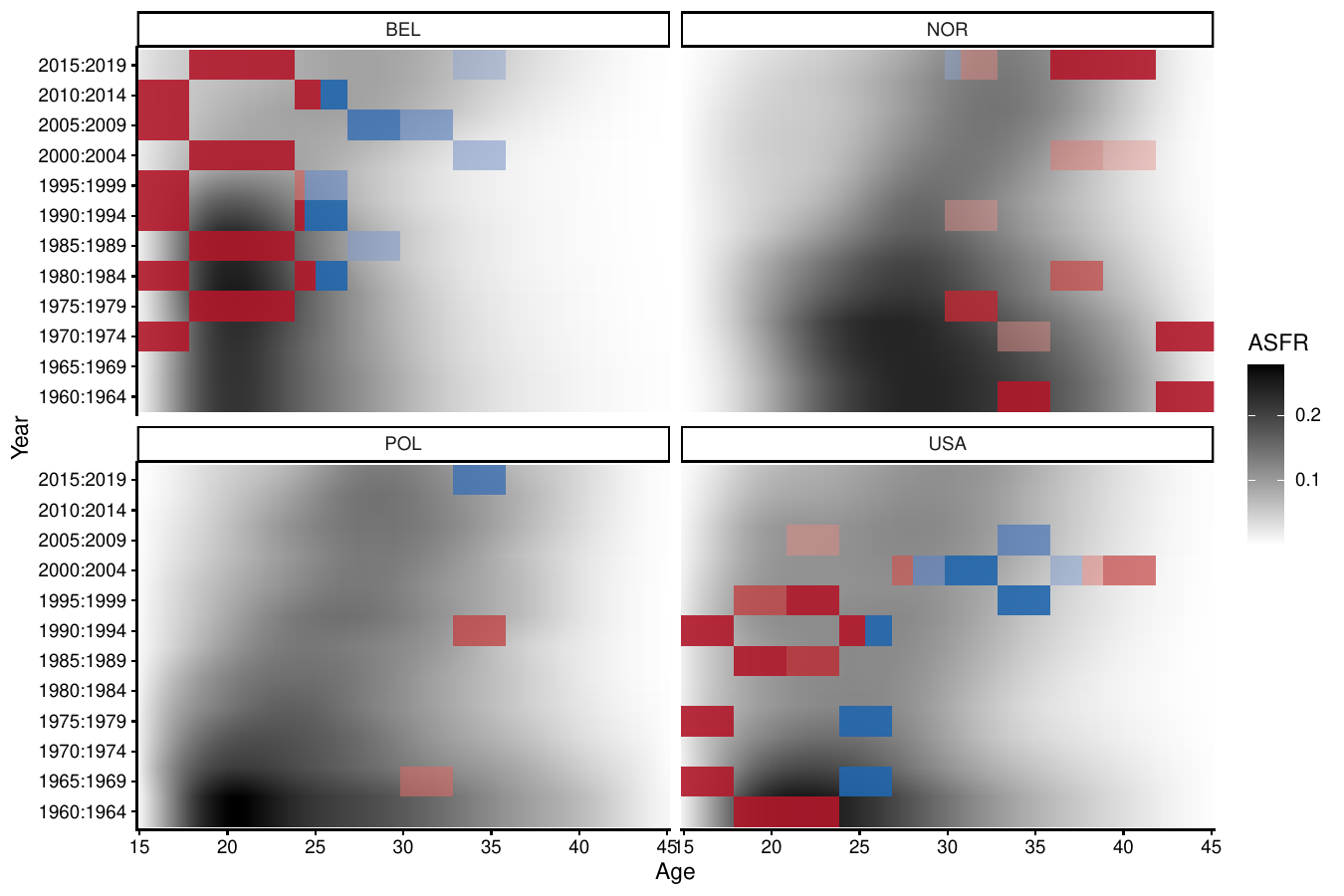}
    \caption{Age-specific and year-specific outlyingness contributions based on Shapley values for the smoothed ASFRs of Belgium (upper left), Norway (upper right), Poland (lower left), and the US (lower right).}
    \label{fig:p_shapley_cell}
\end{figure}

\newpage
\section{Further Simulation Results}\label{appendix:simulations}

\subsection{Non-Gaussian Setting}\label{supplement:non_gaussian}

For the non-Gaussian setting, the random functions are drawn from a multivariate t-distribution with $\nu = 3, 5, 8, 12$, and $15$ degrees of freedom, and a separable covariance structure is generated by drawing finite-dimensional realizations at $q = 100$ time points. We consider a sample size of $n = 1000$ observations with $p = 10$ coordinate functions. For the covariance structure between the coordinate functions, we adopt the matrices proposed by \citet{agostinelli2015robust}, denoted by $\bm{\Sigma}^{\row}$, which have random entries and typically yield low to moderate correlations. For the covariance function $\kappa$, we consider a Matérn-type $\kappa_{\text{Matérn}}$ covariance with the same specification as in the Gaussian setting. 

Outliers are added to the datasets by randomly replacing a fraction $\varepsilon \in \{0.1,0.3\}$ of the clean observations. We only consider shift outliers for the non-Gaussian setting, which are created in $\lfloor{\varepsilon_{\mathrm{cord}} \cdot p\rfloor}$ randomly chosen coordinates by introducing perturbations along the first eigenfunction $\ueig_1$, capturing the largest mode of variation. The coordinate functions are given by 
$X_j^{\text{shift}}(t) = X_j + \lambda_{\mathrm{shift}} \ueig_1$
for $j = 1, \dots, p$. The perturbation magnitudes are set to $\lambda_{\mathrm{shift}} \in \{1, 10\}$. 

The results are shown in Figures~\ref{fig:t_test}--\ref{fig:t_covariance}.
Overall, the robust distances computed from either the raw or smoothed data demonstrate strong performance in terms of AUC, recall, and covariance error. However, they tend to flag a substantial number of regular observations as outliers, since they cannot distinguish between data points originating from the distribution tails and genuine outliers. As the degrees of freedom $\nu$ increase, the precision of all methods increases, but the robust distances still fall short of their depth-based alternatives. In the MMCD procedure \citep{mayrhofer2024robust}, a consistency factor for the t-distribution could be derived instead of the consistency factor for Gaussian data, to better accommodate the heavy tails and improve precision. The MFIF, evaluated with all three dictionaries, stays close to an AUC of 0.5 throughout. A possible reason is that the contamination is a shift along an eigenfunction of the underlying process that is not necessarily well represented by the fixed dictionaries it projects onto.

Unlike in the Gaussian setting, the performance of all methods, except for MFIF, improves when contamination affects only a subset of the coordinates. When all coordinates are contaminated, the resulting shift outliers closely resemble observations from the tails of the distribution. In contrast, when only a subset of coordinates is affected, the underlying covariance structure becomes distorted, allowing the methods to more effectively detect the contamination.

\begin{figure}[!h]
    \centering
    \includegraphics[width=1\linewidth]{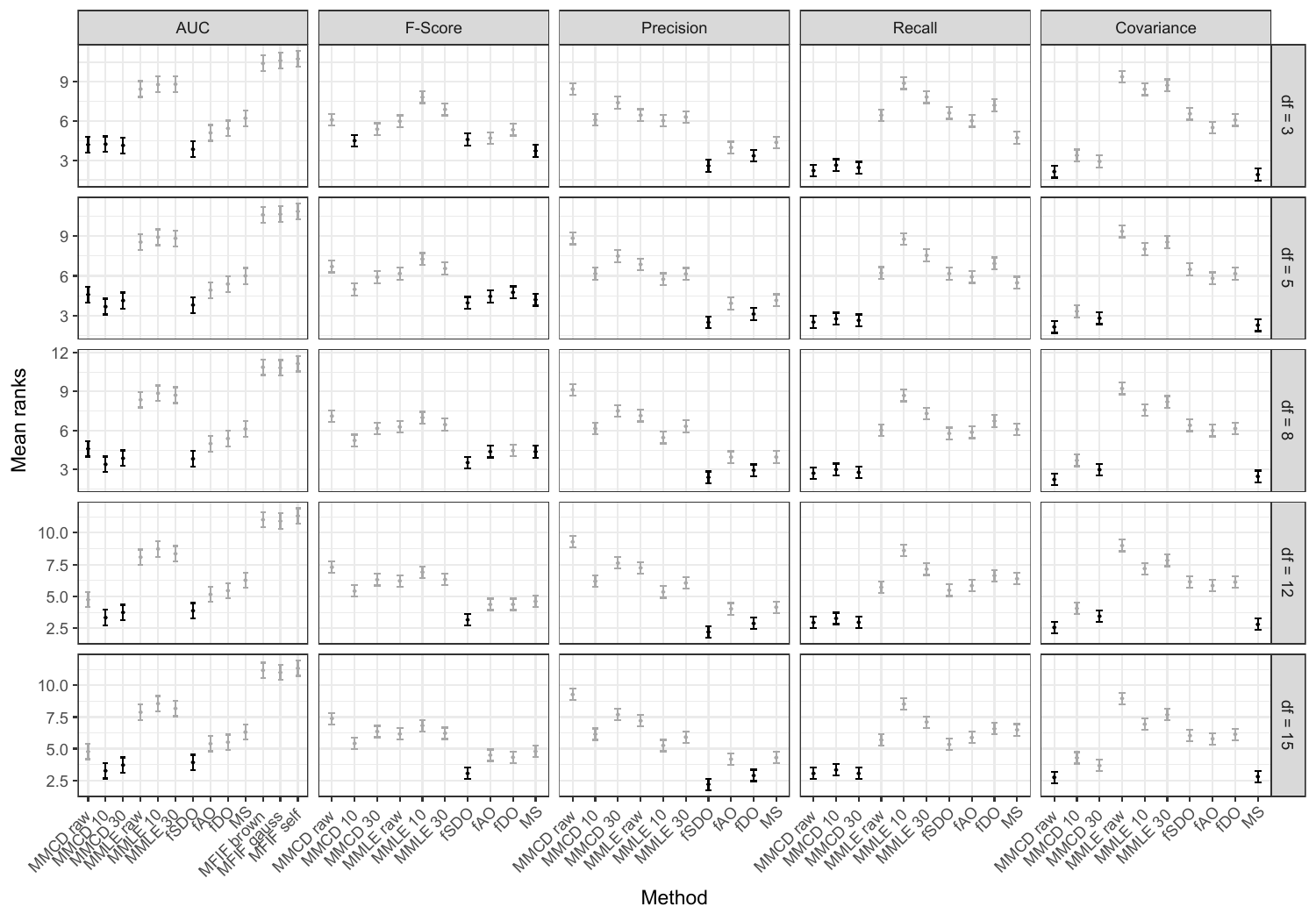}
    \caption{Rank-based comparison of methods across all simulation settings in the non-Gaussian setting for $p=10$. Intervals are based on the Friedman and Nemenyi tests; methods not significantly different (99\% level) from the best are shown in black, others in gray. The horizontal facets correspond to performance metrics, and the vertical facets to the degrees of freedom of the t-distribution.}
    \label{fig:t_test}
\end{figure}

\begin{figure}[p]
    \centering
    \includegraphics[width=1\linewidth]{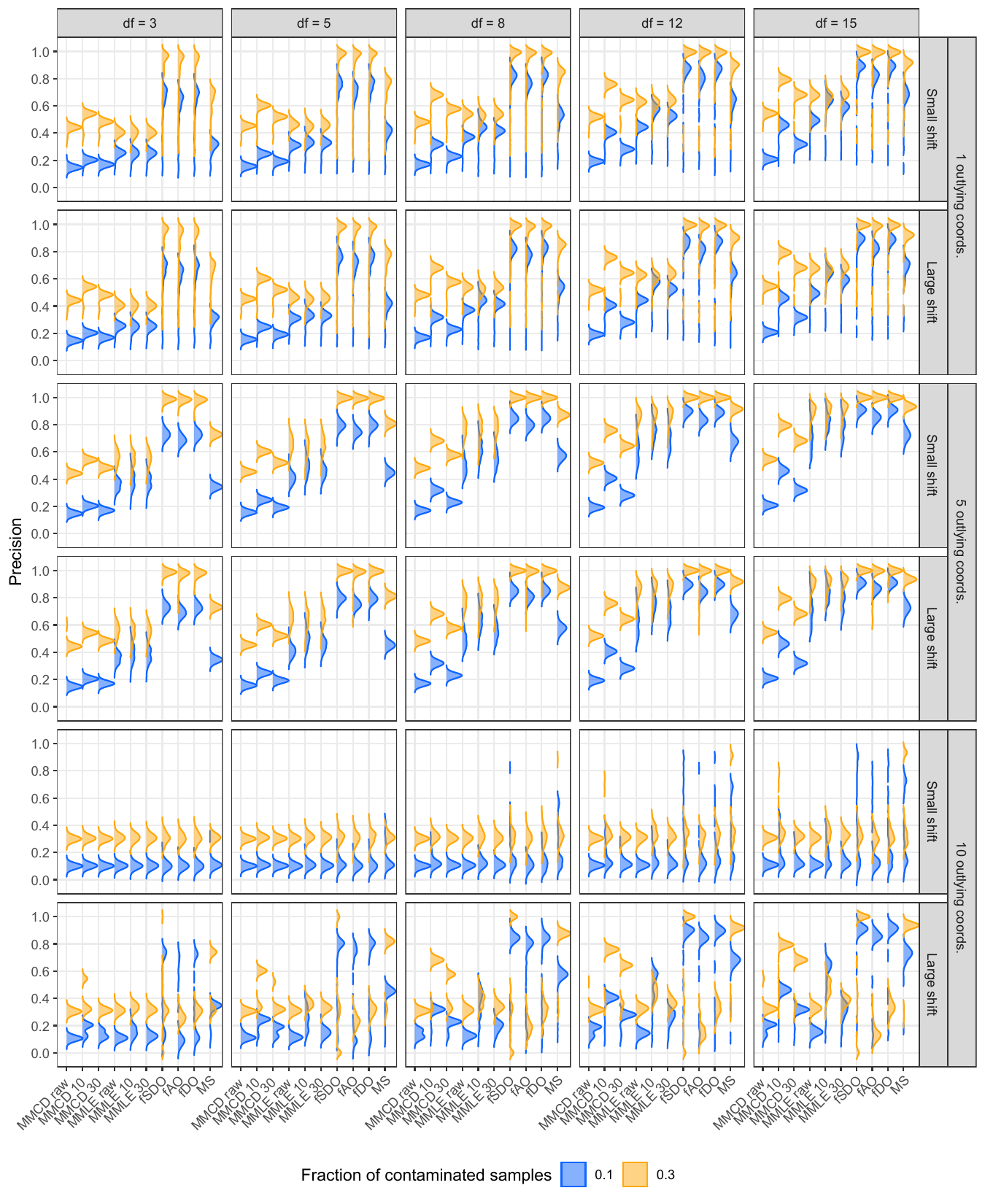}
    \caption{{Precision of shift outlier detection for a stochastic process with t-distributed innovations ($\nu = 3,5,8,12,$ and $15$ degrees of freedom) in a setting with $n = 1000$, $p = 10$, under the Matérn covariance, small and large outlier magnitudes, various coordinate contamination levels (1, 5, and 10 coordinates), and outlier proportions ($\varepsilon = 0.1, 0.3$).}}
    \label{fig:t_precision}
\end{figure}

\begin{figure}[p]
    \centering
    \includegraphics[width=1\linewidth]{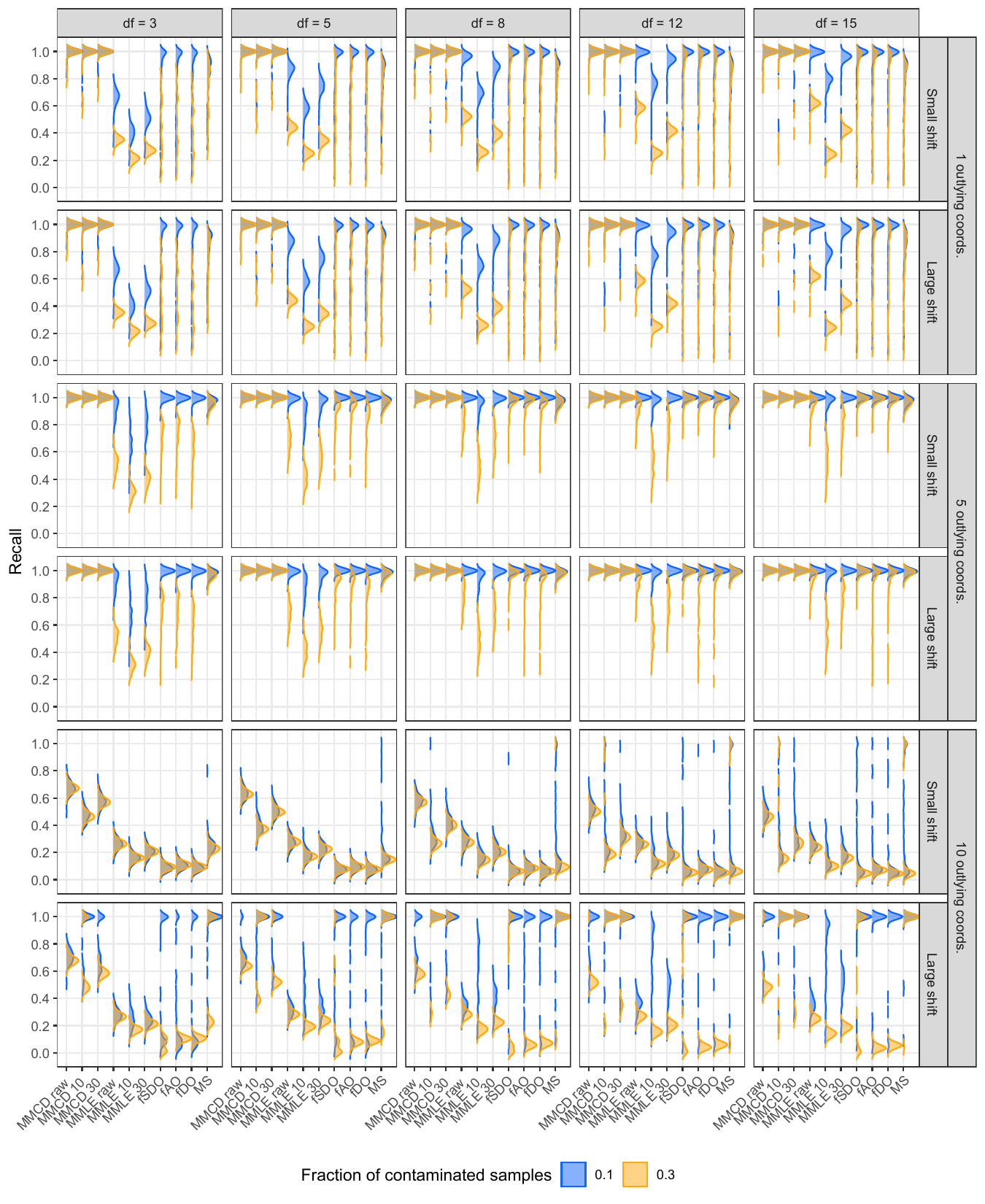}
    \caption{{Recall of shift outlier detection for a stochastic process with t-distributed innovations ($\nu = 3,5,8,12,$ and $15$ degrees of freedom) in a setting with $n = 1000$, $p = 10$, under the Matérn covariance, small and large outlier magnitudes, various coordinate contamination levels (1, 5, and 10 coordinates), and outlier proportions ($\varepsilon = 0.1, 0.3$).}}
    \label{fig:t_recall}
\end{figure}

\begin{figure}[p]
    \centering
    \includegraphics[width=1\linewidth]{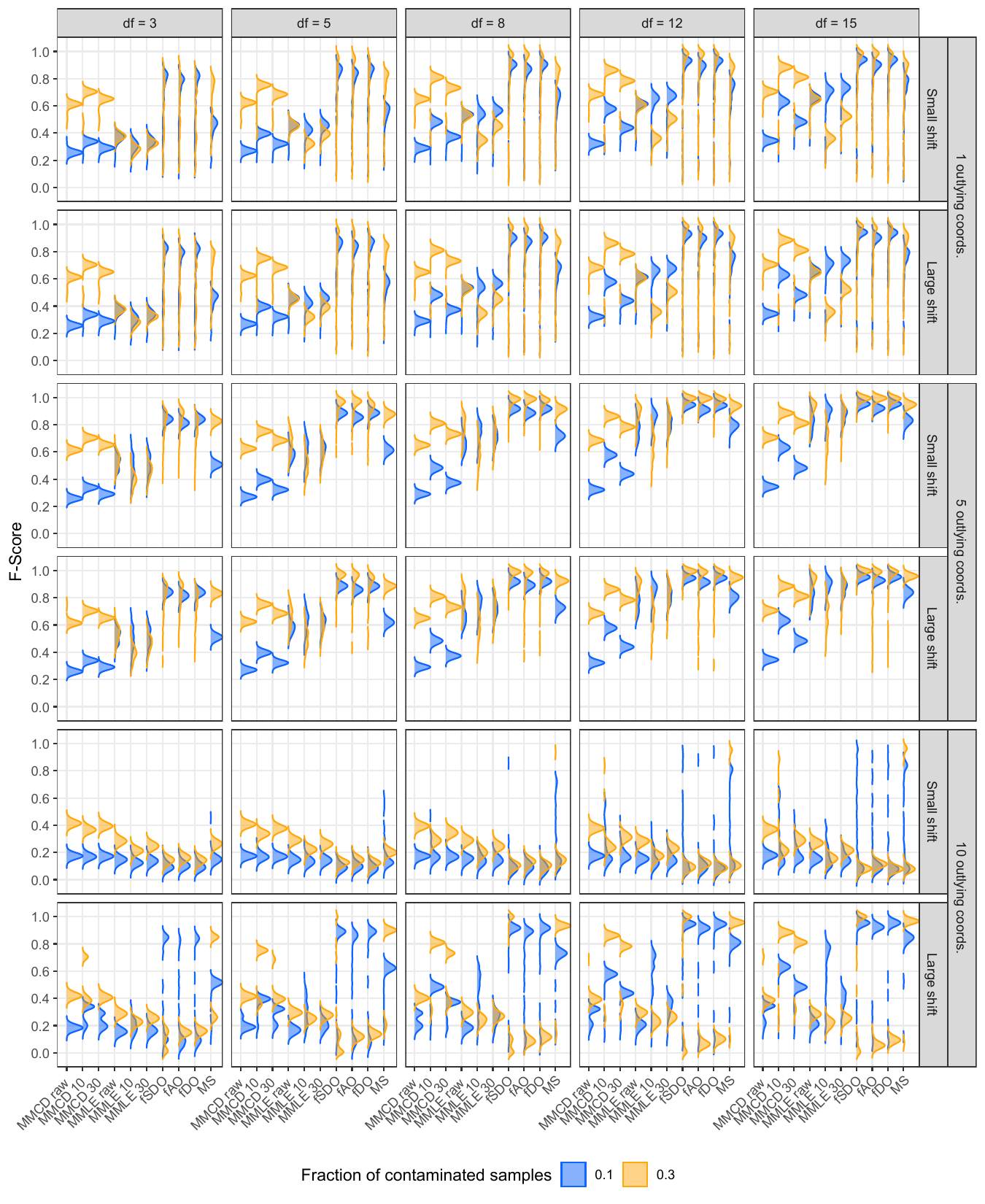}
    \caption{{F-score of shift outlier detection for a stochastic process with t-distributed innovations ($\nu = 3,5,8,12,$ and $15$ degrees of freedom) in a setting with $n = 1000$, $p = 10$, under the Matérn covariance, small and large outlier magnitudes, various coordinate contamination levels (1, 5, and 10 coordinates), and outlier proportions ($\varepsilon = 0.1, 0.3$).}}
    \label{fig:t_fscroe}
\end{figure}

\begin{figure}[p]
    \centering
    \includegraphics[width=1\linewidth]{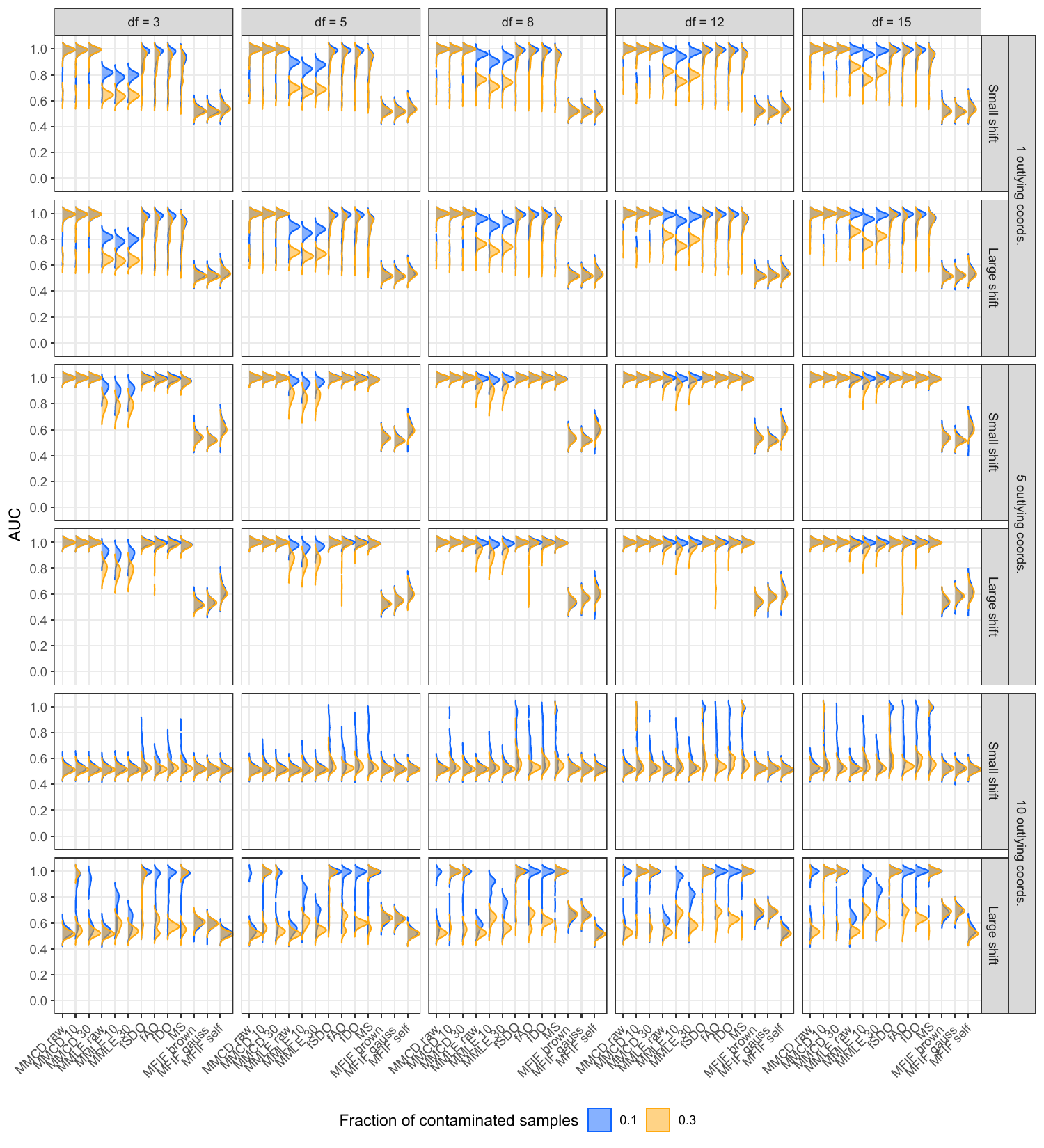}
    \caption{{AUC values of shift outlier detection for a stochastic process with t-distributed innovations ($\nu = 3,5,8,12,$ and $15$ degrees of freedom) in a setting with $n = 1000$, $p = 10$, under the Matérn covariance, small and large outlier magnitudes, various coordinate contamination levels (1, 5, and 10 coordinates), and outlier proportions ($\varepsilon = 0.1, 0.3$).}}
    \label{fig:t_AUC}
\end{figure}

\begin{figure}[p]
    \centering
    \includegraphics[width=1\linewidth]{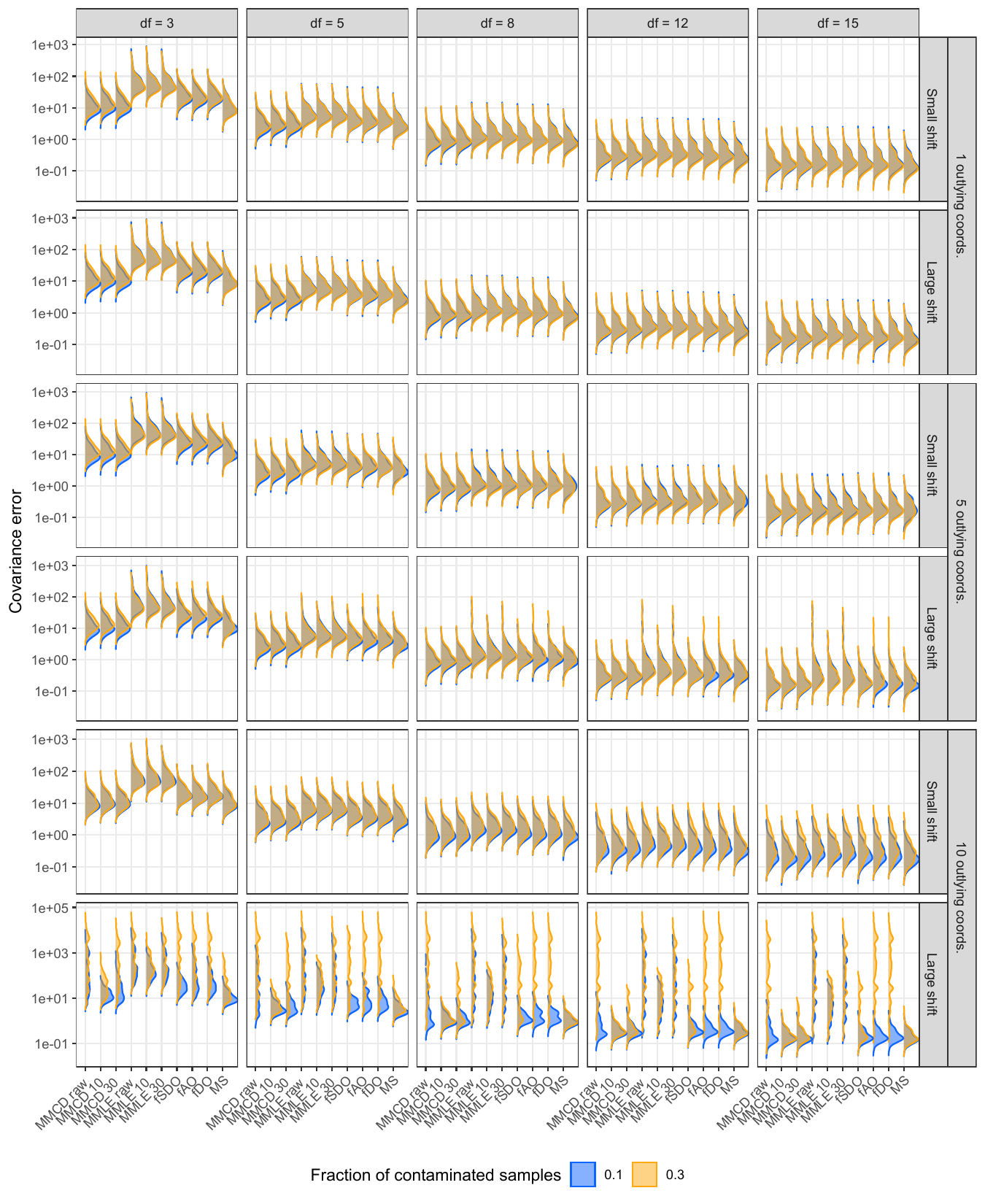}
    \caption{{Log covariance estimation error for a stochastic process with t-distributed innovations ($\nu = 3,5,8,12,$ and $15$ degrees of freedom) and outlyingness introduced by shift outliers in a setting with $n = 1000$, $p = 10$, under the Matérn covariance, small and large outlier magnitudes, various coordinate contamination levels (1, 5, and 10 coordinates), and outlier proportions ($\varepsilon = 0.1, 0.3$).}}
    \label{fig:t_covariance}
\end{figure}

\cleardoublepage
\subsection{Computation Time}\label{appendix:simulations_time}

Figure~\ref{fig:time} reports boxplots of computation times of single runs for each method, and Table~\ref{tab:time} lists the corresponding medians for all four simulation settings. The classical MMLE is the fastest method, while the robust MMCD estimates carry a moderate cost. The dominant factor is whether a method is applied to the raw data or to the basis coefficients: for $p = 50$ and $n = 1000$, the median MMCD time drops from about $12$ minutes on the raw data to about half a minute and two minutes with $10$ and $30$ basis functions, since the coefficient matrices are of much lower dimension. MFIF is the most expensive method across all settings, ranging from about 1 minute for $p = 3$, $n = 300$ to roughly 80 minutes for $p = 50$, $n = 1000$. Its cost is driven by both the dimension and the sample size: at $n = 1000$ it takes about $16$ minutes for $p = 10$ but $80$ minutes for $p = 50$, and increasing $n$ from $300$ to $1000$ at $p = 50$ multiplies its runtime by about $4.4$, against $1.3$ for the robust distances on the raw data. This is why it is evaluated on $10$ instead of $100$ replications for $p = 50$.

\begin{figure}[!h]
    \centering
    \includegraphics[width=1\linewidth]{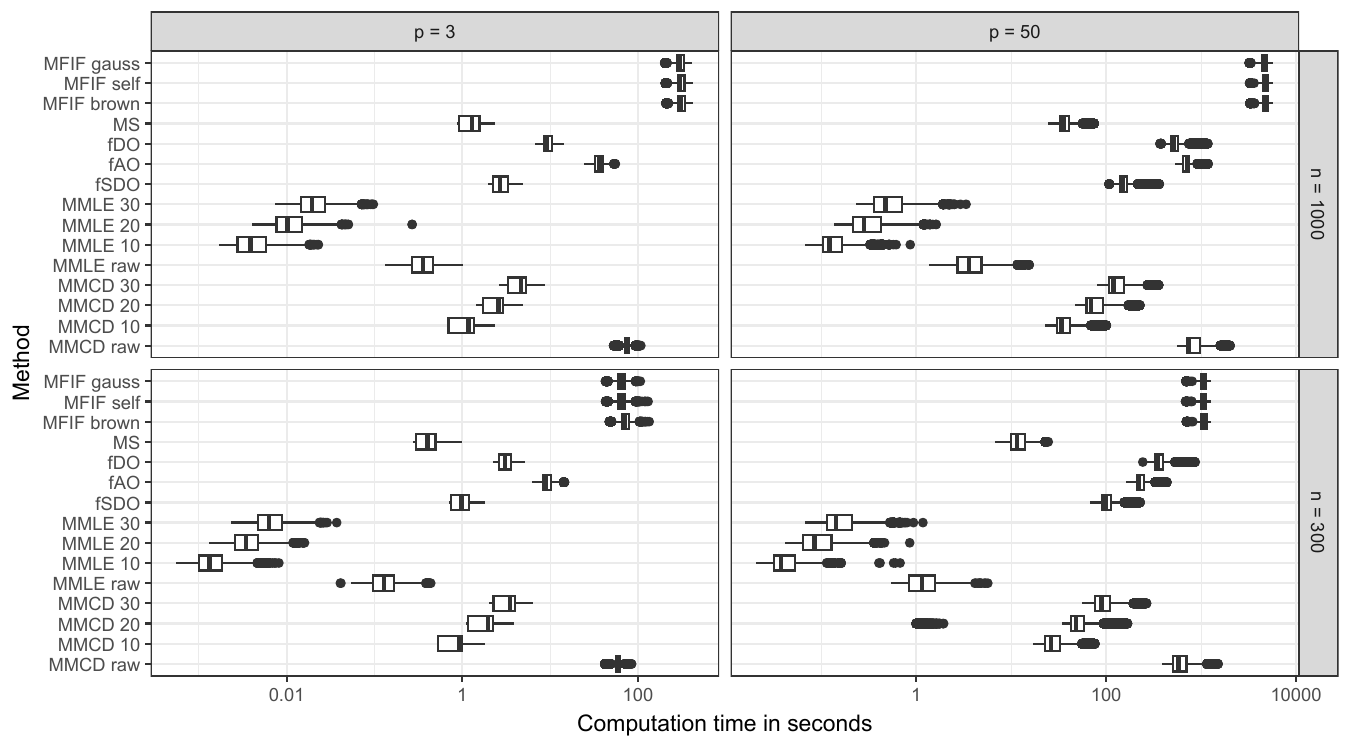}
    \caption{Comparison of log computation time in seconds for the Gaussian setting outlined in Section~\ref{section:simulations}, divided into facets according to dimensionality $p  \in \{3,50\}$ and number of observations $n \in \{300, 1000\}$. The MFIF timings rest on 10 replications per configuration for $p = 50$ and 100 in all other settings.}
    \label{fig:time}
\end{figure}

\begin{table}[!h]
\caption{Median computation time of a single run, in seconds. MMCD 10/20/30 denote the robust distances computed from $m$ B-spline coefficients, MMCD raw the same distances on the unsmoothed data. Note that the sample sizes differ between settings. The heavy-tailed setting uses $m\in\{10,30\}$ only. The MFIF timings rest on 10 replications per configuration for $p = 50$ and 100 in all other settings.}
\centering
\small
\setlength{\tabcolsep}{4pt}
\begin{tabular}{lrrrrrrr}
\toprule
 & \multicolumn{2}{c}{Gaussian $p=3$} & \multicolumn{2}{c}{Gaussian $p=50$} & \multicolumn{2}{c}{Non-separable $p=10$} & Heavy-tailed $p=10$ \\
\cmidrule(lr){2-3}\cmidrule(lr){4-5}\cmidrule(lr){6-7}\cmidrule(lr){8-8}
Method & $n=300$ & $n=1000$ & $n=300$ & $n=1000$ & $n=1500$ & $n=5000$ & $n=1000$ \\
\midrule
MFIF brown & 71.2 & 307.5 & 1070.6 & 4754.1 & 1292.9 & 3887.4 & 968.2 \\
MFIF gauss & 64.8 & 297.9 & 1056.2 & 4707.4 & 1326.9 & 3957.4 & 955.0 \\
MFIF self  & 64.8 & 302.0 & 1056.4 & 4772.2 & 1341.0 & 4004.9 & 957.0 \\
\midrule
MMCD raw & 58.7 & 75.0 & 579.3 & 747.3 & 285.8 & 470.2 & 123.6 \\
MMCD 10  &  0.9 &  1.2 &  26.3 &  33.9 &   5.7 &   7.5 &   3.2 \\
MMCD 20  &  2.0 &  2.6 &  47.9 &  69.3 &  15.2 &  21.7 &  -- \\
MMCD 30  &  3.5 &  4.6 &  89.3 & 119.5 &  28.6 &  40.6 &  12.4 \\
\midrule
fDO  & 3.1 &  9.2 & 347.8 & 513.6 &  79.2 & 298.6 &  38.9 \\
fSDO & 1.0 &  2.7 &  98.4 & 149.7 &  18.6 &  65.8 &   9.7 \\
fAO  & 8.9 & 35.7 & 222.7 & 706.5 & 236.8 & 920.7 & 116.3 \\
MS   & 0.4 &  1.3 &  11.6 &  35.8 &   7.0 &  23.9 &   3.4 \\
\bottomrule
\end{tabular}\label{tab:time}
\end{table}

\cleardoublepage
\subsection{Comparison of the Depth-based Approaches}\label{appendix:simulations_depth}

Figures~\ref{fig:gaussian_depth_based_fscore} - \ref{fig:gaussian_depth_based_cov} compare the performance of the depth-based procedures for the same setting we considered in Section~\ref{section:simulations}. The methods were either applied to the raw data, smoothed data (evaluated on the raw data's time grid), or the coefficient matrices (using 10 or 30 basis functions) of the smoothed data. Overall, the results indicate that the methods perform better on either the raw or smoothed data, rather than the coefficient matrices. An exception is the setting where we consider covariance-induced outliers, in which case the AUC reveals that the smoothed data or the coefficient matrices with 30 basis functions yield better results than the raw data.

\begin{figure}[p]
    \centering
    \includegraphics[width=1\linewidth]{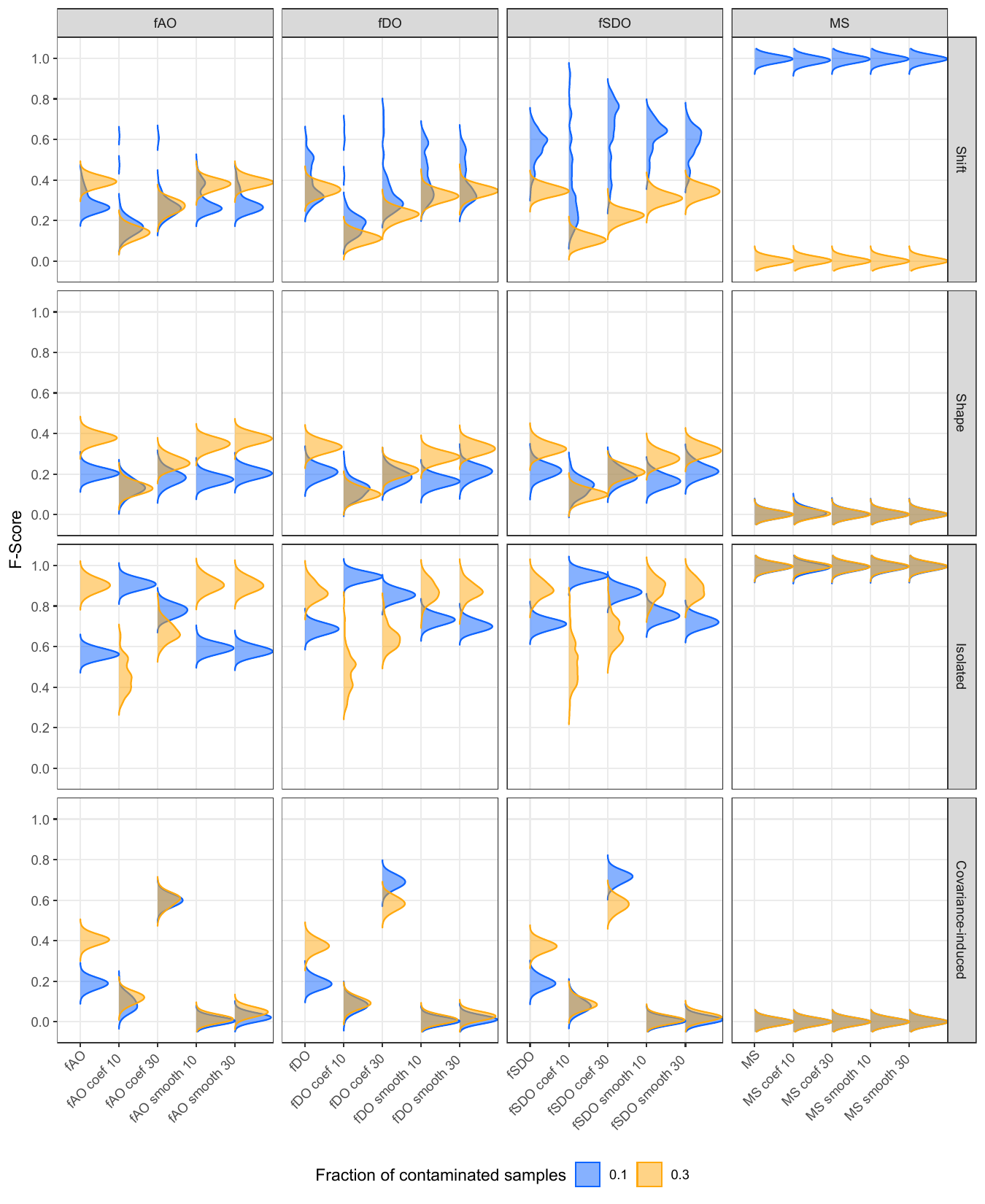}
    \caption{{Density plots of F-score for a representative setting with $n = 1000$ and $p = 50$ with all coordinates contaminated under the Matérn covariance and medium outlier magnitude, shown across outlier types (shift, shape, isolated, covariance-induced), and outlier proportions ($\varepsilon = 0.1, 0.3$).}}
    \label{fig:gaussian_depth_based_fscore}
\end{figure}

\begin{figure}[p]
    \centering
    \includegraphics[width=1\linewidth]{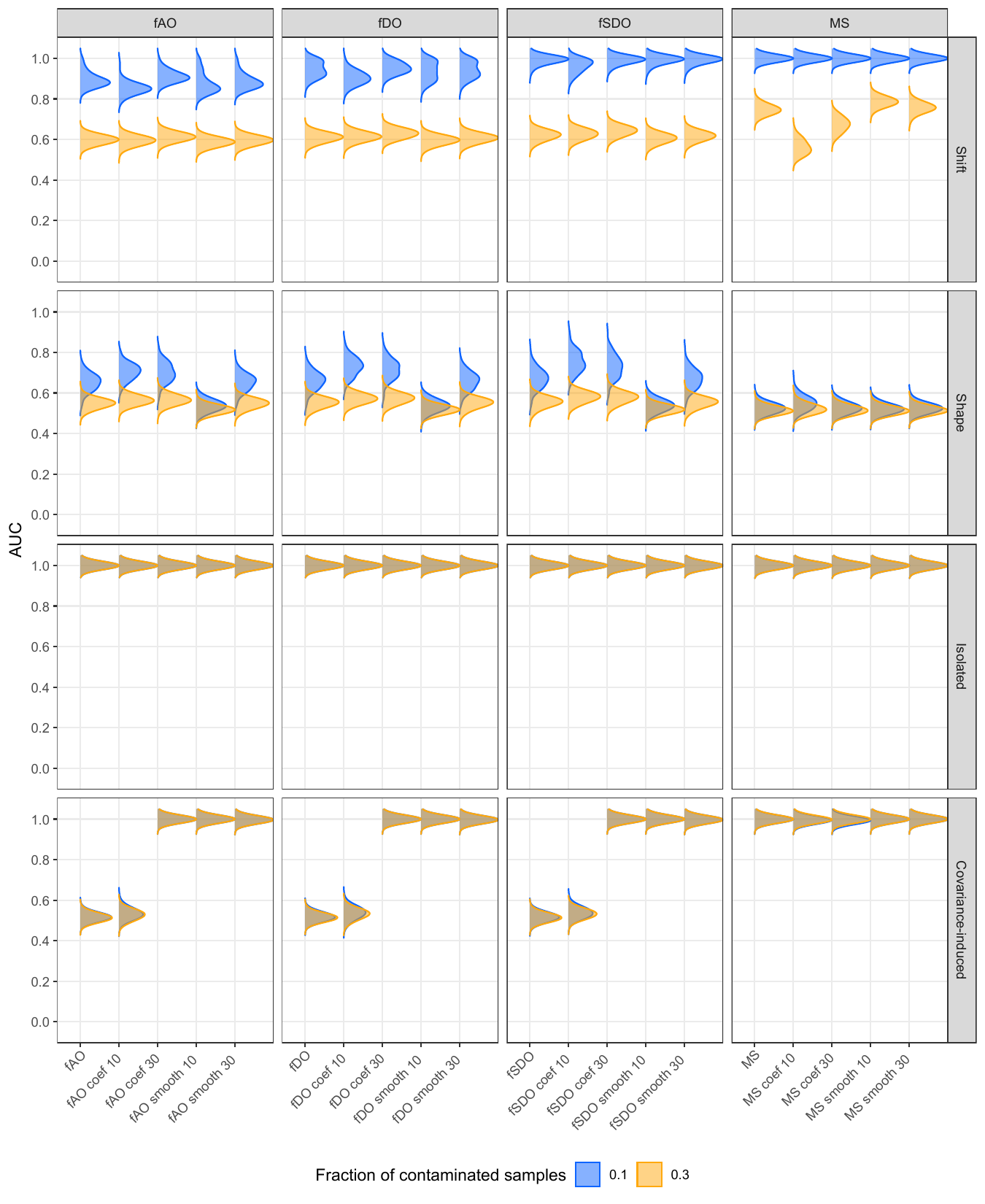}
    \caption{{Density plots of AUC for a representative setting with $n = 1000$ and $p = 50$ with all coordinates contaminated under the Matérn covariance and medium outlier magnitude, shown across outlier types (shift, shape, isolated, covariance-induced), and outlier proportions ($\varepsilon = 0.1, 0.3$).}}
    \label{fig:gaussian_depth_based_AUC}
\end{figure}

\begin{figure}[p]
    \centering
    \includegraphics[width=1\linewidth]{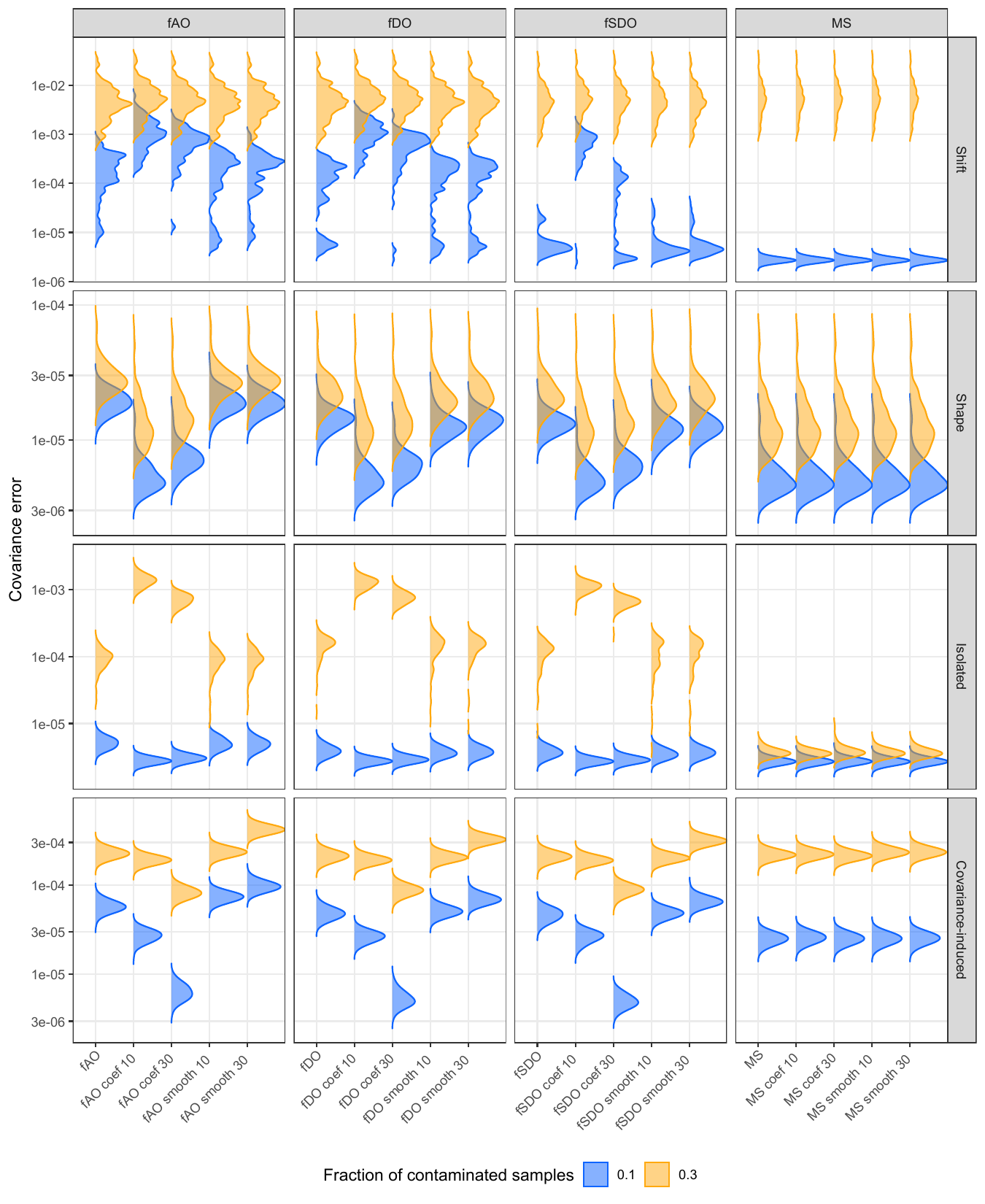}
    \caption{{Density plots of log covariance estimation error for a representative setting with $n = 1000$ and $p = 50$ with all coordinates contaminated under the Matérn covariance and medium outlier magnitude, shown across outlier types (shift, shape, isolated, covariance-induced), and outlier proportions ($\varepsilon = 0.1, 0.3$).}}
    \label{fig:gaussian_depth_based_cov}
\end{figure}

\newpage
\subsection{Non-Separable Setting} 
\label{subsection:sample_vs_mmle}
For the non-separable simulation setting presented in Section~\ref{section:simulations} (Table~\ref{tab1}), we also included the sample covariance and the matrix MLE covariance estimators based on the raw data. In Figure~\ref{fig:sample_vs_mmle}, we compare the Spearman correlation as well as the error in squared Mahalanobis distance with the true covariance used to generate the data as a baseline. 
In the non-separable setting considered here, the separable covariance estimator (MMLE) seems to yield more stable Mahalanobis diagnostics than the unregularized sample covariance on vectorized data, particularly in high-dimensional, limited-sample regimes. We also observe that the resulting structured estimators can provide more accurate covariance estimates than the unconstrained sample covariance, despite the violation of separability, particularly for the smaller sample sizes.\\
This reflects a bias–variance trade-off: although the separability assumption is misspecified, the structured estimator benefits from reduced variance. We emphasize that this comparison is against an unregularized non-separable estimator; more flexible or regularized non-separable approaches may behave differently.

Figures~\ref{fig:precision_non_separable_p10}-\ref{fig:AUC_non_separable_p10_all} present density plots of precision, recall, F-score, and AUC, respectively, for shift outlier detection in the non-separable setting ($p = 10$, $n = 1500, 5000$) described in Section~\ref{section:simulations}. Here, the AUC of the MFIF varies considerably with the dictionary and with the number of separable components, whereas the robust distances remain close to one throughout. Since the dictionary has to be fixed in advance and without label information, this sensitivity is a practical limitation, as also seen for the welding data in Section~\ref{section:examples}.

\begin{figure}[!h]
    \centering
    \includegraphics[width=1\linewidth]{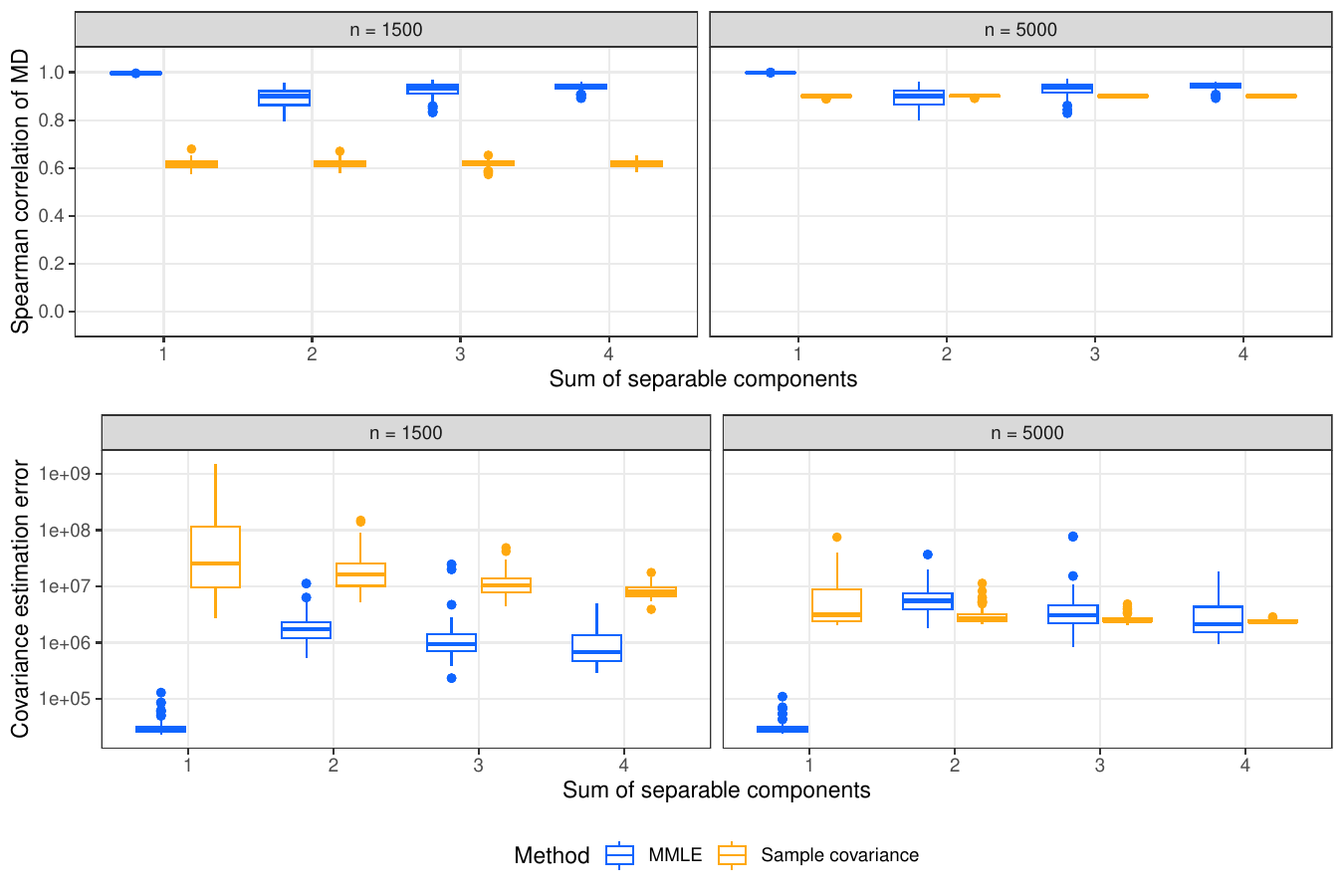}
    \caption{Comparison of Spearman correlation of squared Mahalanobis distances based on the sample covariance of the vectorized observations and the matrix MLE estimates in non-contaminated data for $p = 10$ as well as $n = 1500$ and $n = 5000$ in the upper panel. The lower panel shows the Mahalanobis distance estimation errors for the two estimators in the same setting.}
    \label{fig:sample_vs_mmle}
\end{figure}

\begin{figure}[p]
    \centering
    \includegraphics[width=1\linewidth]{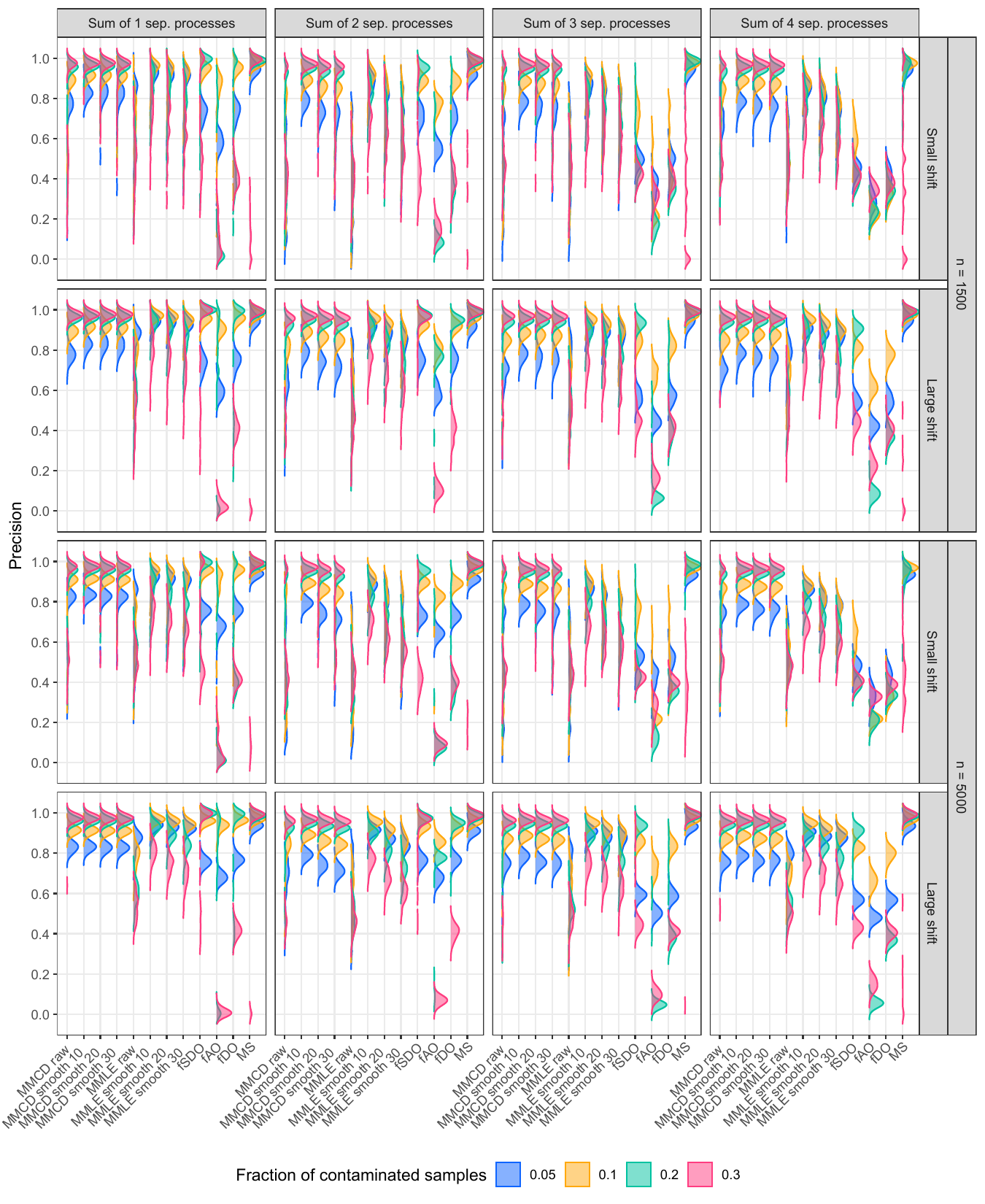}
    \caption{Precision for detecting small and large shift outliers in non-separable processes obtained as sums of $q_{\text{ns}} = 1, 2, 3, 4$ (columns) independent separable components with Matérn covariance functions and $p=10$ components for $n = 1500, 5000$ samples.}
    \label{fig:precision_non_separable_p10}
\end{figure}

\begin{figure}[p]
    \centering
    \includegraphics[width=1\linewidth]{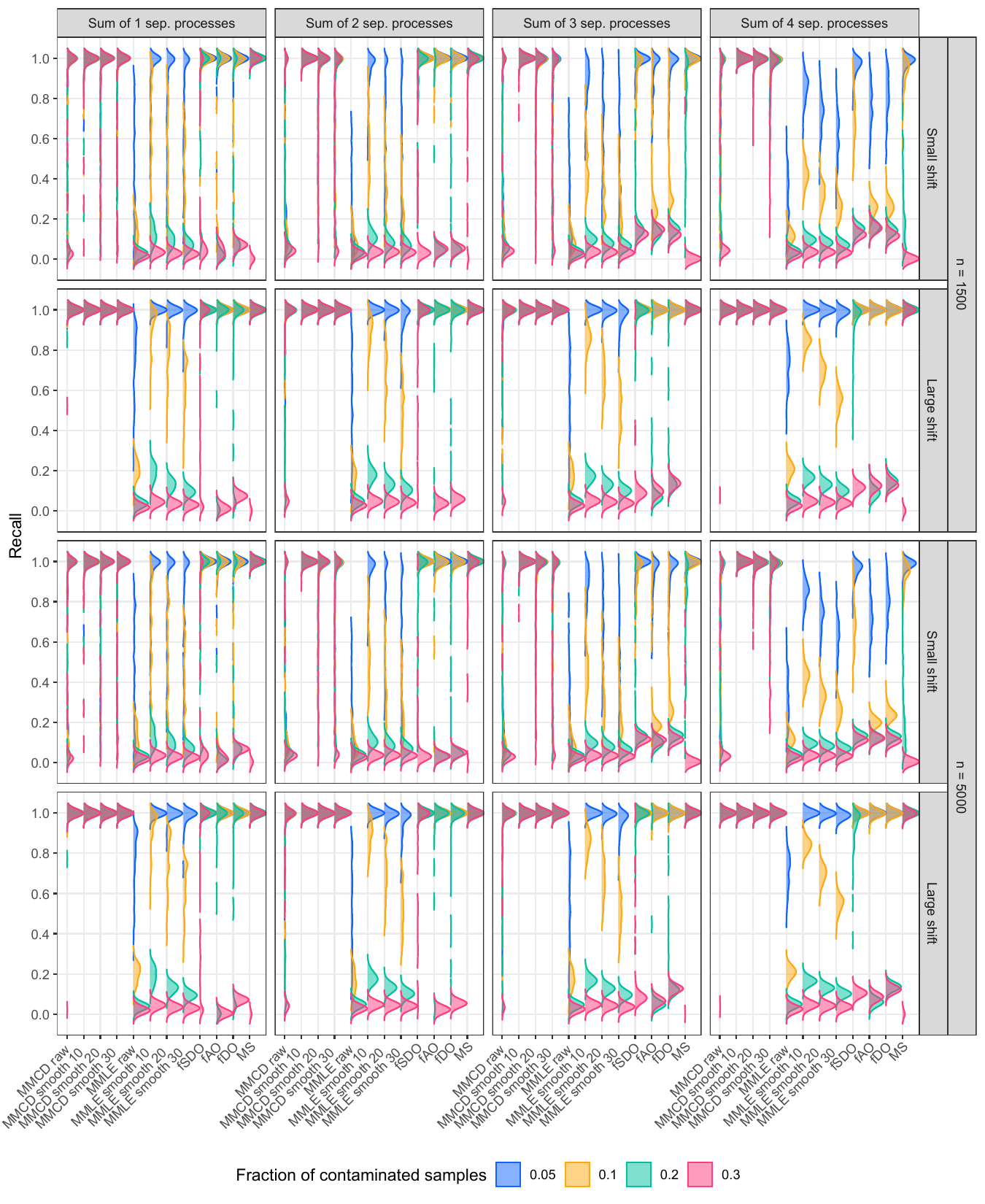}
    \caption{Recall for detecting small and large shift outliers in non-separable processes obtained as sums of $q_{\text{ns}} = 1, 2, 3, 4$ (columns) independent separable components with Matérn covariance functions and $p=10$ components for $n = 1500, 5000$ samples.}
    \label{fig:recall_non_separable_p10}
\end{figure}

\begin{figure}[p]
    \centering
    \includegraphics[width=1\linewidth]{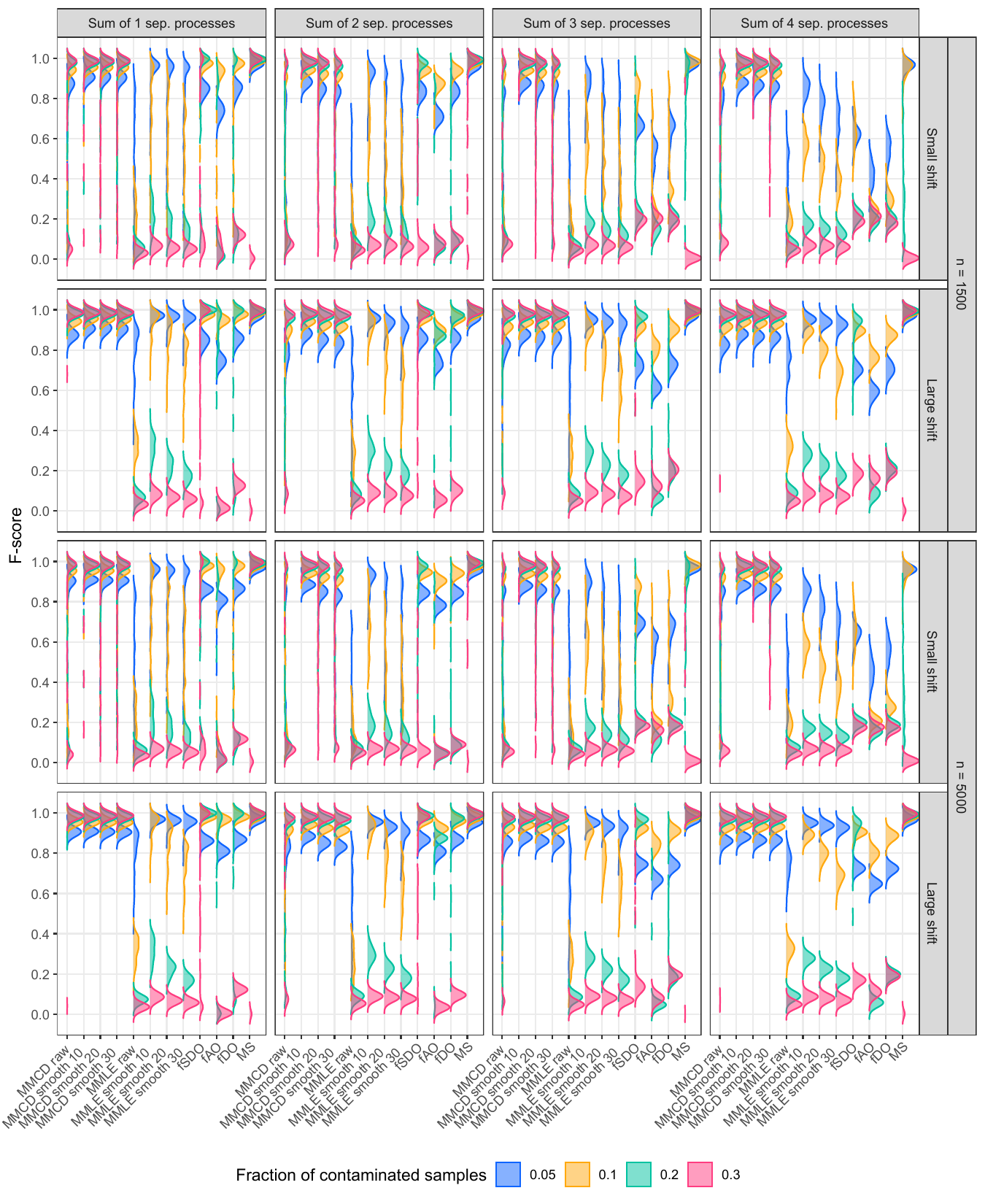}
    \caption{F-score for detecting small and large shift outliers in non-separable processes obtained as sums of $q_{\text{ns}} = 1, 2, 3, 4$ (columns) independent separable components with Matérn covariance functions and $p=10$ components for $n = 1500, 5000$ samples.}
    \label{fig:fscore_non_separable_p10}
\end{figure}

\begin{figure}[p]
    \centering
    \includegraphics[width=1\linewidth]{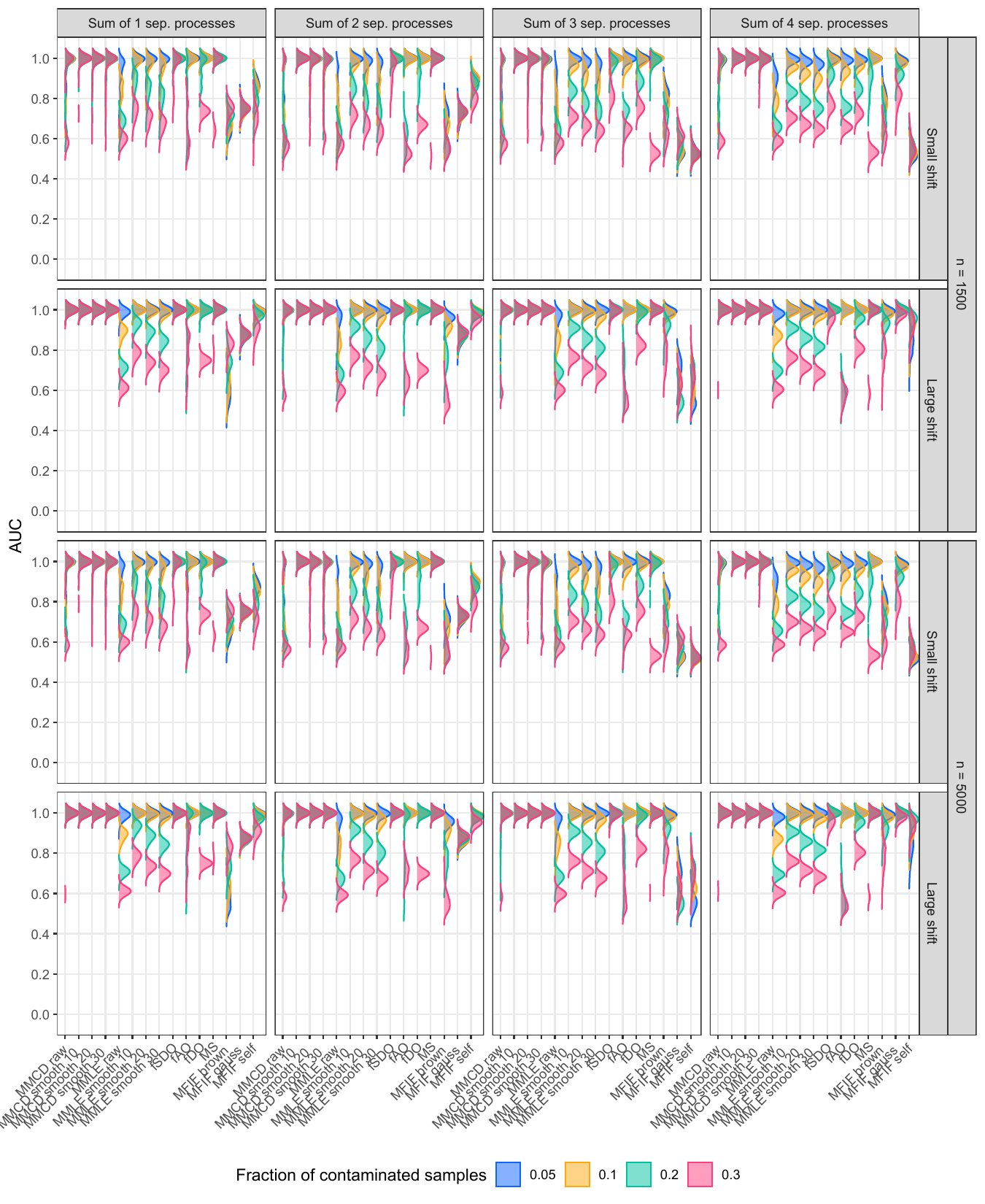}
    \caption{AUC for detecting small and large shift outliers in non-separable processes obtained as sums of $q_{\text{ns}} = 1, 2, 3, 4$ (columns) independent separable components with Matérn covariance functions and $p=10$ components for $n = 1500, 5000$ samples.}
    \label{fig:AUC_non_separable_p10_all}
\end{figure}

\cleardoublepage
\subsection{Additional Simulation Results in the Gaussian Setting} \label{appendix:simulations_gaussian}

This section reports the individual performance measures for the Gaussian settings of Table~\ref{tab1}, complementing the comparison of Section~\ref{section:simulations}. Figure~\ref{fig:test_p3} shows the rank-based comparison for $p = 3$ while Figures~\ref{fig:precision_shift}--\ref{fig:cov_cov} show precision, recall, F-score, AUC, and covariance estimation error for shift, shape, isolated, and covariance-induced outliers, each across $p = 3$ and $p = 50$, both covariance structures, and the small and large outlier magnitudes. Since MFIF returns an anomaly score without an outlyingness cutoff, it is included for AUC only. For $p = 50$ it is based on $10$ instead of $100$ replications, so its densities are less stable than those of the other methods.

\begin{figure}[!h]
    \centering
    \includegraphics[width=1\linewidth]{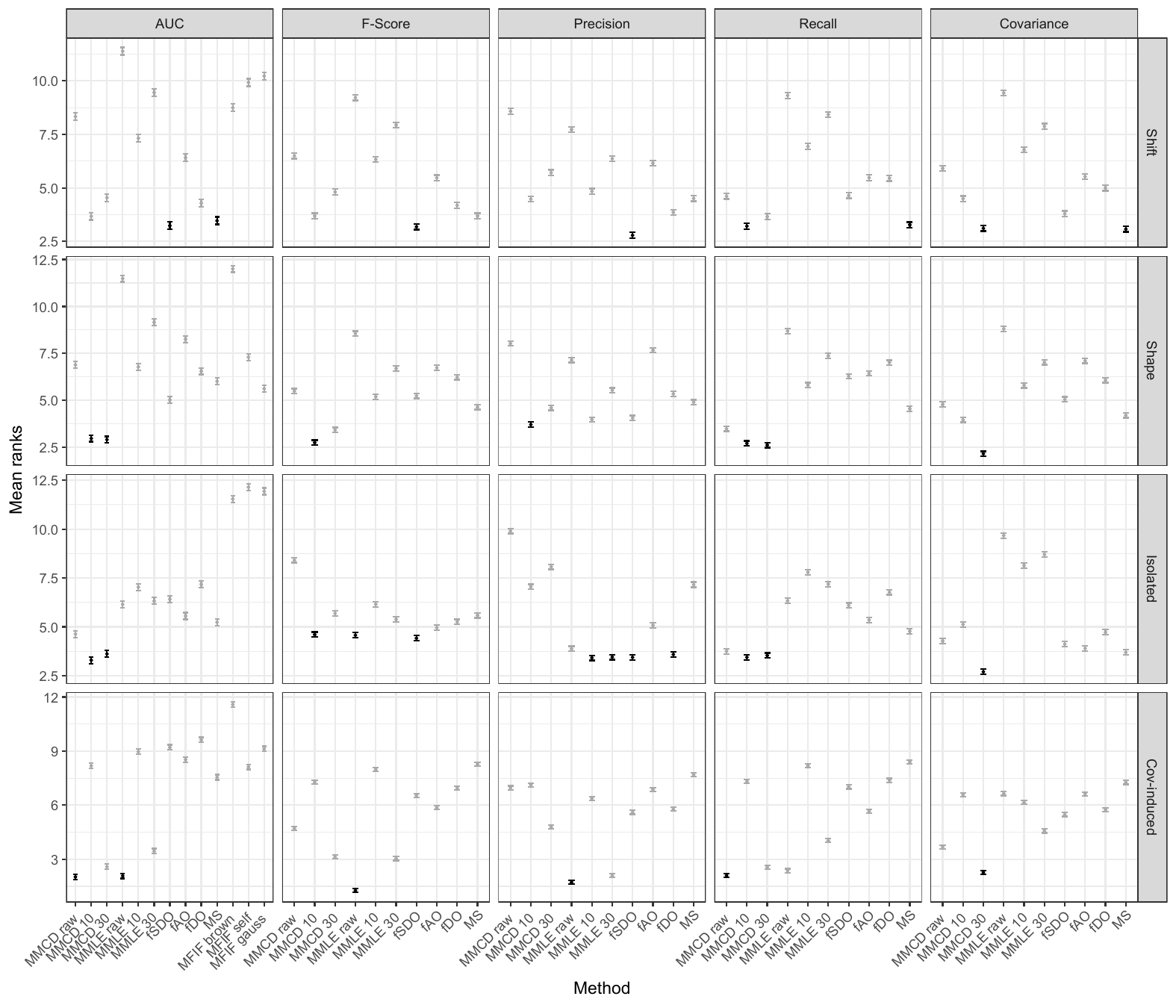}
    \caption{Rank-based comparison of methods across all simulation settings in Table \ref{tab1} for $p=3$. Intervals are based on the Friedman and Nemenyi tests; methods not significantly different (99\% level) from the best are shown in black, others in gray. Horizontal facets correspond to performance metrics, vertical facets to outlier types.}
    \label{fig:test_p3}
\end{figure}

\begin{figure}[p]
    \centering
    \includegraphics[width=1\linewidth]{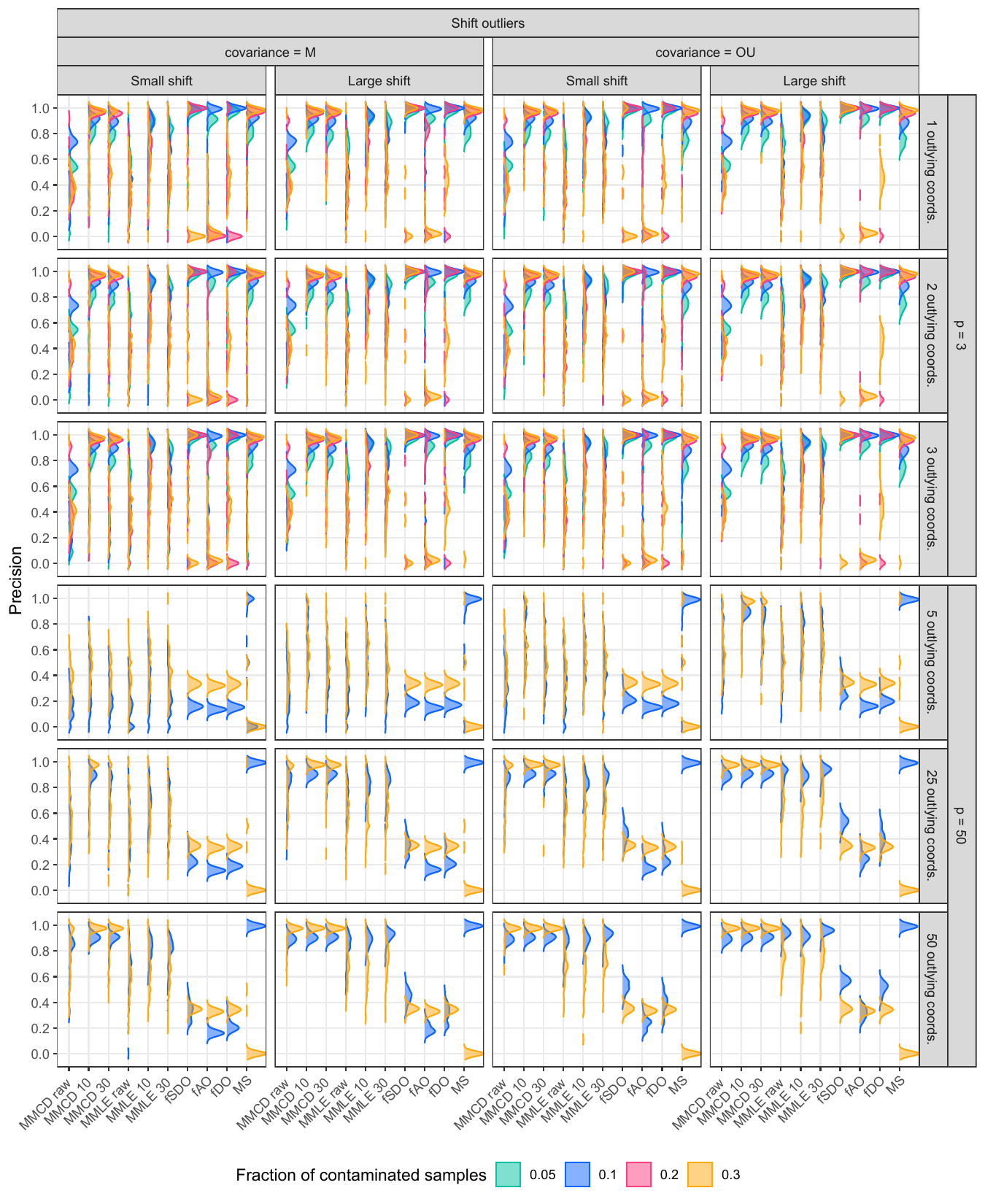}
    \caption{{Density plots of precision for shift outliers in the Gaussian setting with $n = 1000$ across $p = 3,50$, Ornstein-Uhlenbeck $\kappa_{\text{OU}}$ and Matérn-type $\kappa_{\text{Matérn}}$ covariance structures, small and large outlier magnitudes, coordinate contamination levels $\lfloor{\varepsilon_{\mathrm{cord}} \cdot p\rfloor}$ ($\varepsilon_{\mathrm{cord}} = 0.1, 0.5, 1$), and outlier proportions ($\varepsilon = 0.05, 0.1, 0.2, 0.3$).}}
    \label{fig:precision_shift}
\end{figure}

\begin{figure}[p]
    \centering
    \includegraphics[width=1\linewidth]{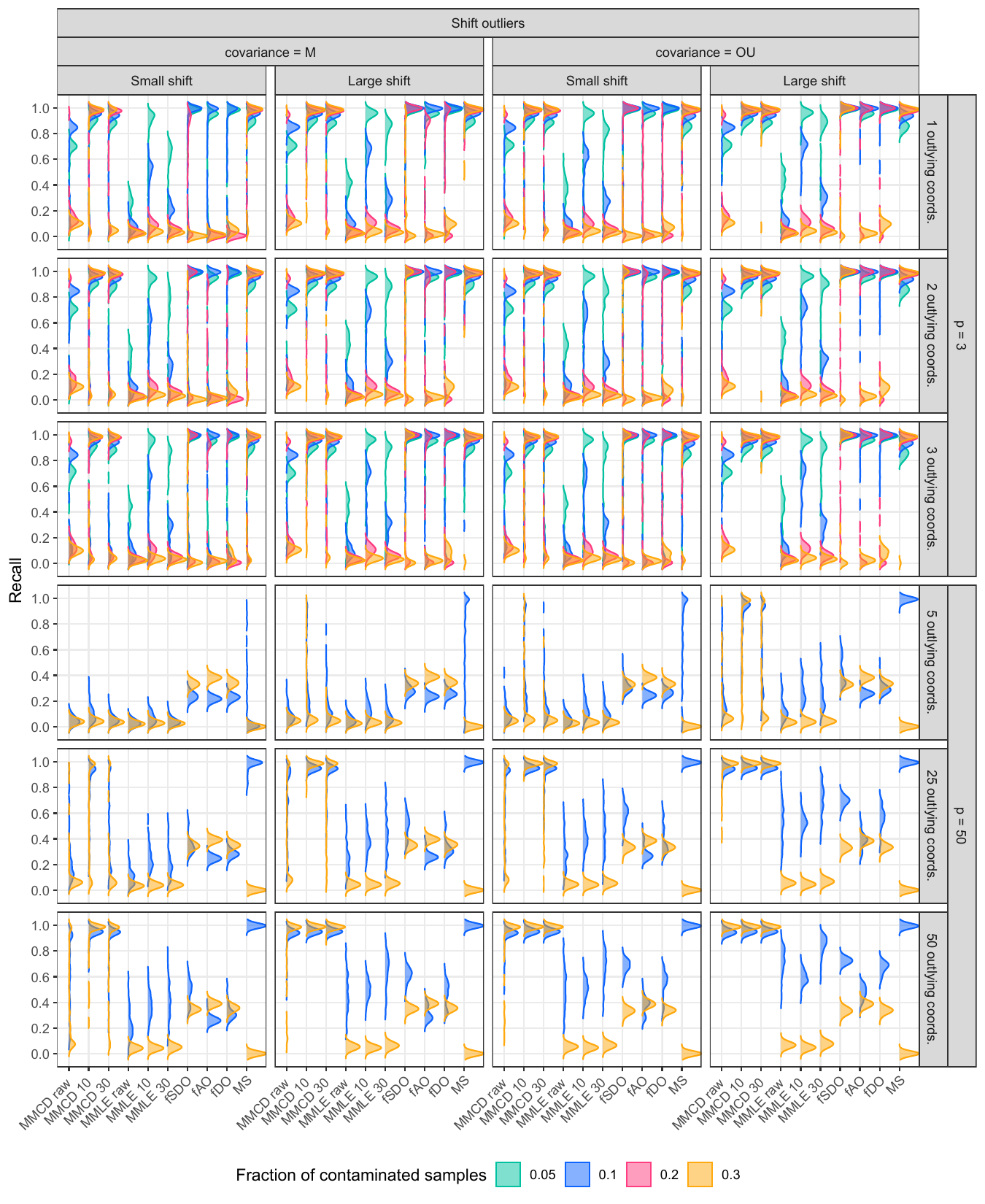}
    \caption{{Density plots of recall for shift outliers in the Gaussian setting with $n = 1000$ across $p = 3,50$, Ornstein-Uhlenbeck $\kappa_{\text{OU}}$ and Matérn-type $\kappa_{\text{Matérn}}$ covariance structures, small and large outlier magnitudes, coordinate contamination levels $\lfloor{\varepsilon_{\mathrm{cord}} \cdot p\rfloor}$ ($\varepsilon_{\mathrm{cord}} = 0.1, 0.5, 1$), and outlier proportions ($\varepsilon = 0.05, 0.1, 0.2, 0.3$).}}
    \label{fig:recall_shift}
\end{figure}

\begin{figure}[p]
    \centering
    \includegraphics[width=1\linewidth]{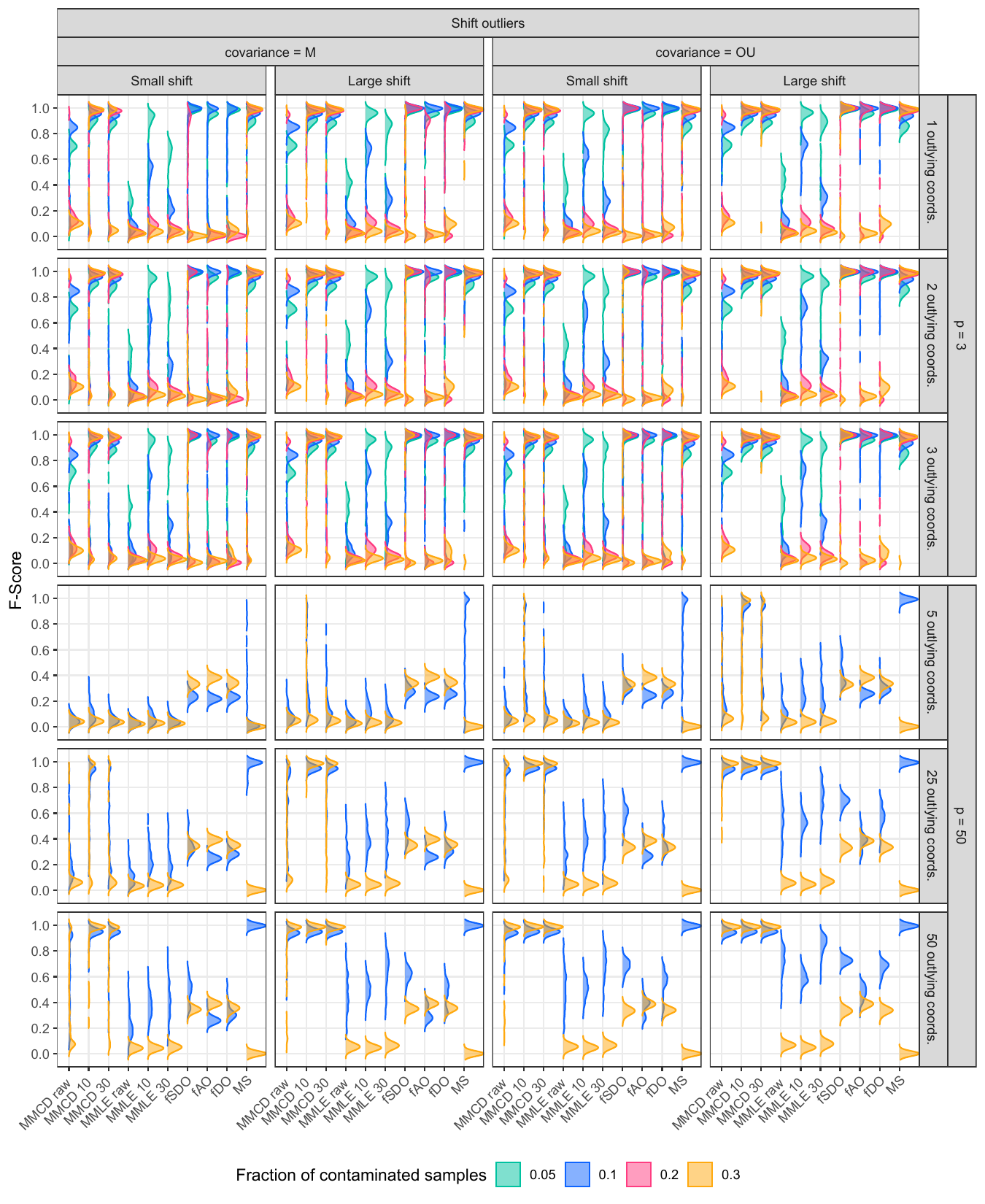}
    \caption{{Density plots of F-score for shift outliers in the Gaussian setting with $n = 1000$ across $p = 3,50$, Ornstein-Uhlenbeck $\kappa_{\text{OU}}$ and Matérn-type $\kappa_{\text{Matérn}}$ covariance structures, small and large outlier magnitudes, coordinate contamination levels $\lfloor{\varepsilon_{\mathrm{cord}} \cdot p\rfloor}$ ($\varepsilon_{\mathrm{cord}} = 0.1, 0.5, 1$), and outlier proportions ($\varepsilon = 0.05, 0.1, 0.2, 0.3$).}}
    \label{fig:fscore_shift}
\end{figure}

\begin{figure}[p]
    \centering
    \includegraphics[width=1\linewidth]{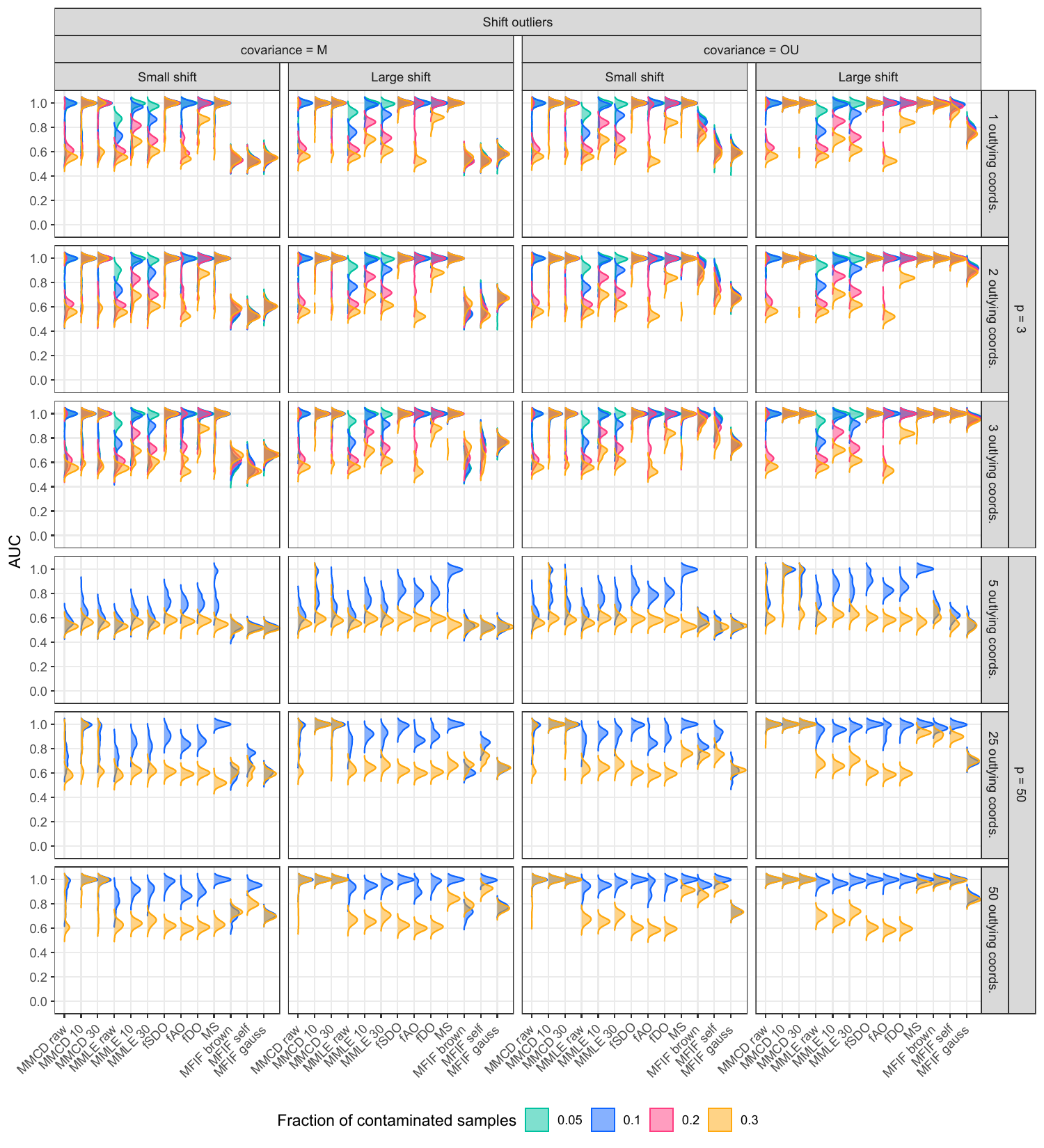}
    \caption{{Density plots of AUC values for shift outliers in the Gaussian setting with $n = 1000$ across $p = 3,50$, Ornstein-Uhlenbeck $\kappa_{\text{OU}}$ and Matérn-type $\kappa_{\text{Matérn}}$ covariance structures, small and large outlier magnitudes, coordinate contamination levels $\lfloor{\varepsilon_{\mathrm{cord}} \cdot p\rfloor}$ ($\varepsilon_{\mathrm{cord}} = 0.1, 0.5, 1$), and outlier proportions ($\varepsilon = 0.05, 0.1, 0.2, 0.3$).} For $p = 50$, MFIF densities are based on 10 replications instead of 100.}
    \label{fig:AUC_shift}
\end{figure}

\begin{figure}[p]
    \centering
    \includegraphics[width=1\linewidth]{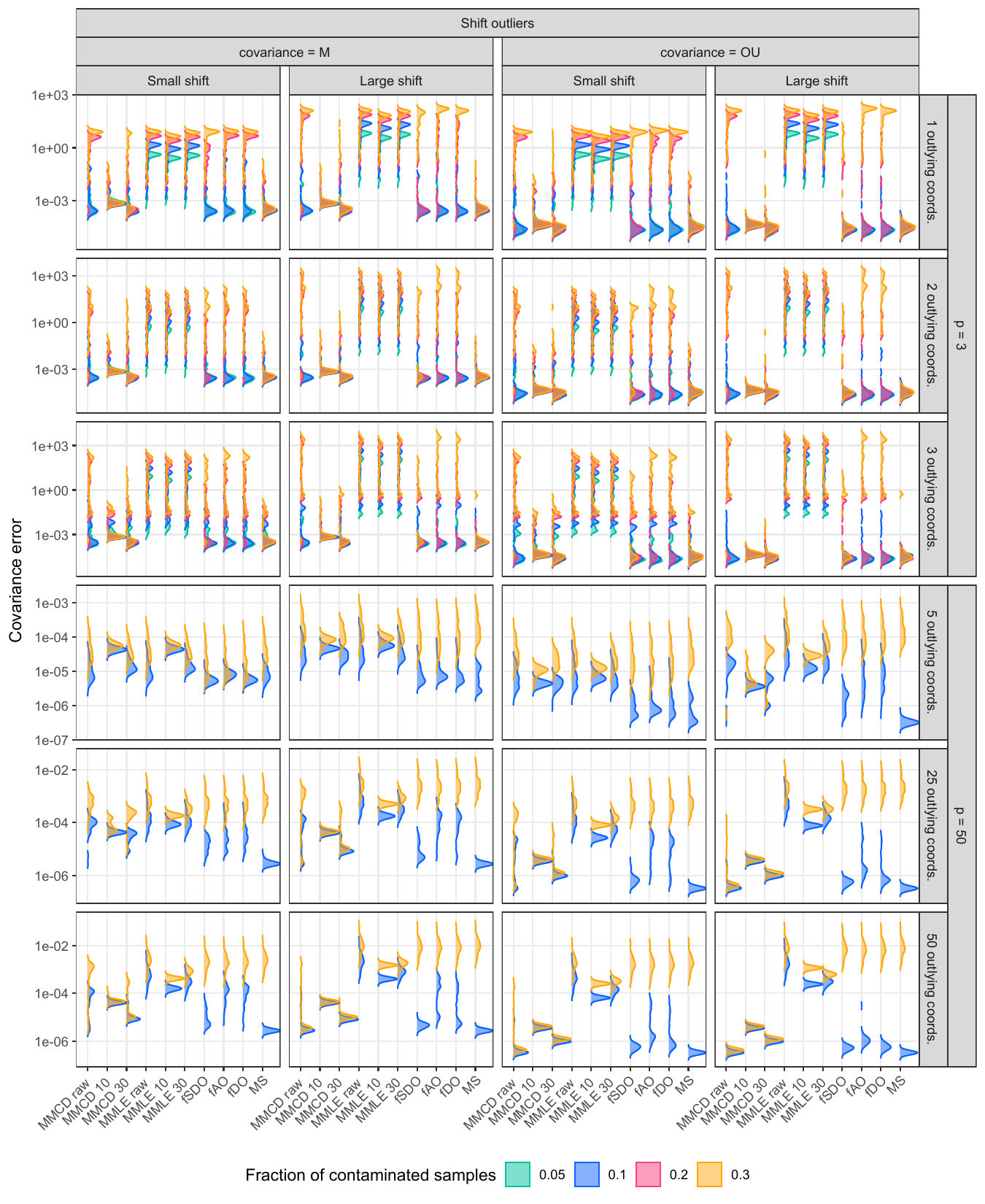}
    \caption{{Density plots of log covariance estimation error for shift outliers in the Gaussian setting with $n = 1000$ across $p = 3,50$, Ornstein-Uhlenbeck $\kappa_{\text{OU}}$ and Matérn-type $\kappa_{\text{Matérn}}$ covariance structures, small and large outlier magnitudes, coordinate contamination levels $\lfloor{\varepsilon_{\mathrm{cord}} \cdot p\rfloor}$ ($\varepsilon_{\mathrm{cord}} = 0.1, 0.5, 1$), and outlier proportions ($\varepsilon = 0.05, 0.1, 0.2, 0.3$).}}
    \label{fig:cov_shift}
\end{figure}

\begin{figure}[p]
    \centering
    \includegraphics[width=1\linewidth]{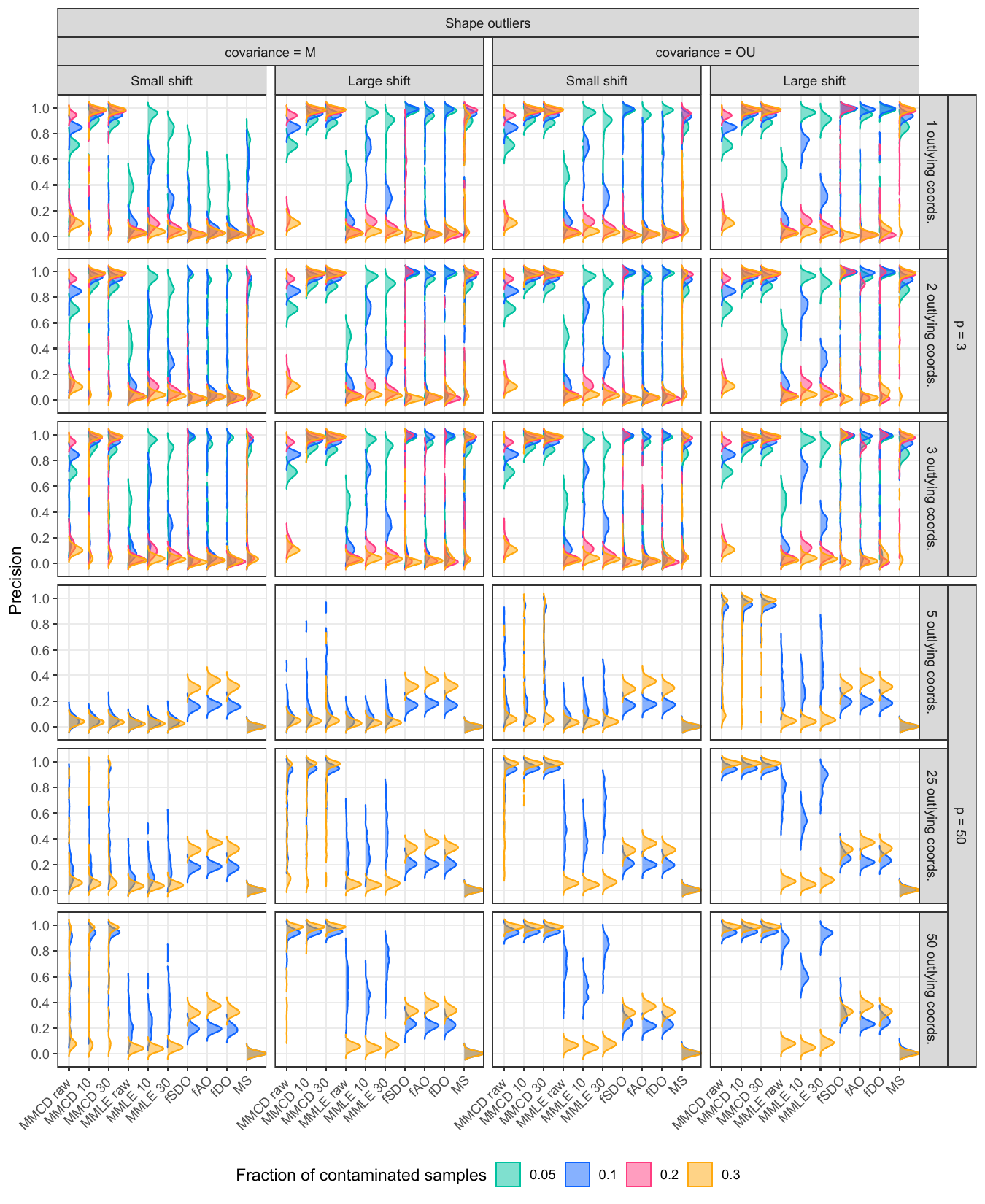}
    \caption{{Density plots of precision for shape outliers in the Gaussian setting with $n = 1000$ across $p = 3,50$, Ornstein-Uhlenbeck $\kappa_{\text{OU}}$ and Matérn-type $\kappa_{\text{Matérn}}$ covariance structures, small and large outlier magnitudes, coordinate contamination levels $\lfloor{\varepsilon_{\mathrm{cord}} \cdot p\rfloor}$ ($\varepsilon_{\mathrm{cord}} = 0.1, 0.5, 1$), and outlier proportions ($\varepsilon = 0.05, 0.1, 0.2, 0.3$).}}
    \label{fig:precision_shape}
\end{figure}

\begin{figure}[p]
    \centering
    \includegraphics[width=1\linewidth]{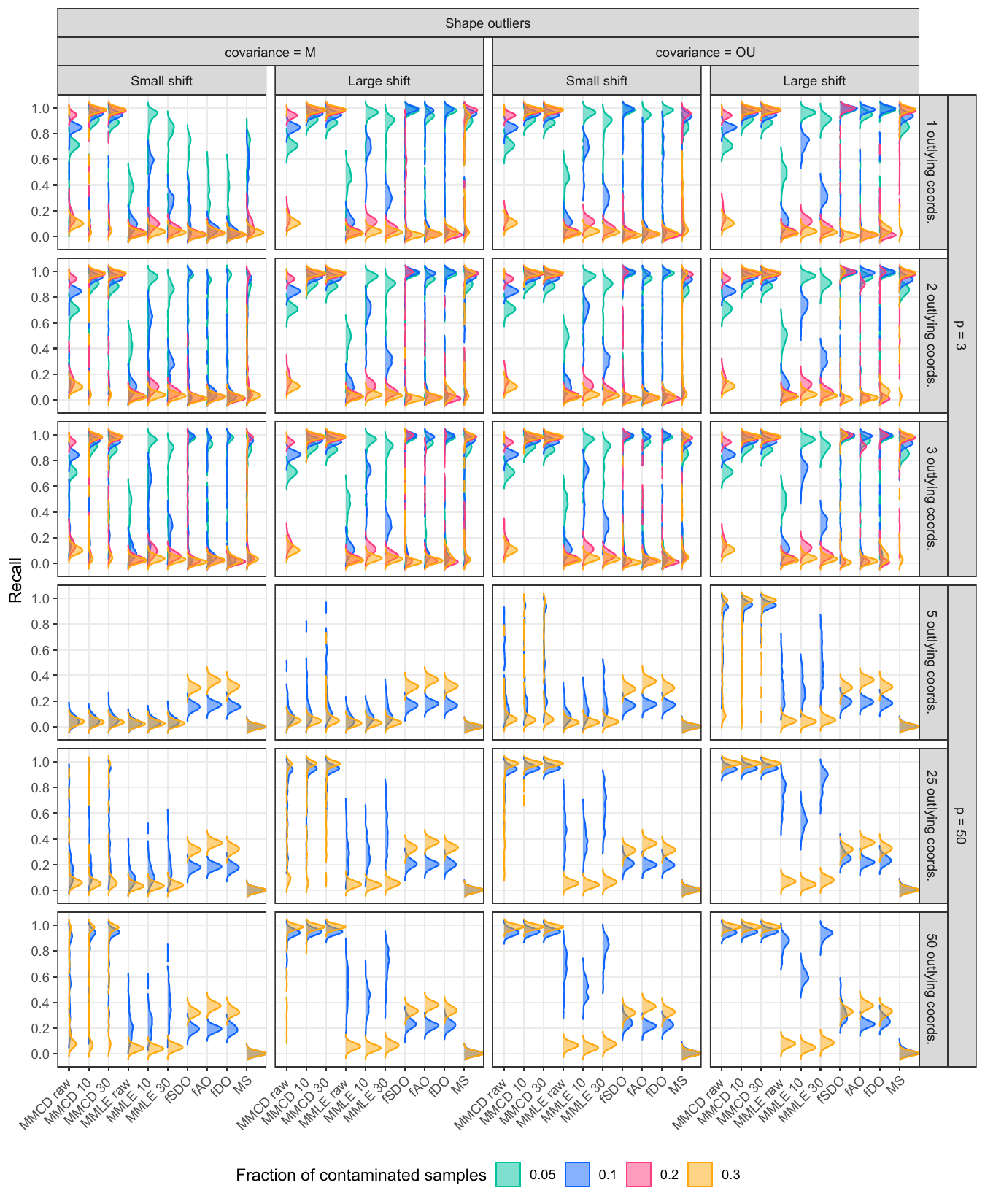}
    \caption{{Density plots of recall for shape outliers in the Gaussian setting with $n = 1000$ across $p = 3,50$, Ornstein-Uhlenbeck $\kappa_{\text{OU}}$ and Matérn-type $\kappa_{\text{Matérn}}$ covariance structures, small and large outlier magnitudes, coordinate contamination levels $\lfloor{\varepsilon_{\mathrm{cord}} \cdot p\rfloor}$ ($\varepsilon_{\mathrm{cord}} = 0.1, 0.5, 1$), and outlier proportions ($\varepsilon = 0.05, 0.1, 0.2, 0.3$).}}
    \label{fig:recall_shape}
\end{figure}

\begin{figure}[p]
    \centering
    \includegraphics[width=1\linewidth]{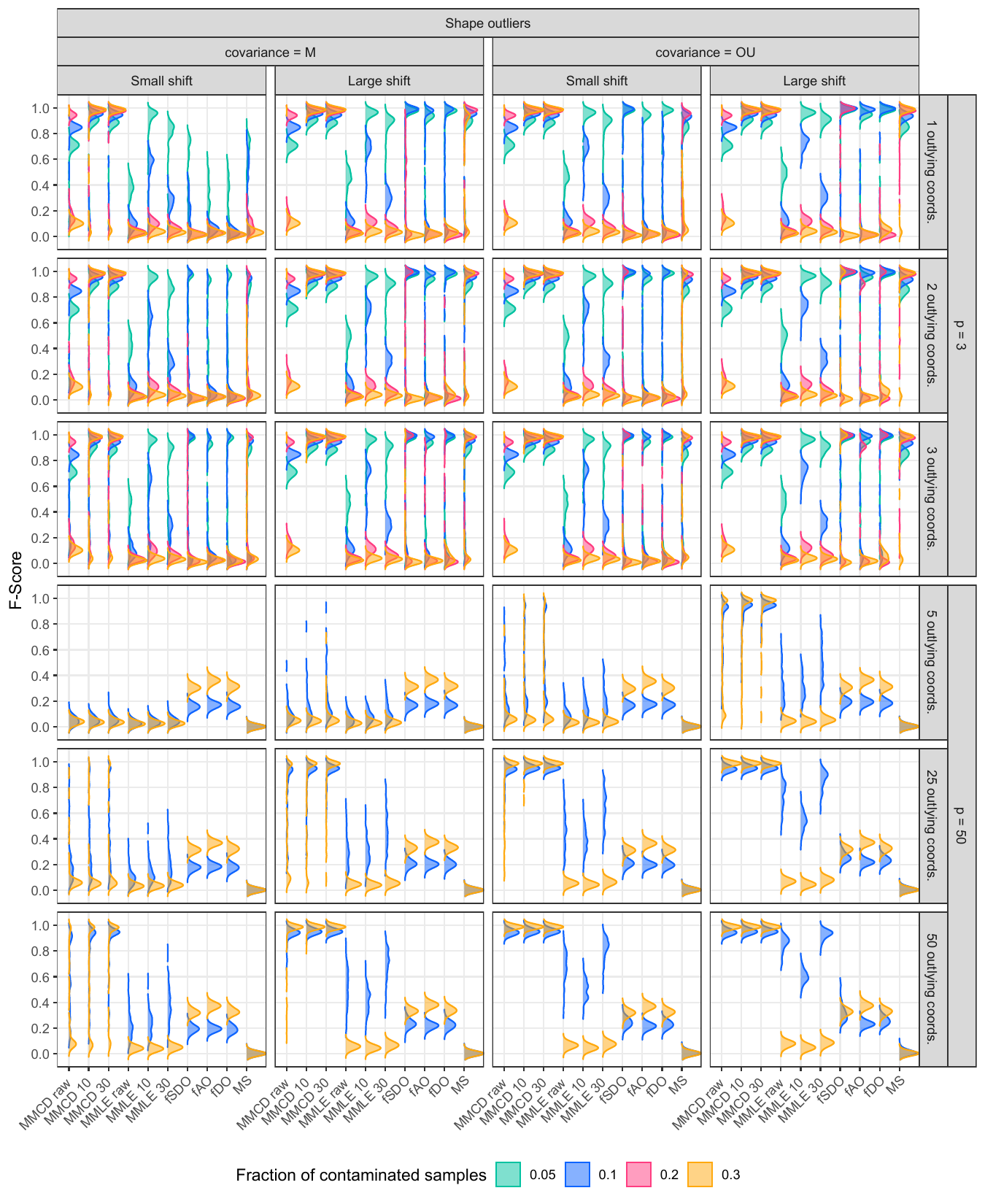}
    \caption{{Density plots of F-score for shape outliers in the Gaussian setting with $n = 1000$ across $p = 3,50$, Ornstein-Uhlenbeck $\kappa_{\text{OU}}$ and Matérn-type $\kappa_{\text{Matérn}}$ covariance structures, small and large outlier magnitudes, coordinate contamination levels $\lfloor{\varepsilon_{\mathrm{cord}} \cdot p\rfloor}$ ($\varepsilon_{\mathrm{cord}} = 0.1, 0.5, 1$), and outlier proportions ($\varepsilon = 0.05, 0.1, 0.2, 0.3$).}}
    \label{fig:fscore_shape}
\end{figure}

\begin{figure}[p]
    \centering
    \includegraphics[width=1\linewidth]{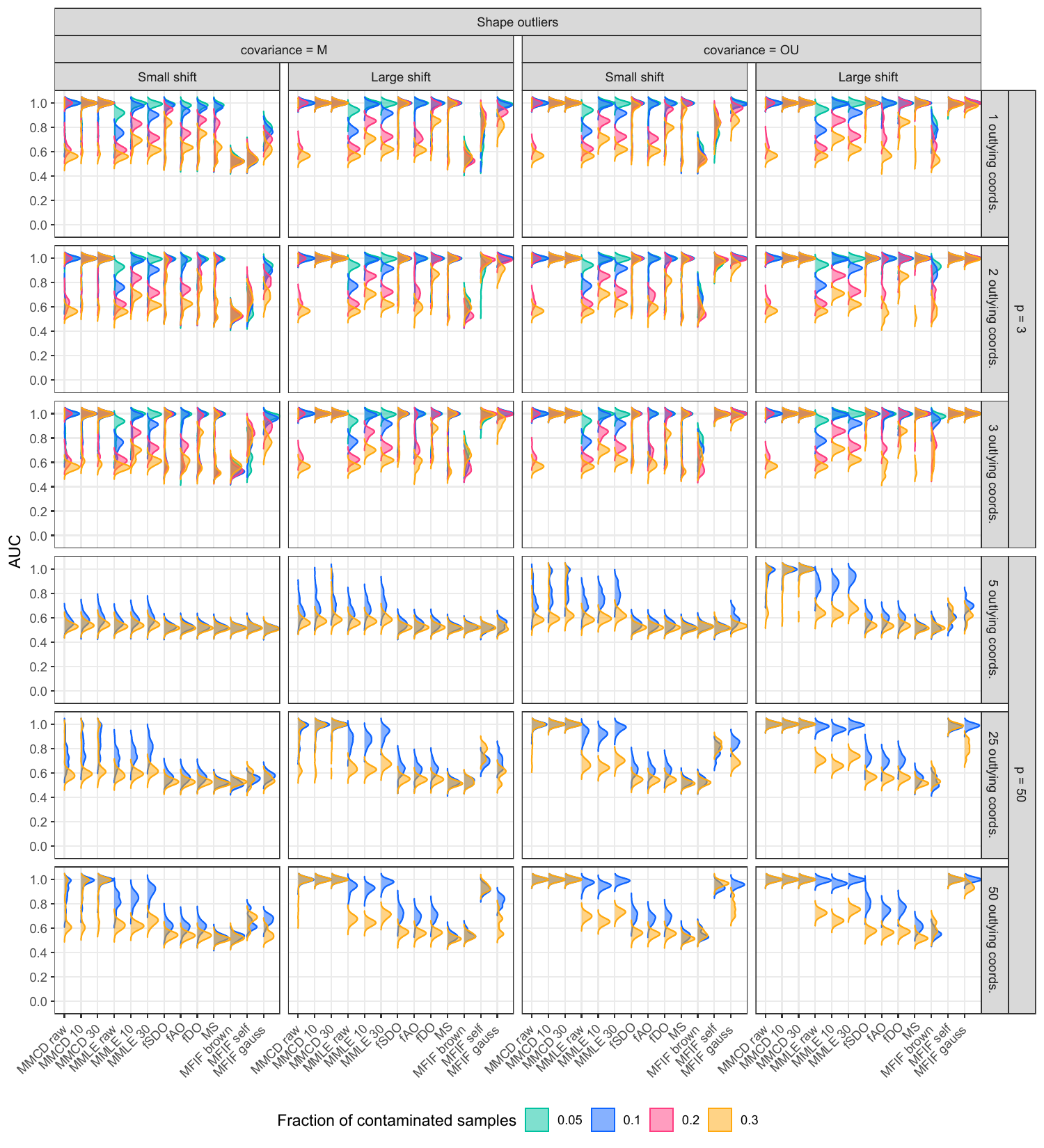}
    \caption{{Density plots of AUC values for shape outliers in the Gaussian setting with $n = 1000$ across $p = 3,50$, Ornstein-Uhlenbeck $\kappa_{\text{OU}}$ and Matérn-type $\kappa_{\text{Matérn}}$ covariance structures, small and large outlier magnitudes, coordinate contamination levels $\lfloor{\varepsilon_{\mathrm{cord}} \cdot p\rfloor}$ ($\varepsilon_{\mathrm{cord}} = 0.1, 0.5, 1$), and outlier proportions ($\varepsilon = 0.05, 0.1, 0.2, 0.3$).} For $p = 50$, MFIF densities are based on 10 replications instead of 100.}
    \label{fig:AUC_shape}
\end{figure}

\begin{figure}[p]
    \centering
    \includegraphics[width=1\linewidth]{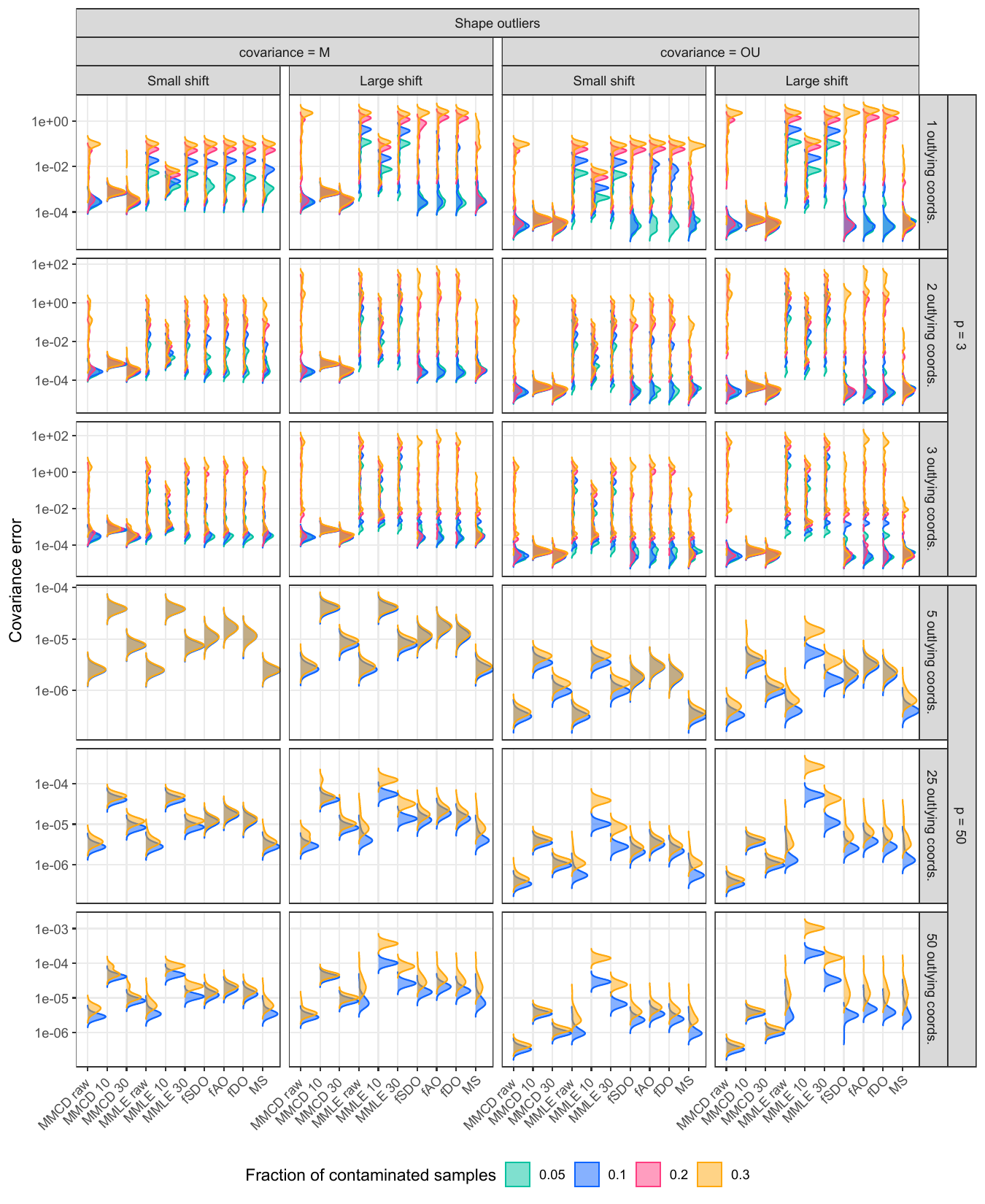}
    \caption{{Density plots of log covariance estimation error for shape outliers in the Gaussian setting with $n = 1000$ across $p = 3,50$, Ornstein-Uhlenbeck $\kappa_{\text{OU}}$ and Matérn-type $\kappa_{\text{Matérn}}$ covariance structures, small and large outlier magnitudes, coordinate contamination levels $\lfloor{\varepsilon_{\mathrm{cord}} \cdot p\rfloor}$ ($\varepsilon_{\mathrm{cord}} = 0.1, 0.5, 1$), and outlier proportions ($\varepsilon = 0.05, 0.1, 0.2, 0.3$).}}
    \label{fig:cov_shape}
\end{figure}

\begin{figure}[p]
    \centering
    \includegraphics[width=1\linewidth]{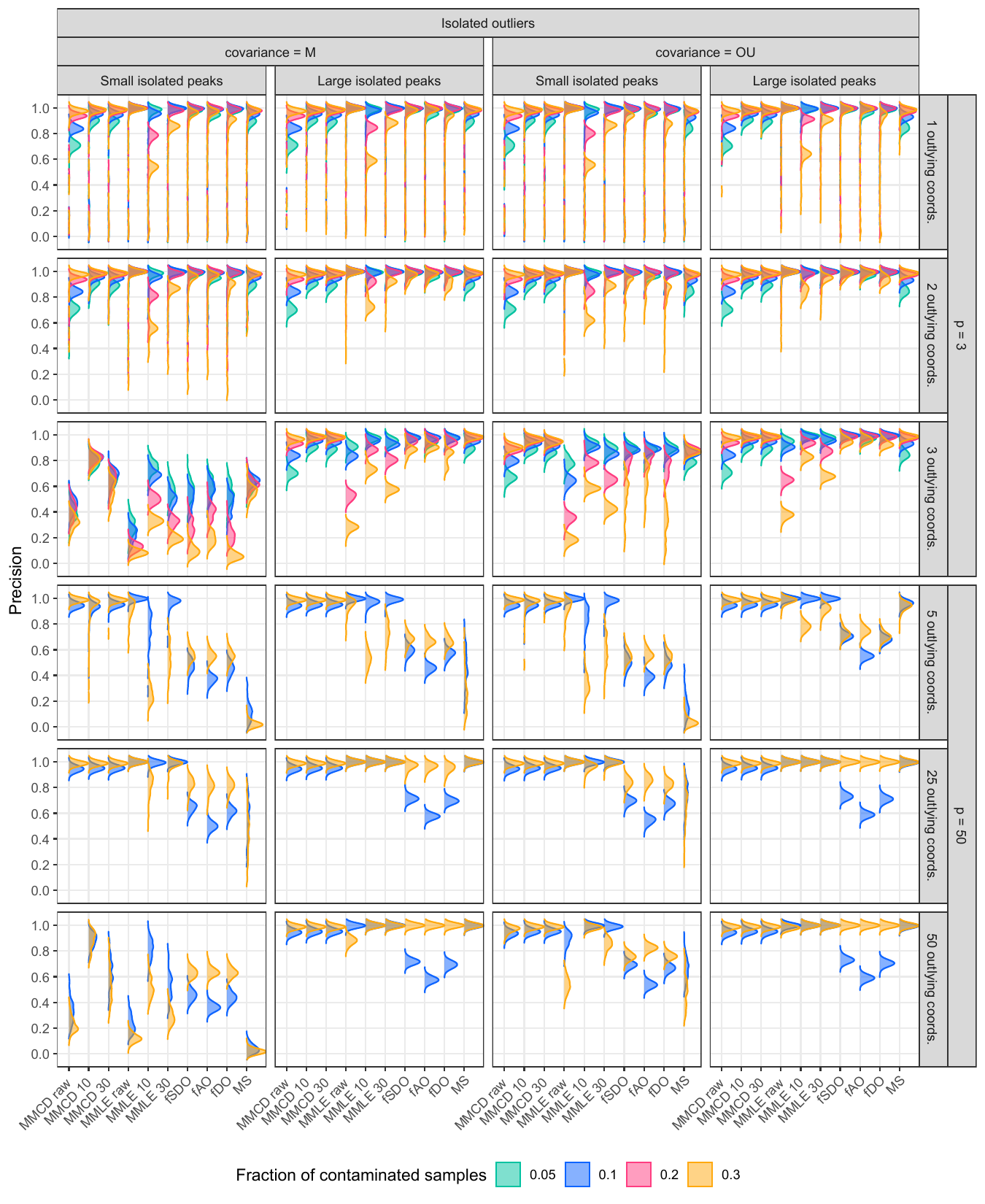}
    \caption{{Density plots of precision for isolated outliers in the Gaussian setting with $n = 1000$ across $p = 3,50$, Ornstein-Uhlenbeck $\kappa_{\text{OU}}$ and Matérn-type $\kappa_{\text{Matérn}}$ covariance structures, small and large outlier magnitudes, coordinate contamination levels $\lfloor{\varepsilon_{\mathrm{cord}} \cdot p\rfloor}$ ($\varepsilon_{\mathrm{cord}} = 0.1, 0.5, 1$), and outlier proportions ($\varepsilon = 0.05, 0.1, 0.2, 0.3$).}}
    \label{fig:precision_iso}
\end{figure}

\begin{figure}[p]
    \centering
    \includegraphics[width=1\linewidth]{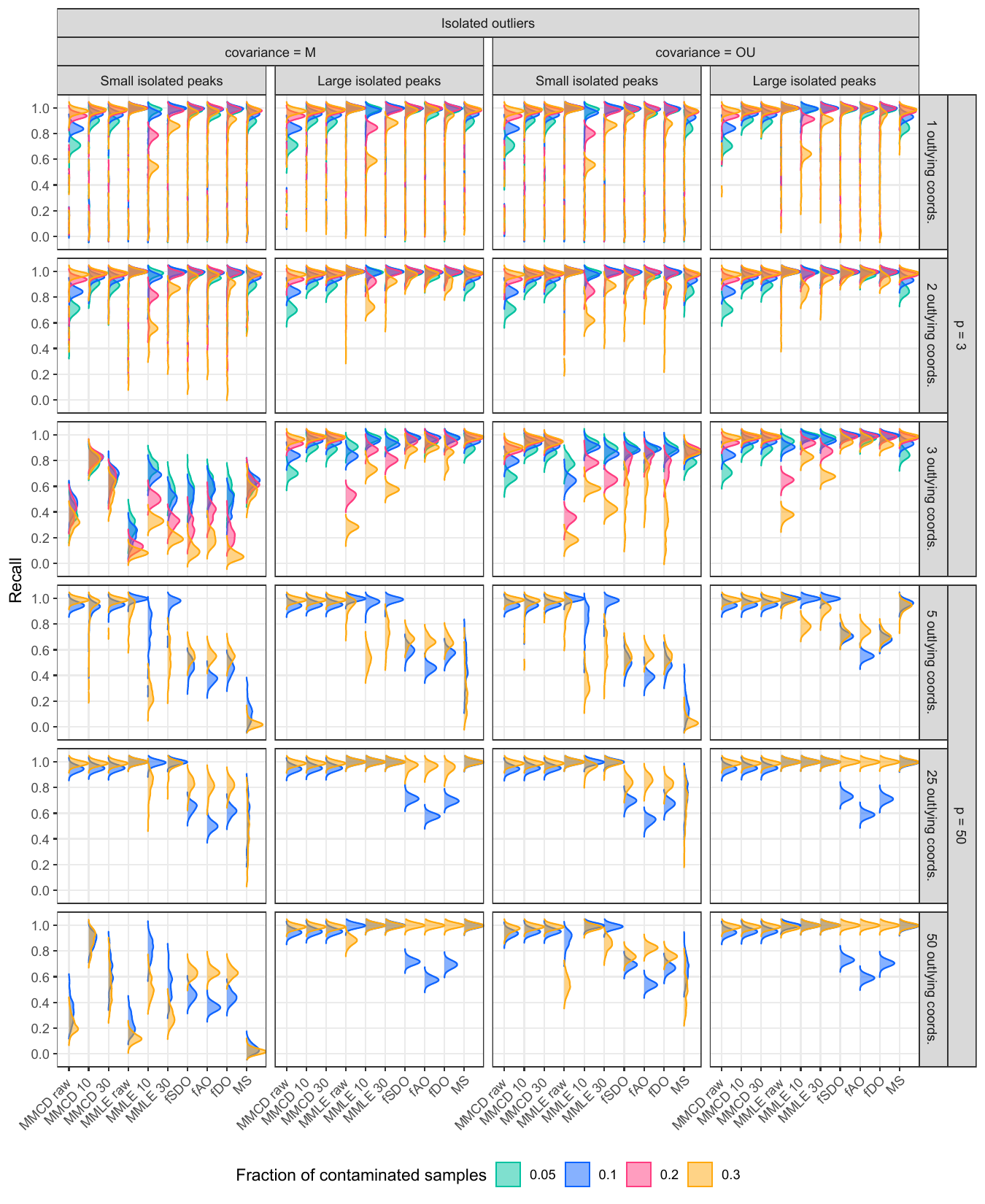}
    \caption{{Density plots of recall for isolated outliers in the Gaussian setting with $n = 1000$ across $p = 3,50$, Ornstein-Uhlenbeck $\kappa_{\text{OU}}$ and Matérn-type $\kappa_{\text{Matérn}}$ covariance structures, small and large outlier magnitudes, coordinate contamination levels $\lfloor{\varepsilon_{\mathrm{cord}} \cdot p\rfloor}$ ($\varepsilon_{\mathrm{cord}} = 0.1, 0.5, 1$), and outlier proportions ($\varepsilon = 0.05, 0.1, 0.2, 0.3$).}}
    \label{fig:recall_iso}
\end{figure}

\begin{figure}[p]
    \centering
    \includegraphics[width=1\linewidth]{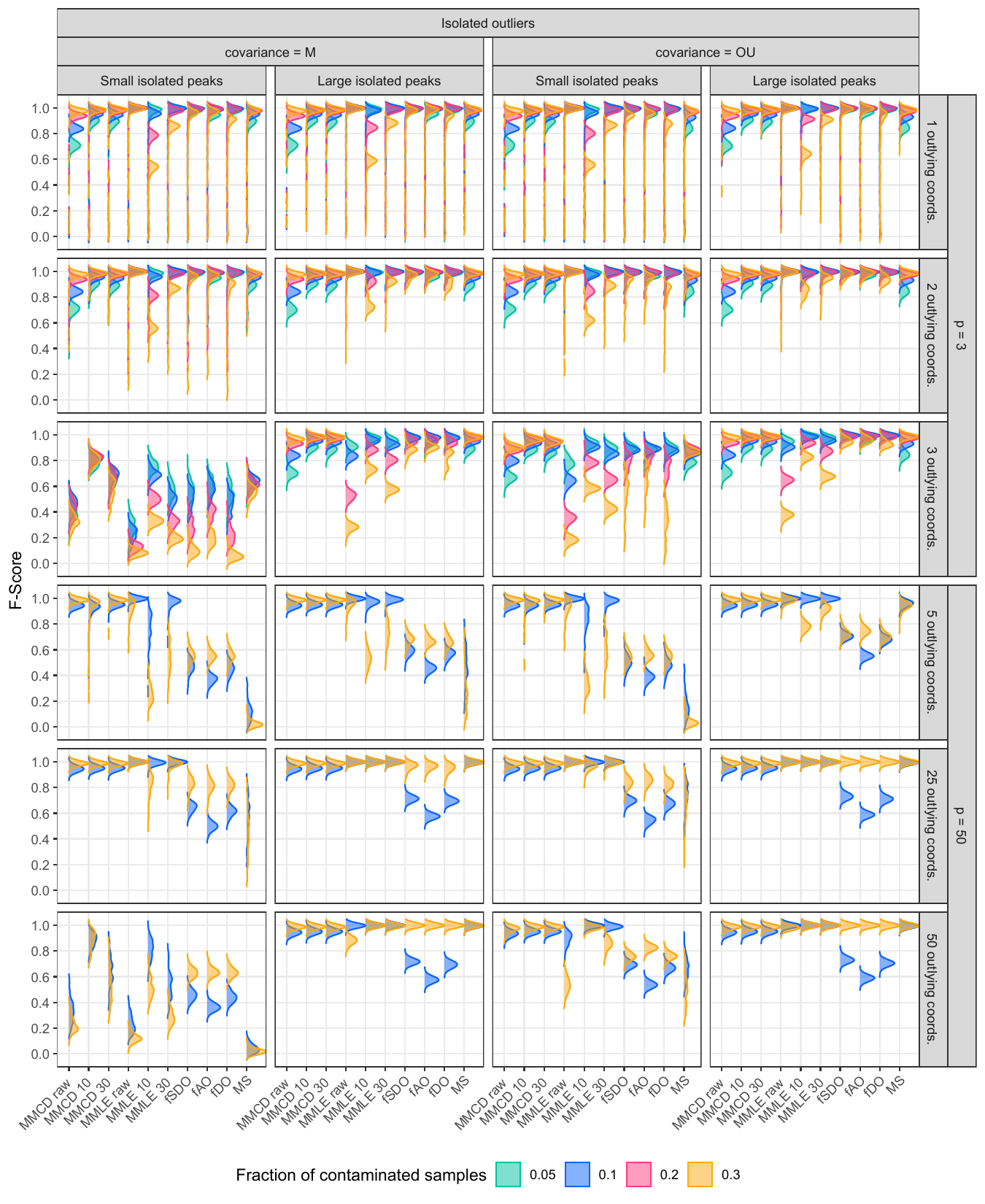}
    \caption{{Density plots of F-score for isolated outliers in the Gaussian setting with $n = 1000$ across $p = 3,50$, Ornstein-Uhlenbeck $\kappa_{\text{OU}}$ and Matérn-type $\kappa_{\text{Matérn}}$ covariance structures, small and large outlier magnitudes, coordinate contamination levels $\lfloor{\varepsilon_{\mathrm{cord}} \cdot p\rfloor}$ ($\varepsilon_{\mathrm{cord}} = 0.1, 0.5, 1$), and outlier proportions ($\varepsilon = 0.05, 0.1, 0.2, 0.3$).}}
    \label{fig:fscore_iso}
\end{figure}

\begin{figure}[p]
    \centering
    \includegraphics[width=1\linewidth]{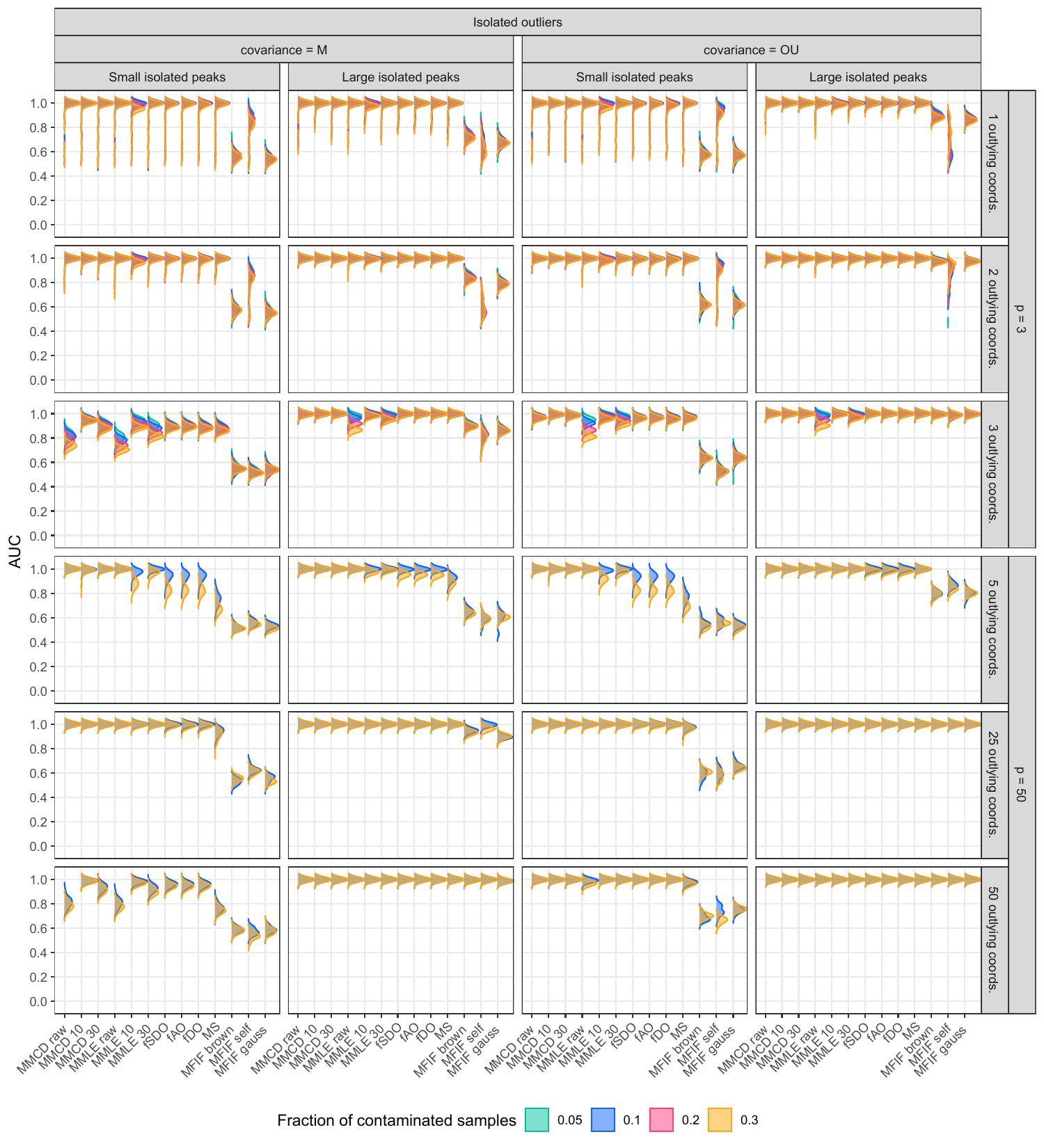}
    \caption{{Density plots of AUC values for isolated outliers in the Gaussian setting with $n = 1000$ across $p = 3,50$, Ornstein-Uhlenbeck $\kappa_{\text{OU}}$ and Matérn-type $\kappa_{\text{Matérn}}$ covariance structures, small and large outlier magnitudes, coordinate contamination levels $\lfloor{\varepsilon_{\mathrm{cord}} \cdot p\rfloor}$ ($\varepsilon_{\mathrm{cord}} = 0.1, 0.5, 1$), and outlier proportions ($\varepsilon = 0.05, 0.1, 0.2, 0.3$).} For $p = 50$, MFIF densities are based on 10 replications instead of 100.}
    \label{fig:AUC_iso}
\end{figure}

\begin{figure}[p]
    \centering
    \includegraphics[width=1\linewidth]{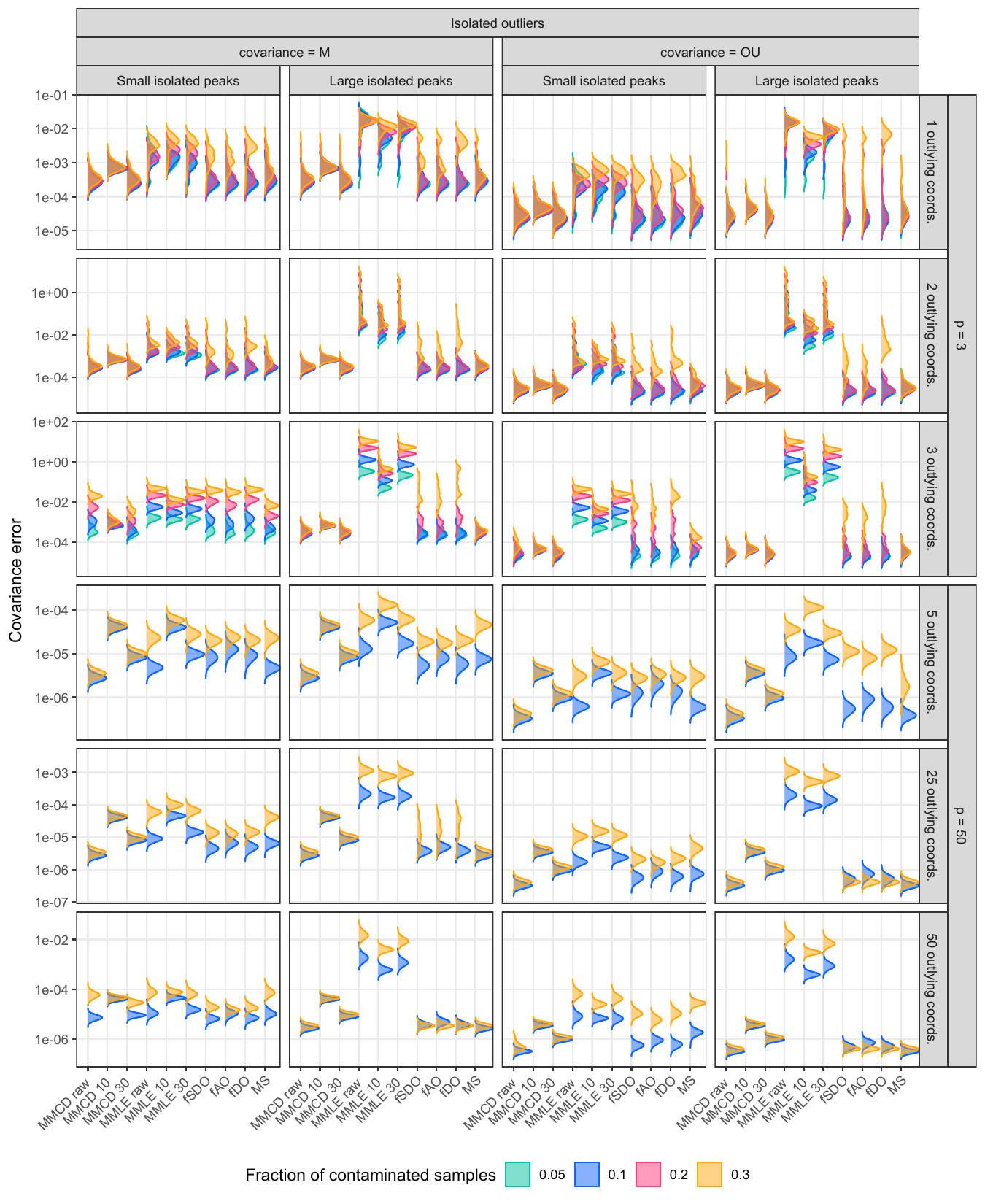}
    \caption{{Density plots of log covariance estimation error for isolated outliers in the Gaussian setting with $n = 1000$ across $p = 3,50$, Ornstein-Uhlenbeck $\kappa_{\text{OU}}$ and Matérn-type $\kappa_{\text{Matérn}}$ covariance structures, small and large outlier magnitudes, coordinate contamination levels $\lfloor{\varepsilon_{\mathrm{cord}} \cdot p\rfloor}$ ($\varepsilon_{\mathrm{cord}} = 0.1, 0.5, 1$), and outlier proportions ($\varepsilon = 0.05, 0.1, 0.2, 0.3$).}}
    \label{fig:cov_iso}
\end{figure}

\begin{figure}[p]
    \centering
    \includegraphics[width=1\linewidth]{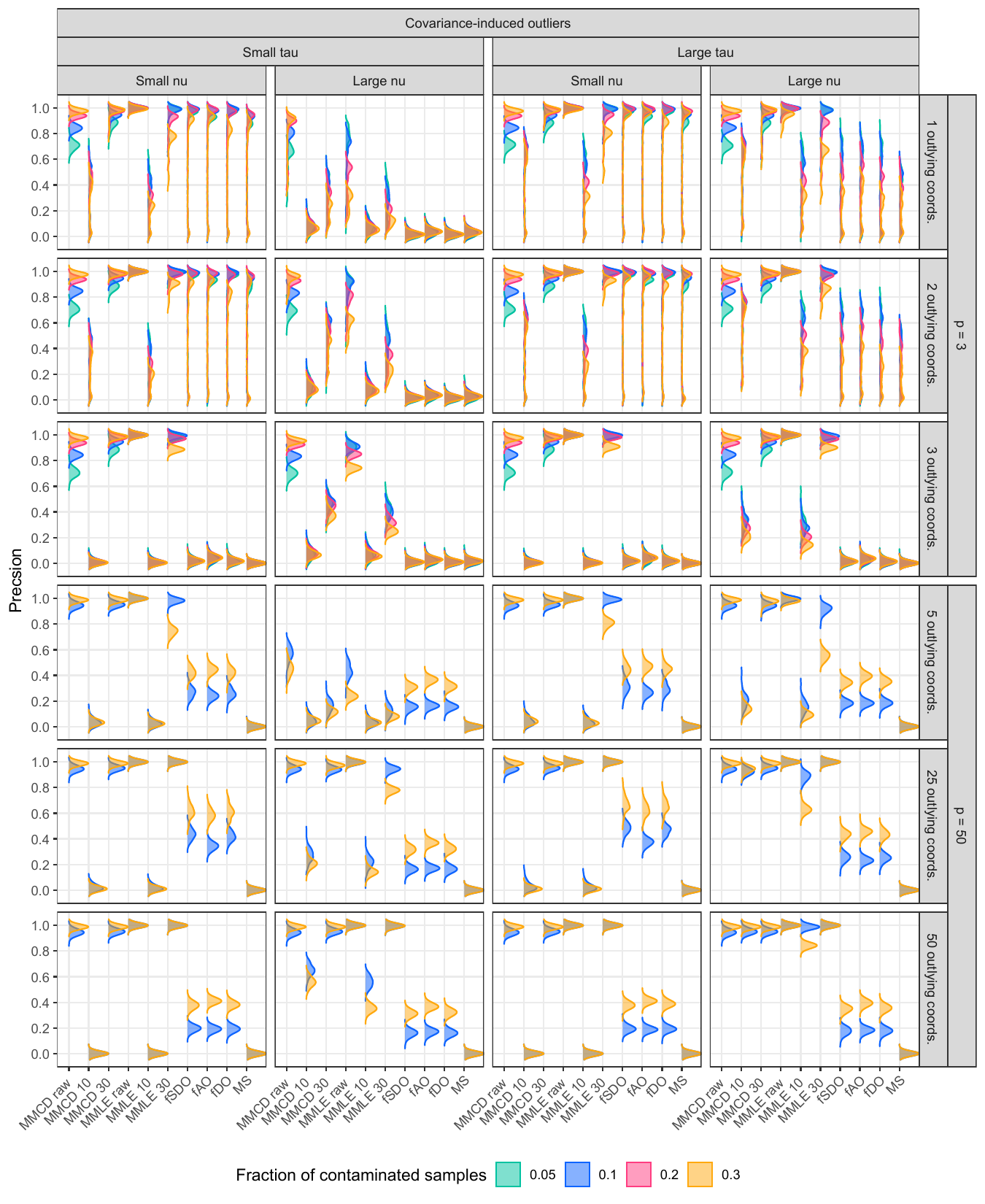}
    \caption{{Density plots of precision for shift outliers in the Gaussian setting with $n = 1000$ across $p = 3,50$, Ornstein-Uhlenbeck $\kappa_{\text{OU}}$ and Matérn-type $\kappa_{\text{Matérn}}$ covariance structures, small and large outlier magnitudes, coordinate contamination levels $\lfloor{\varepsilon_{\mathrm{cord}} \cdot p\rfloor}$ ($\varepsilon_{\mathrm{cord}} = 0.1, 0.5, 1$), and outlier proportions ($\varepsilon = 0.05, 0.1, 0.2, 0.3$).}}
    \label{fig:precision_cov}
\end{figure}

\begin{figure}[p]
    \centering
    \includegraphics[width=1\linewidth]{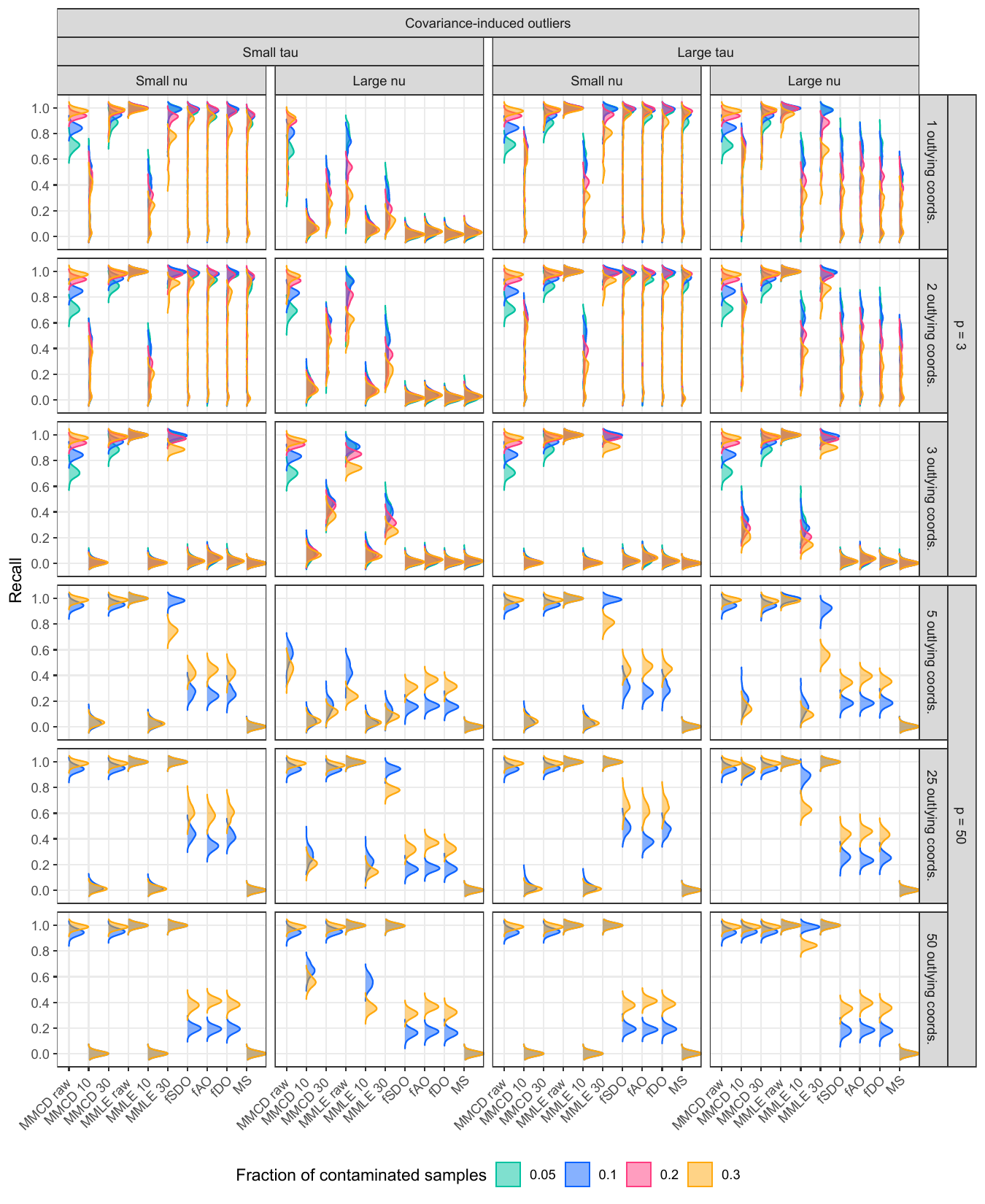}
    \caption{{Density plots of recall for shift outliers in the Gaussian setting with $n = 1000$ across $p = 3,50$, Ornstein-Uhlenbeck $\kappa_{\text{OU}}$ and Matérn-type $\kappa_{\text{Matérn}}$ covariance structures, small and large outlier magnitudes, coordinate contamination levels $\lfloor{\varepsilon_{\mathrm{cord}} \cdot p\rfloor}$ ($\varepsilon_{\mathrm{cord}} = 0.1, 0.5, 1$), and outlier proportions ($\varepsilon = 0.05, 0.1, 0.2, 0.3$).}}
    \label{fig:recall_cov}
\end{figure}

\begin{figure}[p]
    \centering
    \includegraphics[width=1\linewidth]{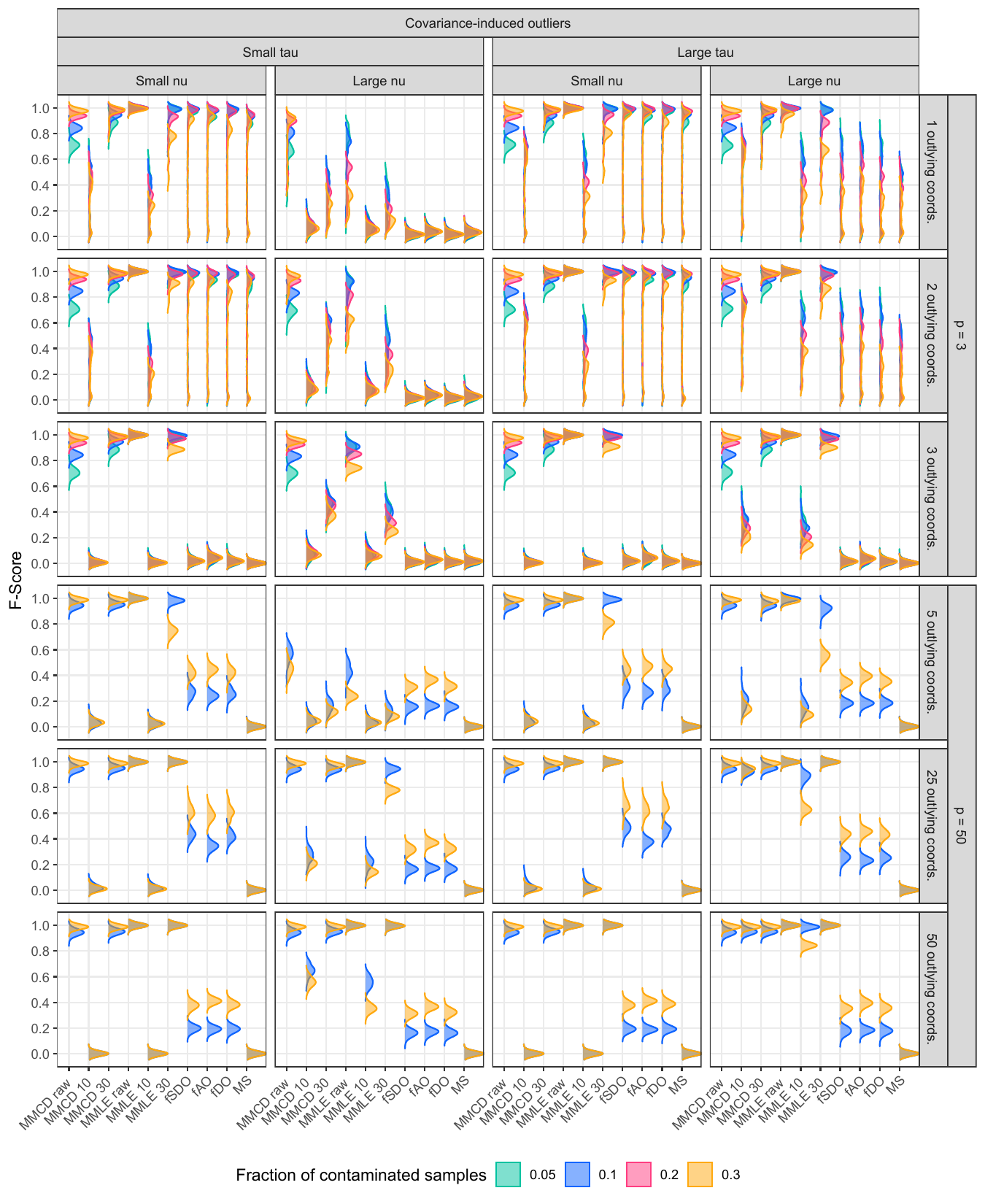}
    \caption{{Density plots of F-score for shift outliers in the Gaussian setting with $n = 1000$ across $p = 3,50$, Ornstein-Uhlenbeck $\kappa_{\text{OU}}$ and Matérn-type $\kappa_{\text{Matérn}}$ covariance structures, small and large outlier magnitudes, coordinate contamination levels $\lfloor{\varepsilon_{\mathrm{cord}} \cdot p\rfloor}$ ($\varepsilon_{\mathrm{cord}} = 0.1, 0.5, 1$), and outlier proportions ($\varepsilon = 0.05, 0.1, 0.2, 0.3$).}}
    \label{fig:fscore_cov}
\end{figure}

\begin{figure}[p]
    \centering
    \includegraphics[width=1\linewidth]{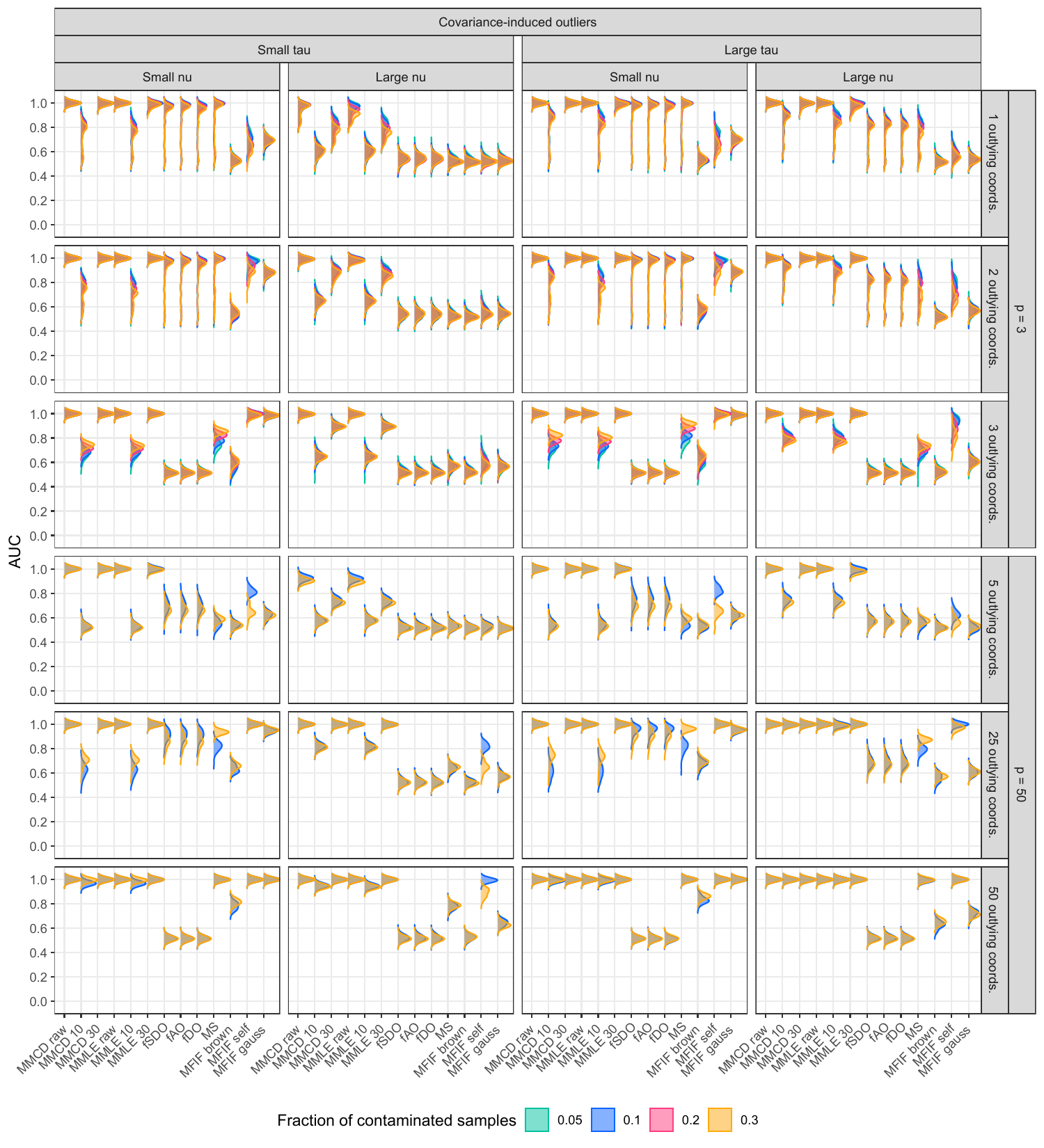}
    \caption{{Density plots of AUC values for shift outliers in the Gaussian setting with $n = 1000$ across $p = 3,50$, Ornstein-Uhlenbeck $\kappa_{\text{OU}}$ and Matérn-type $\kappa_{\text{Matérn}}$ covariance structures, small and large outlier magnitudes, coordinate contamination levels $\lfloor{\varepsilon_{\mathrm{cord}} \cdot p\rfloor}$ ($\varepsilon_{\mathrm{cord}} = 0.1, 0.5, 1$), and outlier proportions ($\varepsilon = 0.05, 0.1, 0.2, 0.3$).} For $p = 50$, MFIF densities are based on 10 replications instead of 100.}
    \label{fig:AUC_cov}
\end{figure}

\begin{figure}[p]
    \centering
    \includegraphics[width=1\linewidth]{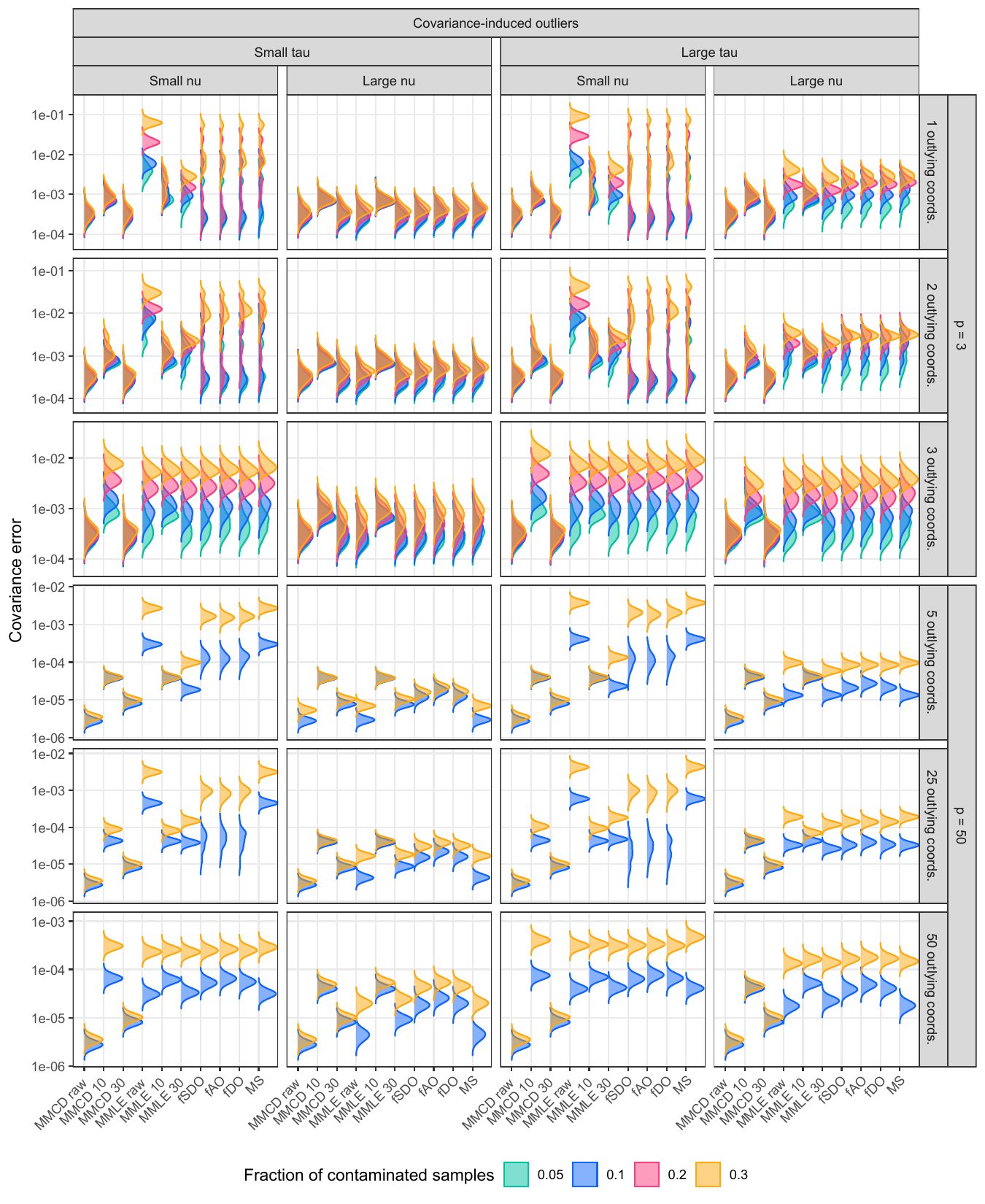}
    \caption{{Density plots of log covariance estimation error for shift outliers in the Gaussian setting with $n = 1000$ across $p = 3,50$, Matérn-type $\kappa_{\text{Matérn}}$ covariance structures with outliers introduced varying the smoothness parameter $\nu$ and range parameter $\tau$, coordinate contamination levels $\lfloor{\varepsilon_{\mathrm{cord}} \cdot p\rfloor}$ ($\varepsilon_{\mathrm{cord}} = 0.1, 0.5, 1$), and outlier proportions ($\varepsilon = 0.05, 0.1, 0.2, 0.3$).}}
    \label{fig:cov_cov}
\end{figure}

\cleardoublepage

\end{document}